\DeclareMathOperator{\sgn}{sgn}
\DeclareMathOperator{\sech}{sech}
\DeclareMathOperator{\arcsinh}{arcsinh}
\newcommand{\dd}{\mathrm{d}}
\newcommand{\R}{\mathbb{R}}
\newcommand{\Z}{\mathbb{Z}}
\newcommand{\p}{\partial}
\spnewtheorem{convention}{Convention}[section]{\it}{\rm}
\newcommand{\smp}{\boldsymbol{\mathfrak A}}
\newcommand{\td}[1]{\tfrac{\dd}{\dd \tau} #1}
\newcommand{\spp}[3]{\ifnum #1 = 0 \else \ifnum #1 = 1 {\smp}_1 \else ({\smp}_1)^{#1} \fi \fi \ifnum #2 = 0 \else \ifnum #2 = 1 {\smp}_2 \else ({\smp}_2)^{#2} \fi \fi \ifnum #3 = 0 \else \ifnum #3 = 1 {\smp}_3 \else ({\smp}_3)^{#3} \fi \fi}
\newcommand{\sppcycl}[3]{\ifnum #1 = 0 \else \ifnum #1 = 1 {\smp}_{\xa} \else ({\smp}_{\xa})^{#1} \fi \fi \ifnum #2 = 0 \else \ifnum #2 = 1 {\smp}_{\xb} \else ({\smp}_{\xb})^{#2} \fi \fi \ifnum #3 = 0 \else \ifnum #3 = 1 {\smp}_{\xc} \else (\smp_{\xc})^{#3} \fi \fi}
\newenvironment{spexplicitparamaux}{\renewcommand{\spp}[3]{\ifnum ##1 = 0 \else n_1 \ifnum ##1 = 1 {\smp}_1 \else ({\smp}_1)^{##1} \fi \fi \ifnum ##2 = 0 \else n_2 \ifnum ##2 = 1 {\smp}_2 \else ({\smp}_2)^{##2} \fi \fi \ifnum ##3 = 0 \else n_3 \ifnum ##3 = 1 {\smp}_3 \else ({\smp}_3)^{##3} \fi \fi}\renewcommand{\sppcycl}[3]{\ifnum ##1 = 0 \else n_{\xa} \ifnum ##1 = 1 {\smp}_{\xa} \else ({\smp}_{\xa})^{##1} \fi \fi \ifnum ##2 = 0 \else n_{\xb} \ifnum ##2 = 1 {\smp}_{\xb} \else ({\smp}_{\xb})^{##2} \fi \fi \ifnum ##3 = 0 \else n_{\xc} \ifnum ##3 = 1 {\smp}_{\xc} \else (\smp_{\xc})^{##3} \fi \fi}}{}
\newcommand{\xa}{\mathbf{i}}
\newcommand{\xb}{\mathbf{j}}
\newcommand{\xc}{\mathbf{k}}
\newcommand{\dfield}[1]{\ifnum #1=1 P \else \ifnum #1=2 Q \else \ifnum #1=3 Y \else {\huge ERROR} \fi\fi\fi}
\newcommand{\dfrak}[1]{\ifnum #1=1 \boldsymbol{\mathfrak p} \else \ifnum #1=2 \boldsymbol{\mathfrak q} \else \ifnum #1=3 \slaa{\boldsymbol{\mathfrak y}} \else {\huge ERROR} \fi\fi\fi}
\newcommand{\dnofrak}[1]{\ifnum #1=1 p \else \ifnum #1=2 q \else \ifnum #1=3 \slaa{y} \else {\huge ERROR} \fi\fi\fi}
\newcommand{\Rmnum}[1]{\expandafter\@slowromancap\romannumeral #1@}
\newcommand{\slaz}[1]{#1\hskip-4.5pt/\hskip1pt}
\newcommand{\slaa}[1]{#1\hskip-5.5pt/\hskip1pt}
\newcommand{\clball}[3]{B_{#1}[#2,#3]}
\newcommand{\gold}{\rho}
\begin{document}

\rule{0pt}{0pt}
\vskip 40mm 
{\bf \Large
\noindent The BKL Conjectures for\\
Spatially Homogeneous Spacetimes
}
\vskip 4mm
\noindent {\bf Michael Reiterer, Eugene Trubowitz}
\vskip 2mm
\noindent {Department of Mathematics, ETH Zurich, Switzerland}
\vskip 4mm
\noindent {\bf Abstract:} 
We rigorously construct and control a generic class of spatially homogeneous (Bianchi VIII and Bianchi IX) vacuum spacetimes that exhibit the oscillatory BKL phenomenology.
We investigate the causal structure of these spacetimes and show that there is a ``particle horizon''.


\vskip 2mm

\setcounter{MaxMatrixCols}{30}
\newcommand{\ua}{\alpha}
\newcommand{\ub}{\beta}

\section{Introduction}
The goal of this paper is to rigorously construct and explicitly control a generic class of solutions
$\Phi = \ua \oplus \ub:\, [0,\infty) \to \R^3\oplus \R^3$, with independent variable $\tau \in [0,\infty)$ and with\footnote{If $\tau \mapsto \Phi(\tau)$ is a solution to 
\eqref{eq:kshkjhkf}, so is $\tau \mapsto - \Phi(-\tau)$. The condition $(\alpha_1+\alpha_2+\alpha_3)|_{\tau = 0} < 0$ breaks this symmetry.
Solutions to \eqref{eq:kshkjhkf} with $(\alpha_1+\alpha_2+\alpha_3)|_{\tau = 0} < 0$ do not break down in finite positive time, that is, they extend to $[0,\infty)$. A proof of this fact is given later in this introduction.}
$(\alpha_1+\alpha_2+\alpha_3)|_{\tau = 0} < 0$, to the autonomous system of six ordinary differential equations
\begin{subequations}\label{eq:kshkjhkf}
\begin{align}
\label{eq:kshkjhk1}
\begin{split}
0 & =
-  \td{\ua_{\xa}}
- (\ub_{\xa})^2
+ (\ub_{\xb})^2
+ (\ub_{\xc})^2 - 2\ub_{\xb}\ub_{\xc}
\end{split}
\\ 
\label{eq:kshkjhk2}
\begin{split}
0 & =
- \td{\ub_{\xa}}  + \ub_{\xa}\ua_{\xa}
\end{split}
\end{align}
for all $(\xa,\xb,\xc)\in \mathcal{C}
\stackrel{\text{def}}{=} \{(1,2,3), (2,3,1), (3,1,2)\}$, subject to the quadratic constraint\footnote{As a quadratic form on $\R^3\oplus \R^3$, 
the right hand side of \eqref{eq:kshkjhk3} has signature
$(+,+,-,-,-,-)$.}
\begin{equation} 
\label{eq:kshkjhk3}
0  = \ua_2\ua_3 + \ua_3\ua_1 + \ua_1 \ua_2
- (\ub_1)^2
- (\ub_2)^2
- (\ub_3)^2 + 2 \ub_2\ub_3 
+ 2 \ub_3\ub_1
+ 2 \ub_1\ub_2
\end{equation}
\end{subequations}
Here, $\ua=(\ua_1,\ua_2,\ua_3)$, $\ub = (\ub_1,\ub_2,\ub_3)$. The system \eqref{eq:kshkjhkf} are the vacuum Einstein equations for spatially homogeneous (Bianchi) spacetimes, see Proposition \ref{prop:khdkhkhkhss}.
\vskip 1mm
The pioneering calculations and heuristic picture of Belinskii, Khalatnikov, Lifshitz\footnote{The work of Belinskii, Khalatnikov, Lifshitz concerns general (inhomogeneous) spacetimes, but relies on intuition about the homogeneous case.} \cite{BKL1} and Misner \cite{Mis} suggest that a generic class of solutions to \eqref{eq:kshkjhkf} are oscillatory as $\tau \to + \infty$ and that
the dynamics of one degree of freedom is closely related to the discrete dynamics of the Gauss map
$G(x) = \tfrac{1}{x}-\lfloor \tfrac{1}{x}\rfloor$, a non-invertible map from $(0,1)\setminus \mathbb{Q}$ to itself. Every element of $(0,1)\setminus \mathbb{Q}$ admits a unique infinite continued fraction expansion
\begin{equation}
\langle k_1,k_2,k_3,\ldots \rangle\;=\;\frac{1}{k_1 + \frac{1}{k_2 + \frac{1}{k_3+\ldots}}}
\end{equation}
where $(k_n)_{n \geq 1}$ are strictly positive integers. The Gauss map is the left-shift,
\begin{equation}\label{eq:kdkdhdfkjd}
G\big(\langle k_1,k_2,k_3,\ldots \rangle\big) = \langle k_2,k_3,k_4,\ldots \rangle
\end{equation}
\vskip 1mm
Rigorous results about spatially homogeneous spacetimes have been obtained by Rendall \cite{Ren} and Ringstr\"om \cite{Ri1}, \cite{Ri2}. See also Heinzle and Uggla \cite{HU2}. We refer to the very readable paper \cite{HU1} for a detailed discussion.
\vskip 1mm
The first rigorous proofs that there exist spatially homogeneous vacuum spacetimes whose asymptotic behavior is related, in a precise sense, to iterates of the Gauss map, have been obtained recently by B\'eguin \cite{Be} and by Liebscher, H\"arterich, Webster and Georgi \cite{LHWG}. These theorems apply to a dense subset of $(0,1)\setminus \mathbb{Q}$. A basic restriction of both these works is that the sequence $(k_n)_{n \geq 1}$ has to be bounded, a condition fulfilled only by a Lebesgue measure zero subset of $(0,1)\setminus \mathbb{Q}$. The results of the present paper apply to any sequence $(k_n)_{n \geq 1}$ that grows at most polynomially. The corresponding subset of $(0,1)\setminus \mathbb{Q}$ has full Lebesgue measure one.
\vskip 1mm
We point out some properties of the system \eqref{eq:kshkjhk1}, \eqref{eq:kshkjhk2}, not assuming \eqref{eq:kshkjhk3}:
\begin{enumerate}[(i)]
\item\label{item:scr1} The right hand side of \eqref{eq:kshkjhk3} is a conserved quantity.
\item\label{item:scr2} If $\tau \mapsto \Phi(\tau)$ is a solution, so is $\tau \mapsto p\,\Phi(p\tau + q)$, for all $p,q\in \R$.
\item\label{item:scr3} The signatures $(\sgn\beta_1,\sgn\beta_2,\sgn\beta_3)$ are constant.
\item\label{item:scr4} $\tfrac{\dd}{\dd \tau}|\beta_1\beta_2\beta_3|^2 = 2(\alpha_1+\alpha_2+\alpha_3)|\beta_1\beta_2\beta_3|^2$.
\item[(v)] We have\footnote{$(\beta_1)^2+(\beta_2)^2+(\beta_3)^2 - 2\beta_2\beta_3-2\beta_3\beta_1-2\beta_1\beta_2 + 3|\beta_1\beta_2\beta_3|^{2/3} \geq 0$ holds for all $\beta_1,\beta_2,\beta_3\in \R$, see \cite{HU1}. The only nontrivial cases are $\beta_1,\beta_2,\beta_3 > 0$ or $\beta_1,\beta_2,\beta_3 < 0$. In these cases, the inequality is a direct consequence of the polynomial identity\begin{multline*}
x^6+y^6+z^6-2y^3z^3-2z^3x^3-2x^3y^3+3x^2y^2z^2  =\\
\tfrac{1}{2}\,\big(x^2+y^2+z^2+yz+zx+xy\big) \Big((y-z)^2(y+z-x)^2+(z-x)^2(z+x-y)^2+(x-y)^2(x+y-z)^2\Big)
\end{multline*}
}
$\tfrac{\dd}{\dd \tau}(\alpha_1+\alpha_2+\alpha_3) \geq
- 3|\beta_1\beta_2\beta_3|^{2/3}$.
\end{enumerate}
If in addition we assume \eqref{eq:kshkjhk3}, then:
\begin{enumerate}[(i)]
\item[(vi)]\label{item:scr6}  $\tfrac{\dd}{\dd \tau}(\alpha_1+\alpha_2+\alpha_3)
= \alpha_2\alpha_3+\alpha_3\alpha_1+\alpha_1\alpha_2  \leq \tfrac{1}{3}(\alpha_1+\alpha_2+\alpha_3)^2$. 
\end{enumerate}

Let $\Phi = \alpha\oplus \beta$ be any solution to \eqref{eq:kshkjhkf}, that is \eqref{eq:kshkjhk1}, \eqref{eq:kshkjhk2}, \eqref{eq:kshkjhk3}, on the half-open interval $[0,\tau_1)$ with $0 <  \tau_1 < \infty$.
Set $\slaa{\alpha} = \alpha_1+\alpha_2+\alpha_3$ and suppose $\slaa{\alpha}(0) < 0$. Then
\begin{equation}\label{eq:dkhkdhd}
\slaa{\alpha}(\tau) \leq - |\slaa{\alpha}(0)| /  \big(1+\tfrac{1}{3}|\slaa{\alpha}(0)| \tau\big) < 0
\qquad \text{for all $\tau \in [0,\tau_1)$}
\end{equation}
by (vi). Consequently, $|\beta_1\beta_2\beta_3|$ is bounded, by (iv), and $\slaa{\alpha}$ is bounded below, by (v), on $[0,\tau_1)$. The constraint \eqref{eq:kshkjhk3} implies that\footnote{Use $2(\alpha_2\alpha_3 + \alpha_3\alpha_1 + \alpha_1\alpha_2) 
= \slaa{\alpha}^2  - (\alpha_1)^2 - (\alpha_2)^2 - (\alpha_3)^2$.}
$(\alpha_1)^2 + (\alpha_2)^2+(\alpha_3)^2 \leq 6\,|\beta_1\beta_2\beta_3|^{2/3} + \slaa{\alpha}^2$ is bounded. Now \eqref{eq:kshkjhk2} implies that $(\beta_1)^2 + (\beta_2)^2 + (\beta_3)^2$ is bounded. Therefore, solutions to  \eqref{eq:kshkjhkf}  with $\slaa{\alpha}(0) < 0$ can be extended to $[0,\infty)$. The solutions considered in this paper belong to this general class. We are especially interested in their $\tau \to + \infty$ asymptotics.
\vskip 1mm
For every solution to \eqref{eq:kshkjhkf} with $\slaa{\alpha}(0) < 0$, as in the last paragraph, the
half-infinite interval $[0,\infty)$ actually corresponds to a \emph{finite} physical duration of the associated spatially homogeneous vacuum spacetime (given in Proposition \ref{prop:khdkhkhkhss}). In fact, an increasing affine parameter along the timelike geodesics orthogonal to the level sets of $\tau$ is given by  $\tau \mapsto \int_0^{\tau} \exp(\tfrac{1}{2}\int_0^s \slaa{\alpha}) \dd s$, with uniform upper bound $6|\slaa{\alpha}(0)|^{-1}$, by \eqref{eq:dkhkdhd}. 
\vskip 1mm
In this paper, we consider only solutions to \eqref{eq:kshkjhkf} for which $\beta_1,\beta_2,\beta_3 \neq 0$ (also called Bianchi VIII or IX models). We now give an informal description  of the solutions that we construct, the phenomenological picture of \cite{BKL1}. The structure of each of these solutions is described by three sequences of compact subintervals $(\mathcal{I}_j)_{j \geq 1}$, $(\mathcal{B}_j)_{j\geq 1}$, $(\mathcal{S}_j)_{j \geq 1}$ of $[0,\infty)$, for which:
\begin{enumerate}[({a}.1)]
\item The left endpoint of $\mathcal{I}_1$ is the origin, and the right endpoint of $\mathcal{I}_j$, henceforth denoted $\tau_j$, coincides with the left endpoint of $\mathcal{I}_{j+1}$, for all $j \geq 1$. Set $\tau_0 = 0$.
\item $\bigcup_{j \geq 1}\mathcal{I}_j = [0,\infty)$, that is,
$\lim_{j \to +\infty}\tau_j = +\infty$. 
\item $\mathcal{B}_j$ is contained in the interior of $\mathcal{I}_j$, and $0 < |\mathcal{B}_j| \ll |\mathcal{I}_j|$, for all $j \geq 1$.
\item $\mathcal{S}_j$ is the closed interval of all points between $\mathcal{B}_j$ and $\mathcal{B}_{j+1}$, for all
$j \geq 1$.
\end{enumerate}
Here is a picture:
\begin{center}
\begin{picture}(0,0)%
\includegraphics{interval.pstex}%
\end{picture}%
\setlength{\unitlength}{3158sp}%
\begingroup\makeatletter\ifx\SetFigFont\undefined%
\gdef\SetFigFont#1#2#3#4#5{%
  \reset@font\fontsize{#1}{#2pt}%
  \fontfamily{#3}\fontseries{#4}\fontshape{#5}%
  \selectfont}%
\fi\endgroup%
\begin{picture}(6066,650)(1468,-3284)
\put(2776,-3136){\makebox(0,0)[lb]{\smash{{\SetFigFont{10}{12.0}{\familydefault}{\mddefault}{\updefault}{\color[rgb]{0,0,0}$\mathcal{I}_j$}%
}}}}
\put(5165,-2813){\makebox(0,0)[lb]{\smash{{\SetFigFont{10}{12.0}{\familydefault}{\mddefault}{\updefault}{\color[rgb]{0,0,0}$\mathcal{B}_{j+1}$}%
}}}}
\put(6384,-2812){\makebox(0,0)[lb]{\smash{{\SetFigFont{10}{12.0}{\familydefault}{\mddefault}{\updefault}{\color[rgb]{0,0,0}$\mathcal{S}_{j+1}$}%
}}}}
\put(2492,-2805){\makebox(0,0)[lb]{\smash{{\SetFigFont{10}{12.0}{\familydefault}{\mddefault}{\updefault}{\color[rgb]{0,0,0}$\mathcal{B}_j$}%
}}}}
\put(5476,-3136){\makebox(0,0)[lb]{\smash{{\SetFigFont{10}{12.0}{\familydefault}{\mddefault}{\updefault}{\color[rgb]{0,0,0}$\mathcal{I}_{j+1}$}%
}}}}
\put(3858,-2812){\makebox(0,0)[lb]{\smash{{\SetFigFont{10}{12.0}{\familydefault}{\mddefault}{\updefault}{\color[rgb]{0,0,0}$\mathcal{S}_j$}%
}}}}
\put(4126,-3211){\makebox(0,0)[lb]{\smash{{\SetFigFont{10}{12.0}{\familydefault}{\mddefault}{\updefault}{\color[rgb]{0,0,0}$\tau_j$}%
}}}}
\put(7126,-3211){\makebox(0,0)[lb]{\smash{{\SetFigFont{10}{12.0}{\familydefault}{\mddefault}{\updefault}{\color[rgb]{0,0,0}$\tau_{j+1}$}%
}}}}
\end{picture}%

\end{center}
Let $S_3$ be the set of all permutations $(\mathbf{a},\mathbf{b},\mathbf{c})$ of the triple $(1,2,3)$.
The solution is further described by a sequence $(\pi_j)_{j \geq 1}$ in $S_3$, with $\pi_j = (\mathbf{a}(j),\mathbf{b}(j),\mathbf{c}(j))$, so that:
\begin{enumerate}[({b}.1)]
\item On $\mathcal{I}_j$, the components $\beta_{\mathbf{b}(j)}$, $\beta_{\mathbf{c}(j)}$ are so small in absolute value that the local dynamics of $\Phi = \alpha\oplus \beta$ is essentially unaffected if $\beta_{\mathbf{b}(j)}$, $\beta_{\mathbf{c}(j)}$ are set equal to zero in the four equations \eqref{eq:kshkjhk1} and \eqref{eq:kshkjhk3}.
\item
 On $\mathcal{I}_j \setminus \mathcal{B}_j$, the component $\beta_{\mathbf{a}(j)}$ is so small in absolute value that the local dynamics of $\Phi = \alpha\oplus \beta$ is essentially unaffected if $\beta_{\mathbf{a}(j)}$ is set equal to zero in the four equations \eqref{eq:kshkjhk1} and \eqref{eq:kshkjhk3}.
The component $\beta_{\mathbf{a}(j)}$ is \emph{not small} on $\mathcal{B}_j$, but the mixed products  $\beta_{\mathbf{a}(j)}\beta_{\mathbf{b}(j)}$ and $\beta_{\mathbf{a}(j)}\beta_{\mathbf{c}(j)}$ are still small.
\item Items (b.1) and (b.2) imply that mixed products of components of $\beta$ are small on all of $[0,\infty)$, and that all three components of $\beta$ are small on $\bigcup_{j \geq 1} \mathcal{S}_j$.
\item $\mathbf{a}(j) \neq \mathbf{a}(j+1)$ for all $j \geq 1$. 
\item None of the properties listed so far distinguishes $\mathbf{b}(j)$ from $\mathbf{c}(j)$. By (b.4), this ambiguity can be consistently eliminated
by stipulating $\mathbf{b}(j) = \mathbf{a}(j+1)$.
\end{enumerate}
We can draw the following heuristic consequences from the eight heuristic properties above. \emph{Separately on each interval} $\mathcal{S}_j$, $j \geq 1$:
\begin{enumerate}[({c}.1)]
\item The components of $\alpha$ are essentially constant, 
by \eqref{eq:kshkjhk1}
and (b.3), and $\log|\beta_1|$, $\log|\beta_2|$, $\log|\beta_3|$ are essentially linear functions with slopes $\alpha_1$, $\alpha_2$, $\alpha_3$,
by \eqref{eq:kshkjhk2}.
\item The constraint \eqref{eq:kshkjhk3} essentially reduces to
$\alpha_2\alpha_3 + \alpha_3\alpha_1 + \alpha_1\alpha_2 = 0$. 
As before, we require $\slaa{\alpha} = \alpha_1 + \alpha_2 + \alpha_3 < 0$.
Furthermore, we make the generic assumption that all components of $\alpha$ are nonzero.
These conditions imply that two components of $\alpha$ are negative, one component of $\alpha$ is positive, and the sum of any two is negative.
\item The single positive component of $\alpha$ has to be
$\alpha_{\mathbf{b}(j)} = \alpha_{\mathbf{a}(j+1)}$. In fact, we know that $|\beta_{\mathbf{a}(j+1)}|$ is very small
on $\mathcal{S}_j$ but is not small on $\mathcal{B}_{j+1}$. Therefore, the slope of $\log|\beta_{\mathbf{a}(j+1)}|$, which is $\alpha_{\mathbf{a}(j+1)}$ by (c.1), has to be positive on $\mathcal{S}_j$.
\item By the last three items and (b.4), there is at most one point in $\mathcal{S}_j$ where $|\beta_{\mathbf{a}(j)}| = |\beta_{\mathbf{a}(j+1)}|$. By (b.1), (b.2), there is such a point, because $|\beta_{\mathbf{a}(j)}|$ is going from not small to small on $\mathcal{S}_j$, and $|\beta_{\mathbf{a}(j+1)}|$ is going from small to not small on $\mathcal{S}_j$. By convention, this point is $\tau_j$.
\end{enumerate}
\emph{Separately on each interval} $\mathcal{I}_j$, $j \geq 1$ (in particular on $\mathcal{B}_j\subset \mathcal{I}_j$):
\begin{enumerate}[({d}.1)]
\item $\alpha_{\mathbf{a}(j)} + \alpha_{\mathbf{b}(j)}$ and $\alpha_{\mathbf{a}(j)} + \alpha_{\mathbf{c}(j)}$
are essentially constant, by \eqref{eq:kshkjhk1}, (b.1),
and they are both negative, by (c.1), (c.2).
Also,  $\log|\beta_{\mathbf{a}(j)}\beta_{\mathbf{b}(j)}|$, $\log|\beta_{\mathbf{a}(j)}\beta_{\mathbf{c}(j)}|$ are essentially linear functions with slopes 
$\alpha_{\mathbf{a}(j)} + \alpha_{\mathbf{b}(j)}$ and $\alpha_{\mathbf{a}(j)} + \alpha_{\mathbf{c}(j)}$, by \eqref{eq:kshkjhk2}.
\item Essentially 
$(\alpha_{\mathbf{a}(j)} + \alpha_{\mathbf{b}(j)})(\alpha_{\mathbf{a}(j)} + \alpha_{\mathbf{c}(j)})
= (\alpha_{\mathbf{a}(j)})^2 + (\beta_{\mathbf{a}(j)})^2$, by \eqref{eq:kshkjhk3}.
Since the left hand side is essentially constant by (d.1), so is the right hand side.
\item By (d.1), it only remains to understand the behavior of $\alpha_{\mathbf{a}(j)}$, $\beta_{\mathbf{a}(j)}$. By
\eqref{eq:kshkjhk1}, we essentially have
\begin{equation}\label{erf.322}
\tfrac{\dd}{\dd \tau}\alpha_{\mathbf{a}(j)} =
- (\beta_{\mathbf{a}(j)})^2
\qquad \tfrac{\dd}{\dd \tau} \beta_{\mathbf{a}(j)}
= \alpha_{\mathbf{a}(j)}\beta_{\mathbf{a}(j)}
\end{equation}
A special solution is $\alpha_{\mathbf{a}(j)} = - \tanh \tau$ and $\beta_{\mathbf{a}(j)} = \pm \sech \tau
= \pm (\cosh \tau)^{-1}$. The general solution is obtained from the special solution by applying the affine symmetry transformation (ii) above, with $p > 0$. Since $\mathcal{B}_j$ is essentially the interval on which $|\beta_{\mathbf{a}(j)}|$ is not small, see (b.2), we must have $p \sim |\mathcal{B}_j|^{-1}$ (here $\sim$ means ``same order of magnitude''). See \cite{BKL1}, Section 3, in particular pages 534 and 535. 

\item
Recall (c.1).  By (d.3), we have 
$\alpha_{\mathbf{a}(j)}|_{\mathcal{S}_{j-1}} = - \alpha_{\mathbf{a}(j)}|_{\mathcal{S}_{j}}$, since the hyperbolic tangent just flips the sign. Therefore, by (d.1), the net change across $\mathcal{B}_j$ of the components of $\alpha$, from right to left, is given by
\begin{alignat*}{4}
\alpha_{\mathbf{a}(j)}|_{\mathcal{S}_{j-1}} &= \alpha_{\mathbf{a}(j)}|_{\mathcal{S}_{j}} \;&-&\; 2 \alpha_{\mathbf{a}(j)}|_{\mathcal{S}_{j}}\\
\alpha_{\mathbf{b}(j)}|_{\mathcal{S}_{j-1}} &= \alpha_{\mathbf{b}(j)}|_{\mathcal{S}_{j}} &+&\; 2 \alpha_{\mathbf{a}(j)}|_{\mathcal{S}_{j}}\\
\alpha_{\mathbf{c}(j)}|_{\mathcal{S}_{j-1}} &= \alpha_{\mathbf{c}(j)}|_{\mathcal{S}_{j}} &+&\; 2 \alpha_{\mathbf{a}(j)}|_{\mathcal{S}_{j}}
\end{alignat*}
These equations make sense only for $j \geq 2$, since $\mathcal{S}_0$ has not been defined.
\end{enumerate}
In this paper, we turn the heuristic picture
of \cite{BKL1}, sketched above, into a mathematically rigorous one, globally on $[0,\infty)$, for a generic class of solutions. The first step is to construct a discrete dynamical system, that maps the state $\Phi(\tau_j)$ to the state $\Phi(\tau_{j-1})$ at the earlier time $\tau_{j-1} < \tau_j$, for all $j \geq 1$. That is, the construction proceeds from right-to-left, beginning at $\tau =  + \infty$. We refer to the discrete dynamical system maps as transfer maps. 
\vskip 1mm
For each $j \geq 0$, two components of $\beta(\tau_j)$ have the same absolute value, see (c.4), and $\Phi(\tau_j)$ satisfies the constraint \eqref{eq:kshkjhk3}. Therefore, the states of the discrete dynamical system have 4 continuous degrees of freedom. By the symmetry (ii), the transfer maps commute with rescalings. Taking the quotient, one obtains a 3-dimensional discrete dynamical system. The three ``dimensionless'' quantities that we use to parametrize the discrete states are denoted $\mathbf{f}_j = (\mathbf{h}_j,w_j,q_j)$. Morally, they are interpreted as follows:
\begin{itemize}
\item $\mathbf{h}_j \sim |\mathcal{B}_j|/|\mathcal{I}_j| > 0$. 
In the billiard picture of \cite{Mis}, it is the dimensionless ratio of the collision and free-motion times. By (a.3), one has $0 < \mathbf{h}_j \ll 1$. In fact, $\mathbf{h}_j$ is the all-important small parameter in our construction. It goes to zero rapidly as $j \to \infty$. This is necessary for us to make a global construction on $[0,\infty)$. The precise rate depends on the sequence $(k_n)_{n \geq 1}$.
The rate is the same as in Proposition \ref{prop:kjdhkhkd}, up to even smaller corrections.
\item The components of $\alpha$ are essentially constant on $\mathcal{S}_j$ and subject to the reduced constraint equation in (c.2). Thus, modulo the scaling symmetry (ii), only one degree of freedom is required to parametrize $\alpha|_{\mathcal{S}_j}$. We use $w_j \approx - (\alpha_{\mathbf{b}(j)} / (\alpha_{\mathbf{a}(j)} + \alpha_{\mathbf{b}(j)}))|_{\mathcal{S}_j}$. By (c.2) and (c.3), we have $w_j > 0$. The left-to-right discrete dynamics of $w_j$ (which is opposite to the right-to-left direction of our transfer maps) is closely related to a variant of the Gauss map, sometimes referred to as the \emph{BKL map} or \emph{Kasner map}.
\item The meaning of $q_j$ will be explained in a more indirect way.
As pointed out above, the left-to-right dynamics of $w_j$ is related to the Gauss map, which is a non-invertible left-shift, see \eqref{eq:kdkdhdfkjd}. The non-invertibility of the Gauss map seems to be at odds with the invertible dynamics of the system of ordinary differential equations \eqref{eq:kshkjhkf}. The parameter $q_j$ is introduced so that the \emph{joint} left-to-right discrete dynamics of $(w_j,q_j)$ is closely related to the left-shift on \emph{two-sided} sequences $(k_n)_{n \in \Z}$ of strictly positive integers, which is invertible. Accordingly, the right-to-left transfer maps are related to the right shift on two-sided sequences 
$(k_n)_{n \in \Z}$.
\end{itemize}
This concludes the informal discussion.
We emphasize that the notation used above is specific to the introduction. In particular, $(\mathcal{I}_j)_{j \geq 1}$, $(\mathcal{B}_j)_{j\geq 1}$, $(\mathcal{S}_j)_{j \geq 1}$ do not appear in the main text. Starting from Section \ref{sec:dkfhkjdhkfd}, all the notation is introduced from scratch.
\vskip 1mm
 We now state simplified, self-contained versions of our results.
 References to their stronger counterparts are given. Here is a short guide:
 \begin{description}
\item[\emph{Definition \ref{def:ljfdljdjfdl} (equivalent to Definition \ref{def:kdhkhskhkdssPT1111star}).}] Introduces the state vectors $\Phi_{\star}(\pi,\mathbf{f},\sigma_{\ast})$ of the 3-dimensional discrete dynamical system. The dynamics of the signature vector $\sigma_{\ast}$ is trivial, by (iii), but it affects the dynamics of $(\pi,\mathbf{f})$ in a non-trivial way.
\item[\emph{Definition \ref{def:kdhkhskhkdssPT2xyz} (this is Definition \ref{def:kdhkhskhkdssPT2}).}]Introduces explicit maps $\mathcal{P}_L$, $\mathcal{Q}_L$, $\lambda_L$ that turn out to be very good approximations to the transfer maps. It is shown in Section \ref{sec:dlhjdfll} that iterates of $\mathcal{Q}_L$ can be understood in terms of the Gauss map\,/\,continued fractions and, by a change of variables, in terms of solutions to certain linear equations.
\item[\emph{Definition \ref{cccccc} (only in the main text).}]
The essential smallness condition on $\mathbf{h} > 0$ is quantitatively encoded in an open subset $\mathcal{F}\subset (0,1)\times (0,\infty) \times ((0,\infty)\setminus \{1\})$. It determines the domain of definition of the transfer maps.
\item[\emph{Proposition \ref{prop:fdkhdkhkd} (slimmed-down version of Proposition \ref{prop:skhdkjhfd}).}]
It asserts the existence of transfer maps. The pair $(\mathcal{P}_L,\Pi)$ and the triple $(\mathcal{P}_L,\Pi,\Lambda)$ constitute the transfer maps for the 3-dimensional and 4-dimensional systems, respectively, and they are very close to $(\mathcal{P}_L,\mathcal{Q}_L)$ and $(\mathcal{P}_L,\mathcal{Q}_L,\lambda_L)$. Explicit error bounds and precise estimates for the transfer solution appear only in the full version, Proposition \ref{prop:skhdkjhfd}.
\item[\emph{Theorem \ref{thm:fdkhjkh} (simplified version of Theorems \ref{thm:main2}, \ref{thm_main3}).}]
Gives a generic class of iterates to $(\mathcal{P}_L,\Pi)$ that are super-exponentially close to iterates of $(\mathcal{P}_L,\mathcal{Q}_L)$. That is, it asserts the existence of solutions to the 3-dimensional discrete dynamical system. 
\end{description}
The overview is as follows. Every solution to the 3-dimensional discrete dynamical system as in Theorem \ref{thm:fdkhjkh} can be lifted to a unique solution to the 4-dimensional discrete dynamical system, up to an overall scale, through the map  $\Lambda$ in Proposition \ref{prop:fdkhdkhkd}. This solution corresponds to the sequence of states $(\Phi(\tau_j))_{j \geq 0}$ in the informal discussion. Proposition \ref{prop:fdkhdkhkd} gives solutions to \eqref{eq:kshkjhkf} on compact intervals that connect next-neighbor states. Symmetry (ii) is used to translate these compact intervals and place them next to each other, beginning at $\tau = 0$, just like the $(\mathcal{I}_j)_{j \geq 1}$ in the informal discussion. As in (a.2) of the informal discussion, the union of these intervals is indeed $[0,\infty)$, and a semi-global solution to \eqref{eq:kshkjhkf} is obtained. To see this, denote the states by $\lambda_j\, \Phi_{\star}(\pi_j,\mathbf{g}_j,\sigma_{\ast})$ with $\lambda_j > 0$ and $\pi_j \in S_3$ and $\mathbf{g}_j = (\mathbf{h}_j',w_j',q_j') \in \mathcal{F}$, where $j \geq 0$. One has $\lambda_j = \Lambda[\pi_{j+1},\sigma_{\ast}](\mathbf{g}_{j+1}) \lambda_{j+1} \geq \lambda_{j+1}$ by the definition of $\Lambda$
and $\mathbf{h}_j' \in (0,1)$ by the definition of $\mathcal{F}$. In particular, the sequence of products $(\lambda_j\mathbf{h}_j')_{j \geq 0}$ is bounded from above by $\lambda_0 > 0$. By Proposition \ref{prop:fdkhdkhkd}, the length of each of the intervals is bounded from below by $(2\lambda_0)^{-1} > 0$.
\begin{definition}[State vectors] 
\label{def:ljfdljdjfdl}
Let $\pi = (\mathbf{a},\mathbf{b},\mathbf{c}) \in S_3$ and $\sigma_{\ast} \in \{-1,+1\}^3$ and $\mathbf{f} = (\mathbf{h},w,q)\in (0,\infty)^2\times \R$. Let
$\Phi_{\star} = \Phi_{\star}(\pi,\mathbf{f},\sigma_{\ast}) = \alpha \oplus \beta \in \R^3 \oplus \R^3$ be the vector given by $(\sgn \beta_1,\sgn \beta_2,\sgn \beta_3) = \sigma_{\ast}$ and by
\begin{align*}
\ua_{\mathbf{a}} & = -1 &  \mathbf{h}\log |\tfrac{1}{2}\ub_{\mathbf{a}}| & = 
- \tfrac{1+w}{1+2w} (1 + \mathbf{h} \log 2)\\
\ua_{\mathbf{b}}& = \tfrac{w}{1+w} &
\mathbf{h}\log |\tfrac{1}{2} \ub_{\mathbf{b}}| & = - \tfrac{1+w}{1+2w} (1 + \mathbf{h} \log 2)\\
\ua_{\mathbf{c}} & = -w - \mu
& \mathbf{h}\log |\tfrac{1}{2} \ub_{\mathbf{c}}| & = - (1+w)q
-  \tfrac{w(1+w)}{1+2w} -  \tfrac{1+3w+w^2}{1+2w} \mathbf{h} \log 2
\end{align*}
where $\mu = \mu(\pi,\mathbf{f},\sigma_{\ast}) \in \R$ is uniquely determined by requiring that \eqref{eq:kshkjhk3} holds.
\end{definition}
\begin{definition}[Approximate transfer maps] \label{def:kdhkhskhkdssPT2xyz}
Introduce three maps
\begin{alignat*}{6}
\mathcal{P}_L: &\quad& S_3\times (0,\infty)^3 & \to S_3 \hskip20mm & ((\mathbf{a},\mathbf{b},\mathbf{c}),\mathbf{f}) & \mapsto  (\mathbf{a}',\mathbf{b}',\mathbf{c}')\\
\mathcal{Q}_L: && (0,\infty)^3 & \to (0,\infty)^2\times \R & \mathbf{f} & \mapsto  (\mathbf{h}_L,w_L,q_L)\\
\lambda_L: && (0,\infty)^3 & \to (0,\infty) & \mathbf{f} & \mapsto \lambda_L
\end{alignat*}
where $\mathbf{f} = (\mathbf{h},w,q)$ and $q_L = \mathrm{num1}_L/\mathrm{den}_L$ and $\mathbf{h}_L = \mathrm{num2}_L/\mathrm{den}_L$, and:
\vskip 1mm \noindent \;
$\bullet$ if $q \leq 1$:
\begin{align*}
(\mathbf{a}',\mathbf{b}',\mathbf{c}') & = (\mathbf{c},\mathbf{a},\mathbf{b}) &
\mathrm{num1}_L & = (1+w)(1-q)-\mathbf{h} \log 2 + \mathbf{h} w \log(2+w)\\
w_L & = \tfrac{1}{1+w} &
\mathrm{num2}_L & = \mathbf{h}(2+w)\\
\lambda_L & = 2+w &
\mathrm{den}_L & = (1+w)(q - \mathbf{h} \log 2) + \mathbf{h} (3+w) \log (2+w)
\intertext{\; $\bullet$ if $q > 1$:}
(\mathbf{a}',\mathbf{b}',\mathbf{c}') & = (\mathbf{b},\mathbf{a},\mathbf{c}) &
\mathrm{num1}_L & = (1+w)(q-1 -\mathbf{h}\log 2)  - \mathbf{h}w \log \tfrac{2+w}{1+w}\\
w_L & = 1+w &
\mathrm{num2}_L & = \mathbf{h}(2+w)\\
\lambda_L & = \tfrac{2+w}{1+w} &
\mathrm{den}_L & = (1+w)-\mathbf{h}\log 2 + \mathbf{h}(3+2w) \log \tfrac{2+w}{1+w}
\end{align*}
Observe that $\mathrm{den}_L > 0$.
\end{definition}
\begin{proposition}[Transfer maps] \label{prop:fdkhdkhkd}
Fix $\sigma_{\ast} \in \{-1,+1\}^3$ and $\pi = (\mathbf{a},\mathbf{b},\mathbf{c})\in S_3$. There exist maps\footnote{Caution: The maps $\Pi$ cannot immediately be iterated\,/\,composed, because $(0,\infty)^2\times \R \not\subset \mathcal{F}$.}
$$\Pi[\pi,\sigma_{\ast}]:\;\;\mathcal{F}\to (0,\infty)^2\times \R
\qquad
\text{and}
\qquad \Lambda[\pi,\sigma_{\ast}]:\;\;\mathcal{F} \to [1,\infty)$$ such that for every
$\lambda > 0$ and 
$\mathbf{f} = (\mathbf{h},w,q) \in \mathcal{F}$, the solution to 
 \eqref{eq:kshkjhkf} starting at $\lambda\, \Phi_{\star}(\pi,\mathbf{f},\sigma_{\ast})$ at time $0$ passes through $\lambda'\,\Phi_{\star}(\pi',\mathbf{f}',\sigma_{\ast})$ at an earlier time $\tau' < 0$, with
 $\tfrac{1}{2} \leq \mathbf{h}\,\lambda\,|\tau'| \leq 3$.
  Here 
$\mathbf{f}' = \Pi[\pi,\sigma_{\ast}](\mathbf{f})$ and
$\lambda' = \lambda\, \Lambda[\pi,\sigma_{\ast}](\mathbf{f})$ and
$\pi' = \mathcal{P}_L(\pi,\mathbf{f})$. Schematically, the transition is
$$
\lambda\, \Lambda[\pi,\sigma_{\ast}](\mathbf{f})\;\Phi_{\star}\Big(
\mathcal{P}_L(\pi,\mathbf{f}),\;\Pi[\pi,\sigma_{\ast}](\mathbf{f}),\;\sigma_{\ast}\Big)
\qquad \longleftarrow \qquad
\lambda\, \Phi_{\star}(\pi,\mathbf{f},\sigma_{\ast})
$$
Furthermore (informal): $\Pi$ and $\Lambda$ are approximated by the maps $\mathcal{Q}_L$ and $\lambda_L$, with errors that go to zero exponentially as $\mathbf{h}\downarrow 0$ (for fixed $w$, $q$). See  Proposition \ref{prop:skhdkjhfd}
\end{proposition}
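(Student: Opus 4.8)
Here is a plan for proving Proposition~\ref{prop:fdkhdkhkd}.

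\emph{Reduction to $\lambda = 1$.} By the scaling symmetry~(ii) (with $p = \lambda$, $q = 0$): if $\Phi$ is the solution of \eqref{eq:kshkjhkf} with $\Phi(0) = \Phi_{\star}(\pi,\mathbf{f},\sigma_{\ast})$ and it passes through $\lambda_{1}'\,\Phi_{\star}(\pi',\mathbf{f}',\sigma_{\ast})$ at some $\tau_{1}' < 0$, then $\tau \mapsto \lambda\,\Phi(\lambda\tau)$ is the solution with initial value $\lambda\,\Phi_{\star}(\pi,\mathbf{f},\sigma_{\ast})$ and it passes through $\lambda\lambda_{1}'\,\Phi_{\star}(\pi',\mathbf{f}',\sigma_{\ast})$ at $\tau_{1}'/\lambda$. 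Hence $\pi'$, $\mathbf{f}'$ and $\Lambda := \lambda_{1}'$ are independent of $\lambda$, and $\mathbf{h}\,\lambda\,|\tau'| = \mathbf{h}\,|\tau_{1}'|$; so it suffices to analyze the backward flow of \eqref{eq:kshkjhkf} from $\Phi_{\star}(\pi,\mathbf{f},\sigma_{\ast})$. Note that at this state $|\tfrac{1}{2}\beta_{\mathbf{a}}| = |\tfrac{1}{2}\beta_{\mathbf{b}}| = 2^{-1}\exp(-\tfrac{1+w}{1+2w}\mathbf{h}^{-1}(1+\mathbf{h}\log 2))$ and $|\beta_{\mathbf{c}}|$ are all exponentially small in $1/\mathbf{h}$, and that $\mathbf{h}$ will be taken small relative to $w,q$ -- which is precisely what membership in the domain $\mathcal{F}$ of Definition~\ref{cccccc} quantifies.

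\emph{The exactly solvable model, and the maps $\mathcal{P}_L,\mathcal{Q}_L,\lambda_L$.} The idea is to compare the true backward flow with the flow of the \emph{truncated system} obtained, as in (b.1), (b.2), by deleting $\beta_{\mathbf{b}},\beta_{\mathbf{c}}$ from \eqref{eq:kshkjhk1} and \eqref{eq:kshkjhk3} while keeping \eqref{eq:kshkjhk2} intact. For the truncated system, $\alpha_{\mathbf{a}} + \alpha_{\mathbf{b}}$, $\alpha_{\mathbf{a}} + \alpha_{\mathbf{c}}$ and $(\alpha_{\mathbf{a}})^{2} + (\beta_{\mathbf{a}})^{2}$ are conserved, so $(\alpha_{\mathbf{a}},\beta_{\mathbf{a}}) = (-r\tanh(r(\tau - \tau_{\ast})),\,\pm r\sech(r(\tau - \tau_{\ast})))$ for constants $r > 0$, $\tau_{\ast}$; then $\alpha_{\mathbf{b}},\alpha_{\mathbf{c}}$ are determined, and $\beta_{\mathbf{b}},\beta_{\mathbf{c}}$ follow from \eqref{eq:kshkjhk2} by one quadrature $\beta_{\mathbf{b}}(\tau) = \beta_{\mathbf{b}}(0)\exp(\int_{0}^{\tau}\alpha_{\mathbf{b}})$, which produces the $\log\cosh$-type terms and hence the $\log(2+w)$, $\log\tfrac{2+w}{1+w}$ entries in Definition~\ref{def:kdhkhskhkdssPT2xyz}. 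I would trace this model backward from $\Phi_{\star}(\pi,\mathbf{f},\sigma_{\ast})$: a free stretch on which $|\beta_{\mathbf{a}}|$ grows; the collision (the short region where $|\beta_{\mathbf{a}}|$ is of order unity), across which (rescaled) $\alpha_{\mathbf{a}}$ flips $-1 \mapsto +1$ while $\alpha_{\mathbf{b}},\alpha_{\mathbf{c}}$ each decrease by $2$; a second free stretch on which $|\beta_{\mathbf{a}}|$ decays and one of $|\beta_{\mathbf{b}}|,|\beta_{\mathbf{c}}|$ -- the one with the larger post-collision $|\alpha|$, the dichotomy being precisely $q \leq 1$ versus $q > 1$ -- climbs back up, stopping at the first backward time $\tau_{1}'$ with $|\beta_{\mathbf{a}'}| = |\beta_{\mathbf{b}'}|$ and $\mathbf{b}' = \mathbf{a}$. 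Reading off the state there, and solving \eqref{eq:kshkjhk3} for $\mu$ as in Definition~\ref{def:ljfdljdjfdl}, reproduces exactly $\pi' = \mathcal{P}_{L}(\pi,\mathbf{f})$, $\mathbf{f}' = \mathcal{Q}_{L}(\mathbf{f})$, $\lambda_{1}' = \lambda_{L}(\mathbf{f})$; in particular $\lambda_{L} = 1 + 1/w_{L} > 1$, which matches both branches of Definition~\ref{def:kdhkhskhkdssPT2xyz}. The elapsed time is $|\tau_{1}'| = O(1/\mathbf{h})$, with a contribution $\tfrac{1+w}{1+2w}\mathbf{h}^{-1}$ from the first free stretch (and $\tfrac{1+w}{1+2w}\in(\tfrac12,1)$ for $w > 0$) together with the second-stretch contribution; bounding these via the explicit formulae under the constraints defining $\mathcal{F}$ gives $\tfrac12 \leq \mathbf{h}\,|\tau_{1}'| \leq 3$.

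\emph{From the model to the true flow.} The plan is a continuity/bootstrap argument on the maximal backward interval on which the mixed products $\beta_{\mathbf{a}}\beta_{\mathbf{b}}$, $\beta_{\mathbf{a}}\beta_{\mathbf{c}}$, $\beta_{\mathbf{b}}\beta_{\mathbf{c}}$ -- the only quantities by which the full \eqref{eq:kshkjhk1}, \eqref{eq:kshkjhk3} differ from their truncations -- stay at most a fixed multiple of $\exp(-c/\mathbf{h})$, and $\Phi$ stays within an $O(\exp(-c/\mathbf{h}))$ tube of the model solution in well-chosen variables. On such an interval the signs of $\beta$ are frozen by (iii) and $\slaa{\alpha}$ remains negative and bounded away from $0$ (by (vi) and the backward analogue of \eqref{eq:dkhkdhd}), so the solution does not break down; one then writes variation-of-constants equations for the deviation from the model, with forcing of size $O(\exp(-c/\mathbf{h}))$, and estimates it by Gronwall against the model's linearized evolution. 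The hard part, which I expect to be the main obstacle, is to keep this amplification sub-exponential in $1/\mathbf{h}$ over the long ($O(1/\mathbf{h})$) interval, especially through the collision: one must work not with $\beta_{\mathbf{b}},\beta_{\mathbf{c}}$ but with $\log|\beta_{\mathbf{b}}|,\log|\beta_{\mathbf{c}}|$ (whose equations $\tfrac{\dd}{\dd\tau}\log|\beta_{\mathbf{b}}| = \alpha_{\mathbf{b}}$ are integrators, not amplifiers), and exploit the conserved quantities of the truncated system -- which for the full system drift by only $O(\exp(-c/\mathbf{h}))$ over the whole interval, so the true orbit stays on an almost-circle of almost-constant radius and differs from the model essentially by a reparametrization whose shift must be tracked. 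With the right variables and norm the a posteriori deviation becomes $O(\mathbf{h}^{-N}\exp(-c/\mathbf{h}))$, which undercuts the bootstrap thresholds once $\mathbf{h}$ is small, i.e.\ on $\mathcal{F}$; this closes the bootstrap, so $\Phi$ tracks the model throughout $[\tau',0]$.

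\emph{Conclusion.} I would then \emph{define} $\tau' = \tau'(\pi,\mathbf{f},\sigma_{\ast}) < 0$ as the first backward time with $|\beta_{\mathbf{a}'}| = |\beta_{\mathbf{b}'}|$; it exists and is unique near the model value $\tau_{1}'$ because $|\beta_{\mathbf{a}'}/\beta_{\mathbf{b}'}| - 1$ crosses $0$ transversally there (differentiate, using \eqref{eq:kshkjhk2} and the fact that $\alpha_{\mathbf{a}'} - \alpha_{\mathbf{b}'}$ is bounded away from $0$ near $\tau_{1}'$), followed by the implicit function theorem. At $\tau'$, two components of $\beta$ have equal absolute value and \eqref{eq:kshkjhk3} holds -- exactly the $6 - 2 = 4$ conditions that cut out the family $\{\,\lambda''\,\Phi_{\star}(\pi',\,\cdot\,,\sigma_{\ast}) : \lambda'' > 0\,\}$ once $\pi',\sigma_{\ast}$ are fixed -- so $\Phi(\tau') = \Lambda[\pi,\sigma_{\ast}](\mathbf{f})\,\Phi_{\star}(\pi',\Pi[\pi,\sigma_{\ast}](\mathbf{f}),\sigma_{\ast})$ with $\Lambda[\pi,\sigma_{\ast}](\mathbf{f}) \in (0,\infty)$ and $\Pi[\pi,\sigma_{\ast}](\mathbf{f}) \in (0,\infty)^{2}\times\R$ obtained from $\Phi(\tau')$ by the inversion formulae of Definition~\ref{def:ljfdljdjfdl}. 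The inequality $\tfrac12 \leq \mathbf{h}\,\lambda\,|\tau'| \leq 3$, the bound $\Lambda \geq 1$ (which holds since $\Lambda = |\alpha_{\mathbf{a}'}(\tau')|$ equals $\lambda_{L} = 1 + 1/w_{L} > 1$ up to an $\exp(-c/\mathbf{h})$ error that $\mathcal{F}$ is designed to dominate), and the assertion that $\Pi,\Lambda$ are $\mathcal{Q}_{L},\lambda_{L}$ up to errors exponentially small in $1/\mathbf{h}$, then all follow by combining the explicit model computation with the tracking estimate of the previous step; the sharp error bounds are the content of Proposition~\ref{prop:skhdkjhfd}.
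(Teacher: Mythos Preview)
Your plan is essentially the paper's approach: the scaling reduction, the explicit $\sech/\tanh$ reference solution of the truncated ($\beta_{\mathbf{b}},\beta_{\mathbf{c}}$-deleted) system, the switch to $\log|\beta|$-type variables, the definition of $\tau'$ via the first backward crossing $|\beta_{\mathbf{a}'}| = |\beta_{\mathbf{b}'}|$, and the final reading-off of $(\lambda,w',\mathbf{h}',q')$ are all exactly what the paper does in Section~3 (the paper's $\Phi_0$ in Definition~3.12 is your model, and $\tau_{2-}$ in \eqref{c5} is your $\tau'$).

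The one genuine difference is in how the tracking estimate is obtained. You propose Gronwall/bootstrap and correctly flag the danger: naive Gronwall over an interval of length $O(1/\mathbf{h})$ amplifies by $\exp(O(1/\mathbf{h}))$, which would swamp the $\exp(-c/\mathbf{h})$ forcing. The paper sidesteps this entirely by first rescaling time by $\mathbf{h}$ (so the interval has length $O(1)$), then working in the coordinates
\[
\big(A_{\mathbf{a}},\ \theta_{\mathbf{a}},\ \alpha_{\mathbf{b}}+\alpha_{\mathbf{a}},\ \alpha_{\mathbf{c}}+\alpha_{\mathbf{a}},\ \xi_{\mathbf{b}}+\xi_{\mathbf{a}},\ \xi_{\mathbf{c}}+\xi_{\mathbf{a}}\big),
\]
where $(\alpha_{\mathbf{a}},\beta_{\mathbf{a}}) = (-A_{\mathbf{a}}\tanh\varphi_{\mathbf{a}},\ \pm A_{\mathbf{a}}\sech\varphi_{\mathbf{a}})$ with $\varphi_{\mathbf{a}} = \mathbf{h}^{-1}A_{\mathbf{a}}(\tau-\theta_{\mathbf{a}})$ and $\xi_{\mathbf{m}} = \mathbf{h}\log|\tfrac12\beta_{\mathbf{m}}|$. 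In these variables the reference $\Phi_0$ is \emph{constant or linear} in $\tau$, and the full flow is recast as a Picard integral map $P$ (equations \eqref{eq:kdhskdfhkjdhddffd}) whose forcing terms $\mathbf{I}_S$ are uniformly $\exp(-c/\mathbf{h})$ and whose Lipschitz constant is $\leq \tfrac12$ on a fixed $\delta$-ball; the Banach fixed point theorem then gives $\Phi$ directly, with no linearized evolution to amplify anything. This is cleaner than your bootstrap and gives the explicit error constant $\mathbf{K}(\mathbf{f})$ of Definition~\ref{ccccc}. Your $\log|\beta|$ idea and the ``almost-conserved quantities'' remark are exactly the intuition behind this coordinate choice, so your plan would work once you commit to these variables---but the fixed-point packaging is what makes the quantitative bookkeeping tractable.
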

\begin{theorem}\label{thm:fdkhjkh}
Fix $\sigma_{\ast}\in \{-1,+1\}^3$ and $\pi_0 \in S_3$. Fix constants $\mathbf{D} \geq 1$,\, $\gamma \geq 0$. Suppose the vector $\mathbf{f}_0 = (\mathbf{h}_0,w_0,q_0) \in (0,\infty)^3$ satisfies
\begin{enumerate}[(i)]
\item $w_0 \in (0,1)\setminus \mathbb{Q}$ and $q_0 \in (0,\infty)\setminus \mathbb{Q}$.
\item $k_n \leq \mathbf{D}\,\max\{1,n\}^{\gamma}$ for all $n \geq -2$, where the two-sided sequence of strictly positive integers $(k_n)_{n\in \Z}$ is given by
$$(1+q_0)^{-1} = \langle k_0,k_{-1},k_{-2},\ldots \rangle
\qquad w_0 = \langle k_1,k_2,k_3,\ldots \rangle$$
\item $0< \mathbf{h}_0< \mathbf{A}^{\sharp}$ where $\mathbf{A}^{\sharp} = \mathbf{A}^{\sharp}(\mathbf{D},\gamma) = 2^{-56}\mathbf{D}^{-4} (4(\gamma+1))^{-4(\gamma+1)}$.
\end{enumerate}
Then  $\mathbf{f}_0$ and $\pi_0$ are the first elements of a unique sequence
$(\mathbf{f}_j)_{j \geq 0}$ in $\mathcal{F}$
and  a unique sequence  $(\pi_j)_{j \geq 0}$  in $S_3$, respectively, with
$\pi_j = \mathcal{P}_L(\pi_{j+1},\mathbf{f}_{j+1})$ and
$\mathbf{f}_j = \mathcal{Q}_L(\mathbf{f}_{j+1})$ for all $j \geq 0$. Furthermore, there exists a sequence $(\mathbf{g}_j)_{j \geq 0}$ in $\mathcal{F}$ such that for all $j \geq 0$,
$$\mathbf{g}_j = \Pi[\pi_{j+1},\sigma_{\ast}](\mathbf{g}_{j+1}) \qquad \text{and} \qquad \pi_{j} = \mathcal{P}_L(\pi_{j+1},\mathbf{g}_{j+1})$$ and,
with $\gold_+ = \tfrac{1}{2}(1+\sqrt{5})$,
$$\|\mathbf{g}_j - \mathbf{f}_j\|_{\R^3} \leq
\exp \Big( - \tfrac{1}{\mathbf{h}_0} \mathbf{A}^{\sharp}\,\gold_+^{((\mathbf{D}^{-1}j)^{1/( \gamma+1)})}\Big)
$$
If $\gamma > 1$
and $\mathbf{D} > \tfrac{1}{\log 2} \tfrac{\gamma}{\gamma-1}$, then the set of all vectors $\mathbf{f}_0\in (0,\infty)^3$ that satisfy (i), (ii), (iii) has positive Lebesgue measure.
\end{theorem}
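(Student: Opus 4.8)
The assertion packages three things, to be proved in order: existence and uniqueness of the approximate discrete orbit $(\mathbf{f}_j)_{j\geq0}$ in $\mathcal{F}$ together with $(\pi_j)_{j\geq0}$ in $S_3$; existence of an exact orbit $(\mathbf{g}_j)_{j\geq0}$ in $\mathcal{F}$ with the stated super-exponential bound; and the positive-measure claim. The first two parts are the specialization to the present situation of Theorems \ref{thm:main2} and \ref{thm_main3}.

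\textbf{The approximate orbit.} Given $\mathbf{f}_0=(\mathbf{h}_0,w_0,q_0)$, I would build the backward $\mathcal{Q}_L$-orbit one step at a time. At each step the branch of $\mathcal{Q}_L$ used at $\mathbf{f}_{j+1}$ is forced: by Definition \ref{def:kdhkhskhkdssPT2xyz}, $w_L$ lands in $(0,1)$ on the branch $q\leq1$ and in $(1,\infty)$ on the branch $q>1$, so the requirement $w_L(\mathbf{f}_{j+1})=w_j$ dictates the branch by whether $w_j\in(0,1)$ or $w_j>1$; with the branch fixed, the three scalar equations $\mathcal{Q}_L(\mathbf{f}_{j+1})=\mathbf{f}_j$ have a unique solution $\mathbf{f}_{j+1}$ for $\mathbf{h}_0$ small, by the implicit function theorem, the leading part of the map being $(w_{j+1},q_{j+1})=(\tfrac1{w_j}-1,\tfrac1{1+q_j})$ or $(w_j-1,q_j+1)$ and $\mathbf{h}_{j+1}=\mathbf{h}_j\,\mathrm{den}_L(\mathbf{f}_{j+1})/(2+w_{j+1})$. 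In the $(w,q)$-variables this is the continued-fraction bookkeeping of Section \ref{sec:dlhjdfll}: the orbit processes the digit $k_n$ of $w_0=\langle k_1,k_2,\ldots\rangle$ in $k_n$ consecutive steps while $q_j$ records the reversed past digits, which is how the two-sided sequence $(k_n)_{n\in\Z}$ attached to $\mathbf{f}_0$ by condition (ii) enters. Conditions (ii) and (iii) are exactly what forces $\mathbf{f}_j\in\mathcal{F}$ for every $j$: telescoping $\mathbf{h}_L=\mathbf{h}(2+w)/\mathrm{den}_L$ along the orbit shows $1/\mathbf{h}_j$ grows at least like $\gold_+^{\,m(j)}$ with $m(j)\gtrsim(\mathbf{D}^{-1}j)^{1/(\gamma+1)}$ — the sublinear exponent because processing the first $m$ digits uses $\sum_{n\leq m}k_n\leq\mathbf{D}\sum_{n\leq m}n^\gamma\asymp\mathbf{D}m^{\gamma+1}$ steps — which is the decay rate of Proposition \ref{prop:kjdhkhkd}; with $\mathbf{h}_0<\mathbf{A}^\sharp$ this keeps $\mathbf{h}_j$ below the $\mathcal{F}$-threshold of Definition \ref{cccccc}, while irrationality of $w_0,q_0$ makes the digit sequence well defined and keeps $q_j\neq1$. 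Finally $(\pi_j)$ is uniquely determined, since $\mathcal{P}_L(\,\cdot\,,\mathbf{f}_{j+1})$ is a bijection of $S_3$ (a $3$-cycle when $q_{j+1}\leq1$, a transposition when $q_{j+1}>1$), so $\pi_0$ fixes $\pi_{j+1}$ from $\pi_j$ inductively.

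\textbf{The exact orbit and the shadowing bound.} Writing $\Pi[\pi,\sigma_{\ast}]=\mathcal{Q}_L+E[\pi,\sigma_{\ast}]$ with $\|E[\pi,\sigma_{\ast}](\mathbf{h},w,q)\|$ exponentially small in $1/\mathbf{h}$ (Proposition \ref{prop:skhdkjhfd}), and setting $\epsilon_j=\exp(-\tfrac1{\mathbf{h}_0}\mathbf{A}^\sharp\gold_+^{((\mathbf{D}^{-1}j)^{1/(\gamma+1)})})$, I would look for $\mathbf{g}_j=\mathbf{f}_j+\delta_j$ in the Banach space $X=\{(\delta_j)_{j\geq0}:\sup_j\|\delta_j\|/\epsilon_j<\infty\}$. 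Subtracting $\mathbf{f}_j=\mathcal{Q}_L(\mathbf{f}_{j+1})$ from $\mathbf{g}_j=\Pi(\mathbf{g}_{j+1})$ recasts the recursion as $\delta_j=A_{j+1}\delta_{j+1}+N_j(\delta_{j+1})+E_j$, with $A_{j+1}=D\mathcal{Q}_L(\mathbf{f}_{j+1})$, $N_j$ the higher-order remainder (including the variation of $E$), and $E_j=E[\pi_{j+1},\sigma_{\ast}](\mathbf{f}_{j+1})$ — exponentially small in $1/\mathbf{h}_{j+1}$, hence super-exponentially small in $j$, with $\sup_j\|E_j\|/\epsilon_j\ll1$ precisely when $\mathbf{A}^\sharp$ is small relative to the above growth rate of $1/\mathbf{h}_j$. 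The operator $\mathrm{I}-\mathcal{A}$, $(\mathcal{A}\delta)_j=A_{j+1}\delta_{j+1}$, is boundedly invertible on $X$ because $D\mathcal{Q}_L$ is partially hyperbolic along the orbit: $\mathcal{Q}_L$ contracts the $w$-direction (rate $(1+w)^{-2}$ on the non-neutral step of each digit block) and expands the $\mathbf{h}$- and $q$-directions (rates $\sim(2+w)/\mathrm{den}_L$ and $\sim(1+q)^2$), so one inverts by summing future sources on the $w$-block and, after passing to $A^{-1}$, future sources on the $(\mathbf{h},q)$-block, with the free unstable datum at $j=0$ pinned (e.g.\ the $(\mathbf{h},q)$-component of $\delta_0$ set to $0$). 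Then $\delta\mapsto(\mathrm{I}-\mathcal{A})^{-1}(N(\delta)+E)$ is a contraction of the unit ball of $X$ — the weights increase fast enough that $\epsilon_{j+1}^2\ll\epsilon_j$ absorbs $N_j$ — and its fixed point is the desired $(\mathbf{g}_j)$ with $\|\mathbf{g}_j-\mathbf{f}_j\|\leq\epsilon_j$. Since $\mathbf{h}_j\gg\epsilon_j$ and, by the quantitative $\mathcal{F}$-condition of Definition \ref{cccccc} with $\mathbf{h}_0<\mathbf{A}^\sharp$, the distances from $\mathbf{f}_j$ to $\partial\mathcal{F}$ and to the wall $\{q=1\}$ both exceed $\epsilon_j$, one gets $\mathbf{g}_j\in\mathcal{F}$ and $\mathcal{P}_L(\pi_{j+1},\mathbf{g}_{j+1})=\mathcal{P}_L(\pi_{j+1},\mathbf{f}_{j+1})=\pi_j$. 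This is the step I expect to be the main obstacle: making the partial hyperbolicity of $D\mathcal{Q}_L$ uniform enough along the orbit — despite the block-neutral directions and the unbounded rates where the digits $k_n$ are large — to invert $\mathrm{I}-\mathcal{A}$ on $X$, and keeping track of the $\mathcal{F}$- and $\{q=1\}$-margins; it is exactly here that the full quantitative content of Propositions \ref{prop:skhdkjhfd} and \ref{prop:kjdhkhkd} is used.

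\textbf{Positive measure.} Condition (ii) constrains only the three past digits $k_0,k_{-1},k_{-2}$ (each $\leq\mathbf{D}$, via $(1+q_0)^{-1}=\langle k_0,k_{-1},k_{-2},\ldots\rangle$) and the future digits $k_n\leq\mathbf{D}n^\gamma$, $n\geq1$ (via $w_0=\langle k_1,k_2,\ldots\rangle$). The set of $q_0$ whose first three past digits are $\leq\mathbf{D}$ is a nonempty finite union of intervals of positive length ($\mathbf{D}\geq1$). For $w_0$, the $G$-invariant Gauss measure gives $\mathrm{Prob}[k_n>\mathbf{D}n^\gamma]\leq(\mathbf{D}n^\gamma\log2)^{-1}$, hence $\sum_{n\geq1}\mathrm{Prob}[k_n>\mathbf{D}n^\gamma]\leq\zeta(\gamma)(\mathbf{D}\log2)^{-1}$, which is $<1$ as soon as $\gamma>1$ and $\mathbf{D}>\tfrac1{\log2}\tfrac{\gamma}{\gamma-1}$ (since $\zeta(\gamma)<\tfrac{\gamma}{\gamma-1}$); by the union bound the set of qualifying $w_0$ then has positive Gauss, hence positive Lebesgue, measure. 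As (i) removes only a null set, (iii) is the open condition $\mathbf{h}_0\in(0,\mathbf{A}^\sharp)$, and the three constraints involve $w_0$, $q_0$, $\mathbf{h}_0$ essentially separately, the set of qualifying $\mathbf{f}_0$ has positive Lebesgue measure in $(0,\infty)^3$.
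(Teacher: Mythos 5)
Your overall architecture --- (1) construct the backward $\mathcal{Q}_L$-orbit from the continued-fraction data of $(k_n)_{n\in\Z}$, (2) shadow it by an exact $\Pi$-orbit using the error bound of Proposition \ref{prop:skhdkjhfd}, (3) get positive measure from the Gauss measure, invariance under $G$, and a union bound --- is the paper's architecture (Proposition \ref{prop:kjdhkhkd} combined with Theorems \ref{thm:main2} and \ref{thm_main3}), and your parts (1) and (3) essentially coincide with it. One small caveat in part (1): the uniqueness asserted in the theorem is uniqueness among preimages in $\mathcal{F}$, which a local implicit-function argument does not give by itself; it is cleaner to note that on each branch $\mathcal{Q}_L$ is injective (the $w$-component forces $w_{j+1}$, and for fixed $w$ the map is fractional-linear in $(q,\mathbf{h})$), i.e.\ to use the explicit inverses $\boldsymbol{\mathcal{Q}}_R\{w\}$ of Section \ref{sec:dlhjdfll}, which is what the paper does. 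Also, verifying $\mathbf{f}_j\in\mathcal{F}$ and the decay rate of $\mathbf{h}_j$ is precisely the propagator bookkeeping of Proposition \ref{prop:kjdhkhkd} and the constant-chasing in the proof of Theorem \ref{thm_main3}; your sketch presupposes these estimates rather than carrying them out.

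The genuine divergence, and the place where your sketch goes astray, is the shadowing step. The recursion $\delta_j=A_{j+1}\delta_{j+1}+N_j(\delta_{j+1})+E_j$ runs from index $j+1$ down to $j$: the sequence is determined by its tail, there is no boundary condition at $j=0$, and your device of ``pinning the $(\mathbf{h},q)$-component of $\delta_0$ to $0$'' is inconsistent with the recursion (once $\delta_1$ is known, $\delta_0$ is already determined). More importantly, the partial hyperbolicity of $D\mathcal{Q}_L$ that you single out as the main obstacle is not needed at all: in your weighted space the operator $(\mathcal{A}\delta)_j=A_{j+1}\delta_{j+1}$ has norm at most $\sup_j \mathrm{Lip}(\delta_{j+1},\mathbf{f}_{j+1})\,\epsilon_{j+1}/\epsilon_j$, which is $\ll1$ because the weights decay super-exponentially in $j$ while the Lipschitz constants grow only polynomially in $K_j$; hence $\mathrm{I}-\mathcal{A}$ is invertible by a Neumann series, with no stable/unstable splitting and no pinning. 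The paper dispenses with even this much structure: Proposition \ref{prop:abstract} uses no linearization and no differentiability, only the Lipschitz bound for $\mathcal{Q}_L$ on balls (Lemma \ref{lem:xy2}), the uniform-on-balls error bound (Lemma \ref{lem:xy1}), and the summability hypothesis \eqref{eq:kfdhkhdkd}; it builds finite backward orbits started at $\mathbf{g}^{\ell}_\ell=\mathbf{f}_\ell$, propagates the error via $E^{\ell}_{j-1}=\mathrm{Lip}\,E^{\ell}_j+\mathrm{Err}$, and extracts a limit by a diagonal compactness argument, after which the margins $B[\slaa{\delta}_j,\mathbf{f}_j]\subset\mathcal{F}$ and the bound $\slaa{\delta}_j\le\exp(-\tfrac{1}{\mathbf{h}_0}\mathbf{A}^{\sharp}\gold_+^{N(j)})$ with $N(j)\ge(\mathbf{D}^{-1}j)^{1/(\gamma+1)}$ give exactly the stated inequality. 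So your route can be repaired into a correct proof (and then buys nothing beyond the paper's), but the inversion mechanism you propose is both incorrect as stated and superfluous, and the quantitative estimates that actually carry the theorem are left to be done.
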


The class of solutions that we construct is generic in the sense of the last sentence of Theorem \ref{thm:fdkhjkh}. It would be desirable to have a stronger genericity statement, namely a genericity statement for ``the $\mathbf{g}_0$ rather than the $\mathbf{f}_0$''. 
\vskip 2mm

For the causal structure and particle horizons, see
Proposition \ref{prop:dskhdkhfd} and Section \ref{sec:fkhkhh}.

\vskip 2mm
It is a pleasure to thank J. Fr\"ohlich, G.M. Graf and T. Spencer for their support and encouragement.

\section{Spatially homogeneous vacuum spacetimes}\label{sec:dkfhkjdhkfd}
\begin{proposition}\label{prop:khdkhkhkhss}
Let $\ua\oplus \ub: (\tau_0,\tau_1)\to \R^3\oplus \R^3$ be a solution to
\eqref{eq:kshkjhkf}
and let $\Omega \subset \R^3$ be open, with Cartesian coordinates $\mathbf{x} = (x^1,x^2,x^3)$.
Fix any $\tau_{\ast} \in (\tau_0,\tau_1)$ and let
\begin{align*}
v_1 & = \textstyle\sum_{\mu = 1}^3 {v_1}^{\mu}(\mathbf{x})\tfrac{\p}{\p x^{\mu}} &
v_2 & = \textstyle\sum_{\mu = 1}^3 {v_2}^{\mu}(\mathbf{x})\tfrac{\p}{\p x^{\mu}} &
v_3 & = \textstyle\sum_{\mu = 1}^3 {v_3}^{\mu}(\mathbf{x})\tfrac{\p}{\p x^{\mu}}
\end{align*}
be three smooth vector fields on $\Omega$ that are a frame at each point and satisfy
$$[v_{\xb},v_{\xc}] = \ub_{\xa}(\tau_{\ast})\, v_{\xa}\qquad \text{on $\Omega$}$$
for all $(\xa,\xb,\xc)\in \mathcal{C} \stackrel{\text{def}}{=} \{(1,2,3), (2,3,1), (3,1,2)\}$.
Introduce 
\begin{align*}
e_0 & = e^{\zeta(\tau)} \tfrac{\p}{\p \tau} &
e_{\xa} & = e^{\zeta_{\xa}(\tau)}
v_{\xa}
& \xa=1,2,3\\
\zeta(\tau) & = \zeta_1(\tau)+\zeta_2(\tau) + \zeta_3(\tau) &
\zeta_{\xa}(\tau) & =  -\tfrac{1}{2} \textstyle\int_{\tau_{\ast}}^{\tau}\dd s\, \ua_{\xa}(s)
& \xa=1,2,3
 \end{align*}
on the domain $(\tau_0,\tau_1)\times \Omega \subset \R^4$.
Then, the Lorentzian metric $g$ with inverse
$$
g^{-1} = - e_0\otimes e_0
+ e_1\otimes e_1 + e_2 \otimes e_2 + e_3\otimes e_3$$
is a solution to the vacuum Einstein equations $\mathrm{Ric}(g) = 0$ on $(\tau_0,\tau_1)\times \Omega$.
\end{proposition}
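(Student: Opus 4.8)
The plan is to recognize the stated metric $g$ as the natural left-invariant Lorentzian metric on $\R \times G$, where $G$ is the three-dimensional Lie group whose Lie algebra has structure constants determined by the frozen values $\ub_{\xa}(\tau_{\ast})$, and to verify $\mathrm{Ric}(g)=0$ by a direct orthonormal-frame computation. First I would record the structure equations of the coframe dual to $(e_0,e_1,e_2,e_3)$. Writing $\theta^0,\theta^1,\theta^2,\theta^3$ for the dual coframe, the relations $[v_{\xb},v_{\xc}] = \ub_{\xa}(\tau_{\ast})\,v_{\xa}$ together with the $\tau$-dependent conformal factors $e^{\zeta}$, $e^{\zeta_{\xa}}$ give $\dd\theta^0 = 0$ and $\dd\theta^{\xa}$ expressible through the $\theta$'s with coefficients built from $\zeta'_{\xb}$ and from $n_{\xa}(\tau):= \ub_{\xa}(\tau_{\ast})\,e^{\zeta_{\xa}-\zeta_{\xb}-\zeta_{\xc}}$. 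A short computation, using $\zeta = \zeta_1+\zeta_2+\zeta_3$ and $\zeta'_{\xa} = -\tfrac12\ua_{\xa}$, shows $e^{-\zeta}\tfrac{\dd}{\dd\tau}\log|n_{\xa}| = \ua_{\xb}+\ua_{\xc}$ and, crucially, that $e^{-\zeta}\tfrac{\dd}{\dd\tau}(e^{\zeta_{\xa}}\cdot\text{stuff})$ reproduces exactly the left sides of \eqref{eq:kshkjhk1}, \eqref{eq:kshkjhk2}. In effect, the six ODEs \eqref{eq:kshkjhk1}–\eqref{eq:kshkjhk2} are precisely the off-diagonal-free part of $\mathrm{Ric}(g)=0$, and the constraint \eqref{eq:kshkjhk3} is the Hamiltonian (the $00$) constraint.

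Concretely, I would compute the Levi-Civita connection one-forms $\omega^{\mu}{}_{\nu}$ from Cartan's first equation $\dd\theta^{\mu} + \omega^{\mu}{}_{\nu}\wedge\theta^{\nu}=0$ with $\omega_{\mu\nu}=-\omega_{\nu\mu}$ in the Lorentzian sense ($\eta = \mathrm{diag}(-1,1,1,1)$), then the curvature two-forms $\Omega^{\mu}{}_{\nu} = \dd\omega^{\mu}{}_{\nu} + \omega^{\mu}{}_{\lambda}\wedge\omega^{\lambda}{}_{\nu}$, and read off $R_{\mu\nu}$ by contraction. Because the frame is ``diagonal'' (no cross terms, each $e_{\xa}$ along a single left-invariant direction), the Ricci tensor is diagonal in this frame, with four independent components: $R_{00}$, $R_{11}$, $R_{22}$, $R_{33}$. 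I expect to find that $R_{\xa\xa}=0$ is equivalent to \eqref{eq:kshkjhk1} for the index $\xa$ (after multiplying by a positive function and differentiating once, using \eqref{eq:kshkjhk2} to handle $\tfrac{\dd}{\dd\tau}\ub$), that $R_{00}=0$ is equivalent to \eqref{eq:kshkjhk3}, and that the off-diagonal equations are automatically satisfied. One must also check the would-be ``momentum'' constraints $R_{0\xa}=0$ — for a diagonal Bianchi class A model (which VIII and IX are) these vanish identically, but I would confirm this directly from the structure equations; it is the reason no further constraint beyond \eqref{eq:kshkjhk3} appears.

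A cleaner variant, which I would actually write up, avoids re-deriving everything: introduce a new time coordinate $t$ with $\dd t = e^{-\zeta}\dd\tau$ (so $e_0 = \p_t$), and metric $g = -\dd t^2 + \sum_{\xa} a_{\xa}(t)^2\,(\text{left-invariant one-form})^2$, i.e.\ the standard diagonal Bianchi ansatz with $a_{\xa} = e^{-\zeta_{\xa}}$. The vacuum Einstein equations for this ansatz are classical (see e.g.\ the Bianchi-cosmology literature): they reduce to second-order ODEs for $\log a_{\xa}$ plus one first-order constraint. Substituting $a_{\xa}=e^{-\zeta_{\xa}}$, $\tfrac{\dd}{\dd t}= e^{\zeta}\tfrac{\dd}{\dd\tau}$, and $\tfrac{\dd}{\dd\tau}\zeta_{\xa} = -\tfrac12\ua_{\xa}$ turns them into \eqref{eq:kshkjhk1}–\eqref{eq:kshkjhk3} after elementary algebra; the variable $\ua_{\xa}$ is (up to the factor) the logarithmic expansion rate of the $\xa$-axis, and \eqref{eq:kshkjhk2} is simply its definition differentiated, $\tfrac{\dd}{\dd\tau}\ub_{\xa}$ coming from $n_{\xa}\propto \ub_{\xa}(\tau_{\ast})$ held fixed while the frame scales.

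The main obstacle is purely bookkeeping: getting the signs, the factor $\tfrac12$ in $\zeta_{\xa}$, and the precise combination of quadratic terms in $\ub$ on the right of \eqref{eq:kshkjhk1} to match, since these come from the curvature of the group $G$ (the Ricci curvature of a left-invariant metric, via the Milnor/structure-constant formulas, produces exactly the expression $-(\ub_{\xa})^2+(\ub_{\xb})^2+(\ub_{\xc})^2-2\ub_{\xb}\ub_{\xc}$ when the metric is diagonalized with eigenvalues $a_{\xa}^2$ — I would cite or rederive this spatial-curvature identity). There is no analytic difficulty and no subtlety about the domain: $\Omega$ open and the frame condition $[v_{\xb},v_{\xc}]=\ub_{\xa}(\tau_{\ast})v_{\xa}$ are exactly what is needed for the left-invariant structure to make sense locally (the Jacobi identity for these structure constants is automatic since two of the three $\ub_{\xa}(\tau_{\ast})$ enter each bracket and the triple product structure of $\mathcal{C}$ makes $[[v_1,v_2],v_3]+\mathrm{cyc}=0$ hold identically). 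I would end by remarking that smoothness of $g$ on $(\tau_0,\tau_1)\times\Omega$ is immediate from smoothness of the solution $\ua\oplus\ub$ and of the $v_{\xa}$.
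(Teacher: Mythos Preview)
Your proposal is correct and takes essentially the paper's own approach: a direct orthonormal-frame computation of the Ricci tensor in the frame $(e_0,e_1,e_2,e_3)$. The paper uses the Koszul formula to write out $\nabla_{e_a}e_b$ explicitly and then contracts the Riemann tensor, whereas you opt for Cartan's structure equations; these are equivalent bookkeeping schemes for the same calculation.

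One clarification worth making when you write it up: the role of \eqref{eq:kshkjhk2} is slightly cleaner than you indicate. The paper's first step is to observe that \eqref{eq:kshkjhk2} is equivalent to $\ub_{\xa}(\tau) = \ub_{\xa}(\tau_{\ast})\,e^{-2\zeta_{\xa}(\tau)}$, so that the bracket $[e_{\xb},e_{\xc}] = e^{\zeta_{\xb}+\zeta_{\xc}}\ub_{\xa}(\tau_{\ast})\,v_{\xa}$ becomes exactly $e^{\zeta}\ub_{\xa}(\tau)\,e_{\xa}$. In your notation this says your structure function $n_{\xa}(\tau)$ equals $e^{\zeta}\ub_{\xa}(\tau)$ (note the exponent should be $-\zeta_{\xa}+\zeta_{\xb}+\zeta_{\xc}$, not $\zeta_{\xa}-\zeta_{\xb}-\zeta_{\xc}$). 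Once this identification is made, the time-dependent $\ub_{\xa}(\tau)$ appears directly in the curvature and the spatial Ricci components reproduce \eqref{eq:kshkjhk1} without further differentiation. Also, $\mathrm{Ric}(e_0,e_0)=0$ is not the constraint \eqref{eq:kshkjhk3} alone: the paper gets $\mathrm{Ric}(e_0,e_0) = -\tfrac12 e^{2\zeta}\tfrac{\dd}{\dd\tau}(\ua_1+\ua_2+\ua_3) + \tfrac12 e^{2\zeta}(\ua_2\ua_3+\ua_3\ua_1+\ua_1\ua_2)$, which vanishes by combining the sum of the three equations \eqref{eq:kshkjhk1} with \eqref{eq:kshkjhk3}.
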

\begin{proof}
In this proof, everywhere $(\xa,\xb,\xc)\in \mathcal{C}$. It follows from $\td{e^{-2\zeta_{\xa}}} = \ua_{\xa} e^{-2\zeta_{\xa}}$ and $\zeta_{\xa}(\tau_{\ast}) =0$
and \eqref{eq:kshkjhk2} that $\ub_{\xa}(\tau) = \ub_{\xa}(\tau_{\ast}) e^{-2\zeta_{\xa}(\tau)}$. Now, by direct calculation,
\begin{align*}
[e_0,e_{\xa}] & = -\tfrac{1}{2} e^{\zeta} \ua_{\xa}  e_{\xa}
& 
[e_{\xb},e_{\xc}] & = e^{\zeta} \ub_{\xa} e_{\xa}
\end{align*}
Let $\nabla$ be the Levi-Civita connection associated to $g$. Then, for all $a,b,c=0,1,2,3$,
$$
g\big(\nabla_{e_{a}}e_{b},\,e_{c}\big)
= \tfrac{1}{2}\Big(
g\big([e_{a},e_{b}],e_{c}\big)
-g\big([e_{b},e_{c}],e_{a}\big)
+g\big([e_{c},e_{a}],e_{b}\big)
\Big)
$$
By direct calculation,
\begin{align*}
\nabla_{e_0}e_0 & = 0 & \nabla_{e_{\xa}}e_{\xa} & = \tfrac{1}{2}e^{\zeta} \ua_{\xa} e_0\\
\nabla_{e_0}e_{\xa} & = 0 & \nabla_{e_{\xb}}e_{\xc} & = \tfrac{1}{2}e^{\zeta}(+\ub_{\xa} - \ub_{\xb}+\ub_{\xc}) e_{\xa}\\
\nabla_{e_{\xa}}e_0 & = \tfrac{1}{2}e^{\zeta} \ua_{\xa} e_{\xa} & \nabla_{e_{\xc}}e_{\xb} & = \tfrac{1}{2}e^{\zeta}(-\ub_{\xa} - \ub_{\xb}+\ub_{\xc}) e_{\xa}
\end{align*}
and
\begin{align*}
& \mathrm{Riem}(e_{\xa},e_{\xb},e_{\xa},e_{\xb}) \\
& = \tfrac{1}{4}e^{2\zeta}
\Big(
(+ \beta_{\xa} - \beta_{\xb} - \beta_{\xc})
(+\beta_{\xa} - \beta_{\xb} + \beta_{\xc})
+ 2\beta_{\xc}(+\beta_{\xa} + \beta_{\xb} - \beta_{\xc})
+ \alpha_{\xa} \alpha_{\xb}
\Big)\\
\rule{0pt}{11pt} & \mathrm{Riem}(e_0,e_{\mathbf{a}},e_{\xa},e_{\xb})\\
& = \tfrac{1}{4} e^{2\zeta} \delta_{\mathbf{a}\mathbf{k}}
\Big(
(-\beta_{\xa} + \beta_{\xb} - \beta_{\xc}) \alpha_{\xa}
+ (+\beta_{\xa} - \beta_{\xb}-\beta_{\xc}) \alpha_{\xb} + 2 \alpha_{\xc}\beta_{\xc}
\Big)\\
\rule{0pt}{11pt} & \mathrm{Riem}(e_0,e_{\mathbf{a}},e_0,e_{\xa})\\
& =
- \tfrac{1}{4}  e^{2\zeta}\,
\delta_{\mathbf{a}\xa}\,\Big(
 2 \tfrac{\dd}{\dd \tau}  \alpha_{\xa}
- (\alpha_{\xb} + \alpha_{\xc}) \alpha_{\xa}
\Big)
\end{align*}
Furthermore, $\mathrm{Riem}(e_{\mathbf{a}},e_{\mathbf{b}},e_{\mathbf{c}},e_{\mathbf{d}}) = 0$ unless
$\{\mathbf{a},\mathbf{b}\} = \{\mathbf{c},\mathbf{d}\}$ with $\mathbf{a}\neq \mathbf{b}$. The
Riemann curvature tensor is completely
specified by these identities and by its algebraic symmetries. It follows that
\begin{alignat*}{4}
&\mathrm{Ric}(e_0,e_0) && = -\tfrac{1}{2}e^{2\zeta}\,\tfrac{\dd}{\dd \tau}(\ua_1+\ua_2+\ua_3)
+ \tfrac{1}{2}e^{2\zeta}(\ua_2\ua_3+\ua_3\ua_1+\ua_1\ua_2)\\
&\mathrm{Ric}(e_0,e_{\xa}) && = 0\\
&\mathrm{Ric}(e_{\xa},e_{\xa}) && = 
+\tfrac{1}{2} e^{2\zeta} \td{\ua_{\xa}}
+ \tfrac{1}{2}e^{2\zeta}\big(+ (\ub_{\xa})^2 - (\ub_{\xb})^2 - (\ub_{\xc})^2 + 2\ub_{\xb}\ub_{\xc} \big)\\
&\mathrm{Ric}(e_{\xb},e_{\xc}) && = 0
\end{alignat*}
The right hand sides of the first and third equation vanish by \eqref{eq:kshkjhk1} and \eqref{eq:kshkjhk3}. \qed
\end{proof}
\begin{proposition} \label{prop:dskhdkhfd}
In the context of Proposition \ref{prop:khdkhkhkhss}, let $\gamma: (\tau_0',\tau_1') \to (\tau_0,\tau_1)\times \Omega$ be a smooth curve given by $\gamma(\tau) = (\tau,\gamma^{\sharp}(\tau))$, where $\gamma^{\sharp}$ is a curve on $\Omega$. Let $g^{\sharp}$ be the Riemannian metric on $\Omega$ defined by $g^{\sharp}(v_{\mathbf{a}},v_{\mathbf{b}}) = \delta_{\mathbf{a}\mathbf{b}}$ for all $\mathbf{a},\mathbf{b}=1,2,3$.
If $\gamma$ is non-spacelike with respect to $g$, then the length of $\gamma^{\sharp}$ with respect to $g^{\sharp}$ is bounded by
\begin{equation}\label{eq:dfkhkdhkfddhfkdhdk}
\mathrm{Length}_{g^{\sharp}}(\gamma^{\sharp})\; \leq\;  \int_{\tau_0'}^{\tau_1'}
\dd \tau\, \max_{(\xa,\xb,\xc)\in \mathcal{C}}
e^{-\zeta_{\xb}-\zeta_{\xc}}
\end{equation}
The integral on the right hand side may be divergent.
\end{proposition}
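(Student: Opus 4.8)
The plan is to bound the $g^{\sharp}$-length of $\gamma^{\sharp}$ by comparing the spatial velocity of $\gamma^{\sharp}$ to the temporal velocity of $\gamma$, using the fact that $\gamma$ is non-spacelike. Write $\gamma^{\sharp}(\tau) = \mathbf{x}(\tau) \in \Omega$ and decompose its velocity in the frame $v_1,v_2,v_3$, say $\dot{\gamma}^{\sharp}(\tau) = \sum_{\mathbf a} c^{\mathbf a}(\tau)\, v_{\mathbf a}$ at the point $\gamma^{\sharp}(\tau)$. Since $g^{\sharp}(v_{\mathbf a},v_{\mathbf b}) = \delta_{\mathbf a \mathbf b}$, we have $\mathrm{Length}_{g^{\sharp}}(\gamma^{\sharp}) = \int \sqrt{\sum_{\mathbf a}(c^{\mathbf a})^2}\, \dd \tau$. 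The tangent vector to $\gamma$ in the frame $e_0,e_1,e_2,e_3$ is $\dot\gamma = e^{-\zeta}e_0 + \sum_{\mathbf a} e^{-\zeta_{\mathbf a}} c^{\mathbf a} e_{\mathbf a}$, because $\gamma(\tau) = (\tau,\gamma^{\sharp}(\tau))$ gives $\partial_\tau$-component $1 = e^{-\zeta}\cdot e^{\zeta}$ and $e_{\mathbf a} = e^{\zeta_{\mathbf a}}v_{\mathbf a}$. Hence $g(\dot\gamma,\dot\gamma) = -e^{-2\zeta} + \sum_{\mathbf a} e^{-2\zeta_{\mathbf a}}(c^{\mathbf a})^2$, and the non-spacelike condition $g(\dot\gamma,\dot\gamma) \leq 0$ reads $\sum_{\mathbf a} e^{-2\zeta_{\mathbf a}}(c^{\mathbf a})^2 \leq e^{-2\zeta}$.

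Next I would convert this into a bound on the Euclidean norm $\sqrt{\sum_{\mathbf a}(c^{\mathbf a})^2}$. Since $\zeta = \zeta_1 + \zeta_2 + \zeta_3$, for each $\mathbf a$ with $(\mathbf a,\mathbf b,\mathbf c)\in\mathcal C$ we have $e^{-2\zeta_{\mathbf a}}(c^{\mathbf a})^2 \leq e^{-2\zeta}$, i.e.\ $|c^{\mathbf a}| \leq e^{-\zeta_{\mathbf b}-\zeta_{\mathbf c}}$. Therefore $|c^{\mathbf a}| \leq \max_{(\mathbf a,\mathbf b,\mathbf c)\in\mathcal C} e^{-\zeta_{\mathbf b}-\zeta_{\mathbf c}}$ for each $\mathbf a$, so the maximum of $|c^{\mathbf a}|$ over $\mathbf a$ is bounded by that same quantity. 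The only remaining point is to pass from the $\ell^\infty$-type bound on the $c^{\mathbf a}$ to the $\ell^2$-norm appearing in $\mathrm{Length}_{g^{\sharp}}$. A clean way: from $\sum_{\mathbf a} e^{-2\zeta_{\mathbf a}}(c^{\mathbf a})^2 \leq e^{-2\zeta_1-2\zeta_2-2\zeta_3}$ and the elementary inequality, valid for positive reals, that $\sum (c^{\mathbf a})^2 \leq \big(\max_{\mathbf a} e^{\zeta_{\mathbf a}}\big)^2 \sum e^{-2\zeta_{\mathbf a}}(c^{\mathbf a})^2 \cdot$(something) — actually the sharp route is just $(c^{\mathbf a})^2 \leq e^{2\zeta_{\mathbf a}}\cdot e^{-2\zeta} = e^{-2\zeta_{\mathbf b}-2\zeta_{\mathbf c}}$ termwise, so $\sqrt{\sum_{\mathbf a}(c^{\mathbf a})^2} \leq \sqrt{\sum_{\mathbf a} e^{-2\zeta_{\mathbf b}(\mathbf a)-2\zeta_{\mathbf c}(\mathbf a)}}$. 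This is an $\ell^2$ bound, not the $\ell^\infty$ bound claimed; to land exactly on $\max_{(\mathbf a,\mathbf b,\mathbf c)\in\mathcal C} e^{-\zeta_{\mathbf b}-\zeta_{\mathbf c}}$ I would instead argue pointwise in $\tau$ that the spatial length element $\sqrt{g^{\sharp}(\dot\gamma^{\sharp},\dot\gamma^{\sharp})}$ is dominated by the null cone, which identifies the relevant extremal direction and yields the stated right-hand side without the $\sqrt 3$ loss — i.e.\ one orders $\zeta_1\le\zeta_2\le\zeta_3$, notes the binding constraint is for the index $\mathbf a$ with the largest $\zeta_{\mathbf a}$, and checks directly that the extremal null direction gives spatial speed exactly $\max_{(\mathbf a,\mathbf b,\mathbf c)\in\mathcal C}e^{-\zeta_{\mathbf b}-\zeta_{\mathbf c}}$.

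Integrating the pointwise bound $\sqrt{g^{\sharp}(\dot\gamma^{\sharp},\dot\gamma^{\sharp})} \leq \max_{(\mathbf a,\mathbf b,\mathbf c)\in\mathcal C} e^{-\zeta_{\mathbf b}(\tau)-\zeta_{\mathbf c}(\tau)}$ over $\tau\in(\tau_0',\tau_1')$ gives \eqref{eq:dfkhkdhkfddhfkdhdk}. The only real obstacle is the bookkeeping in the last paragraph: getting the constant exactly right (i.e.\ $\max$ rather than $\ell^2$-sum of the three exponentials) requires identifying which of the three constraints $e^{-2\zeta_{\mathbf a}}(c^{\mathbf a})^2 \leq e^{-2\zeta}$ is active along the spatially-fastest null direction, which amounts to a short extremization over the light cone $\sum_{\mathbf a} e^{-2\zeta_{\mathbf a}}(c^{\mathbf a})^2 = e^{-2\zeta}$ of the Euclidean speed $\sum_{\mathbf a}(c^{\mathbf a})^2$; the maximizer concentrates all velocity in the single direction $\mathbf a$ minimizing $e^{-2\zeta_{\mathbf a}}$, equivalently maximizing $\zeta_{\mathbf a}$, and this gives speed $e^{2\zeta_{\mathbf a}-2\zeta} = e^{-2\zeta_{\mathbf b}-2\zeta_{\mathbf c}}$, matching the claimed maximum. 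Everything else is the routine frame computation of $\dot\gamma$ and a one-line integration.
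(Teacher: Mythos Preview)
Your proposal is correct and follows essentially the same approach as the paper: decompose $\dot\gamma$ in the frame $e_0,e_1,e_2,e_3$, use the non-spacelike condition to get $\sum_{\mathbf a} e^{-2\zeta_{\mathbf a}}(c^{\mathbf a})^2 \le e^{-2\zeta}$, bound $\sum_{\mathbf a}(c^{\mathbf a})^2$ by $\max_{(\mathbf i,\mathbf j,\mathbf k)\in\mathcal C} e^{-2\zeta_{\mathbf j}-2\zeta_{\mathbf k}}$, and integrate. The only difference is that the paper dispatches the key inequality in one line without the extremization detour: since $\big(\min_{\mathbf a} e^{-2\zeta_{\mathbf a}}\big)\sum_{\mathbf a}(c^{\mathbf a})^2 \le \sum_{\mathbf a} e^{-2\zeta_{\mathbf a}}(c^{\mathbf a})^2 \le e^{-2\zeta}$, one has $\sum_{\mathbf a}(c^{\mathbf a})^2 \le e^{-2\zeta}\big/\min_{\mathbf a} e^{-2\zeta_{\mathbf a}} = \max_{\mathbf a} e^{2\zeta_{\mathbf a}-2\zeta} = \max_{(\mathbf i,\mathbf j,\mathbf k)\in\mathcal C} e^{-2\zeta_{\mathbf j}-2\zeta_{\mathbf k}}$, which is exactly what your extremization argument establishes.
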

\begin{proof}
Write the velocity $\tfrac{\dd}{\dd \tau}\gamma$ as
$$\tfrac{\p}{\p \tau} + \textstyle\sum_{\xa=1}^3 X^{\xa} v_{\xa}
= e^{-\zeta} e_0 + \textstyle\sum_{\xa=1}^3 X^{\xa} e^{-\zeta_{\xa}} e_{\xa}$$
with smooth coefficients $X^{\xa} = X^{\xa}(\tau)$. 
By assumption, $\gamma$ is non-spacelike:
$$0 \geq g(\tfrac{\dd}{\dd \tau}\gamma,\tfrac{\dd}{\dd \tau}\gamma)
= -e^{-2\zeta} + \textstyle\sum_{\xa=1}^3 (X^{\xa})^2 e^{-2\zeta_{\xa}}
$$
Consequently,
$\textstyle\sum_{\xa=1}^3 (X^{\xa})^2 \leq 
\max_{(\xa,\xb,\xc)\in \mathcal{C}} e^{-2\zeta_{\xb}-2\zeta_{\xc}}$.
Now, the claim follows from
$$
\mathrm{Length}_{g^{\sharp}}(\gamma^{\sharp}) = \int_{\tau_0'}^{\tau_1'} \dd \tau\, \sqrt{g^{\sharp}(\tfrac{\dd}{\dd \tau}\gamma^{\sharp},\tfrac{\dd}{\dd \tau}\gamma^{\sharp})}
= \int_{\tau_0'}^{\tau_1'} \dd \tau\, \sqrt{
\textstyle\sum_{\xa=1}^3 (X^{\xa})^2}
$$
\qed
\end{proof}
\section{Construction of the transfer maps}\label{sec:lfdjdfljfdl}
Let $(\tau_0,\tau_1)\subset \R$ be a finite or infinite open interval, parametrized by $\tau \in (\tau_0,\tau_1)$ (``time''). In this paper, the unknown field is a vector valued map $\Phi \in C^{\infty}((\tau_0,\tau_1),\R^6)$:
\begin{equation}\label{eq:ljfkdhkjhdkjdfh}
\Phi =  \ua[\Phi] \oplus \ub[\Phi]
\;:\; (\tau_0,\tau_1)\to \R^3 \oplus \R^3
\end{equation}
If no confusion can arise, we just write $\Phi = \ua\oplus \ub$.
\begin{definition}\label{def:dkhkdhfd}
To every field $\Phi = \ua\oplus \ub \in C^{\infty}((\tau_0,\tau_1),\R^6)$, every constant $\mathbf{h} > 0$ and every $n \in \R^3$, associate a field
$$
\boldsymbol{\mathfrak a}[\Phi,\mathbf{h},n] \oplus
\boldsymbol{\mathfrak b}[\Phi,\mathbf{h},n] \oplus
c[\Phi,\mathbf{h},n]
\;:\;
(\tau_0,\tau_1) \to \R^3 \oplus \R^3 \oplus \R$$
by
\begin{subequations}\label{eq:khdkhdkhfdfddf}
\begin{align}
\begin{split}
\boldsymbol{\mathfrak a}_{\xa}[\Phi,\mathbf{h},n] & =
- \mathbf{h} \td{\ua_{\xa}} 
- (n_{\xa}\ub_{\xa})^2
+ (n_{\xb}\ub_{\xb} - n_{\xc}\ub_{\xc})^2
\end{split}
\\ 
\boldsymbol{\mathfrak b}_{\xa}[\Phi,\mathbf{h},n] & =
- \mathbf{h} \td{\ub_{\xa}} 
 + \ub_{\xa}\ua_{\xa}
\\ 
\label{eq:khdkhdkhfdfddf1}
\begin{split}
c[\Phi,\mathbf{h},n] & =
\textstyle\sum_{(\xa,\xb,\xc)\in \mathcal{C}} \big(\ua_{\xb}\ua_{\xc}
- (n_{\xa}\ub_{\xa})^2 + 2n_{\xb}n_{\xc}\ub_{\xb}\ub_{\xc} \big)
\end{split}
\end{align}
\end{subequations}
for all $(\xa,\xb,\xc)\in \mathcal{C}$.
For later use, it is convenient to introduce, for all $m,n\in \R^3$,
\begin{equation}\label{eq:kdshkfdhkfdhfdkhfdkdhfkdfhkdssdfds}
\begin{split}
\boldsymbol{\mathfrak a}_{\xa}[\Phi,\mathbf{h},n,m] & = \boldsymbol{\mathfrak a}_{\xa}[\Phi,\mathbf{h},n] - \boldsymbol{\mathfrak a}_{\xa}[\Phi,\mathbf{h},m]\\
& = - (n_{\xa}\ub_{\xa})^2
+ (n_{\xb}\ub_{\xb} - n_{\xc}\ub_{\xc})^2
+ (m_{\xa}\ub_{\xa})^2
- (m_{\xb}\ub_{\xb} - m_{\xc}\ub_{\xc})^2
\end{split}
\end{equation}
\end{definition}
\begin{definition} \label{def:nnnnnns}
\begin{align*}
B_1 & = (1,0,0) & 
B_2 & = (0,1,0) &
B_3 & = (0,0,1) &
Z & = (1,1,1)
\end{align*}
These vectors will play the role of the vector $n \in \R^3$ that appears in 
Definition  \ref{def:dkhkdhfd}.
\end{definition}
\begin{proposition}[Global symmetries] \label{prop:kdhfkshkds}
Let $\chi:  (\tau_0,\tau_1)\to (\tau_0',\tau_1')$ be a linear diffeomorphism between finite or infinite intervals, $\chi(\tau) = p\tau + q$ with $p > 0$, and let $A > 0$ be a constant. Then
\begin{align*}
(\boldsymbol{\mathfrak a},\boldsymbol{\mathfrak b},c)\Big[A\,(\Phi \circ \chi),\;\tfrac{1}{p} A \mathbf{h},\,n\Big]
& = A^2\,\Big((\boldsymbol{\mathfrak a},\boldsymbol{\mathfrak b},c)[\Phi,\mathbf{h},n] \circ \chi\Big)
\end{align*}
for all fields $\Phi = \ua\oplus \ub\in C^{\infty}((\tau_0',\tau_1'),\R^6)$, all constants $\mathbf{h} > 0$ and all $n \in \R^3$.
\end{proposition}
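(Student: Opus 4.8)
The statement is a scaling/covariance identity for the three quantities $\boldsymbol{\mathfrak a}_{\xa}$, $\boldsymbol{\mathfrak b}_{\xa}$, $c$ defined in \eqref{eq:khdkhdkhfdfddf}. The plan is simply to substitute $\Psi = A\,(\Phi\circ\chi)$ and $\mathbf{h}' = \tfrac{1}{p}A\mathbf{h}$ into the definitions and check each of the three lines separately, tracking the powers of $A$ and the chain-rule factor $p$ coming from $\tfrac{\dd}{\dd\tau}(\Phi\circ\chi) = p\,(\tfrac{\dd}{\dd\tau}\Phi)\circ\chi$. The key bookkeeping observations are: (a) every term in $\boldsymbol{\mathfrak a}_{\xa}$, $\boldsymbol{\mathfrak b}_{\xa}$, $c$ is \emph{quadratic} in the pair $(\Phi,\text{(one $\tau$-derivative of $\Phi$)})$ in the sense that it is a sum of monomials each of which is either (i) a product of two undifferentiated components of $\Phi$, or (ii) $\mathbf{h}$ times a product of one differentiated and zero undifferentiated components; and (b) the vector $n\in\R^3$ is an inert spectator, untouched by the substitution.

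First I would handle $\boldsymbol{\mathfrak b}_{\xa}$, the simplest case. Writing $\Psi = \sigma[\Psi]\oplus\tau[\Psi]$, one has $\ua_{\xa}[\Psi] = A\,(\ua_{\xa}[\Phi]\circ\chi)$ and likewise for $\ub$, and $\tfrac{\dd}{\dd\tau}\ub_{\xa}[\Psi] = Ap\,((\tfrac{\dd}{\dd\tau}\ub_{\xa}[\Phi])\circ\chi)$. Therefore
$$
\boldsymbol{\mathfrak b}_{\xa}[\Psi,\tfrac{1}{p}A\mathbf{h},n]
= -\tfrac{1}{p}A\mathbf{h}\cdot Ap\,\big((\td{\ub_{\xa}[\Phi]})\circ\chi\big) + A^2\,(\ub_{\xa}[\Phi]\circ\chi)(\ua_{\xa}[\Phi]\circ\chi)
= A^2\,\big(\boldsymbol{\mathfrak b}_{\xa}[\Phi,\mathbf{h},n]\circ\chi\big),
$$
the two factors $\tfrac1p A\mathbf{h}$ and $p$ cancelling exactly so that the $\mathbf{h}$-term scales like $A^2$, matching the algebraic term. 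Next, for $c$ in \eqref{eq:khdkhdkhfdfddf1}: it contains no $\tau$-derivatives and no explicit $\mathbf{h}$, only products of two components of $\sigma$ or of $\tau$, so each monomial picks up exactly $A^2$ and the whole expression becomes $A^2\,(c[\Phi,\mathbf{h},n]\circ\chi)$; here $n$ again just rides along. Finally, $\boldsymbol{\mathfrak a}_{\xa}$ is the combination of the two previous patterns: the term $-\mathbf{h}'\td{\ua_{\xa}[\Psi]}$ scales as $-\tfrac1p A\mathbf{h}\cdot Ap(\ldots) = A^2(\ldots)$ exactly as in the $\boldsymbol{\mathfrak b}$ computation, and the purely algebraic terms $-(n_{\xa}\ub_{\xa})^2+(n_{\xb}\ub_{\xb}-n_{\xc}\ub_{\xc})^2$ scale as $A^2$ as in the $c$ computation, so again everything assembles into $A^2\,(\boldsymbol{\mathfrak a}_{\xa}[\Phi,\mathbf{h},n]\circ\chi)$. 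Doing this for all $(\xa,\xb,\xc)\in\mathcal{C}$ simultaneously, since the identity is componentwise, yields the claimed triple identity.

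There is no real obstacle: the proposition is an exact "weight count" reflecting that $\boldsymbol{\mathfrak a},\boldsymbol{\mathfrak b},c$ were deliberately designed to be homogeneous of degree $2$ under $\Phi\mapsto A\Phi$ and covariant under affine reparametrization provided $\mathbf{h}$ is rescaled by $1/p$. The only point requiring the slightest care is the derivative term: one must be sure that the chain-rule factor $p$ from differentiating $\Phi\circ\chi$ is precisely compensated by the $1/p$ in the prescribed rescaling of $\mathbf{h}$, which is exactly why the factor $\tfrac{1}{p}$ (and not $p$ or $1$) appears in the statement. I would remark at the end that this compatibility is the reason the parameter $\mathbf{h}$ must transform this way, and that the $A$-independence of the formula when $A=1$, together with $\mathbf{h}$-rescaling, will later be what lets symmetry (ii) of the introduction be used to splice the compact intervals together.
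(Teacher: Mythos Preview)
Your proposal is correct; the paper states this proposition without proof, treating it as an immediate consequence of Definition~\ref{def:dkhkdhfd}, and your direct componentwise verification via the chain rule is exactly the intended check. One cosmetic slip: you momentarily write ``$\Psi = \sigma[\Psi]\oplus\tau[\Psi]$'' where you mean $\Psi = \ua[\Psi]\oplus\ub[\Psi]$ in the paper's notation, but the subsequent computation uses the correct symbols and is fine.
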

\begin{corollary}\label{cor:dfjdkd} In Proposition \ref{prop:kdhfkshkds},
the field
$(\boldsymbol{\mathfrak a},\boldsymbol{\mathfrak b},c)[A\,(\Phi \circ \chi),\;\tfrac{1}{p} A \mathbf{h},\,n]$ vanishes identically on $(\tau_0,\tau_1)$ if and only if 
$(\boldsymbol{\mathfrak a},\boldsymbol{\mathfrak b},c)[\Phi,\mathbf{h},n]$ vanishes identically on
$(\tau_0',\tau_1')$.
\end{corollary}
\begin{remark}
The equations $(\boldsymbol{\mathfrak a},\boldsymbol{\mathfrak b},c)[\Phi,1,Z] = 0$
are identical to \eqref{eq:kshkjhkf}.
The equations $(\boldsymbol{\mathfrak a},\boldsymbol{\mathfrak b},c)[\Phi,\mathbf{h},Z] = 0$
are equivalent to  \eqref{eq:kshkjhkf}, for any $\mathbf{h} > 0$,  by Corollary \ref{cor:dfjdkd}.
\end{remark}
\begin{proposition}\label{prop:conslaw}
Recall Definition \ref{def:dkhkdhfd}. For all $\Phi = \ua\oplus \ub \in C^{\infty}((\tau_0,\tau_1),\R^6)$, all $\mathbf{h} > 0$ and all $n\in \R^3$, we have
\begin{align}
\begin{split}
\label{eq:diffid}
 0 = & - \mathbf{h} \tfrac{\dd}{\dd \tau} c  + \sum_{(\xa,\xb,\xc)\in \mathcal{C}}
\Big(\hskip-1pt
-\ua_{\xb} \boldsymbol{\mathfrak a}_{\xc}
-\ua_{\xc} \boldsymbol{\mathfrak a}_{\xb}
+2 (n_{\xa})^2 \ub_{\xa} \boldsymbol{\mathfrak b}_{\xa}
 -2 n_{\xb}n_{\xc}\ub_{\xb}\boldsymbol{\mathfrak b}_{\xc}
-2 n_{\xb}n_{\xc}\ub_{\xc}\boldsymbol{\mathfrak b}_{\xb}
\Big)
\end{split}
\end{align}
with $(\boldsymbol{\mathfrak a},\boldsymbol{\mathfrak b},c) =
(\boldsymbol{\mathfrak a},\boldsymbol{\mathfrak b},c)[\Phi,\mathbf{h},n]$. 
In particular, if $(\boldsymbol{\mathfrak a},\boldsymbol{\mathfrak b}) = 0$ identically
on $(\tau_0,\tau_1)$, then $c$ vanishes identically on $(\tau_0,\tau_1)$ if and only if $c$ vanishes at one point of $(\tau_0,\tau_1)$.
\end{proposition}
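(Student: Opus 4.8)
The plan is to verify \eqref{eq:diffid} by brute differentiation: it is a pointwise polynomial identity in $\alpha,\beta,n$ (no equation of motion is imposed), so no ODE theory is needed for it. Starting from $c = c[\Phi,\mathbf{h},n]$ as in \eqref{eq:khdkhdkhfdfddf1} and applying the Leibniz rule, I would write $\mathbf{h}\,\td{c}$ as a sum over $(\xa,\xb,\xc)\in\mathcal{C}$ in which each term carries a factor $\mathbf{h}\,\td{\alpha_{\xa}}$ or $\mathbf{h}\,\td{\beta_{\xa}}$, and then substitute, using Definition \ref{def:dkhkdhfd} itself,
$$\mathbf{h}\,\td{\alpha_{\xa}} = -\boldsymbol{\mathfrak a}_{\xa} - (n_{\xa}\beta_{\xa})^2 + (n_{\xb}\beta_{\xb}-n_{\xc}\beta_{\xc})^2, \qquad \mathbf{h}\,\td{\beta_{\xa}} = -\boldsymbol{\mathfrak b}_{\xa} + \beta_{\xa}\alpha_{\xa}.$$
This expresses $\mathbf{h}\,\td{c}$ as a part linear in the components of $\boldsymbol{\mathfrak a},\boldsymbol{\mathfrak b}$ plus a ``pure'' part polynomial in $\alpha,\beta,n$ alone. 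Since $\sum_{\mathcal{C}}\alpha_{\xb}\alpha_{\xc}$ involves only $\alpha$'s and the other two pieces of $c$ only $\beta$'s, the $\boldsymbol{\mathfrak a}$-terms come solely from $\mathbf{h}\,\tfrac{\dd}{\dd\tau}\sum_{\mathcal{C}}\alpha_{\xb}\alpha_{\xc} = \sum_{\mathcal{C}}\mathbf{h}\,\td{\alpha_{\xa}}\,(\alpha_{\xb}+\alpha_{\xc})$ and reindex to $-\sum_{\mathcal{C}}\boldsymbol{\mathfrak a}_{\xa}(\alpha_{\xb}+\alpha_{\xc}) = \sum_{\mathcal{C}}(-\alpha_{\xb}\boldsymbol{\mathfrak a}_{\xc}-\alpha_{\xc}\boldsymbol{\mathfrak a}_{\xb})$, while the $\boldsymbol{\mathfrak b}$-terms come from differentiating $-\sum_{\mathcal{C}}(n_{\xa}\beta_{\xa})^2$ (yielding $+\sum_{\mathcal{C}}2n_{\xa}^2\beta_{\xa}\boldsymbol{\mathfrak b}_{\xa}$) and $\sum_{\mathcal{C}}2n_{\xb}n_{\xc}\beta_{\xb}\beta_{\xc}$ (yielding $-\sum_{\mathcal{C}}2n_{\xb}n_{\xc}(\beta_{\xc}\boldsymbol{\mathfrak b}_{\xb}+\beta_{\xb}\boldsymbol{\mathfrak b}_{\xc})$). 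Together these linear terms are exactly the sum on the right of \eqref{eq:diffid}.

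The only place where a cancellation is needed is the pure part. Abbreviating $u_{\xa}=(n_{\xa}\beta_{\xa})^2$ and $p_{\xb\xc}=n_{\xb}n_{\xc}\beta_{\xb}\beta_{\xc}$, and using $(n_{\xb}\beta_{\xb}-n_{\xc}\beta_{\xc})^2 = u_{\xb}-2p_{\xb\xc}+u_{\xc}$, the pure contributions of the three pieces of $c$ are, respectively,
$$\sum_{\mathcal{C}}(-u_{\xa}+u_{\xb}+u_{\xc}-2p_{\xb\xc})(\alpha_{\xb}+\alpha_{\xc}), \qquad -2\sum_{\mathcal{C}}u_{\xa}\alpha_{\xa}, \qquad 2\sum_{\mathcal{C}}p_{\xb\xc}(\alpha_{\xb}+\alpha_{\xc}).$$
The $p$-terms cancel between the first and third sums. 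Expanding the first sum over the three cyclic triples $(1,2,3),(2,3,1),(3,1,2)$, the coefficient of $u_1$ is $-(\alpha_2+\alpha_3)+(\alpha_3+\alpha_1)+(\alpha_1+\alpha_2)=2\alpha_1$, and by symmetry the coefficients of $u_2,u_3$ are $2\alpha_2,2\alpha_3$; hence $\sum_{\mathcal{C}}(-u_{\xa}+u_{\xb}+u_{\xc})(\alpha_{\xb}+\alpha_{\xc}) = 2\sum_{\mathcal{C}}u_{\xa}\alpha_{\xa}$, which cancels the middle term. So the pure part is identically zero and \eqref{eq:diffid} is proved.

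For the final assertion: if $(\boldsymbol{\mathfrak a},\boldsymbol{\mathfrak b})=0$ throughout $(\tau_0,\tau_1)$, then \eqref{eq:diffid} collapses to $\mathbf{h}\,\td{c}=0$; since $\mathbf{h}>0$ and $(\tau_0,\tau_1)$ is connected, $c$ is constant, hence $c\equiv 0$ precisely when $c$ vanishes at one point. I do not anticipate a genuine obstacle — the proof is a direct computation; the only care needed is sign bookkeeping through the Leibniz substitution and the small combinatorial check that the coefficient of each $u_{\xa}$ in the expanded cyclic sum equals $2\alpha_{\xa}$.
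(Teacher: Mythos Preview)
Your proof is correct and is essentially the same direct computation the paper carries out: the paper simply substitutes the right hand sides of \eqref{eq:khdkhdkhfdfddf} into \eqref{eq:diffid} and checks that everything cancels, which is exactly your Leibniz-substitution organized the other way around. Your bookkeeping of the $\boldsymbol{\mathfrak a}$, $\boldsymbol{\mathfrak b}$, and pure terms is accurate, and the final constancy argument for $c$ is the same as the paper's.
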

\begin{proof}
Replace all occurrences of $\boldsymbol{\mathfrak a}$, $\boldsymbol{\mathfrak b}$ and $c$
on the right hand side of \eqref{eq:diffid} by the respective right hand sides of \eqref{eq:khdkhdkhfdfddf}.
Then, verify that everything cancels. 
\qed
\end{proof}
\begin{definition} \label{def:khkhdkhkfdhfd}
For all $\mathbf{h} > 0$ and all vectors $\Phi = \ua\oplus \ub \in \R^3 \oplus \R^3$ with $\ub_1,\ub_2,\ub_3 \neq 0$, define
$A_{\mathbf{m}}[\Phi] \in (0,\infty)$ and $\varphi_{\mathbf{m}}[\Phi] \in \R$
by
\begin{alignat*}{4}
A_{\mathbf{m}}[\Phi]      & = \sqrt{|\ua_{\mathbf{m}}|^2 + |\ub_{\mathbf{m}}|^2} \;>\;|\ua_{\mathbf{m}}| \;\geq\; 0 \\
\varphi_{\mathbf{m}}[\Phi] & = - \arcsinh \tfrac{\ua_{\mathbf{m}}}{|\ub_{\mathbf{m}}|}
\end{alignat*}
for all $\mathbf{m}=1,2,3$. Equivalently,
\begin{subequations}\label{eq:khkdhdfkhkdhkdsds}
\begin{alignat}{4}
\ua_{\mathbf{m}} & = \;&- &\,A_{\mathbf{m}}[\Phi]\, \tanh &\, \varphi_{\mathbf{m}}[\Phi]\\
\ub_{\mathbf{m}} & = &
(\sgn \beta_{\mathbf{m}})\;  &\,A_{\mathbf{m}}[\Phi]\, \sech &\, \varphi_{\mathbf{m}}[\Phi]
\end{alignat}
\end{subequations}
Furthermore, define $\xi_{\mathbf{m}}[\Phi,\mathbf{h}]\in \R$ by
$$\xi_{\mathbf{m}}[\Phi,\mathbf{h}] = \mathbf{h}\, \log \big|\tfrac{1}{2}\ub_{\mathbf{m}}\big|$$
for all $\mathbf{m}=1,2,3$. Furthermore, for all $\mathbf{m},\mathbf{n}=1,2,3$, introduce the abbreviations
\begin{align*}
\ua_{\mathbf{m},\mathbf{n}}[\Phi] & = \ua_{\mathbf{m}}+\ua_{\mathbf{n}}
&
\xi_{\mathbf{m},\mathbf{n}}[\Phi,\mathbf{h}] & = \xi_{\mathbf{m}}[\Phi,\mathbf{h}]+\xi_{\mathbf{n}}[\Phi,\mathbf{h}]
\end{align*}
If no confusion can arise, we drop the explicit dependence $[\Phi]$ or $[\Phi,\mathbf{h}]$. For instance, we write $A_{\mathbf{m}} = A_{\mathbf{m}}[\Phi]$.
If $\Phi$ is not an element of $\R^3\oplus \R^3$, but rather a function of the real variable $\tau$ with values in $\R^3\oplus \R^3$,
with $\ub_1,\ub_2,\ub_3\neq 0$ everywhere, then $A_{\mathbf{m}},\varphi_{\mathbf{m}}$, 
$\xi_{\mathbf{m}}$, $\xi_{\mathbf{m},\mathbf{n}}$, $\alpha_{\mathbf{m},\mathbf{n}}$, with $\mathbf{m},\mathbf{n}=1,2,3$, are functions of $\tau$, too.
In this case, we define the additional functions $\theta_{\mathbf{m}}[\Phi,\mathbf{h}]$, $\mathbf{m}=1,2,3$,  through
$$\varphi_{\mathbf{m}}[\Phi](\tau) = \tfrac{1}{\mathbf{h}} \big( \tau - \theta_{\mathbf{m}}[\Phi,\mathbf{h}](\tau)\big)\, A_{\mathbf{m}}[\Phi](\tau)$$
\end{definition}
\begin{remark}\label{rem:kfdhdkhkfd}
In the context of Definition  \ref{def:khkhdkhkfdhfd}, we have, for all $\mathbf{m}=1,2,3$:
\begin{align*}
\mathbf{h}\,|\varphi_{\mathbf{m}}| & = - \xi_{\mathbf{m}} + \mathbf{h}\,\log \Big(|\tfrac{1}{2}\ua_{\mathbf{m}}|
+ \sqrt{|\tfrac{1}{2}\ua_{\mathbf{m}}|^2 + \exp( \tfrac{1}{\mathbf{h}} 2 \xi_{\mathbf{m}})}\;\Big)
\end{align*}
\end{remark}
\begin{lemma} \label{lem:lfdskhjkfdhkfdhkfd}
Recall Definitions \ref{def:dkhkdhfd}, \ref{def:nnnnnns}, \ref{def:khkhdkhkfdhfd}.
For all $\mathbf{h}>0$ and all $\Phi=\ua\oplus\ub \in C^{\infty}((\tau_0,\tau_1),\R^6)$ such that $\ub_1$, $\ub_2$, $\ub_3$ never vanish on $(\tau_0,\tau_1)$, we have
\begin{align*}
\frac{\dd}{\dd \tau} \begin{pmatrix}
A_{\xa} \\ \theta_{\xa}
\end{pmatrix}
= \frac{1}{(A_{\xa})^2} \begin{pmatrix}
\frac{1}{\mathbf{h}} (A_{\xa})^2 \tanh \varphi_{\xa} & & \frac{1}{\mathbf{h}} (A_{\xa})^2 \sech \varphi_{\xa} \\
\rule{0pt}{10pt} \varphi_{\xa} \tanh \varphi_{\xa} - 1&\;\, & \sinh \varphi_{\xa} + \varphi_{\xa} \sech \varphi_{\xa}
\end{pmatrix}
\begin{pmatrix}
\boldsymbol{\mathfrak a}_{\xa}[\Phi,\mathbf{h},B_{\xa}]\\
- \sigma_{\xa}\, \boldsymbol{\mathfrak b}_{\xa}[\Phi,\mathbf{h},B_{\xa}]
\end{pmatrix}
\end{align*}
for $\xa=1,2,3$ and $\sigma_{\xa} = \sgn \beta_{\xa} \in \{-1,+1\}$.
The matrix on the right hand side has determinant $\tfrac{1}{\mathbf{h}} (A_{\xa})^2 \cosh \varphi_{\xa} \neq 0$.
\end{lemma}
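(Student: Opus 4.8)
The statement is an identity that one verifies by unwinding the definitions with the chain rule; the plan is as follows. Fix $\xa\in\{1,2,3\}$, suppress this index and the arguments $[\Phi]$, $[\Phi,\mathbf{h}]$, and write $\alpha,\beta,A,\varphi,\theta,\sigma$ for $\alpha_\xa,\beta_\xa,A_\xa,\varphi_\xa,\theta_\xa,\sigma_\xa$. Abbreviate $\mathfrak a=\boldsymbol{\mathfrak a}_{\xa}[\Phi,\mathbf{h},B_{\xa}]$ and $\mathfrak b=\boldsymbol{\mathfrak b}_{\xa}[\Phi,\mathbf{h},B_{\xa}]$. Since $B_{\xa}$ has $\xa$-th component $1$ and its other two components $0$ (Definition \ref{def:nnnnnns}), Definition \ref{def:dkhkdhfd} collapses to $\mathfrak a=-\mathbf{h}\,\td{\alpha}-\beta^2$ and $\mathfrak b=-\mathbf{h}\,\td{\beta}+\alpha\beta$, which I would record in solved form as $\td{\alpha}=-\tfrac1{\mathbf{h}}(\mathfrak a+\beta^2)$ and $\td{\beta}=\tfrac1{\mathbf{h}}(\alpha\beta-\mathfrak b)$. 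The hypothesis $\beta\neq 0$ makes $A>0$ and $\varphi,\theta$ smooth, so every derivative below is legitimate.

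First I would compute $\td{A}$: differentiating $A^2=\alpha^2+\beta^2$ gives $A\,\td{A}=\alpha\,\td{\alpha}+\beta\,\td{\beta}$; substituting the solved forms, the $\pm\alpha\beta^2$ terms cancel, leaving $A\,\td{A}=-\tfrac1{\mathbf{h}}(\alpha\mathfrak a+\beta\mathfrak b)$; inserting $\alpha=-A\tanh\varphi$, $\beta=\sigma A\sech\varphi$ from \eqref{eq:khkdhdfkhkdhkdsds} and dividing by $A$ yields $\td{A}=\tfrac1{\mathbf{h}}\tanh\varphi\cdot\mathfrak a+\tfrac1{\mathbf{h}}\sech\varphi\cdot(-\sigma\mathfrak b)$, the first row. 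Next I would compute $\td{\varphi}$: differentiating $A\tanh\varphi=-\alpha$ jointly with $A^2=\alpha^2+\beta^2$ (or directly differentiating $\sinh\varphi=-\alpha/|\beta|$) gives the Wronskian-type identity $A\beta\,\td{\varphi}=\alpha\,\td{\beta}-\beta\,\td{\alpha}$; substituting the solved forms and simplifying produces $A\beta\,\td{\varphi}=\tfrac1{\mathbf{h}}(\beta A^2+\beta\mathfrak a-\alpha\mathfrak b)$, hence, after dividing by $A\beta$ and using $\alpha/\beta=-\sigma\sinh\varphi$, one obtains $\td{\varphi}=\tfrac{A}{\mathbf{h}}+\tfrac1{\mathbf{h}A}\mathfrak a-\tfrac{\sinh\varphi}{\mathbf{h}A}(-\sigma\mathfrak b)$.

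The second row then comes from the relation $\mathbf{h}\varphi=(\tau-\theta)A$ in Definition \ref{def:khkhdkhkfdhfd}: differentiating it gives $\mathbf{h}\,\td{\varphi}=(1-\td{\theta})A+(\tau-\theta)\td{A}$, and substituting $\tau-\theta=\mathbf{h}\varphi/A$ and solving for $\td{\theta}$ gives $\td{\theta}=1-\tfrac{\mathbf{h}}{A}\td{\varphi}+\tfrac{\mathbf{h}\varphi}{A^2}\td{A}$. Plugging in the expressions for $\td{\varphi}$ and $\td{A}$ from the previous step, the two ``$+1$'' contributions cancel and the remaining terms collect into $\td{\theta}=\tfrac1{A^2}(\varphi\tanh\varphi-1)\mathfrak a+\tfrac1{A^2}(\sinh\varphi+\varphi\sech\varphi)(-\sigma\mathfrak b)$, matching the second row. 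Finally, for the determinant of the explicit $2\times 2$ array displayed in the statement (the one multiplied by $1/(A_\xa)^2$), I would expand $\tfrac{A^2}{\mathbf{h}}\big[\tanh\varphi(\sinh\varphi+\varphi\sech\varphi)-\sech\varphi(\varphi\tanh\varphi-1)\big]$; the two terms proportional to $\varphi$ cancel, leaving $\tfrac{A^2}{\mathbf{h}}(\sinh\varphi\tanh\varphi+\sech\varphi)=\tfrac{A^2}{\mathbf{h}}\cdot\tfrac{\sinh^2\varphi+1}{\cosh\varphi}=\tfrac1{\mathbf{h}}A^2\cosh\varphi$, which is $\neq 0$ since $A>0$.

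I do not expect a conceptual obstacle; the whole thing is a chain-rule bookkeeping exercise. The one place that needs care is tracking the sign $\sigma=\sgn\beta$ simultaneously with the hyperbolic identities carried by \eqref{eq:khkdhdfkhkdhkdsds} — in particular $\alpha/\beta=-\sigma\sinh\varphi$, $\tanh\varphi/\sech\varphi=\sinh\varphi$, and $\sinh^2\varphi+1=\cosh^2\varphi$ — so that every stray factor of $\beta$, of $A$, and of $\tau-\theta$ is cleanly eliminated in favour of $\varphi$, $A$ and $\sigma$, producing exactly the matrix in the statement.
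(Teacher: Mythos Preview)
Your proof is correct and follows essentially the same approach as the paper: both arguments reduce $\boldsymbol{\mathfrak a}_{\xa}$, $\boldsymbol{\mathfrak b}_{\xa}$ to $-\mathbf{h}\,\td{\alpha_{\xa}}-\beta_{\xa}^2$ and $-\mathbf{h}\,\td{\beta_{\xa}}+\alpha_{\xa}\beta_{\xa}$, substitute the representation \eqref{eq:khkdhdfkhkdhkdsds}, use the relation $\td{\varphi_{\xa}}=\tfrac{1}{A_{\xa}}(\td{A_{\xa}})\varphi_{\xa}+\tfrac{1}{\mathbf{h}}A_{\xa}(1-\td{\theta_{\xa}})$, and solve for $\td{A_{\xa}}$, $\td{\theta_{\xa}}$. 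The only cosmetic difference is that the paper phrases this as ``substitute and solve the resulting linear system'', whereas you compute $\td{A}$, $\td{\varphi}$, $\td{\theta}$ one at a time; the content is identical.
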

\begin{proof}
We have
$\boldsymbol{\mathfrak a}_{\xa}[\Phi,\mathbf{h},B_{\xa}] = - \mathbf{h} \tfrac{\dd}{\dd \tau} \ua_{\xa} - (\ub_{\xa})^2$
and 
$\boldsymbol{\mathfrak b}_{\xa}[\Phi,\mathbf{h},B_{\xa}]  = - \mathbf{h} \tfrac{\dd}{\dd \tau} \ub_{\xa} + \ua_{\xa}\ub_{\xa}$.
Replace all occurrences of $\ua_{\xa}$ and $\ub_{\xa}$ by the right hand sides of \eqref{eq:khkdhdfkhkdhkdsds}, respectively. Use $\tfrac{\dd}{\dd \tau}\varphi_{\xa} = \tfrac{1}{A_{\xa}} (\tfrac{\dd}{\dd \tau} A_{\xa}) \varphi_{\xa}
+ \tfrac{1}{\mathbf{h}} A_{\xa}(1-\tfrac{\dd}{\dd \tau}\theta_{\xa})$. Now, solve for $\tfrac{\dd}{\dd \tau}A_{\xa}$ and $\tfrac{\dd}{\dd \tau}\theta_{\xa}$. \qed
\end{proof}
\begin{remark} So far, we have stated all definitions and propositions for a $C^{\infty}$-field
$\Phi=\ua\oplus\ub$, defined on an open interval. This was just for convenience.
We will, from now on, use these definitions and propositions even when the $C^{\infty}$-requirement is not met, or when the field is defined on, say, a closed interval rather than an open interval. It will be clear in each case, that the respective definition or proposition still makes sense.
\end{remark}
\begin{definition}
\label{def:permut}
Set $S_3 = \{(1,2,3),(2,3,1),(3,1,2),(3,2,1),(1,3,2),(2,1,3)\}$, the set of all permutations of $(1,2,3)$.
\end{definition}
\begin{definition}\label{def:de}
For all $\sigma_{\ast}\in \{-1,+1\}^3$ let $\mathcal{D}(\sigma_{\ast})$ be the set of all $\Phi = \alpha \oplus \beta \in \R^3\oplus \R^3$ with 
$\big(\sgn \beta_1, \sgn \beta_2, \sgn \beta_3\big) = \sigma_{\ast}$.
For all $\tau_0,\tau_1\in \R$ with $\tau_0 < \tau_1$ let $\mathcal{E}(\sigma_{\ast};\tau_0,\tau_1)$ be the set of all continuous maps $\Phi: [\tau_0,\tau_1]\to \mathcal{D}(\sigma_{\ast})$.
\end{definition}
\begin{definition}\label{def:met1}
For all $\pi = (\mathbf{a},\mathbf{b},\mathbf{c}) \in S_3$ and $\mathbf{h}>0$ and $\sigma_{\ast} \in \{-1,+1\}^3$
define two functions $\mathcal{D}(\sigma_{\ast})\times \mathcal{D}(\sigma_{\ast})\to [0,\infty)$ by
\begin{alignat*}{6}
d_{\mathcal{D}(\sigma_{\ast}),(\pi,\mathbf{h})}(\Phi,\Psi) & = \max \Big\{
&  \big|A_{\mathbf{a}}[\Phi] & -A_{\mathbf{a}}[\Psi]\big|
& \;&,&\; \big|
\mathbf{h}\tfrac{\varphi_{\mathbf{a}}[\Phi]}{A_{\mathbf{a}}[\Phi]}
& - 
\mathbf{h}\tfrac{\varphi_{\mathbf{a}}[\Psi]}{A_{\mathbf{a}}[\Psi]}
\big|
&&,\\
&& \big|\alpha_{\mathbf{b},\mathbf{a}}[\Phi] & - \alpha_{\mathbf{b},\mathbf{a}}[\Psi]\big|
&\;&,&\; \big|\xi_{\mathbf{b},\mathbf{a}}[\Phi,\mathbf{h}] & - \xi_{\mathbf{b},\mathbf{a}}[\Psi,\mathbf{h}]\big| &&,\\
\rule{0pt}{15pt}
&& \big|\alpha_{\mathbf{c},\mathbf{a}}[\Phi] & - \alpha_{\mathbf{c},\mathbf{a}}[\Psi]\big|
&&,& \big|\xi_{\mathbf{c},\mathbf{a}}[\Phi,\mathbf{h}] & - \xi_{\mathbf{c},\mathbf{a}}[\Psi,\mathbf{h}]\big| &&\;\;\Big\}
\end{alignat*}
and
$$\slaa{d}_{\mathcal{D}(\sigma_{\ast}),\mathbf{h}} (\Phi,\Psi)
= \max_{\xa=1,2,3} \Big\{\big|\ua_{\xa}[\Phi]-\ua_{\xa}[\Psi]\big|,\, \big|\xi_{\xa}[\Phi,\mathbf{h}] - \xi_{\xa}[\Psi,\mathbf{h}]\big|\Big\}$$
Then $(\mathcal{D}(\sigma_{\ast}),d_{\mathcal{D}(\sigma_{\ast}),(\pi,\mathbf{h})})$
and
$(\mathcal{D}(\sigma_{\ast}),\slaa{d}_{\mathcal{D}(\sigma_{\ast}),\mathbf{h}})$ are metric spaces.
\end{definition}
\begin{definition}\label{def:met2}
For all $\pi  \in S_3$ and $\mathbf{h}>0$ and $\sigma_{\ast} \in \{-1,+1\}^3$
and $\tau_0,\tau_1\in \R$ with $\tau_0<\tau_1$ define a function $\mathcal{E}(\sigma_{\ast};\tau_0,\tau_1)
\times \mathcal{E}(\sigma_{\ast};\tau_0,\tau_1)\to [0,\infty)$ by
$$d_{\mathcal{E}(\sigma_{\ast};\tau_0,\tau_1),(\pi,\mathbf{h})}(\Phi,\Psi) = \textstyle\sup_{\tau \in [\tau_0,\tau_1]}d_{\mathcal{D}(\sigma_{\ast}),(\pi,\mathbf{h})}(\Phi(\tau),\Psi(\tau))$$
Then $(\mathcal{E}(\sigma_{\ast};\tau_0,\tau_1),d_{\mathcal{E}(\sigma_{\ast};\tau_0,\tau_1),(\pi,\mathbf{h})})$ is a metric space.
\end{definition}
\begin{lemma}\label{lem:fdhdkfhekjhekr}
Let $\pi = (\mathbf{a},\mathbf{b},\mathbf{c}) \in S_3$
and $\mathbf{h} > 0$ and $\sigma_{\ast} \in \{-1,+1\}^3$. Suppose $\mathbf{h} \leq 1$.
Let $C,D \geq 1$ be constants. Then, for all $\Phi, \Psi \in \mathcal{D}(\sigma_{\ast})$
such that
\begin{align*}
C^{-1}\leq A_{\mathbf{a}}[X] & \leq C &
D^{-1} \leq\mathbf{h}\,|\varphi_{\mathbf{a}}[X]| & \leq D
\end{align*}
for both $X = \Phi$ and $X=\Psi$
and such that $\sgn \varphi_{\mathbf{a}}[\Phi] = \sgn \varphi_{\mathbf{a}}[\Psi]$, we have:
\begin{itemize}
\item[(a)] $\slaa{d}_{\mathcal{D}}(\Phi,\Psi) \leq 2^3 C^2 D\,d_{\mathcal{D}}(\Phi,\Psi)$
\item[(b)] If $\exp(-\tfrac{1}{\mathbf{h}} C^{-2}D^{-1}) \leq 2^{-6} C^{-4} D^{-2}$, then $d_{\mathcal{D}}(\Phi,\Psi) \leq 2^5 C^3 D\,\slaa{d}_{\mathcal{D}}(\Phi,\Psi)$
\end{itemize}
Here, $d_{\mathcal{D}} = 
d_{\mathcal{D}(\sigma_{\ast}),(\pi,\mathbf{h})}$ and
$\slaa{d}_{\mathcal{D}} = \slaa{d}_{\mathcal{D}(\sigma_{\ast}),\mathbf{h}}$.
\end{lemma}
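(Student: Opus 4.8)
The plan is to prove (a) and (b) by a change of variables. The six quantities compared by $d_{\mathcal{D}}$ are $(A_{\mathbf{a}},\,\mathbf{h}\varphi_{\mathbf{a}}/A_{\mathbf{a}},\,\alpha_{\mathbf{b},\mathbf{a}},\,\xi_{\mathbf{b},\mathbf{a}},\,\alpha_{\mathbf{c},\mathbf{a}},\,\xi_{\mathbf{c},\mathbf{a}})$, those compared by $\slaa{d}_{\mathcal{D}}$ are $(\alpha_1,\alpha_2,\alpha_3,\xi_1,\xi_2,\xi_3)$, and $\alpha_{\mathbf{b},\mathbf{a}}=\alpha_{\mathbf{b}}+\alpha_{\mathbf{a}}$, $\xi_{\mathbf{b},\mathbf{a}}=\xi_{\mathbf{b}}+\xi_{\mathbf{a}}$ (likewise for $\mathbf{c}$); hence the four ``sum'' coordinates cost only a factor $2$ in each direction, and everything reduces to comparing the pair $(A_{\mathbf{a}},\,\mathbf{h}\varphi_{\mathbf{a}}/A_{\mathbf{a}})$ with the pair $(\alpha_{\mathbf{a}},\xi_{\mathbf{a}})$. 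By the sign hypothesis I may assume $\varphi_{\mathbf{a}}>0$, i.e. $\alpha_{\mathbf{a}}<0$, for both $\Phi$ and $\Psi$ (the opposite case is symmetric under $v\mapsto -v$).

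For (a) I would set $a=A_{\mathbf{a}}$, $v=\mathbf{h}\varphi_{\mathbf{a}}/A_{\mathbf{a}}$, so that $\varphi_{\mathbf{a}}=av/\mathbf{h}$, $\alpha_{\mathbf{a}}=-a\tanh(av/\mathbf{h})$, $\xi_{\mathbf{a}}=\mathbf{h}\log(a/2)-\mathbf{h}\log\cosh(av/\mathbf{h})$, and bound $\alpha_{\mathbf{a}}[\Phi]-\alpha_{\mathbf{a}}[\Psi]$, $\xi_{\mathbf{a}}[\Phi]-\xi_{\mathbf{a}}[\Psi]$ by integrating the differentials of these functions along the line segment in the $(a,v)$-plane from the parameters of $\Phi$ to those of $\Psi$ (both have $v>0$, so $\varphi_{\mathbf{a}}>0$ and $\beta_{\mathbf{a}}\neq 0$ along it). The point is that $\mathbf{h}|\varphi_{\mathbf{a}}|=a|v|$ stays $\geq D^{-1}$ on the segment: $a(s)v(s)$ is a product of two positive affine functions, so it is either monotone and bounded below by an endpoint value, or concave and bounded below by the smaller endpoint value, and the endpoint values are $\geq D^{-1}$; also $a(s)\in[C^{-1},C]$, $|v(s)|\leq CD$ are automatic. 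Then, with $\sech^2 x\leq 4e^{-2|x|}$, $|x\sech^2 x|\leq 1$, $\sup_{t>0}te^{-ct}=1/(ce)$ and $\mathbf{h}\leq 1$: $|\partial_a\alpha_{\mathbf{a}}|=|{-}\tanh\varphi_{\mathbf{a}}-\varphi_{\mathbf{a}}\sech^2\varphi_{\mathbf{a}}|\leq 2$, $|\partial_v\alpha_{\mathbf{a}}|=\tfrac{a^2}{\mathbf{h}}\sech^2\varphi_{\mathbf{a}}\leq\tfrac{4C^2}{\mathbf{h}}e^{-2/(D\mathbf{h})}\leq\tfrac{2}{e}C^2D$, $|\partial_a\xi_{\mathbf{a}}|=|\mathbf{h}/a-v\tanh\varphi_{\mathbf{a}}|\leq C+CD$, $|\partial_v\xi_{\mathbf{a}}|=a|\tanh\varphi_{\mathbf{a}}|\leq C$. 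Feeding $|\Delta a|,|\Delta v|\leq d_{\mathcal{D}}$ through, adding $\alpha_{\mathbf{a}},\xi_{\mathbf{a}}$ back onto the sum coordinates to recover $\alpha_{\mathbf{b}},\alpha_{\mathbf{c}},\xi_{\mathbf{b}},\xi_{\mathbf{c}}$, and taking the maximum, gives $\slaa{d}_{\mathcal{D}}\leq 2^3C^2D\,d_{\mathcal{D}}$, with room to spare.

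For (b) I would run the mirror argument: since $|\beta_{\mathbf{a}}|=2e^{\xi_{\mathbf{a}}/\mathbf{h}}$, write $A_{\mathbf{a}}=\sqrt{\alpha_{\mathbf{a}}^2+4e^{2\xi_{\mathbf{a}}/\mathbf{h}}}$, $\varphi_{\mathbf{a}}=-\arcsinh(\tfrac{1}{2}\alpha_{\mathbf{a}}e^{-\xi_{\mathbf{a}}/\mathbf{h}})$, use the identities $\partial_{\alpha_{\mathbf{a}}}\varphi_{\mathbf{a}}=-1/A_{\mathbf{a}}$, $\partial_{\xi_{\mathbf{a}}}\varphi_{\mathbf{a}}=-\tfrac{1}{\mathbf{h}}\tanh\varphi_{\mathbf{a}}$, $\partial_{\alpha_{\mathbf{a}}}A_{\mathbf{a}}=\alpha_{\mathbf{a}}/A_{\mathbf{a}}=-\tanh\varphi_{\mathbf{a}}$, $\partial_{\xi_{\mathbf{a}}}A_{\mathbf{a}}=\tfrac{1}{\mathbf{h}}A_{\mathbf{a}}\sech^2\varphi_{\mathbf{a}}$, and integrate the differentials of $A_{\mathbf{a}}$ and $g:=\mathbf{h}\varphi_{\mathbf{a}}/A_{\mathbf{a}}$ along the segment in the $(\alpha_{\mathbf{a}},\xi_{\mathbf{a}})$-plane from $\Phi$ to $\Psi$ (both have $\alpha_{\mathbf{a}}<0$). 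These give $\partial_{\alpha_{\mathbf{a}}}g=\tfrac{\mathbf{h}}{A_{\mathbf{a}}^2}(\varphi_{\mathbf{a}}\tanh\varphi_{\mathbf{a}}-1)$, $\partial_{\xi_{\mathbf{a}}}g=-\tfrac{1}{A_{\mathbf{a}}}(\tanh\varphi_{\mathbf{a}}+\varphi_{\mathbf{a}}\sech^2\varphi_{\mathbf{a}})$. Granting that the segment stays in a region with $A_{\mathbf{a}}\in[\tfrac{1}{2C},2C]$, $\mathbf{h}|\varphi_{\mathbf{a}}|\in[\tfrac{1}{3D},2D]$, $|\varphi_{\mathbf{a}}|\geq 1$, one gets $|\partial_{\alpha_{\mathbf{a}}}A_{\mathbf{a}}|\leq 1$, $|\partial_{\xi_{\mathbf{a}}}A_{\mathbf{a}}|\leq\tfrac{8C}{\mathbf{h}}e^{-2/(3D\mathbf{h})}\leq\tfrac{12}{e}CD$, $|\partial_{\alpha_{\mathbf{a}}}g|\leq\tfrac{\mathbf{h}|\varphi_{\mathbf{a}}|}{A_{\mathbf{a}}^2}\leq 8C^2D$, $|\partial_{\xi_{\mathbf{a}}}g|\leq\tfrac{2}{A_{\mathbf{a}}}\leq 4C$; hence $|A_{\mathbf{a}}[\Phi]-A_{\mathbf{a}}[\Psi]|$ and $|g[\Phi]-g[\Psi]|$ are $\lesssim C^2D\,\slaa{d}_{\mathcal{D}}$ and, with the factor $2$ from the sum coordinates, $d_{\mathcal{D}}\leq 2^5C^3D\,\slaa{d}_{\mathcal{D}}$, again with room to spare.

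The hard part is the ``region'' claim in (b), and this is where the hypothesis $\exp(-\mathbf{h}^{-1}C^{-2}D^{-1})\leq 2^{-6}C^{-4}D^{-2}$ is used: it is equivalent to $\mathbf{h}^{-1}\geq C^2D(6\log 2+4\log C+2\log D)$, whence $\mathbf{h}\log C\leq\tfrac{1}{4}D^{-1}$, $\mathbf{h}\log 2\leq\tfrac{1}{6}D^{-1}$, $\tanh(D^{-1}/\mathbf{h})>\tfrac{1}{2}$. At each endpoint $|\alpha_{\mathbf{a}}|=A_{\mathbf{a}}\tanh|\varphi_{\mathbf{a}}|\geq C^{-1}\tanh(D^{-1}/\mathbf{h})\geq\tfrac{1}{2C}$, and (from $e^{-|x|}\leq\sech x\leq 2e^{-|x|}$) $\mathbf{h}\log C-D^{-1}\geq\xi_{\mathbf{a}}\geq-\mathbf{h}\log(2C)-D$, so $\xi_{\mathbf{a}}\leq-\tfrac{3}{4}D^{-1}<0$; since $\alpha_{\mathbf{a}}(s),\xi_{\mathbf{a}}(s)$ are affine these bounds persist along the segment, and then $\tfrac{1}{2}|\alpha_{\mathbf{a}}(s)|e^{-\xi_{\mathbf{a}}(s)/\mathbf{h}}\geq\tfrac{1}{4C}e^{3/(4D\mathbf{h})}\geq 1$, so $\mathbf{h}|\varphi_{\mathbf{a}}(s)|=\mathbf{h}\arcsinh(\tfrac{1}{2}|\alpha_{\mathbf{a}}(s)|e^{-\xi_{\mathbf{a}}(s)/\mathbf{h}})\geq\tfrac{3}{4D}-\mathbf{h}\log(2C)\geq\tfrac{1}{3D}$ using $\arcsinh x\geq\log(2x)$; the upper bounds $\mathbf{h}|\varphi_{\mathbf{a}}(s)|\leq 2D$, $A_{\mathbf{a}}(s)\leq 2C$, $|\varphi_{\mathbf{a}}(s)|\geq 1$ and $\beta_{\mathbf{a}}(s)=\pm2e^{\xi_{\mathbf{a}}(s)/\mathbf{h}}\neq 0$ follow in the same way. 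I expect the only real difficulty to be exactly this: checking that the single smallness inequality forces all the pointwise bounds invoked above; the remaining steps are routine estimates for $\tanh$, $\sech$ and $\arcsinh$. \qed
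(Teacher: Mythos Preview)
Your proof is correct. Part (a) is essentially the paper's argument: both integrate the Jacobian of $(A_{\mathbf{a}},\mathbf{h}\varphi_{\mathbf{a}}/A_{\mathbf{a}})\mapsto(\alpha_{\mathbf{a}},\xi_{\mathbf{a}})$ along the straight segment in the $(A_{\mathbf{a}},\mathbf{h}\varphi_{\mathbf{a}}/A_{\mathbf{a}})$-plane. The only notable difference is how the entry $\tfrac{a^2}{\mathbf{h}}\sech^2(av/\mathbf{h})$ is controlled: you first argue that the product $av$ stays $\geq D^{-1}$ along the segment (your monotone/concave dichotomy for a product of positive affine functions is valid) and then use $\sech^2 x\leq 4e^{-2|x|}$, while the paper rewrites the entry as $\tfrac{a}{v}\cdot\varphi\sech^2\varphi$ and uses only the universal bound $|\varphi\sech^2\varphi|\leq\tfrac12$ together with the lower bound on $|v|$ alone.

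For (b) you take a genuinely different route. The paper does \emph{not} change coordinates: it re-uses the averaged Jacobian $M=\int_0^1 J(\gamma(t))\,dt$ already computed in (a), shows $|\det M|\geq 2^{-3}C^{-1}$ (this is where the hypothesis $e^{-\mathbf{h}^{-1}C^{-2}D^{-1}}\leq 2^{-6}C^{-4}D^{-2}$ enters, via $|\varphi(t)|\geq \mathbf{h}^{-1}C^{-2}D^{-1}$ on the $(A,B)$-segment), and then bounds the entries of $M^{-1}$ by $|M_{ij}|/|\det M|\leq 2^4C^3D$. You instead parametrize the inverse map $(\alpha_{\mathbf{a}},\xi_{\mathbf{a}})\mapsto(A_{\mathbf{a}},\mathbf{h}\varphi_{\mathbf{a}}/A_{\mathbf{a}})$ explicitly and integrate its Jacobian along a segment in the $(\alpha_{\mathbf{a}},\xi_{\mathbf{a}})$-plane; the price is the ``region'' verification, which you carry out correctly (the convexity of the segment in $(\alpha_{\mathbf{a}},\xi_{\mathbf{a}})$ plus the endpoint bounds on $\alpha_{\mathbf{a}}$ and $\xi_{\mathbf{a}}$, extracted from the smallness hypothesis, do propagate to the interior as you claim). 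The paper's route is shorter once (a) is already done---one determinant bound replaces your entire region argument---while your route avoids the determinant computation and makes the role of the hypothesis on $\mathbf{h}$ more explicit.
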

\begin{proof}
In this proof, 
$A$, $B$, $\alpha$, $\xi$ play  the roles of $A_{\mathbf{a}}$, $\mathbf{h} \varphi_{\mathbf{a}}/A_{\mathbf{a}}$, $\alpha_{\mathbf{a}}$, $\xi_{\mathbf{a}}$, respectively.\\
To show (a), let
$P:(0,\infty) \times \R \to \R^2$,\; $(A,B) \mapsto (\alpha(A,B),\xi(A,B))$, where
$$\alpha(A,B) = - A\tanh (\tfrac{1}{\mathbf{h}}AB) \qquad
\xi(A,B) = \mathbf{h} \log(\tfrac{1}{2} A \sech (\tfrac{1}{\mathbf{h}}AB))$$
This is a diffeomorphism. The Jacobian $J$ of $P$ is given by
\begin{align*}
J = \begin{pmatrix}
\tfrac{\p \alpha}{\p A} & \tfrac{\p \alpha}{\p B}\\
\rule{0pt}{12pt} \tfrac{\p \xi}{\p A} & \tfrac{\p \xi}{\p B}
\end{pmatrix}
= \begin{pmatrix}
-\tfrac{1}{\mathbf{h}} AB \sech^2 (\tfrac{1}{\mathbf{h}}AB) - \tanh (\tfrac{1}{\mathbf{h}}AB) &\hskip3pt&
 -\tfrac{1}{\mathbf{h}} A^2 \sech^2 (\tfrac{1}{\mathbf{h}}AB)\\
\rule{0pt}{12pt}  \tfrac{\mathbf{h}}{A} - B \tanh (\tfrac{1}{\mathbf{h}}AB)&&
 - A \tanh (\tfrac{1}{\mathbf{h}}AB)
 \end{pmatrix}
\end{align*}
Let $p_i=(A_i,B_i) \in (0,\infty)\times \R$ and set $(\alpha_i,\xi_i) = P(p_i)$, where $i=0,1$. Set $\gamma(t) =(A(t),B(t)) = (1-t)p_0 + t p_1$ where $t\in [0,1]$.
We have
$$\big(\begin{smallmatrix}
\alpha_1-\alpha_0\\
\xi_1-\xi_0
\end{smallmatrix}\big) = M
\big(\begin{smallmatrix}
A_1-A_0\\
B_1-B_0
\end{smallmatrix}\big)
\qquad\text{with}\qquad
M = \big(\begin{smallmatrix} M_{00} & M_{01}\\ M_{10} & M_{11}\end{smallmatrix}\big) = \textstyle\int_0^1 \dd t\, J(\gamma(t))$$
Suppose $C^{-1} \leq A_i \leq C$ and $(CD)^{-1} \leq |B_i| \leq CD$ and $\sgn B_0 = \sgn B_1$.
Then, 
 $C^{-1} \leq A(t) \leq C$ and $(CD)^{-1} \leq |B(t)| \leq CD$ for all $t\in [0,1]$. Observe that $|\varphi \sech^2 \varphi |\leq \tfrac{1}{2}$ for all $\varphi\in \R$. We have
$|M_{ij}| \leq 2C^2D$ for all $i,j \in \{0,1\}$. This implies (a).\\
We show that under the assumptions of (b), we have
$|\det M| \geq 2^{-3} C^{-1}$, and therefore $|(M^{-1})_{ij}| \leq 2^4C^3D$ for all $i,j\in \{0,1\}$. This would imply (b). We have $|\det M| \geq |M_{00}M_{11}| - |M_{01}M_{10}|$.
Set $\varphi(t) = \tfrac{1}{\mathbf{h}} A(t)B(t)$.
We have $|\varphi(t)| \geq \tfrac{1}{\mathbf{h}} C^{-2}D^{-1}$. By the assumption of (b), we have
$e^{-|\varphi(t)|} \leq 2^{-6}C^{-4}D^{-2}$, for all $t\in [0,1]$. We will also use the general inequalities
$0\leq 1 - \tanh |\varphi| \leq 2e^{-2|\varphi|}$ and $|\varphi \sech^2 \varphi| \leq 4|\varphi| e^{-2|\varphi|} \leq 4e^{-|\varphi|}$. We have $|-\varphi\sech^2 \varphi - \tanh \varphi|
\geq \tanh |\varphi| = 1 - (1-\tanh |\varphi|) \geq 2^{-1}$. The last inequality holds for all $t\in [0,1]$ and implies $|M_{00}| \geq 2^{-1}$, because $\varphi$ has constant sign. We have $|M_{11}|
\geq 2^{-1} C^{-1}$ and $|M_{10}| \leq 2CD$ and $|M_{01}|
\leq 2^{-4} C^{-2}D^{-1}$. This implies $|\det M| \geq 2^{-3}C^{-1}$. \qed
\end{proof}
\begin{definition}
Let $\mathcal{X} = \mathcal{D}(\sigma_{\ast})$ or $\mathcal{X} = \mathcal{E}(\sigma_{\ast};\tau_0,\tau_1)$. For all $\delta \geq 0$ and $\Phi\in \mathcal{X}$ and $\pi\in S_3$ and $\mathbf{h} > 0$, set
$\clball{\mathcal{X},(\pi,\mathbf{h})}{\delta}{\Phi} = \{\Psi \in \mathcal{X}\;|\; d_{\mathcal{X},(\pi,\mathbf{h})}(\Phi,\Psi) \leq \delta\}$.
\end{definition}
\begin{definition}[The reference field $\Phi_0$] \label{def:psizero}
For all $\pi = (\mathbf{a},\mathbf{b},\mathbf{c})\in S_3$,\, $\mathbf{f} = (\mathbf{h},w,q) \in (0,\infty)^3$,\, $\sigma_{\ast}\in \{-1,+1\}^3$
 let
$\Phi_0 = \Phi_0(\pi,\mathbf{f},\sigma_{\ast}): \R
\to  \mathcal{D}(\sigma_{\ast})$
be given by
\begin{subequations}\label{eq:kdshkhkhsk}
\begin{align}
A_{\mathbf{a}}[\Phi_0](\tau) & = 1\\
\theta_{\mathbf{a}}[\Phi_0,\mathbf{h}](\tau) & = 0\\
\alpha_{\mathbf{b},\mathbf{a}}[\Phi_0](\tau) & = -(1+w)^{-1}\\
\alpha_{\mathbf{c},\mathbf{a}}[\Phi_0](\tau) & =  -(1+w) \\ 
\xi_{\mathbf{b},\mathbf{a}}[\Phi_0,\mathbf{h}](\tau) & = -1 - \mathbf{h}\log 2 -  (1+w)^{-1}\,\tau\\
\xi_{\mathbf{c},\mathbf{a}}[\Phi_0,\mathbf{h}](\tau) & = -(1+w)q - \mathbf{h}\log 2 -  (1+w)\,\tau
\end{align}
\end{subequations}
(see Definition  \ref{def:khkhdkhkfdhfd}) for all $\tau \in \R$.
\end{definition}
\begin{remark} The field $\Phi_0$ is, up to renaming, given by equation (3.12) in \cite{BKL1}.
\end{remark}
\begin{lemma}
Let $\Phi_0$ be as in Definition \ref{def:psizero}. Then
 $(\boldsymbol{\mathfrak a},\boldsymbol{\mathfrak b},c)[\Phi_0,\mathbf{h},B_{\mathbf{a}}] = 0$ on $\R$.
\end{lemma}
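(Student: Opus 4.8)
\emph{Proof sketch.} The plan is a direct verification, arranged so that the constraint component $c = 0$ is reduced, via Proposition \ref{prop:conslaw}, to evaluating at a single point.

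First I would unwind Definition \ref{def:psizero} into explicit formulas for the six components $\alpha_{\mathbf{m}},\beta_{\mathbf{m}}$, $\mathbf{m}\in\{\mathbf{a},\mathbf{b},\mathbf{c}\}$, as functions of $\tau$. Since $A_{\mathbf{a}}[\Phi_0]=1$ and $\theta_{\mathbf{a}}[\Phi_0,\mathbf{h}]=0$, the relation $\varphi_{\mathbf{a}}=\tfrac{1}{\mathbf{h}}(\tau-\theta_{\mathbf{a}})A_{\mathbf{a}}$ of Definition \ref{def:khkhdkhkfdhfd} gives $\varphi_{\mathbf{a}}=\tau/\mathbf{h}$, hence $\alpha_{\mathbf{a}}=-\tanh(\tau/\mathbf{h})$, $\beta_{\mathbf{a}}=(\sigma_{\ast})_{\mathbf{a}}\,\sech(\tau/\mathbf{h})$, and therefore $\xi_{\mathbf{a}}=\mathbf{h}\log|\tfrac12\beta_{\mathbf{a}}|=-\mathbf{h}\log2-\mathbf{h}\log\cosh(\tau/\mathbf{h})$. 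Subtracting this from the prescribed values of $\alpha_{\mathbf{b},\mathbf{a}},\alpha_{\mathbf{c},\mathbf{a}},\xi_{\mathbf{b},\mathbf{a}},\xi_{\mathbf{c},\mathbf{a}}$ yields $\alpha_{\mathbf{b}}=\tanh(\tau/\mathbf{h})-\tfrac{1}{1+w}$, $\alpha_{\mathbf{c}}=\tanh(\tau/\mathbf{h})-(1+w)$, together with closed forms for $\mathbf{h}\log|\beta_{\mathbf{b}}|$ and $\mathbf{h}\log|\beta_{\mathbf{c}}|$ (affine in $\tau$ plus an $\mathbf{h}\log\cosh(\tau/\mathbf{h})$ term), the signs of $\beta_{\mathbf{b}},\beta_{\mathbf{c}}$ being fixed by $\sigma_{\ast}$.

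Next I would check $\boldsymbol{\mathfrak b}[\Phi_0,\mathbf{h},B_{\mathbf{a}}]=0$. Since $\boldsymbol{\mathfrak b}_{\mathbf{m}}[\Phi,\mathbf{h},n]=-\mathbf{h}\td{\beta_{\mathbf{m}}}+\alpha_{\mathbf{m}}\beta_{\mathbf{m}}$ is independent of $n$, this is just the scalar identity $\mathbf{h}\tfrac{\dd}{\dd\tau}\log|\beta_{\mathbf{m}}|=\alpha_{\mathbf{m}}$ for each $\mathbf{m}$: for $\mathbf{m}=\mathbf{a}$ both sides equal $-\tanh(\tau/\mathbf{h})$, and for $\mathbf{m}=\mathbf{b},\mathbf{c}$ one uses that $\xi_{\mathbf{b},\mathbf{a}},\xi_{\mathbf{c},\mathbf{a}}$ are affine in $\tau$ with slopes exactly $\alpha_{\mathbf{b},\mathbf{a}},\alpha_{\mathbf{c},\mathbf{a}}$, so $\tfrac{\dd}{\dd\tau}(\mathbf{h}\log|\beta_{\mathbf{m}}|)=\tfrac{\dd}{\dd\tau}\xi_{\mathbf{m},\mathbf{a}}-\tfrac{\dd}{\dd\tau}\xi_{\mathbf{a}}=\alpha_{\mathbf{m},\mathbf{a}}-\alpha_{\mathbf{a}}=\alpha_{\mathbf{m}}$. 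Then I would check $\boldsymbol{\mathfrak a}[\Phi_0,\mathbf{h},B_{\mathbf{a}}]=0$: evaluating Definition \ref{def:dkhkdhfd} at $n=B_{\mathbf{a}}$ (only $n_{\mathbf{a}}=1$ nonzero) collapses the quadratic cross terms, giving $\boldsymbol{\mathfrak a}_{\mathbf{a}}=-\mathbf{h}\td{\alpha_{\mathbf{a}}}-\beta_{\mathbf{a}}^2$ and $\boldsymbol{\mathfrak a}_{\mathbf{b}}=-\mathbf{h}\td{\alpha_{\mathbf{b}}}+\beta_{\mathbf{a}}^2$, likewise with $\mathbf{c}$ in place of $\mathbf{b}$. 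Because $\alpha_{\mathbf{b},\mathbf{a}}$ and $\alpha_{\mathbf{c},\mathbf{a}}$ are constant, $\td{\alpha_{\mathbf{b}}}=\td{\alpha_{\mathbf{c}}}=-\td{\alpha_{\mathbf{a}}}=\tfrac{1}{\mathbf{h}}\sech^2(\tau/\mathbf{h})=\tfrac{1}{\mathbf{h}}\beta_{\mathbf{a}}^2$, so all three vanish.

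Finally, for $c[\Phi_0,\mathbf{h},B_{\mathbf{a}}]$: having established $(\boldsymbol{\mathfrak a},\boldsymbol{\mathfrak b})=0$ on $\R$, Proposition \ref{prop:conslaw} reduces $c\equiv0$ to $c$ vanishing at one point, for which I would take $\tau=0$. There $\alpha_{\mathbf{a}}=0$, $\alpha_{\mathbf{b}}=-\tfrac{1}{1+w}$, $\alpha_{\mathbf{c}}=-(1+w)$, $\beta_{\mathbf{a}}^2=1$, and since $n=B_{\mathbf{a}}$ kills the $\beta_{\mathbf{b}}\beta_{\mathbf{c}}$-type terms in \eqref{eq:khdkhdkhfdfddf1}, one gets $c(0)=\alpha_{\mathbf{a}}\alpha_{\mathbf{b}}+\alpha_{\mathbf{b}}\alpha_{\mathbf{c}}+\alpha_{\mathbf{c}}\alpha_{\mathbf{a}}-\beta_{\mathbf{a}}^2=\tfrac{1}{1+w}(1+w)-1=0$. (Alternatively, $c\equiv0$ follows directly: with $t=\tanh(\tau/\mathbf{h})$ the symmetric sum $\sum_{\rm pairs}\alpha_{\mathbf{m}}\alpha_{\mathbf{n}}$ collapses to $1-t^2=\sech^2(\tau/\mathbf{h})=\beta_{\mathbf{a}}^2$, the $w$-dependence cancelling because $\tfrac{1}{1+w}\cdot(1+w)=1$.) There is no genuine obstacle here; the only point requiring care is the bookkeeping, in particular tracking that the evaluation is at $n=B_{\mathbf{a}}$ rather than $n=Z$ — this is precisely what removes $\beta_{\mathbf{b}},\beta_{\mathbf{c}}$ from the $\boldsymbol{\mathfrak a}$- and $c$-equations and decouples the $\mathbf{a}$-component into the elementary $\tanh/\sech$ profile. \qed
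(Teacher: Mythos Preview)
Your proof is correct and follows essentially the same direct-verification approach as the paper. The paper organizes the $\mathbf{a}$-component via Lemma \ref{lem:lfdskhjkfdhkfdhkfd} and checks $c\equiv0$ directly through the identity $c=-A_{\mathbf{a}}^2+\alpha_{\mathbf{a},\mathbf{b}}\alpha_{\mathbf{a},\mathbf{c}}=-1+1=0$ (which is your parenthetical alternative), rather than appealing to Proposition \ref{prop:conslaw} at a single point; these are cosmetic differences only.
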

\begin{proof}
Let  $\alpha = \alpha[\Phi_0]$,\; $\beta = \beta[\Phi_0]$,\; $\xi = \xi[\Phi_0,\mathbf{h}]$. We have 
$(\boldsymbol{\mathfrak a}_{\mathbf{a}},\boldsymbol{\mathfrak b}_{\mathbf{a}})[\Phi_0,\mathbf{h},B_{\mathbf{a}}] = 0$ by  
Lemma \ref{lem:lfdskhjkfdhkfdhkfd}. For $\mathbf{p}\in \{\mathbf{b},\mathbf{c}\}$, we have
$\boldsymbol{\mathfrak a}_{\mathbf{a}}[\Phi_0,\mathbf{h},B_{\mathbf{a}}]
+ 
\boldsymbol{\mathfrak a}_{\mathbf{p}}[\Phi_0,\mathbf{h},B_{\mathbf{a}}]= - \mathbf{h}\tfrac{\dd}{\dd \tau} \alpha_{\mathbf{a},\mathbf{p}} = 0$, that is $\boldsymbol{\mathfrak a}_{\mathbf{p}}[\Phi_0,\mathbf{h},B_{\mathbf{a}}] = 0$.
We also have 
$\beta_{\mathbf{a}}^{-1} \boldsymbol{\mathfrak b}_{\mathbf{a}}[\Phi_0,\mathbf{h},B_{\mathbf{a}}]
+
\beta_{\mathbf{p}}^{-1} 
\boldsymbol{\mathfrak b}_{\mathbf{p}}[\Phi_0,\mathbf{h},B_{\mathbf{a}}]
= - \tfrac{\dd}{\dd \tau}\xi_{\mathbf{a},\mathbf{p}} + \alpha_{\mathbf{a},\mathbf{p}} = 0$, that is
$\boldsymbol{\mathfrak b}_{\mathbf{p}}[\Phi_0,\mathbf{h},B_{\mathbf{a}}] = 0$.
Finally, 
$c[\Phi_0,\mathbf{h},B_{\mathbf{a}}] = -\alpha_{\mathbf{a}}^2 -\beta_{\mathbf{a}}^2 
+ \alpha_{\mathbf{a},\mathbf{b}}\alpha_{\mathbf{a},\mathbf{c}}
= - A_{\mathbf{a}}^2 + \alpha_{\mathbf{a},\mathbf{b}}\alpha_{\mathbf{a},\mathbf{c}} = 0$. Here,
$A_{\mathbf{a}} = A_{\mathbf{a}}[\Phi_0]$. \qed
\end{proof}
\begin{definition}\label{def:taump}
For all $\mathbf{f} = (\mathbf{h},w,q)\in (0,\infty)^3$ set
\begin{alignat*}{4}
\tau_-(\mathbf{f}) & = - \big(1 - \tfrac{1}{2+w}\big)\, \min\{1,q\} &\quad & < 0\\
\tau_+(\mathbf{f}) & = 1 + \tfrac{1}{w} && > 0
\end{alignat*}
\end{definition}
\begin{lemma}[Technical Lemma 1] \label{lem:tl1}
Let $\pi = (\mathbf{a},\mathbf{b},\mathbf{c}) \in S_3$,\, $\mathbf{f} = (\mathbf{h},w,q)\in  (0,\infty)^3$,\, $\sigma_{\ast}\in \{-1,+1\}^3$  and fix
$\delta > 0$,\,
 $\epsilon_- \in (0,-\tau_-)$,\, $\epsilon_+ \in (0,\tau_+)$ where
 $\tau_{\pm} = \tau_{\pm}(\mathbf{f})$. Set
\begin{subequations} \label{eq:fdlkkdhfdkjhfdkdf}
\begin{align}
\tau_{0-} & = \tau_- + \epsilon_- < 0 &
\Phi_0 & = \Phi_0(\pi,\mathbf{f},\sigma_{\ast})\big|_{
[\tau_{0-},\tau_{0+}]}\\
\tau_{0+} & = \tau_+ - \epsilon_+ > 0 &
\mathcal{E} & = \mathcal{E}(\sigma_{\ast};
\tau_{0-},\tau_{0+})
\end{align}
\end{subequations}
Then $\Phi_0 \in \mathcal{E}$. Furthermore, if the inequality
\begin{align}\label{eq:kdshskhskhkdsds}
\delta & \leq 2^{-4} \, \min \big\{1,\, w,\,\epsilon_-,\, \tfrac{\epsilon_+}{\tau_+\tau_{0+}}\big\}
\end{align}
holds, then for all $\Phi = \ua\oplus \ub \in \clball{\mathcal{E},(\pi,\mathbf{h})}{\delta}{\Phi_0}$ the estimates
\begin{align*}
\max \big\{|\ub_{\mathbf{b}}|^2,\,|\ub_{\mathbf{c}}|^2,\,|\ub_{\mathbf{b}}\ub_{\mathbf{a}}|,\,|\ub_{\mathbf{c}}\ub_{\mathbf{a}}|\big\}& \;\leq \; 2^4 \exp \big( - \tfrac{1}{4\mathbf{h}}\, \min \{1,\, \epsilon_-,\, \tfrac{\epsilon_+}{\tau_+}\}\big)\\
|A_\mathbf{a}[\Phi] - 1| & \;\leq\; 2^{-1}\\
|\varphi_{\mathbf{a}}[\Phi]| & \;\leq\;
\tfrac{1}{\mathbf{h}} 2(1 + |\tau|)\\
|\beta_{\mathbf{a}}| & \; \leq \; 2
\end{align*}
hold on $[\tau_{0-},\tau_{0+}]$. 
\end{lemma}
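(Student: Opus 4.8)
I would first note that $\Phi_0\in\mathcal E$ is immediate from Definition~\ref{def:psizero}: the data $A_{\mathbf a},\theta_{\mathbf a},\alpha_{\mathbf b,\mathbf a},\alpha_{\mathbf c,\mathbf a},\xi_{\mathbf b,\mathbf a},\xi_{\mathbf c,\mathbf a}$ of $\Phi_0$ are analytic in $\tau$ and determine, through Definition~\ref{def:khkhdkhkfdhfd}, a continuous $\Phi_0:\R\to\R^3\oplus\R^3$ with $\ub_i$ of the prescribed sign $\sigma_{\ast}$ and nowhere zero (for $\ub_{\mathbf a}$ since $\sech>0$, for $\ub_{\mathbf b},\ub_{\mathbf c}$ since $\xi_{\mathbf b},\xi_{\mathbf c}$ are finite), so $\Phi_0$ lands in $\mathcal D(\sigma_{\ast})$. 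The real content is the four estimates, and I would begin by making $\Phi_0$ explicit: from $A_{\mathbf a}[\Phi_0]=1$, $\theta_{\mathbf a}[\Phi_0,\mathbf h]=0$ we get $\varphi_{\mathbf a}[\Phi_0](\tau)=\tau/\mathbf h$, hence $\ub_{\mathbf a}[\Phi_0]=(\sgn\ub_{\mathbf a})\sech(\tau/\mathbf h)$ and $\xi_{\mathbf a}[\Phi_0,\mathbf h]=-\mathbf h\log2-\mathbf h\log\cosh(\tau/\mathbf h)$ by \eqref{eq:khkdhdfkhkdhkdsds}; subtracting this from the prescribed $\xi_{\mathbf b,\mathbf a},\xi_{\mathbf c,\mathbf a}$ and using $\mathbf h\log\cosh(\tau/\mathbf h)\le|\tau|$ gives
\[
\xi_{\mathbf b}[\Phi_0,\mathbf h](\tau)\le-1-\tfrac{\tau}{1+w}+|\tau|,\qquad
\xi_{\mathbf c}[\Phi_0,\mathbf h](\tau)\le-(1+w)q-(1+w)\tau+|\tau|.
\]

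Now fix $\Phi=\ua\oplus\ub\in\clball{\mathcal E,(\pi,\mathbf h)}{\delta}{\Phi_0}$. Unwinding Definition~\ref{def:met1}, at every $\tau\in[\tau_{0-},\tau_{0+}]$ one has $|A_{\mathbf a}[\Phi]-1|\le\delta$, $\bigl|\mathbf h\varphi_{\mathbf a}[\Phi]/A_{\mathbf a}[\Phi]-\tau\bigr|\le\delta$, and $|\xi_{\mathbf p,\mathbf a}[\Phi,\mathbf h]-\xi_{\mathbf p,\mathbf a}[\Phi_0,\mathbf h]|\le\delta$ for $\mathbf p\in\{\mathbf b,\mathbf c\}$. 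Since $\delta\le2^{-4}$, $A_{\mathbf a}[\Phi]\in[\tfrac12,2]$, which already gives $|A_{\mathbf a}[\Phi]-1|\le\tfrac12$; then $\mathbf h|\varphi_{\mathbf a}[\Phi]|\le A_{\mathbf a}[\Phi](|\tau|+\delta)$ yields $|\varphi_{\mathbf a}[\Phi]|\le\tfrac2{\mathbf h}(1+|\tau|)$, and $|\ub_{\mathbf a}|=A_{\mathbf a}[\Phi]\sech\varphi_{\mathbf a}[\Phi]\le A_{\mathbf a}[\Phi]\le2$ — three of the four estimates. For the fourth I would write $\xi_{\mathbf p}=\xi_{\mathbf p,\mathbf a}-\xi_{\mathbf a}$; the $\xi_{\mathbf p,\mathbf a}[\Phi,\mathbf h]$ part is $\delta$-close to $\xi_{\mathbf p,\mathbf a}[\Phi_0,\mathbf h]$ already, so the task is to bound $\xi_{\mathbf a}[\Phi,\mathbf h]-\xi_{\mathbf a}[\Phi_0,\mathbf h]$. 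By \eqref{eq:khkdhdfkhkdhkdsds} this equals $\mathbf h\log A_{\mathbf a}[\Phi]+\mathbf h\bigl(\log\cosh(\tau/\mathbf h)-\log\cosh\varphi_{\mathbf a}[\Phi]\bigr)$; the first term is $\le2\mathbf h\delta\le2\delta$ (this is where $\mathbf h\le1$ is used), and, $\log\cosh$ being $1$-Lipschitz and $\mathbf h\varphi_{\mathbf a}[\Phi]=A_{\mathbf a}[\Phi](\tau+e)$ with $|e|\le\delta$, the second is $\le|\tau-\mathbf h\varphi_{\mathbf a}[\Phi]|\le\delta|\tau|+(1+\delta)\delta\le\delta(|\tau|+2)$. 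Hence, on $[\tau_{0-},\tau_{0+}]$,
\[
\xi_{\mathbf p}[\Phi,\mathbf h](\tau)\le\xi_{\mathbf p}[\Phi_0,\mathbf h](\tau)+\delta(5+|\tau|)\qquad(\mathbf p\in\{\mathbf b,\mathbf c\}).
\]

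Set $M=\min\{1,\epsilon_-,\epsilon_+/\tau_+\}$. The plan is to show $\xi_{\mathbf b}[\Phi,\mathbf h],\xi_{\mathbf c}[\Phi,\mathbf h]\le-\tfrac14M$ on $[\tau_{0-},\tau_{0+}]$; then $|\ub_{\mathbf p}|=2\exp(\tfrac1{\mathbf h}\xi_{\mathbf p}[\Phi,\mathbf h])\le2e^{-M/4\mathbf h}$, and together with $|\ub_{\mathbf a}|\le2$ this bounds $\max\{|\ub_{\mathbf b}|^2,|\ub_{\mathbf c}|^2,|\ub_{\mathbf b}\ub_{\mathbf a}|,|\ub_{\mathbf c}\ub_{\mathbf a}|\}$ by $4e^{-M/4\mathbf h}\le2^4e^{-M/4\mathbf h}$, which is the asserted estimate. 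Combining the two displays, on each of $[0,\tau_{0+}]$ and $[\tau_{0-},0]$ the upper bound for $\xi_{\mathbf b}$ (resp. $\xi_{\mathbf c}$) is affine in $\tau$, so it suffices to evaluate at the endpoints $0,\tau_{0+}$ (resp. $\tau_{0-},0$). The endpoint substitutions simplify via the identities $\tfrac{w}{1+w}\tau_{0+}=1-\epsilon_+/\tau_+$ and $-\tfrac{2+w}{1+w}\tau_{0-}=\min\{1,q\}-\tfrac{2+w}{1+w}\epsilon_-$, together with $|\tau_{0-}|<-\tau_-<1$ and $(1+w)q>2\epsilon_-$ (both from $\epsilon_-<-\tau_-=\tfrac{1+w}{2+w}\min\{1,q\}$) and $\tau_{0+}+\epsilon_+/\tau_+>1$ (from $\tau_{0+}=\tau_+(1-\epsilon_+/\tau_+)$ and $\tau_+>1$). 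Feeding in $\delta\le2^{-4}\min\{1,w,\epsilon_-,\epsilon_+/(\tau_+\tau_{0+})\}$ and splitting into the three cases according to which of $1,\epsilon_-,\epsilon_+/\tau_+$ realizes $M$, each endpoint bound reduces to a short numerical check.

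The one point requiring care is the $\xi_{\mathbf b}$ bound at $\tau=\tau_{0+}$ in the case $M=\epsilon_+/\tau_+$: there one gets $\xi_{\mathbf b}[\Phi,\mathbf h](\tau_{0+})\le-\tfrac{15}{16}(\epsilon_+/\tau_+)+5\delta$, and the naive estimate $5\delta\le\tfrac5{16}(\epsilon_+/\tau_+)/\tau_{0+}$ only yields $\le-\tfrac14M$ when $\tau_{0+}\ge\tfrac5{11}$. For $\tau_{0+}<\tfrac5{11}$ one must instead invoke $\tau_{0+}+\epsilon_+/\tau_+>1$, which forces $\epsilon_+/\tau_+>\tfrac6{11}$, hence $M>\tfrac6{11}$, so that the absolute bound $5\delta\le\tfrac5{16}<\tfrac{11}{16}M$ does the job. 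I expect this interplay between the two smallness hypotheses on $\delta$ — the factor $\epsilon_+/(\tau_+\tau_{0+})$ versus the absolute $2^{-4}$ — to be the only genuine obstacle; all the remaining endpoint cases are direct.
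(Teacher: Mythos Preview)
Your approach and the paper's are essentially the same: control $A_{\mathbf a},\varphi_{\mathbf a}$ directly from the metric, then bound the $\beta_{\mathbf b},\beta_{\mathbf c}$ quantities by combining the metric control on $\xi_{\mathbf b,\mathbf a},\xi_{\mathbf c,\mathbf a}$ with a bound on the $\mathbf a$-piece. You even land on precisely the same two-case split for the delicate $\xi_{\mathbf b}(\tau_{0+})$ endpoint (the paper splits at $\epsilon_+\gtrless\tfrac12\tau_+$, you at $\tau_{0+}\gtrless\tfrac5{11}$; the logic is identical — use $\delta\le 2^{-4}\epsilon_+/(\tau_+\tau_{0+})$ when $\tau_{0+}$ is bounded below, and the absolute bound $\delta\le 2^{-4}$ when $\epsilon_+/\tau_+$ is bounded below).

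There is one technical gap. The lemma does \emph{not} assume $\mathbf h\le 1$, yet you invoke it to pass from $\mathbf h\log A_{\mathbf a}[\Phi]\le 2\mathbf h\delta$ to $\le 2\delta$. Without this, your additive target $\xi_{\mathbf b}[\Phi,\mathbf h]\le-\tfrac14 M$ can fail outright: for $\mathbf h$ large and $A_{\mathbf a}[\Phi]=1-\delta$, the term $-\mathbf h\log A_{\mathbf a}[\Phi]\approx \mathbf h\delta$ can push $\xi_{\mathbf b}$ positive. The paper sidesteps this by never isolating $\xi_{\mathbf b}$: it bounds $|\beta_{\mathbf b}\beta_{\mathbf a}|=4\exp(\tfrac1{\mathbf h}\xi_{\mathbf b,\mathbf a})$ and $|\beta_{\mathbf a}|^{-1}\le 2\exp(|\varphi_{\mathbf a}|)$ separately and then multiplies, so the $\log A_{\mathbf a}$ contribution enters the final exponent already divided by $\mathbf h$ (as $\log A_{\mathbf a}\le 2\delta$), and no restriction on $\mathbf h$ is needed. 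Your argument is immediately repaired by the same device: work with $\tfrac1{\mathbf h}\xi_{\mathbf b}=\log|\tfrac12\beta_{\mathbf b}|$ rather than $\xi_{\mathbf b}$, and the offending term becomes $\log A_{\mathbf a}[\Phi]$, bounded by $2\delta$ with no hypothesis on $\mathbf h$.
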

\begin{proof}
The following estimates hold for the components of $\Phi$, for all $\tau \in [\tau_{0-},\tau_{0+}]$:
\begin{align*}
|\ub_{\mathbf{b}}\ub_{\mathbf{a}}| & = 4 \exp \big( \tfrac{1}{\mathbf{h}} \xi_{\mathbf{b},\mathbf{a}}\big)\\
& \leq 4 \exp \big( \tfrac{1}{\mathbf{h}} \xi_{\mathbf{b},\mathbf{a}}[\Phi_0,\mathbf{h}]
+ \tfrac{1}{\mathbf{h}} \delta \big)\\
& \leq 2 \exp \big(
- \tfrac{1}{\mathbf{h}} - \tfrac{1}{\mathbf{h}} (1+w)^{-1}\tau 
+ \tfrac{1}{\mathbf{h}} \delta \big)\\
& \leq 2 \exp \big(
- \tfrac{1}{\mathbf{h}} - \tfrac{1}{\mathbf{h}} (1+w)^{-1}\tau_- 
+ \tfrac{1}{\mathbf{h}} \delta \big)\\
& \leq 2 \exp \big(
- \tfrac{1}{\mathbf{h}} + \tfrac{1}{\mathbf{h}} (2+w)^{-1}
+ \tfrac{1}{\mathbf{h}} \delta \big)\\
& \leq 2 \exp \big(
- \tfrac{1}{4\mathbf{h}}  \big) \displaybreak[0]\\
|\ub_{\mathbf{c}}\ub_{\mathbf{a}}| & = 4 \exp \big( \tfrac{1}{\mathbf{h}} \xi_{\mathbf{c},\mathbf{a}}\big)\\
& \leq 4 \exp \big( \tfrac{1}{\mathbf{h}} \xi_{\mathbf{c},\mathbf{a}}[\Phi_0,\mathbf{h}]
+ \tfrac{1}{\mathbf{h}} \delta \big)\\
& \leq 2 \exp \big(
- \tfrac{1}{\mathbf{h}}(1+w)q - \tfrac{1}{\mathbf{h}} (1+w) \tau 
+ \tfrac{1}{\mathbf{h}} \delta \big)\\
& \leq 2 \exp \big(
- \tfrac{1}{\mathbf{h}}(1+w)q - \tfrac{1}{\mathbf{h}} (1+w) (\tau_- + \epsilon_-)
+ \tfrac{1}{\mathbf{h}} \delta \big)\\
& \leq 2 \exp \big(
- \tfrac{1}{\mathbf{h}}(1+w)q + \tfrac{1}{\mathbf{h}} \tfrac{(1+w)^2}{2+w} q
- \tfrac{1}{\mathbf{h}} \epsilon_-
+ \tfrac{1}{\mathbf{h}} \delta \big)\\
& \leq 2 \exp \big( - \tfrac{1}{2\mathbf{h}} \epsilon_- \big) \displaybreak[0]\\
|\varphi_{\mathbf{a}}| & = \tfrac{1}{\mathbf{h}} A_{\mathbf{a}} |\tau - \theta_{\mathbf{a}}|\\
& \leq \tfrac{1}{\mathbf{h}} (1 + \delta)\big(|\tau|  + \delta \big)\\
& \leq \tfrac{1}{\mathbf{h}} \big(|\tau|  + \delta|\tau| + 2\delta \big)\\
& \leq \tfrac{1}{\mathbf{h}} 2 (1 + |\tau|) \displaybreak[0]\\
|\ub_{\mathbf{a}}|^{-1} & = |A_{\mathbf{a}}|^{-1} \cosh \varphi_{\mathbf{a}}\\
& \leq 2 \exp \big( |\varphi_{\mathbf{a}}|\big)\\
& \leq 2 \exp \big(
\tfrac{1}{\mathbf{h}} |\tau|
+\tfrac{1}{\mathbf{h}} \delta|\tau|
+\tfrac{1}{\mathbf{h}} 2\delta
\big) \displaybreak[0]\\
|\ub_{\mathbf{b}}| & = 
|\ub_{\mathbf{b}}\ub_{\mathbf{a}}|\cdot |\ub_{\mathbf{a}}|^{-1} \\
& \leq 4 \exp \big(
- \tfrac{1}{\mathbf{h}} - \tfrac{1}{\mathbf{h}} (1+w)^{-1}\tau 
+ \tfrac{1}{\mathbf{h}} |\tau|
+\tfrac{1}{\mathbf{h}} \delta|\tau|
+\tfrac{1}{\mathbf{h}} 3\delta
\big) \\
 & \leq 4 \exp \big( \tfrac{1}{\mathbf{h}} \max \big\{ -1 -\tfrac{2+w}{1+w} \tau_{0-}
 - \delta \tau_{0-},\, -1 + \tfrac{w}{1+w}\tau_{0+} + \delta \tau_{0+} \big\} +\tfrac{1}{\mathbf{h}} 3\delta \big) \\
  & \leq 4 \exp \big( \tfrac{1}{\mathbf{h}} \max\{-\epsilon_- - \delta \tau_{0-}, -\tfrac{\epsilon_+}{\tau_+} + \delta \tau_{0+}\}
  +\tfrac{1}{\mathbf{h}} 3 \delta \big) \\
  & \leq 4 \exp \big(-\tfrac{1}{\mathbf{h}} \tfrac{15}{16}  \min\{\epsilon_-,\tfrac{\epsilon_+}{\tau_+}\} + \tfrac{1}{\mathbf{h}}3\delta \big)\\
& \leq 4 \exp \big( -\tfrac{1}{2\mathbf{h}} \min\{\epsilon_-, \tfrac{\epsilon_+}{\tau_+}\} \big)
\intertext{The last step uses $\delta \leq 2^{-3} \tfrac{\epsilon_+}{\tau_+}$.
In the case $\epsilon_+ \geq \tfrac{1}{2}\tau_+$, this follows from $\delta \leq 2^{-4}$. If $\epsilon_+
\leq \tfrac{1}{2}\tau_+$, then this follows from $\delta \leq 2^{-4}\tfrac{\epsilon_+}{\tau_+\tau_{0+}}$, because
$\tau_{0+} = \tau_+ - \epsilon_+ \geq \tfrac{1}{2} \tau_+ \geq \tfrac{1}{2}$. } 
|\ub_{\mathbf{c}}| & = 
|\ub_{\mathbf{c}}\ub_{\mathbf{a}}|\cdot |\ub_{\mathbf{a}}|^{-1} \\
& \leq 4 \exp \big(
- \tfrac{1}{\mathbf{h}}(1+w)q - \tfrac{1}{\mathbf{h}} (1+w) \tau 
+ \tfrac{1}{\mathbf{h}} |\tau|
+\tfrac{1}{\mathbf{h}} \delta|\tau|
+\tfrac{1}{\mathbf{h}} 3\delta
\big) \\
& \leq 4 \exp \big(
- \tfrac{1}{\mathbf{h}}(1+w)q
\\
& \hskip 30mm + \tfrac{1}{\mathbf{h}}
\max\big\{
-  (2+w + \delta) \tau_{0-},
-  (w-\delta) \tau_{0+}
\big\}
+\tfrac{1}{\mathbf{h}} 3\delta
\big)\\
& \leq 4 \exp \big(
- \tfrac{1}{\mathbf{h}}(1+w)q
+ \tfrac{1}{\mathbf{h}}
 (2+w) |\tau_-|
 - \tfrac{1}{\mathbf{h}}
 (2+w + \delta) \epsilon_-
+\tfrac{1}{\mathbf{h}} 4\delta
\big) \displaybreak[0] \\
& \leq 4 \exp \big(
 - \tfrac{1}{\mathbf{h}}
 2 \epsilon_-
+\tfrac{1}{\mathbf{h}} 4\delta
\big) \\
& \leq 4 \exp \big(
 - \tfrac{1}{\mathbf{h}}
 \epsilon_-
\big) 
\end{align*}
This concludes the proof. \qed
\end{proof}
\begin{lemma} \label{lem:expl}
Recall Definitions \ref{def:dkhkdhfd} and \ref{def:nnnnnns}. We have
\begin{align*}
\boldsymbol{\mathfrak a}_{\mathbf{a}} [\Phi,\mathbf{h},Z,B_{\mathbf{a}}] & =
+ \ub_{\mathbf{b}}^2 + \ub_{\mathbf{c}}^2 - 2\ub_{\mathbf{b}}\ub_{\mathbf{c}}\\
\boldsymbol{\mathfrak a}_{\mathbf{b}} [\Phi,\mathbf{h},Z,B_{\mathbf{a}}] & =
- \ub_{\mathbf{b}}^2 + \ub_{\mathbf{c}}^2 - 2\ub_{\mathbf{a}}\ub_{\mathbf{c}}\\
\boldsymbol{\mathfrak a}_{\mathbf{c}} [\Phi,\mathbf{h},Z,B_{\mathbf{a}}] & =
+ \ub_{\mathbf{b}}^2 - \ub_{\mathbf{c}}^2 - 2\ub_{\mathbf{a}}\ub_{\mathbf{b}}
\end{align*}
for all $(\mathbf{a},\mathbf{b},\mathbf{c})\in S_3$.
\end{lemma}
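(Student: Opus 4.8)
The plan is a direct evaluation from Definition~\ref{def:dkhkdhfd}. Since $\boldsymbol{\mathfrak a}_{\mathbf{m}}[\Phi,\mathbf{h},Z,B_{\mathbf{a}}]$ is by definition the difference $\boldsymbol{\mathfrak a}_{\mathbf{m}}[\Phi,\mathbf{h},Z]-\boldsymbol{\mathfrak a}_{\mathbf{m}}[\Phi,\mathbf{h},B_{\mathbf{a}}]$, the term $-\mathbf{h}\,\td{\alpha_{\mathbf{m}}}$ cancels, and \eqref{eq:kdshkfdhkfdhfdkhfdkdhfkdfhkdssdfds} reduces the claim to expanding, for each row $\mathbf{m}\in\{\mathbf{a},\mathbf{b},\mathbf{c}\}$,
\[
-(Z_{\mathbf{m}}\beta_{\mathbf{m}})^2+(Z_{\mathbf{j}}\beta_{\mathbf{j}}-Z_{\mathbf{k}}\beta_{\mathbf{k}})^2+((B_{\mathbf{a}})_{\mathbf{m}}\beta_{\mathbf{m}})^2-((B_{\mathbf{a}})_{\mathbf{j}}\beta_{\mathbf{j}}-(B_{\mathbf{a}})_{\mathbf{k}}\beta_{\mathbf{k}})^2 ,
\]
where $(\mathbf{m},\mathbf{j},\mathbf{k})$ is the element of $\mathcal{C}$ with first entry $\mathbf{m}$; note $\{\mathbf{j},\mathbf{k}\}=\{1,2,3\}\setminus\{\mathbf{m}\}$ always, so this complementary pair equals $\{\mathbf{b},\mathbf{c}\}$, $\{\mathbf{a},\mathbf{c}\}$, $\{\mathbf{a},\mathbf{b}\}$ for $\mathbf{m}=\mathbf{a},\mathbf{b},\mathbf{c}$ respectively.

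First I would record two elementary facts. Because $Z=(1,1,1)$, the first two summands are $-\beta_{\mathbf{m}}^2+\beta_{\mathbf{j}}^2+\beta_{\mathbf{k}}^2-2\beta_{\mathbf{j}}\beta_{\mathbf{k}}$. For the $B_{\mathbf{a}}$ summands: if $\mathbf{m}=\mathbf{a}$ then $(B_{\mathbf{a}})_{\mathbf{m}}=1$ and $(B_{\mathbf{a}})_{\mathbf{j}}=(B_{\mathbf{a}})_{\mathbf{k}}=0$, contributing $+\beta_{\mathbf{a}}^2$; if $\mathbf{m}\neq\mathbf{a}$ then $(B_{\mathbf{a}})_{\mathbf{m}}=0$ while exactly one of $\mathbf{j},\mathbf{k}$ equals $\mathbf{a}$, so $(B_{\mathbf{a}})_{\mathbf{j}}\beta_{\mathbf{j}}-(B_{\mathbf{a}})_{\mathbf{k}}\beta_{\mathbf{k}}=\pm\beta_{\mathbf{a}}$ and the $B_{\mathbf{a}}$ summands contribute $-\beta_{\mathbf{a}}^2$. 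Adding up: for $\mathbf{m}=\mathbf{a}$ one gets $(-\beta_{\mathbf{a}}^2+\beta_{\mathbf{b}}^2+\beta_{\mathbf{c}}^2-2\beta_{\mathbf{b}}\beta_{\mathbf{c}})+\beta_{\mathbf{a}}^2$; for $\mathbf{m}=\mathbf{b}$ one gets $(-\beta_{\mathbf{b}}^2+\beta_{\mathbf{a}}^2+\beta_{\mathbf{c}}^2-2\beta_{\mathbf{a}}\beta_{\mathbf{c}})-\beta_{\mathbf{a}}^2$; for $\mathbf{m}=\mathbf{c}$ one gets $(-\beta_{\mathbf{c}}^2+\beta_{\mathbf{a}}^2+\beta_{\mathbf{b}}^2-2\beta_{\mathbf{a}}\beta_{\mathbf{b}})-\beta_{\mathbf{a}}^2$, and these simplify to the three asserted right-hand sides.

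There is essentially no obstacle here; the computation is a one-liner per row once the definitions are unwound. The only point that needs a moment's care is that the triple $(\mathbf{a},\mathbf{b},\mathbf{c})$ in the statement is an arbitrary element of $S_3$, whereas the index structure of $\boldsymbol{\mathfrak a}_{\bullet}$ in Definition~\ref{def:dkhkdhfd} refers to the cyclic triples in $\mathcal{C}$; this mismatch is immaterial because both $(Z_{\mathbf{j}}\beta_{\mathbf{j}}-Z_{\mathbf{k}}\beta_{\mathbf{k}})^2$ and $((B_{\mathbf{a}})_{\mathbf{j}}\beta_{\mathbf{j}}-(B_{\mathbf{a}})_{\mathbf{k}}\beta_{\mathbf{k}})^2$ are symmetric under swapping $\mathbf{j}\leftrightarrow\mathbf{k}$, so only the unordered complementary pair $\{1,2,3\}\setminus\{\mathbf{m}\}$ enters the formula.
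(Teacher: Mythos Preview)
Your proof is correct. The paper states this lemma without proof, treating it as an immediate computation from Definition~\ref{def:dkhkdhfd}; your direct expansion from \eqref{eq:kdshkfdhkfdhfdkhfdkdhfkdfhkdssdfds} is exactly that computation, and your remark about the $S_3$ versus $\mathcal{C}$ indexing (handled by the $\mathbf{j}\leftrightarrow\mathbf{k}$ symmetry of the squared differences) is a nice clarification.
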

\begin{remark}
Lemma \ref{lem:expl}
displays the differences between the equations $\boldsymbol{\mathfrak a} [\Phi,\mathbf{h},Z] = 0$
and $\boldsymbol{\mathfrak a} [\Phi,\mathbf{h},B_{\mathbf{a}}] = 0$.  Lemma \ref{lem:tl1} gives bounds for the terms that appear in these differences. Informally, they tend exponentially to zero as  $\mathbf{h} \downarrow 0$.
This quantifies a basic guiding intuition of \cite{BKL1}.
\end{remark}
\begin{definition}
For  all vectors $\Phi = \ua\oplus \ub \in \R^3\oplus \R^3$ with $\ub_1,\ub_2,\ub_3\neq 0$, all  $\pi = (\mathbf{a},\mathbf{b},\mathbf{c})\in S_3$ and all $\mathbf{h} > 0$, define four real numbers by
\begin{align*}
\mathbf{I}_1[\Phi,\mathbf{h},\pi] & = 
-\tfrac{1}{\mathbf{h}}\,
\boldsymbol{\mathfrak a}_{\mathbf{a}}[\Phi,\mathbf{h},Z,B_{\mathbf{a}}]\,
\tanh \varphi_{\mathbf{a}}[\Phi] \\
\mathbf{I}_2[\Phi,\mathbf{h},\pi] & = 
\big(\hskip1pt A_{\mathbf{a}}[\Phi] \hskip1pt\big)^{-2}
\,\boldsymbol{\mathfrak a}_{\mathbf{a}}[\Phi,\mathbf{h},Z,B_{\mathbf{a}}]\,
\Big(1-\varphi_{\mathbf{a}}[\Phi] \tanh \varphi_{\mathbf{a}}[\Phi]\Big)\\
\mathbf{I}_{(3,\mathbf{p})}[\Phi,\mathbf{h},\pi] & =
\tfrac{1}{\mathbf{h}} \boldsymbol{\mathfrak a}_{\mathbf{p}} [\Phi,\mathbf{h},Z,B_{\mathbf{a}}] + 
\tfrac{1}{\mathbf{h}} \boldsymbol{\mathfrak a}_{\mathbf{a}} [\Phi,\mathbf{h},Z,B_{\mathbf{a}}]
\end{align*}
where $\mathbf{p} \in\{\mathbf{b}, \mathbf{c}\}$. 
If $\Phi$ is not an element of $\R^3\oplus \R^3$, but rather a function with values in $\R^3\oplus \R^3$, with $\ub_1,\ub_2,\ub_3\neq 0$ everywhere, then $\mathbf{I}_1$, $\mathbf{I}_2$, $\mathbf{I}_{(3,\mathbf{b})}$, $\mathbf{I}_{(3,\mathbf{c})}$ are functions, too.
\end{definition}
\begin{lemma}[Technical Lemma 2] \label{lem:tl2}
In the context of Lemma \ref{lem:tl1}, if $\delta > 0$ satisfies \eqref{eq:kdshskhskhkdsds}, then, for all
$\Phi,\Psi \in \clball{\mathcal{E},(\pi,\mathbf{h})}{\delta}{\Phi_0}$
and all $S \in \{1,2,(3,\mathbf{b}),(3,\mathbf{c})\}$, the estimates 
\begin{subequations}
\begin{align}\label{eq:lkskhdkhd1}
\big|\mathbf{I}_{S}[\Phi]\big| & \;\leq\;
2^{11} \max \{1,\tfrac{1}{\mathbf{h}},\tfrac{1}{\mathbf{h}}|\tau| \}\, \exp\big(-\tfrac{1}{4\mathbf{h}} \min \{1,\epsilon_-,\tfrac{\epsilon_+}{\tau_+}\}\big) \\
\label{eq:lkskhdkhd2}
\big|\mathbf{I}_{S}[\Phi] - \mathbf{I}_{S}[\Psi]\big| & \;\leq 
\;2^{17} \big(\max \{1,\tfrac{1}{\mathbf{h}},\tfrac{1}{\mathbf{h}}|\tau| \}\big)^2 \exp\big(-\tfrac{1}{4\mathbf{h}} \min \{1,\epsilon_-,\tfrac{\epsilon_+}{\tau_+}\}\big) d_{\mathcal{E}}(\Phi,\Psi)
\end{align}
\end{subequations}
hold on $[\tau_{0-},\tau_{0+}]$. Here, $\mathbf{I}_{S}[\Phi] = \mathbf{I}_{S}[\Phi,\mathbf{h},\pi]$,\, $\mathbf{I}_{S}[\Psi] = \mathbf{I}_{S}[\Psi,\mathbf{h},\pi]$ and
$d_{\mathcal{E}} = d_{\mathcal{E},(\pi,\mathbf{h})}$.
\end{lemma}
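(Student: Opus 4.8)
\emph{Plan.} The plan is to combine the explicit formulas of Lemma~\ref{lem:expl} with the a~priori bounds of Lemma~\ref{lem:tl1}, viewing each $\mathbf{I}_S[\Phi]$ as a product of a ``moderate'' factor built from $A_{\mathbf{a}}[\Phi]$ and $\varphi_{\mathbf{a}}[\Phi]$ --- namely one of $\tanh\varphi_{\mathbf{a}}$, $1-\varphi_{\mathbf{a}}\tanh\varphi_{\mathbf{a}}$, $(A_{\mathbf{a}})^{-2}$, or the constant $\tfrac{1}{\mathbf{h}}$ --- times a ``small'' factor which is a fixed quadratic monomial in $\beta_{\mathbf{a}},\beta_{\mathbf{b}},\beta_{\mathbf{c}}$. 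By Lemma~\ref{lem:expl}, $\boldsymbol{\mathfrak a}_{\mathbf{a}}[\Phi,\mathbf{h},Z,B_{\mathbf{a}}] = (\beta_{\mathbf{b}}-\beta_{\mathbf{c}})^2$, while $\boldsymbol{\mathfrak a}_{\mathbf{a}}+\boldsymbol{\mathfrak a}_{\mathbf{b}} = 2\beta_{\mathbf{c}}^2-2\beta_{\mathbf{a}}\beta_{\mathbf{c}}-2\beta_{\mathbf{b}}\beta_{\mathbf{c}}$ and $\boldsymbol{\mathfrak a}_{\mathbf{a}}+\boldsymbol{\mathfrak a}_{\mathbf{c}} = 2\beta_{\mathbf{b}}^2-2\beta_{\mathbf{a}}\beta_{\mathbf{b}}-2\beta_{\mathbf{b}}\beta_{\mathbf{c}}$ (all with arguments $[\,\cdot\,,\mathbf{h},Z,B_{\mathbf{a}}]$). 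The key observation is that none of these --- hence none of $\mathbf{I}_1,\mathbf{I}_2,\mathbf{I}_{(3,\mathbf{b})},\mathbf{I}_{(3,\mathbf{c})}$ --- contains $\beta_{\mathbf{a}}^2$, so every monomial that occurs lies in $\{\beta_{\mathbf{b}}^2,\beta_{\mathbf{c}}^2,\beta_{\mathbf{b}}\beta_{\mathbf{c}},\beta_{\mathbf{a}}\beta_{\mathbf{b}},\beta_{\mathbf{a}}\beta_{\mathbf{c}}\}$, each of which Lemma~\ref{lem:tl1} controls directly (with $\beta_{\mathbf{b}}\beta_{\mathbf{c}}$ handled via $|\beta_{\mathbf{b}}\beta_{\mathbf{c}}|\le\tfrac12(\beta_{\mathbf{b}}^2+\beta_{\mathbf{c}}^2)$). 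Throughout, write $E=\exp(-\tfrac{1}{4\mathbf{h}}\min\{1,\epsilon_-,\tfrac{\epsilon_+}{\tau_+}\})$ for the decaying factor and $m=\max\{1,\tfrac{1}{\mathbf{h}},\tfrac{1}{\mathbf{h}}|\tau|\}$ for the growth factor appearing in the statement.

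\emph{Bound \eqref{eq:lkskhdkhd1}.} For $\Phi\in\clball{\mathcal{E},(\pi,\mathbf{h})}{\delta}{\Phi_0}$ and $\tau\in[\tau_{0-},\tau_{0+}]$, Lemma~\ref{lem:tl1} gives $A_{\mathbf{a}}\in[\tfrac12,\tfrac32]$, $\mathbf{h}\,|\varphi_{\mathbf{a}}|\le 2(1+|\tau|)$, $|\beta_{\mathbf{a}}|\le 2$, and each of the five monomials bounded by $2^4 E$. Hence $|\tanh\varphi_{\mathbf{a}}|\le1$, $(A_{\mathbf{a}})^{-2}\le4$, $|1-\varphi_{\mathbf{a}}\tanh\varphi_{\mathbf{a}}|\le 1+|\varphi_{\mathbf{a}}|\le 2^3 m$, and each of $\boldsymbol{\mathfrak a}_{\mathbf{a}}$, $\boldsymbol{\mathfrak a}_{\mathbf{a}}+\boldsymbol{\mathfrak a}_{\mathbf{b}}$, $\boldsymbol{\mathfrak a}_{\mathbf{a}}+\boldsymbol{\mathfrak a}_{\mathbf{c}}$, being a $\{0,\pm1,\pm2\}$-combination of those monomials, is bounded by $2^6 E$. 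Multiplying the moderate and small factor of each $\mathbf{I}_S$ and folding the explicit $\tfrac{1}{\mathbf{h}}$'s in $\mathbf{I}_1,\mathbf{I}_{(3,\mathbf{p})}$ into $m$ yields \eqref{eq:lkskhdkhd1} with constant $2^{11}$.

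\emph{Bound \eqref{eq:lkskhdkhd2}.} I first record elementary Lipschitz estimates on $[\tau_{0-},\tau_{0+}]$, writing $\Delta f = f[\Phi]-f[\Psi]$ and using $|e^x-e^y|\le\max(e^x,e^y)|x-y|$ and $|\log\cosh\varphi-\log\cosh\varphi'|\le|\varphi-\varphi'|$. Since $\mathbf{h}\varphi_{\mathbf{a}}=A_{\mathbf{a}}\cdot(\mathbf{h}\varphi_{\mathbf{a}}/A_{\mathbf{a}})$ is the product of two quantities entering the metrics of Definitions~\ref{def:met1}, \ref{def:met2}, the product rule and the ranges above give $\mathbf{h}\,|\Delta\varphi_{\mathbf{a}}|\le 2^3(1+|\tau|)\,d_{\mathcal{E}}$; therefore $|\Delta\tanh\varphi_{\mathbf{a}}|\le|\Delta\varphi_{\mathbf{a}}|$, $|\Delta(1-\varphi_{\mathbf{a}}\tanh\varphi_{\mathbf{a}})|\le\tfrac32|\Delta\varphi_{\mathbf{a}}|$ (using $|\varphi\sech^2\varphi|\le\tfrac12$), $|\Delta(A_{\mathbf{a}})^{-2}|\le 2^4 d_{\mathcal{E}}$, and, via $\xi_{\mathbf{a}}[\Phi,\mathbf{h}]=\mathbf{h}\log\tfrac{A_{\mathbf{a}}}{2}-\mathbf{h}\log\cosh\varphi_{\mathbf{a}}$ (Definition~\ref{def:khkhdkhkfdhfd}), also $|\Delta\xi_{\mathbf{a}}|\le 2\mathbf{h}\,d_{\mathcal{E}}+2^3(1+|\tau|)\,d_{\mathcal{E}}$. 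Each of the five monomials equals $\pm4\exp(\tfrac{1}{\mathbf{h}}\ell)$ for a fixed integer combination $\ell$ of $\xi_{\mathbf{b},\mathbf{a}},\xi_{\mathbf{c},\mathbf{a}},\xi_{\mathbf{a}}$ (e.g.\ $\beta_{\mathbf{a}}\beta_{\mathbf{b}}=\pm4\exp(\tfrac{1}{\mathbf{h}}\xi_{\mathbf{b},\mathbf{a}})$, $\beta_{\mathbf{b}}^2=4\exp(\tfrac{1}{\mathbf{h}}(2\xi_{\mathbf{b},\mathbf{a}}-2\xi_{\mathbf{a}}))$, $\beta_{\mathbf{b}}\beta_{\mathbf{c}}=\pm4\exp(\tfrac1{\mathbf{h}}(\xi_{\mathbf{b},\mathbf{a}}+\xi_{\mathbf{c},\mathbf{a}}-2\xi_{\mathbf{a}}))$), so its $\Delta$ is bounded by $(2^4 E)\cdot\tfrac{1}{\mathbf{h}}|\Delta\ell|$; since $\tfrac{1}{\mathbf{h}}|\Delta\xi_{\mathbf{b},\mathbf{a}}|,\tfrac{1}{\mathbf{h}}|\Delta\xi_{\mathbf{c},\mathbf{a}}|\le m\,d_{\mathcal{E}}$ and $\tfrac{1}{\mathbf{h}}|\Delta\xi_{\mathbf{a}}|\le (2+\tfrac{2^3}{\mathbf{h}}(1+|\tau|))\,d_{\mathcal{E}}\le 2^4 m\,d_{\mathcal{E}}$ --- the explicit $\mathbf{h}$ in $|\Delta\xi_{\mathbf{a}}|$ cancelling against $\tfrac{1}{\mathbf{h}}$ --- one obtains $|\Delta\boldsymbol{\mathfrak a}_{\bullet}[\,\cdot\,,\mathbf{h},Z,B_{\mathbf{a}}]|\le C\,m\,E\,d_{\mathcal{E}}$ for the three relevant $\boldsymbol{\mathfrak a}$-combinations. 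Finally, writing each $\mathbf{I}_S$ as a product of two or three of these estimated factors and expanding $\Delta(\mathrm{product})$ by the Leibniz rule --- each term being one differenced factor times the others bounded as in the previous step --- gives \eqref{eq:lkskhdkhd2}: each resulting term carries at most one factor $\tfrac{1}{\mathbf{h}}$ (the explicit one, or the one from differentiating $\exp(\tfrac{1}{\mathbf{h}}\ell)$) and at most one moderate factor or its Lipschitz constant, so the power of $m$ is exactly $2$, and collecting constants yields $2^{17}$.

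\emph{Main obstacle.} There is no conceptual difficulty; the work is bookkeeping, and two points need attention. First, the decay rate $\tfrac{1}{4\mathbf{h}}$ in $E$ must not be degraded: splitting $\beta_{\mathbf{b}}^2=\beta_{\mathbf{b}}\cdot\beta_{\mathbf{b}}$ and differentiating one copy only produces rate $\tfrac{1}{8\mathbf{h}}$ per factor (the product still recovers $\tfrac{1}{4\mathbf{h}}$, but it is cleaner to differentiate each monomial through its single exponential $\exp(\tfrac{1}{\mathbf{h}}\ell)$, where $\tfrac{1}{\mathbf{h}}\ell$ carries the full rate). Second, the power of $\mathbf{h}^{-1}$ must be kept down to two: the $\mathbf{h}^{-1}$ from $\partial_x e^x = e^x$ has to be absorbed into exactly one copy of $m$, and the stray explicit $\mathbf{h}$ in the Lipschitz constant of $\xi_{\mathbf{a}}$ cancels against it, so that \eqref{eq:lkskhdkhd2} genuinely has $m^2$ rather than $m^3$.
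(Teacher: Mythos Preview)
Your proof is correct and follows essentially the same route as the paper: first feed Lemma~\ref{lem:tl1} into Lemma~\ref{lem:expl} to bound each $\boldsymbol{\mathfrak a}_{\bullet}[\,\cdot\,,\mathbf{h},Z,B_{\mathbf{a}}]$ by a multiple of $E$ and each moderate factor by a multiple of $m$, then for the Lipschitz bound estimate $|\Delta\varphi_{\mathbf{a}}|$ and $|\Delta\xi_{\mathbf{a}}|$ from the metric of Definition~\ref{def:met1}, convert these into $|\Delta(\text{monomial})|\lesssim mE\,d_{\mathcal{E}}$, and assemble via the product rule. The only cosmetic difference is that the paper bounds $|\Delta\beta_{\mathbf{p}}|$ via the intermediate $M^{1/2}$ and then multiplies, whereas you treat each monomial as a single exponential $\exp(\tfrac{1}{\mathbf{h}}\ell)$; the content is identical, and the paper also records that $x\mapsto x\tanh x$ is Lipschitz with constant $<2$, matching your $\tfrac32$.
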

\begin{proof}
In this proof, we simplify the notation by suppressing $\mathbf{h}>0$ and abbreviating $$M =
 \exp(-\tfrac{1}{4\mathbf{h}} \min \{1,\epsilon_-,\tfrac{\epsilon_+}{\tau_+}\})
 \quad M_1 = \max \{1,\tfrac{1}{\mathbf{h}},\tfrac{1}{\mathbf{h}}|\tau| \}$$
 Lemmas  \ref{lem:tl1}, \ref{lem:expl} imply
$\big|\boldsymbol{\mathfrak a}_{\xa} [\Phi,\mathbf{h},Z,B_{\mathbf{a}}]\big|
\leq  2^6 M$, $\xa=1,2,3$, and
$|\varphi_{\mathbf{a}}[\Phi]| \leq 2^2M_1$ and $(A_{\mathbf{a}}[\Phi])^{-2} \leq 2^2$. This implies \eqref{eq:lkskhdkhd1}. To show \eqref{eq:lkskhdkhd2}, observe that (here $\mathbf{p},\mathbf{q}\in \{\mathbf{b},\mathbf{c}\}$)
\begin{align*}
\begin{split}
\big|\varphi_{\mathbf{a}}[\Phi] - \varphi_{\mathbf{a}}[\Psi]\big|
& \leq \tfrac{1}{\mathbf{h}} 
|A_{\mathbf{a}}[\Phi] - A_{\mathbf{a}}[\Psi]|\,|\tau|
+ \tfrac{1}{\mathbf{h}}|
A_{\mathbf{a}}[\Phi]\theta_{\mathbf{a}}[\Phi] - A_{\mathbf{a}}[\Psi]
\theta_{\mathbf{a}}[\Psi]|\\
& \leq 
\tfrac{1}{\mathbf{h}} (1 + |\tau|)
|A_{\mathbf{a}}[\Phi] - A_{\mathbf{a}}[\Psi]|
+ \tfrac{1}{\mathbf{h}} 2 |\theta_{\mathbf{a}}[\Phi]- \theta_{\mathbf{a}}[\Psi]|\\
& \leq 2^2 M_1\, d_{\mathcal{E}}(\Phi,\Psi)
\end{split}\\
\begin{split}
\big|
\xi_{\mathbf{a}}[\Phi]
- \xi_{\mathbf{a}}[\Psi]
\big| & \leq \mathbf{h}|\log A_{\mathbf{a}}[\Phi] - \log A_{\mathbf{a}}[\Psi]|\\
& \hskip 20mm
+ \mathbf{h}\,|\log \cosh \varphi_{\mathbf{a}}[\Phi] - \log \cosh \varphi_{\mathbf{a}}[\Psi]|\\
& \leq 2^3 \mathbf{h}\, M_1\, d_{\mathcal{E}}(\Phi,\Psi)
\end{split}\\
\begin{split}
\big|\ub_{\mathbf{p}}[\Phi]\ub_{\mathbf{a}}[\Phi] - \ub_{\mathbf{p}}[\Psi]\ub_{\mathbf{a}}[\Psi]\big| & \leq
\tfrac{1}{\mathbf{h}}\, \max \big\{
|\ub_{\mathbf{p}}[\Phi]\ub_{\mathbf{a}}[\Phi]|,
|\ub_{\mathbf{p}}[\Psi]\ub_{\mathbf{a}}[\Psi]| \big\}\\
& \hskip 50mm \times\big|\xi_{\mathbf{a},\mathbf{p}}[\Phi]-\xi_{\mathbf{a},\mathbf{p}}[\Psi]\big|\\
& \leq 2^4 M_1 M\, d_{\mathcal{E}}(\Phi,\Psi)
\end{split}\\
\begin{split}
\big|\ub_{\mathbf{p}}[\Phi] - \ub_{\mathbf{p}}[\Psi]\big| & \leq
\tfrac{1}{\mathbf{h}}\, \max \big\{
|\ub_{\mathbf{p}}[\Phi]|,
|\ub_{\mathbf{p}}[\Psi]| \big\}  \big|\xi_{\mathbf{p}}[\Phi]-\xi_{\mathbf{p}}[\Psi]\big|\\
& \leq \tfrac{1}{\mathbf{h}} 2^2 M^{1/2}\, \big(\,
\big|\xi_{\mathbf{a},\mathbf{p}}[\Phi]-\xi_{\mathbf{a},\mathbf{p}}[\Psi]\big|
+ \big|\xi_{\mathbf{a}}[\Phi]-\xi_{\mathbf{a}}[\Psi]\big|\,\big)\\
& \leq  2^6 M_1 M^{1/2}\,d_{\mathcal{E}}(\Phi,\Psi) 
\end{split}\\
\begin{split}
\big|\ub_{\mathbf{p}}[\Phi]\ub_{\mathbf{q}}[\Phi] - \ub_{\mathbf{p}}[\Psi]\ub_{\mathbf{q}}[\Psi]\big| & \leq
2^9 M_1M\,d_{\mathcal{E}}(\Phi,\Psi)
\end{split}
\end{align*}
Consequently, for $\xa=1,2,3$,
$$\big|\boldsymbol{\mathfrak a}_{\xa} [\Phi,\mathbf{h},Z,B_{\mathbf{a}}]
- \boldsymbol{\mathfrak a}_{\xa} [\Psi,\mathbf{h},Z,B_{\mathbf{a}}]\big|\;\leq\;
2^{11}M_1M\,d_{\mathcal{E}}(\Phi,\Psi)
$$
With these estimates, \eqref{eq:lkskhdkhd2} follows.
Observe that $\R\to \R, x \mapsto x \tanh x$ is Lipschitz with
Lipschitz-constant $L > 0$ determined by $L\tanh L = 1$, in particular $L<2$.
\qed
\end{proof}
\begin{definition} \label{def:kdhkhskhkdssPT1111star}
For all $\pi = (\mathbf{a},\mathbf{b},\mathbf{c})\in S_3$ and  $\mathbf{f}=(\mathbf{h},w,q) \in (0,\infty)^2 \times \R$
(we don't require $q > 0$ here) and $\sigma_{\ast}\in \{-1,+1\}^3$, 
let $\Phi_{\star} = \Phi_{\star}(\pi,\mathbf{f},\sigma_{\ast}) \in \mathcal{D}(\sigma_{\ast})$ be given by
\begin{align*}
\ua_{\mathbf{a}}[\Phi_{\star}] & = -1 &\qquad
\xi_{\mathbf{a}}[\Phi_{\star},\mathbf{h}] & =  - \tfrac{1+w}{1+2w} (1 + \mathbf{h}\log 2)\\
\ua_{\mathbf{b}}[\Phi_{\star}] & = \tfrac{w}{1+w} &
\xi_{\mathbf{b}}[\Phi_{\star},\mathbf{h}] & =  - \tfrac{1+w}{1+2w} (1 + \mathbf{h}\log 2)\\
\ua_{\mathbf{c}}[\Phi_{\star}] & = -w - \mu &
\xi_{\mathbf{c}}[\Phi_{\star},\mathbf{h}] & =  -(1+w)q
- \tfrac{w(1+w)}{1+2w} -  \tfrac{1+3w+w^2}{1+2w} \mathbf{h}\log 2
\end{align*}
and
\begin{equation}\label{dshdkhkdhd}
\mu = (1+w)
\big(
\ub_{1}^2
+ \ub_{2}^2
+ \ub_{3}^2
-2\ub_2\ub_3
-2\ub_3\ub_1
-2\ub_1\ub_2
\big)|_{\beta = \beta[\Phi_{\star}]}
\end{equation}
\end{definition}
\begin{definition}\label{def:sur}
For all $\pi = (\mathbf{a},\mathbf{b},\mathbf{c})\in S_3$,\;$\sigma_{\ast}\in \{-1,+1\}^3$
let $\mathcal{H}(\pi,\sigma_{\ast}) \subset \mathcal{D}(\sigma_{\ast})$ be the set of all vectors 
$\Phi = \alpha \oplus \beta \in \mathcal{D}(\sigma_{\ast})$ with
\begin{subequations} \label{eq:kdshkfdhfdkjhfd23}
\begin{align}
\label{eq:kdshkfdhfdkjhfd} |\beta_{\mathbf{a}}| & = |\beta_{\mathbf{b}}| & 
\textstyle\sum_{(\xa,\xb,\xc)\in \mathcal{C}}\big(
\alpha_{\xb}\alpha_{\xc} - (\beta_{\xa})^2 + 2\beta_{\xb}\beta_{\xc}
\big) & =0\\
\label{eq:kdshkfdhfdkjhfd2} 0 < \alpha_{\mathbf{b}} & < -\alpha_{\mathbf{a}}
 & 
\big(\alpha_{\mathbf{b}} + |\alpha_{\mathbf{a}}|\big)\, \log |\beta_{\mathbf{a}}/\alpha_{\mathbf{a}}| & < \alpha_{\mathbf{b}} \log 2
\end{align}
\end{subequations}
\end{definition}
\begin{lemma}\label{lem:fdjlkjdlkjfdl}
Let $\pi = (\mathbf{a},\mathbf{b},\mathbf{c})\in S_3$ and $\sigma_{\ast}\in \{-1,+1\}^3$.
The set $\mathcal{H}(\pi,\sigma_{\ast}) \subset \mathcal{D}(\sigma_{\ast})$ is a smooth 4-dimensional submanifold. The map
\begin{equation}\label{eq:ksdhkjfdhf}
\begin{split}
(0,\infty)^3 \times \R & \to \mathcal{H}(\pi,\sigma_{\ast})\\
(\lambda,\mathbf{h},w,q) & \mapsto \lambda\,\Phi_{\star}(\pi,(\mathbf{h},w,q),\sigma_{\ast})
\end{split}
\end{equation}
is a diffeomorphism. Its inverse is given by
\begin{subequations}\label{eq:dkhkhfdfd}
\begin{align}
\label{eq:dkhkhfdfd1} 
w & = -\alpha_{\mathbf{b}}/(\alpha_{\mathbf{a}} + \alpha_{\mathbf{b}}) &
\tfrac{1}{\mathbf{h}} & =
- \tfrac{1+2w}{1+w}\log |\beta_{\mathbf{a}}/\alpha_{\mathbf{a}}| + \tfrac{w}{1+w}\log 2\\
\label{eq:dkhkhfdfd2} \lambda & = - \alpha_{\mathbf{a}} &
q & =
-\tfrac{1}{1+w}\,\mathbf{h}\log |\beta_{\mathbf{c}}/\alpha_{\mathbf{a}}| - \tfrac{w}{1+2w}\big(1+ \mathbf{h}\log 2\big)
\end{align}
\end{subequations}
\end{lemma}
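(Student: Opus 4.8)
The plan is to write down the inverse explicitly and then deduce everything from it. Let $V=(0,\infty)^3\times\R$, let $\psi\colon V\to\mathcal D(\sigma_{\ast})$ be the map \eqref{eq:ksdhkjfdhf}, $(\lambda,\mathbf h,w,q)\mapsto\lambda\,\Phi_{\star}(\pi,(\mathbf h,w,q),\sigma_{\ast})$, and let $\phi$ be the map defined by the right-hand sides of \eqref{eq:dkhkhfdfd}, evaluated in the order $w$, then $\mathbf h$, then $q$, with $\lambda=-\alpha_{\mathbf a}$. Since $\beta_{\mathbf m}[\Phi_{\star}]=(\sigma_{\ast})_{\mathbf m}\,2\exp(\tfrac1{\mathbf h}\xi_{\mathbf m}[\Phi_{\star},\mathbf h])$ with $\xi_{\mathbf m}[\Phi_{\star},\mathbf h]$ an explicit smooth function of $(\mathbf h,w,q)$, and $\mu$ a polynomial in $\beta[\Phi_{\star}]$ by \eqref{dshdkhkdhd}, the field $\Phi_{\star}$, and hence $\psi$, is smooth; and \eqref{eq:dkhkhfdfd} is smooth on the open set $U\subset\mathcal D(\sigma_{\ast})$ where $\alpha_{\mathbf a}\neq0$, $\alpha_{\mathbf a}+\alpha_{\mathbf b}\neq0$ and the bracket defining $1/\mathbf h$ is positive, so $\phi\colon U\to V$ is smooth. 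That the $\mu$ of Definition~\ref{def:ljfdljdjfdl} is well defined and equals \eqref{dshdkhkdhd} will drop out of the first computation below, the left-hand side of \eqref{eq:kshkjhk3} being affine in $\mu$ with nonzero leading coefficient.

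First I would check that $\psi(V)\subseteq\mathcal H(\pi,\sigma_{\ast})$, and in particular $\psi(V)\subseteq U$. Rescaling by $\lambda>0$ leaves invariant \eqref{eq:kshkjhk3} (homogeneous of degree $2$), the quantities $|\beta_{\mathbf a}/\beta_{\mathbf b}|$, $|\beta_{\mathbf m}/\alpha_{\mathbf a}|$, and both inequalities of \eqref{eq:kdshkfdhfdkjhfd23}, so it suffices to verify the four defining conditions for $\Phi_{\star}$ itself: $\xi_{\mathbf a}[\Phi_{\star}]=\xi_{\mathbf b}[\Phi_{\star}]$ gives $|\beta_{\mathbf a}|=|\beta_{\mathbf b}|$; $\alpha_{\mathbf a}=-1$, $\alpha_{\mathbf b}=w/(1+w)$ give $0<\alpha_{\mathbf b}<-\alpha_{\mathbf a}$; eliminating $\log|\tfrac12\beta_{\mathbf a}|$ between $\xi_{\mathbf a}[\Phi_{\star}]=-\tfrac{1+w}{1+2w}(1+\mathbf h\log2)$ and $\xi_{\mathbf a}=\mathbf h\log|\tfrac12\beta_{\mathbf a}|$ gives $(\alpha_{\mathbf b}+|\alpha_{\mathbf a}|)\log|\beta_{\mathbf a}/\alpha_{\mathbf a}|=\alpha_{\mathbf b}\log2-\tfrac1{\mathbf h}<\alpha_{\mathbf b}\log2$; and \eqref{eq:kshkjhk3} holds because, using $\alpha_{\mathbf a}+\alpha_{\mathbf b}=-\tfrac1{1+w}$, its left-hand side on $\Phi_{\star}$ equals $\tfrac1{1+w}\bigl(\mu-(1+w)(\beta_1^2+\beta_2^2+\beta_3^2-2\beta_2\beta_3-2\beta_3\beta_1-2\beta_1\beta_2)\bigr)\big|_{\beta=\beta[\Phi_{\star}]}$, which vanishes by \eqref{dshdkhkdhd}.

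Next I would verify the two composition identities. For $\phi\circ\psi=\mathrm{id}_V$: one reads off $-\alpha_{\mathbf a}[\lambda\Phi_{\star}]=\lambda$ and $-\alpha_{\mathbf b}/(\alpha_{\mathbf a}+\alpha_{\mathbf b})=w$, and substituting $\xi_{\mathbf a}[\Phi_{\star}]$, $\xi_{\mathbf c}[\Phi_{\star}]$ into the remaining formulas and using $1-\tfrac{w}{1+2w}-\tfrac{1+w}{1+2w}=0$ and $1-\tfrac{1+3w+w^2}{1+2w}=-\tfrac{w(1+w)}{1+2w}$ recovers the stated expressions for $1/\mathbf h$ and $q$. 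For $\psi\circ\phi|_{\mathcal H}=\mathrm{id}_{\mathcal H}$: given $\Phi\in\mathcal H$, the formulas of $\phi$ land in $V$ because $0<\alpha_{\mathbf b}<-\alpha_{\mathbf a}$ forces $\lambda=-\alpha_{\mathbf a}>0$ and $w>0$, and the second inequality of \eqref{eq:kdshkfdhfdkjhfd2}, after writing $\alpha_{\mathbf b}/|\alpha_{\mathbf a}|=w/(1+w)$, is exactly $1/\mathbf h>0$; so $\mathcal H\subseteq U$. Then the same algebra shows that $\psi(\phi(\Phi))$ has the same $\alpha_{\mathbf a},\alpha_{\mathbf b}$ and the same $\beta_1,\beta_2,\beta_3$ as $\Phi$ (signs fixed by $\sigma_{\ast}$), and then $\alpha_{\mathbf c}$ must also agree, since both $\Phi$ and $\psi(\phi(\Phi))$ satisfy \eqref{eq:kshkjhk3} and that equation determines $\alpha_{\mathbf c}$ linearly (coefficient $\alpha_{\mathbf a}+\alpha_{\mathbf b}\neq0$) in terms of the other five entries. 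In particular $\mathcal H\subseteq\psi(V)$, so together with the previous paragraph $\psi(V)=\mathcal H$.

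Finally I would assemble these facts. From $\phi\circ\psi=\mathrm{id}_V$ the map $\psi$ is injective with injective differential ($d\phi\circ d\psi=\mathrm{id}$), hence a smooth immersion; and $\phi|_{\mathcal H}\colon\mathcal H\to V$, continuous as a restriction of $\phi$, is a two-sided inverse of $\psi\colon V\to\psi(V)=\mathcal H$ taken with its subspace topology, so $\psi$ is a homeomorphism onto $\mathcal H$. Therefore $\psi$ is a smooth embedding, $\mathcal H(\pi,\sigma_{\ast})=\psi(V)$ is a $4$-dimensional embedded submanifold of the open set $\mathcal D(\sigma_{\ast})\subset\R^6$, $\psi$ is a diffeomorphism onto it, and its inverse is the restriction to $\mathcal H(\pi,\sigma_{\ast})$ of the smooth map \eqref{eq:dkhkhfdfd}. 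The main obstacle will be pure bookkeeping in the two computations above: the substitutions are cluttered by logarithms and by the ubiquitous factors $1+w$, $1+2w$, $1+3w+w^2$ in Definition~\ref{def:kdhkhskhkdssPT1111star}, and one must keep the sign conventions linking $\xi_{\mathbf m}$, $\beta_{\mathbf m}$ and $\sigma_{\ast}$ consistent; the differential-topology step is then routine.
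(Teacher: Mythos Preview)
Your proposal is correct and follows essentially the same route as the paper: verify that the map \eqref{eq:ksdhkjfdhf} lands in $\mathcal H(\pi,\sigma_{\ast})$, that the formulas \eqref{eq:dkhkhfdfd} are well-defined on $\mathcal H(\pi,\sigma_{\ast})$, and that the two maps are mutual inverses by direct calculation. The only organizational difference is that the paper first observes independently that $\mathcal H(\pi,\sigma_{\ast})$ is a smooth $4$-manifold --- as the graph of the smooth map solving \eqref{eq:kdshkfdhfdkjhfd} for $(\alpha_{\mathbf c},\beta_{\mathbf b})$ in terms of $(\alpha_{\mathbf a},\alpha_{\mathbf b},\beta_{\mathbf a},\beta_{\mathbf c})$ --- and then checks the diffeomorphism, whereas you deduce the submanifold property from the embedding $\psi$ itself; both are standard and the underlying computations are identical.
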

\begin{proof}
$\mathcal{H}(\pi,\sigma_{\ast})$ is the graph of a smooth map from an open subset of $\R^4$ to $\R^2$. Namely the map given by solving \eqref{eq:kdshkfdhfdkjhfd} for $(\alpha_{\mathbf{c}},\beta_{\mathbf{b}})$ in terms of $(\alpha_{\mathbf{a}},\alpha_{\mathbf{b}},\beta_{\mathbf{a}},\beta_{\mathbf{c}})$, whose domain 
 is given by \eqref{eq:kdshkfdhfdkjhfd2} and appropriate sign conditions inherited from $\mathcal{D}(\sigma_{\ast})$.
The map \eqref{eq:ksdhkjfdhf} is well-defined, i.e.
$\lambda\,\Phi_{\star}(\pi,(\mathbf{h},w,q),\sigma_{\ast}) \in \mathcal{H}(\pi,\sigma_{\ast})$.
The map \eqref{eq:dkhkhfdfd} is well-defined, because the two right hand sides in
\eqref{eq:dkhkhfdfd1} and the first right hand side in \eqref{eq:dkhkhfdfd2}
are positive, by \eqref{eq:kdshkfdhfdkjhfd2}. By direct calculation, the two maps are inverses. \qed
\end{proof}
\begin{definition} \label{def:kdhkhskhkdssPT1tau}
For all $\mathbf{f} = (\mathbf{h},w,q) \in (0,\infty)^3$ set
\begin{alignat*}{4}
\tau_{1-}(\mathbf{f}) & = \begin{cases}
- \tfrac{1+w}{3+w}\,q - \tfrac{1}{3+w}\,\mathbf{h}\log 2 & \text{if $q \leq 1$}\\
-\tfrac{1+w}{3+2w} - \tfrac{1+w}{3+2w} \,\mathbf{h}\log 2 & \text{if $q > 1$}
\end{cases} &\hskip 12mm& < 0\\
\tau_{1+}(\mathbf{f}) & = (1 + \mathbf{h}\log 2) \tfrac{1+w}{1+2w} && > 0
\end{alignat*}
\end{definition}
\begin{definition} 
For all $\pi = (\mathbf{a},\mathbf{b},\mathbf{c})\in S_3$ and 
$\mathbf{f} = (\mathbf{h},w,q) \in (0,\infty)^3$ and $\sigma_{\ast}\in \{-1,+1\}^3$
let $\Phi_1 = \Phi_1(\pi,\mathbf{f},\sigma_{\ast}):\; \R \to \mathcal{D}(\sigma_{\ast})$ be given by
\begin{align*}
A_{\mathbf{a}}[\Phi_1](\tau) & = A_{\mathbf{a}}[\Phi_{\star}] &\quad
\alpha_{\mathbf{p},\mathbf{a}}[\Phi_1](\tau) & = \alpha_{\mathbf{p},\mathbf{a}}[\Phi_{\star}]\\
\theta_{\mathbf{a}}[\Phi_1,\mathbf{h}](\tau) & = \theta_{\mathbf{a}}[\Phi_{\star},\mathbf{h}] &
\xi_{\mathbf{p},\mathbf{a}}[\Phi_1,\mathbf{h}](\tau) & = \xi_{\mathbf{p},\mathbf{a}}[\Phi_{\star},\mathbf{h}] + (\tau - \tau_{1+})\alpha_{\mathbf{p},\mathbf{a}}[\Phi_{\star}]
\end{align*}
for all $\tau \in \R$ and
$\mathbf{p}\in \{\mathbf{b},\mathbf{c}\}$. Here,
$\tau_{1+} = \tau_{1+}(\mathbf{f})$ and $\Phi_{\star} = \Phi_{\star}(\pi,\mathbf{f},\sigma_{\ast})$.
\end{definition}
\begin{lemma} \label{lem:kdshkdhkdhkd} For all
$\pi = (\mathbf{a},\mathbf{b},\mathbf{c})\in S_3$,\,
 $\mathbf{f} = (\mathbf{h},w,q) \in (0,\infty)^3$,\, $\sigma_{\ast}\in \{-1,+1\}^3$,
set 
$\Phi_0 = \Phi_0(\pi,\mathbf{f},\sigma_{\ast})$ and
$\Phi_1 = \Phi_1(\pi,\mathbf{f},\sigma_{\ast})$ and $\tau_{1+} = \tau_{1+}(\mathbf{f})$ and $d_{\mathcal{D}} = d_{\mathcal{D}(\sigma_{\ast}),(\pi,\mathbf{h})}$
and
$\slaa{d}_{\mathcal{D}} = \slaa{d}_{\mathcal{D}(\sigma_{\ast}),\mathbf{h}}$. Then
\begin{itemize}
\item[(a)] $|\ub_{\mathbf{a}}[\Phi_1](\tau_{1+})| = |\ub_{\mathbf{b}}[\Phi_1](\tau_{1+})|$
\item[(b)] $c[\Phi_1,\mathbf{h},Z](\tau_{1+}) = 0$, see Definitions \ref{def:dkhkdhfd} and \ref{def:nnnnnns} for $c$ and $Z$, respectively
\item[(c)] $\slaa{d}_{\mathcal{D}}(\Phi_0(\tau_{1+}),\Phi_1(\tau_{1+}))\,
\leq \, 2^7 \max \{1+w,\mathbf{h}\}
 \exp(-\tfrac{1}{2\mathbf{h}}\min\{1,w+q\})$ 
\item[(d)] $d_{\mathcal{D}}(\Phi_0(\tau),\Phi_1(\tau))
\leq \big(1 + |\tau-\tau_{1+}|\big)\, d_{\mathcal{D}}(\Phi_0(\tau_{1+}),\Phi_1(\tau_{1+}))$ 
for all $\tau \in \R$
\end{itemize}
\end{lemma}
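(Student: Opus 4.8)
The plan is to reduce the four claims to explicit manipulations of $\Phi_0$, $\Phi_1$ and $\Phi_{\star}$, the organizing observation being that $\Phi_1(\pi,\mathbf{f},\sigma_{\ast})(\tau_{1+})=\Phi_{\star}(\pi,\mathbf{f},\sigma_{\ast})$. Indeed, the data defining $\Phi_1$ -- the functions $A_{\mathbf{a}}[\Phi_1]$, $\theta_{\mathbf{a}}[\Phi_1,\mathbf{h}]$, $\ua_{\mathbf{p},\mathbf{a}}[\Phi_1]$ and $\xi_{\mathbf{p},\mathbf{a}}[\Phi_1,\mathbf{h}]$ for $\mathbf{p}\in\{\mathbf{b},\mathbf{c}\}$ -- are constant in $\tau$ apart from the affine shift $(\tau-\tau_{1+})\ua_{\mathbf{p},\mathbf{a}}[\Phi_{\star}]$ in the last one, which vanishes at $\tau=\tau_{1+}$; at that time all six quantities agree with those of $\Phi_{\star}$, and together with $\sigma_{\ast}$ they pin down a point of $\mathcal{D}(\sigma_{\ast})$, necessarily $\Phi_{\star}$. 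Parts (a) and (b) follow at once: $\xi_{\mathbf{a}}[\Phi_{\star},\mathbf{h}]=\xi_{\mathbf{b}}[\Phi_{\star},\mathbf{h}]$ by Definition \ref{def:kdhkhskhkdssPT1111star}, so $|\tfrac{1}{2}\ub_{\mathbf{a}}|=|\tfrac{1}{2}\ub_{\mathbf{b}}|$ at $\Phi_{\star}$, which is (a); and $c[\,\cdot\,,\mathbf{h},Z]$ is precisely the left side of the constraint \eqref{eq:kshkjhk3} (take $n=Z=(1,1,1)$ in \eqref{eq:khdkhdkhfdfddf1}), while $\mu$ in Definition \ref{def:kdhkhskhkdssPT1111star} is chosen exactly so that \eqref{eq:kshkjhk3} holds at $\Phi_{\star}$; since $c$ involves no $\tau$-derivatives, $c[\Phi_1,\mathbf{h},Z](\tau_{1+})=c[\Phi_{\star},\mathbf{h},Z]=0$. (The one-line verification that \eqref{dshdkhkdhd} is the right $\mu$ uses $\ua_{\mathbf{a}}[\Phi_{\star}]+\ua_{\mathbf{b}}[\Phi_{\star}]=-(1+w)^{-1}$ and $\ua_{\mathbf{a}}[\Phi_{\star}]\ua_{\mathbf{b}}[\Phi_{\star}]=-w(1+w)^{-1}$, giving $(\ua_2\ua_3+\ua_3\ua_1+\ua_1\ua_2)|_{\Phi_{\star}}=\mu/(1+w)$.)

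For (d), I note that for both $\Phi_0(\tau)$ and $\Phi_1(\tau)$ the six coordinate functions entering $d_{\mathcal{D}(\sigma_{\ast}),(\pi,\mathbf{h})}$ are affine in $\tau$, hence so are the six differences. Subtracting, five of them -- those of $A_{\mathbf{a}}$, $\mathbf{h}\varphi_{\mathbf{a}}/A_{\mathbf{a}}$, $\ua_{\mathbf{b},\mathbf{a}}$, $\ua_{\mathbf{c},\mathbf{a}}$ and $\xi_{\mathbf{b},\mathbf{a}}$ -- have zero $\tau$-slope (for $\xi_{\mathbf{b},\mathbf{a}}$ because the linear parts of $\xi_{\mathbf{b},\mathbf{a}}[\Phi_0]$ and $\xi_{\mathbf{b},\mathbf{a}}[\Phi_1]$ share the slope $\ua_{\mathbf{b},\mathbf{a}}[\Phi_0]=\ua_{\mathbf{b},\mathbf{a}}[\Phi_{\star}]=-(1+w)^{-1}$), and the $\xi_{\mathbf{c},\mathbf{a}}$-difference is affine with slope $\ua_{\mathbf{c},\mathbf{a}}[\Phi_0]-\ua_{\mathbf{c},\mathbf{a}}[\Phi_{\star}]$, whose modulus is exactly the $\ua_{\mathbf{c},\mathbf{a}}$-entry of $d_{\mathcal{D}}(\Phi_0(\tau_{1+}),\Phi_1(\tau_{1+}))$. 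Hence each of the six differences evaluated at $\tau$ is at most $(1+|\tau-\tau_{1+}|)\,d_{\mathcal{D}}(\Phi_0(\tau_{1+}),\Phi_1(\tau_{1+}))$, and (d) follows upon taking the maximum.

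For the estimate (c) I would first write out $\Phi_0(\tau_{1+})$ from Definition \ref{def:psizero}: since $A_{\mathbf{a}}[\Phi_0]\equiv 1$ and $\theta_{\mathbf{a}}[\Phi_0,\mathbf{h}]\equiv 0$, one has $\varphi_{\mathbf{a}}[\Phi_0](\tau_{1+})=\tau_{1+}/\mathbf{h}$, whence $\ua_{\mathbf{a}}[\Phi_0](\tau_{1+})=-\tanh(\tau_{1+}/\mathbf{h})$ and $\xi_{\mathbf{a}}[\Phi_0,\mathbf{h}](\tau_{1+})=\mathbf{h}\log(\tfrac{1}{2}\sech(\tau_{1+}/\mathbf{h}))=-\tau_{1+}-\mathbf{h}\log(1+e^{-2\tau_{1+}/\mathbf{h}})$, the other four coordinates being read off directly. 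Using $\tau_{1+}=(1+\mathbf{h}\log 2)\tfrac{1+w}{1+2w}$ -- equivalently $\tfrac{\tau_{1+}}{1+w}=\tfrac{1+\mathbf{h}\log 2}{1+2w}$ -- a short computation shows that replacing $\tanh(\tau_{1+}/\mathbf{h})$ by $1$ and dropping $-\mathbf{h}\log(1+e^{-2\tau_{1+}/\mathbf{h}})$ turns $\Phi_0(\tau_{1+})$ into $\Phi_{\star}$ exactly, apart from the summand $-\mu$ in $\ua_{\mathbf{c}}[\Phi_{\star}]$; in particular $\ua_{\mathbf{b},\mathbf{a}}[\Phi_0](\tau_{1+})=\ua_{\mathbf{b},\mathbf{a}}[\Phi_{\star}]$, $\xi_{\mathbf{b},\mathbf{a}}[\Phi_0](\tau_{1+})=\xi_{\mathbf{b},\mathbf{a}}[\Phi_{\star}]$, $\xi_{\mathbf{c},\mathbf{a}}[\Phi_0](\tau_{1+})=\xi_{\mathbf{c},\mathbf{a}}[\Phi_{\star}]$ and $\ua_{\mathbf{c},\mathbf{a}}[\Phi_0](\tau_{1+})=\ua_{\mathbf{c},\mathbf{a}}[\Phi_{\star}]+\mu$. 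With $0\le 1-\tanh y\le 2e^{-2y}$ and $\log(1+x)\le x$ this gives that each of the six entries of $\slaa{d}_{\mathcal{D}}(\Phi_0(\tau_{1+}),\Phi_1(\tau_{1+}))=\slaa{d}_{\mathcal{D}}(\Phi_0(\tau_{1+}),\Phi_{\star})$ is at most $|\mu|+\max\{2,\mathbf{h}\}e^{-2\tau_{1+}/\mathbf{h}}$. Finally $2\tau_{1+}\ge\tfrac{2(1+w)}{1+2w}>1\ge\min\{1,w+q\}$ gives $e^{-2\tau_{1+}/\mathbf{h}}\le e^{-\min\{1,w+q\}/\mathbf{h}}$; and bounding the bracket in \eqref{dshdkhkdhd} by $(|\ub_1|+|\ub_2|+|\ub_3|)^2\le 3(\ub_1^2+\ub_2^2+\ub_3^2)$ and using (a) gives $|\mu|\le 9(1+w)\max\{\ub_{\mathbf{a}}^2,\ub_{\mathbf{c}}^2\}|_{\Phi_{\star}}$, where $\ub_{\mathbf{a}}^2|_{\Phi_{\star}}=4e^{-2\tau_{1+}/\mathbf{h}}$ and $\ub_{\mathbf{c}}^2|_{\Phi_{\star}}=4\exp(2\xi_{\mathbf{c}}[\Phi_{\star},\mathbf{h}]/\mathbf{h})\le 4\exp\big(-\tfrac{2}{\mathbf{h}}((1+w)q+\tfrac{w(1+w)}{1+2w})\big)\le 4e^{-(2q+w)/\mathbf{h}}\le 4e^{-\min\{1,w+q\}/\mathbf{h}}$, the last inequality by the case split $2q+w\ge w+q$ (if $w+q\le 1$) and $2q+w>1$ (if $w+q>1$). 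Collecting these and using $\max\{2,\mathbf{h}\}\le 2\max\{1+w,\mathbf{h}\}$, $e^{-x}\le e^{-x/2}$ and $38<2^7$ yields the stated bound $2^7\max\{1+w,\mathbf{h}\}\exp(-\tfrac{1}{2\mathbf{h}}\min\{1,w+q\})$.

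The conceptual content is slight; the real work is the bookkeeping in (c). The two points calling for care are verifying the rational-function identities in $w$ (using the exact value of $\tau_{1+}$) that make the ``$\mathbf{h}\downarrow 0$ limit'' of $\Phi_0(\tau_{1+})$ coincide with $\Phi_{\star}$, and reading off from the formula for $\xi_{\mathbf{c}}[\Phi_{\star},\mathbf{h}]$ the decay exponent $2q+w$ responsible for the $\min\{1,w+q\}$ in the statement. Since $w$ and $q$ range over all of $(0,\infty)$, one leans on $\tfrac{1+w}{1+2w}>\tfrac{1}{2}$ and on the case split on whether $w+q\le 1$, not on any smallness of $\mathbf{h}$.
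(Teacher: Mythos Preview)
Your proof is correct and follows essentially the same route as the paper's. The paper's proof also rests on the identification $\Phi_1(\tau_{1+})=\Phi_{\star}$, computes the six $\slaa{d}_{\mathcal{D}}$-differences as $\pm X$, $\pm Y$, $X+\mu$ with $X=-1+\tanh(\tau_{1+}/\mathbf{h})$ and $Y=\mathbf{h}\log(1+e^{-2\tau_{1+}/\mathbf{h}})$, and bounds $|X|$, $|Y|$, $|\mu|$ exactly as you do; the paper treats only (c) and leaves (a), (b), (d) to the reader, whereas you supply those details explicitly.
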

\begin{proof}
We discuss (c) only. By direct calculation,
\begin{align*}
\alpha_{\mathbf{a}}[\Phi_0](\tau_{1+})- \alpha_{\mathbf{a}}[\Phi_1](\tau_{1+}) & = - X & 
\xi_{\mathbf{a}}[\Phi_0,\mathbf{h}](\tau_{1+}) - \xi_{\mathbf{a}}[\Phi_1,\mathbf{h}](\tau_{1+})  & =  - Y\\
\alpha_{\mathbf{b}}[\Phi_0](\tau_{1+})- \alpha_{\mathbf{b}}[\Phi_1](\tau_{1+}) & = + X &
\xi_{\mathbf{b}}[\Phi_0,\mathbf{h}](\tau_{1+}) - \xi_{\mathbf{b}}[\Phi_1,\mathbf{h}](\tau_{1+}) & = 
+ Y\\
\alpha_{\mathbf{c}}[\Phi_0](\tau_{1+})- \alpha_{\mathbf{c}}[\Phi_1](\tau_{1+})& =  + X + \mu & 
\xi_{\mathbf{c}}[\Phi_0,\mathbf{h}](\tau_{1+}) - \xi_{\mathbf{c}}[\Phi_1,\mathbf{h}](\tau_{1+}) & =  + Y
\end{align*}
with
$X = -1 + \tanh \big(\tfrac{1}{\mathbf{h}} \tau_{1+}\big)$
and $Y = \mathbf{h} \log\big( 1 + \exp(-2\tfrac{1}{\mathbf{h}} \tau_{1+} ) \big)$. The estimates 
\begin{align*}
|X| & \leq 2  \exp(-2\tfrac{1}{\mathbf{h}} \tau_{1+} )
\leq 2\exp(-\tfrac{1}{\mathbf{h}} )\\
|Y| & \leq \mathbf{h}  \exp(-2\tfrac{1}{\mathbf{h}} \tau_{1+} )
\leq \mathbf{h} \exp(-\tfrac{1}{\mathbf{h}} )\\
|\mu| & \leq (1+w)2^6 \exp(-\tfrac{1}{2\mathbf{h}} \min\{1,w+q\})
\end{align*}
imply (c). \qed
\end{proof}
\begin{definition} \label{def:kdhkhskhkdssPT2}
This is, verbatim, Definition \ref{def:kdhkhskhkdssPT2xyz} in the Introduction.
\end{definition}
\begin{lemma} 
In the context of Definition \ref{def:kdhkhskhkdssPT2}, the identities
\begin{subequations}\label{eq:lkdkjdhkjdhkdfhkdf}
\begin{align}
\label{eq:lkdkjdhkjdhkdfhkdf1}
\lambda_L & = 1 - \alpha_{\mathbf{a},\mathbf{a}'}[\Phi_0](\tau_{1-})
= 1 - \alpha_{\mathbf{a},\mathbf{a}'}[\Phi_0](\tau)\\
\label{eq:lkdkjdhkjdhkdfhkdf2}
w_L & = - \big(\alpha_{\mathbf{a},\mathbf{a}'}[\Phi_0](\tau_{1-})\big)^{-1}
= - \big(\alpha_{\mathbf{a},\mathbf{a}'}[\Phi_0](\tau)\big)^{-1}\\
\label{eq:lkdkjdhkjdhkdfhkdf3}
\tfrac{\mathbf{h}}{\mathbf{h}_L}
& = 
\tfrac{1+2w_L}{1+w_L}\big(
- \tau_{1-}  + \mathbf{h}\log \lambda_L\big) - \mathbf{h}\,\log 2\\
\label{eq:lkdkjdhkjdhkdfhkdf4}
\begin{split}
q_L & = \tfrac{1}{1+w_L}\Big(
\mathbf{h}_L \log \lambda_L -
\tfrac{\mathbf{h}_L}{\mathbf{h}}
\xi_{\mathbf{a},\mathbf{c}'}[\Phi_0,\mathbf{h}](\tau_{1-})
+
\tfrac{\mathbf{h}_L}{\mathbf{h}}
\tau_{1-}
- \tfrac{w_L(1+w_L)}{1+2w_L}\\
& \hskip60mm 
- \tfrac{1+3w_L+(w_L)^2}{1+2w_L}\,\mathbf{h}_L\log 2
\Big)
\end{split}
\end{align}
\end{subequations}
hold, where $\Phi_0 = \Phi_0(\pi,\mathbf{f},\sigma_{\ast})$
and $\tau_{1-} = \tau_{1-}(\mathbf{f})$ and $\tau \in \R$.
Furthermore, 
\begin{align}\label{eq:ddhkdfddfdf}
\big(\xi_{\mathbf{a}}[\Phi_0,\mathbf{h}] - \xi_{\mathbf{a}'}[\Phi_0,\mathbf{h}]\big)\, F
& = \tau - \tau_{1-} - 2\mathbf{h}\log\big(1 + e^{2\tau/\mathbf{h}}\big)\, F
\end{align}
for all $\tau \in \R$, where
\begin{align} \label{eq:ddhkdfddfdf1}
F & = \begin{cases}
\tfrac{1}{3+w} 
 & q \leq 1\\
\tfrac{1+w}{3+2w}
& q > 1
\end{cases}
\end{align}
\end{lemma}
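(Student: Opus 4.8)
The plan is to verify all five identities by direct substitution, after assembling a few elementary facts about the reference field $\Phi_0 = \Phi_0(\pi,\mathbf{f},\sigma_{\ast})$ of Definition \ref{def:psizero} and the maps $\mathcal{P}_L,\mathcal{Q}_L,\lambda_L$ of Definition \ref{def:kdhkhskhkdssPT2}. First I would record that, since $A_{\mathbf{a}}[\Phi_0]\equiv 1$ and $\theta_{\mathbf{a}}[\Phi_0,\mathbf{h}]\equiv 0$, Definition \ref{def:khkhdkhkfdhfd} forces $\varphi_{\mathbf{a}}[\Phi_0](\tau) = \tau/\mathbf{h}$, hence
$$\alpha_{\mathbf{a}}[\Phi_0](\tau) = -\tanh(\tau/\mathbf{h}),\qquad \xi_{\mathbf{a}}[\Phi_0,\mathbf{h}](\tau) = \mathbf{h}\log\big(\tfrac12\sech(\tau/\mathbf{h})\big) = \tau - \mathbf{h}\log\big(1 + e^{2\tau/\mathbf{h}}\big).$$
Combined with the explicit formulas for $\alpha_{\mathbf{b},\mathbf{a}}[\Phi_0]$, $\alpha_{\mathbf{c},\mathbf{a}}[\Phi_0]$, $\xi_{\mathbf{b},\mathbf{a}}[\Phi_0,\mathbf{h}]$, $\xi_{\mathbf{c},\mathbf{a}}[\Phi_0,\mathbf{h}]$ in Definition \ref{def:psizero}, and with $\alpha_{\mathbf{p}} = \alpha_{\mathbf{p},\mathbf{a}}-\alpha_{\mathbf{a}}$, $\xi_{\mathbf{p}} = \xi_{\mathbf{p},\mathbf{a}}-\xi_{\mathbf{a}}$ for $\mathbf{p}\in\{\mathbf{b},\mathbf{c}\}$, this makes every component of $\alpha[\Phi_0]$ and $\xi[\Phi_0,\mathbf{h}]$ an explicit elementary function of $(\tau,\mathbf{h},w,q)$, and in particular shows that $\alpha_{\mathbf{b},\mathbf{a}}[\Phi_0]$ and $\alpha_{\mathbf{c},\mathbf{a}}[\Phi_0]$ are constant in $\tau$.

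Next I would read off the case bookkeeping from Definition \ref{def:kdhkhskhkdssPT2}: $(\mathbf{a}',\mathbf{c}') = (\mathbf{c},\mathbf{b})$ when $q\leq 1$ and $(\mathbf{a}',\mathbf{c}') = (\mathbf{b},\mathbf{c})$ when $q > 1$, so that $\alpha_{\mathbf{a},\mathbf{a}'}[\Phi_0]$ equals $\alpha_{\mathbf{c},\mathbf{a}}[\Phi_0] = -(1+w)$ in the first case and $\alpha_{\mathbf{b},\mathbf{a}}[\Phi_0] = -(1+w)^{-1}$ in the second (both independent of $\tau$), while $\xi_{\mathbf{a},\mathbf{c}'}[\Phi_0,\mathbf{h}]$ equals $\xi_{\mathbf{b},\mathbf{a}}[\Phi_0,\mathbf{h}]$ resp. $\xi_{\mathbf{c},\mathbf{a}}[\Phi_0,\mathbf{h}]$. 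I would also note the purely algebraic relations visible in Definition \ref{def:kdhkhskhkdssPT2}: $\mathrm{num2}_L = \mathbf{h}(2+w)$, hence $\mathbf{h}/\mathbf{h}_L = \mathrm{den}_L/(2+w)$; $q_L = \mathrm{num1}_L/\mathrm{den}_L$; and, comparing the two branches, $\lambda_L = (1+w_L)/w_L$.

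With these in hand the identities fall out. For \eqref{eq:lkdkjdhkjdhkdfhkdf2} one compares the constant value of $\alpha_{\mathbf{a},\mathbf{a}'}[\Phi_0]$ just computed with $w_L$ (which is $(1+w)^{-1}$ for $q\leq1$ and $1+w$ for $q>1$), getting $w_L = -1/\alpha_{\mathbf{a},\mathbf{a}'}[\Phi_0]$; the $\tau$-independence supplies the second equality in \eqref{eq:lkdkjdhkjdhkdfhkdf1}--\eqref{eq:lkdkjdhkjdhkdfhkdf2}, and \eqref{eq:lkdkjdhkjdhkdfhkdf1} is then $\lambda_L = (1+w_L)/w_L = 1 + 1/w_L = 1 - \alpha_{\mathbf{a},\mathbf{a}'}[\Phi_0]$. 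For \eqref{eq:ddhkdfddfdf} I would write $\xi_{\mathbf{a}}-\xi_{\mathbf{a}'} = 2\xi_{\mathbf{a}} - \xi_{\mathbf{a},\mathbf{a}'}$ for $\Phi_0$, substitute $\xi_{\mathbf{a}}[\Phi_0,\mathbf{h}](\tau) = \tau - \mathbf{h}\log(1+e^{2\tau/\mathbf{h}})$ and the affine-in-$\tau$ expression for $\xi_{\mathbf{a},\mathbf{a}'}[\Phi_0,\mathbf{h}]$; the $\log(1+e^{2\tau/\mathbf{h}})$ contribution then carries coefficient exactly $-2\mathbf{h}F$, while the affine part, after inserting the formula for $\tau_{1-}(\mathbf{f})$ from Definition \ref{def:kdhkhskhkdssPT1tau} and multiplying by $F$, collapses to $\tau - \tau_{1-}$ in each of the two cases. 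Finally, \eqref{eq:lkdkjdhkjdhkdfhkdf3} and \eqref{eq:lkdkjdhkjdhkdfhkdf4}, after substituting $\mathbf{h}/\mathbf{h}_L = \mathrm{den}_L/(2+w)$ resp. $\mathbf{h}_L/\mathbf{h} = (2+w)/\mathrm{den}_L$, the case values of $w_L$, $\lambda_L$, $\tau_{1-}$, and the value $\xi_{\mathbf{a},\mathbf{c}'}[\Phi_0,\mathbf{h}](\tau_{1-})$, become two finite identities between rational functions of $(\mathbf{h},w)$; clearing the common denominator $\mathrm{den}_L$ turns each into a polynomial identity to be checked term by term (separately for $q\leq 1$ and $q>1$).

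The main obstacle is purely the bookkeeping in \eqref{eq:lkdkjdhkjdhkdfhkdf4}: it carries the most terms, and the coefficients $w_L(1+w_L)/(1+2w_L)$ and $(1+3w_L+w_L^2)/(1+2w_L)$ must first be rewritten in terms of $w$ in each branch before the common-denominator cancellation becomes transparent. There is no conceptual difficulty, since the maps $\mathcal{P}_L$, $\mathcal{Q}_L$ and $\lambda_L$ of Definition \ref{def:kdhkhskhkdssPT2} are defined precisely so that these identities hold.
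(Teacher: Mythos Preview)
Your proposal is correct and is precisely the ``direct calculation, distinguishing $q\leq 1$ and $q>1$'' that the paper's one-line proof refers to; you have simply spelled out the computation in detail. One minor slip: the identities \eqref{eq:lkdkjdhkjdhkdfhkdf3} and \eqref{eq:lkdkjdhkjdhkdfhkdf4} after substitution are polynomial identities in $(\mathbf{h},w,q)$, not just $(\mathbf{h},w)$ --- for instance $\mathrm{den}_L$ and $\tau_{1-}$ (in the $q\leq 1$ branch) and $\xi_{\mathbf{c},\mathbf{a}}[\Phi_0,\mathbf{h}]$ (in the $q>1$ branch) carry $q$ --- but this does not affect your method.
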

\begin{proof}
By direct calculation. In each case, distinguish $q \leq 1$ and $q > 1$. \qed
\end{proof}
\begin{definition}\label{cccc} For all $\mathbf{f} = (\mathbf{h},w,q) \in (0,\infty)^3$  set
$$ \tau_{\ast}(\mathbf{f}) = \begin{cases} \tfrac{q}{1+w} & \text{if $q \leq 1$}\\
1 & \text{if $q > 1$}\end{cases}$$
\end{definition}
\begin{definition}\label{ccccc}
For all $\mathbf{f} = (\mathbf{h},w,q) \in (0,\infty)^3$  set
\begin{equation}\label{eq:ldsjdslsj1}
\mathbf{K}(\mathbf{f}) = 2^{40} \big(\tfrac{1}{\mathbf{h}}\big)^2 \max\{(\tfrac{1}{w})^2,w^3\}\, \max\{(\tfrac{1}{q})^2,q\}\,
\exp\big(-\tfrac{1}{\mathbf{h}} 2^{-7}\tau_{\ast}(\mathbf{f})\big)
\end{equation}
\end{definition}
\begin{definition}\label{cccccc}
Let $\mathcal{F}$ be the open set of all $\mathbf{f} = (\mathbf{h},w,q) \in (0,\infty)^3$ for which
\begin{align} \label{eq:ldsjdslsj2}
q &\neq 1 &
\mathbf{K}(\mathbf{f}) & < 1 &
\mathbf{h} & < 2^{-7}\tau_{\ast}(\mathbf{f}) 
\end{align}
\end{definition}
\begin{proposition} \label{prop:skhdkjhfd}
For all $\pi = (\mathbf{a},\mathbf{b},\mathbf{c})\in S_3$,\, $\sigma_{\ast} \in \{-1,+1\}^3$,
there are {\bf unique} maps
\begin{align*}
\Pi = \Pi[\pi,\sigma_{\ast}]:\qquad \mathcal{F} &\to (0,\infty)^2\times \R
\\
\Lambda = \Lambda[\pi,\sigma_{\ast}]:\qquad \mathcal{F} & \to [1,\infty)\\
\tau_{2-} = \tau_{2-}[\pi,\sigma_{\ast}]:\qquad \mathcal{F} & \to (-\infty,0)
\end{align*}
so that for all $\mathbf{f} = (\mathbf{h},w,q)\in \mathcal{F}$
(see Definitions 
\ref{def:de},
\ref{def:met2},
\ref{def:psizero},
\ref{def:taump},
\ref{def:kdhkhskhkdssPT1111star},
\ref{def:kdhkhskhkdssPT1tau},
\ref{def:kdhkhskhkdssPT2}
)
\begin{enumerate}[(a)]
\item 
\rule{0pt}{10pt}$\|\Pi(\mathbf{f}) - \mathcal{Q}_L(\mathbf{f}) \|_{\R^3} \leq \mathbf{K}(\mathbf{f})$
\item 
\rule{0pt}{10pt}$|\Lambda(\mathbf{f}) -\lambda_L(\mathbf{f})| \leq \mathbf{K}(\mathbf{f})$
\item
\rule{0pt}{10pt}$\tau_-(\mathbf{f}) < \tau_{2-}(\mathbf{f}) < \tfrac{1}{2}\tau_{1-}(\mathbf{f})$
and 
$|\tau_{2-}(\mathbf{f}) - \tau_{1-}(\mathbf{f})|
\leq \mathbf{K}(\mathbf{f})$
\item\rule{0pt}{10pt}$\Pi$, $\Lambda$ and $\tau_{2-}$ are continuous
\item\label{item:exist}
\rule{0pt}{10pt}if we set $\tau_{2-} = \tau_{2-}(\mathbf{f})$,\; $\tau_{2+} = \tau_{1+}(\mathbf{f})$,\; $\pi' = (\mathbf{a}',\mathbf{b}',\mathbf{c}') = \mathcal{P}_L(\pi,\mathbf{f})$,\;
$\lambda = \Lambda(\mathbf{f})$ and $\mathbf{f}' = (\mathbf{h}',w',q') = \Pi(\mathbf{f})$,
then
$\tfrac{1}{2}\leq \tau_{2+}-\tau_{2-} \leq 3$ and there is a smooth field $$\Phi = \alpha \oplus \beta \;\in \; \mathcal{E} = \mathcal{E}(\sigma_{\ast};\tau_{2-},\tau_{2+})$$ that satisfies
\begin{itemize}
\item[(\ref{item:exist}.1)]\rule{0pt}{9pt}$(\boldsymbol{\mathfrak a},\boldsymbol{\mathfrak b},c)[\Phi,\mathbf{h},Z]
= 0$ on $[\tau_{2-},\tau_{2+}]$
\item[(\ref{item:exist}.2)]\rule{0pt}{9pt}$\Phi(\tau_{2+}) = \Phi_{\star}(\pi,\mathbf{f},\sigma_{\ast})$ and $\Phi(\tau_{2-}) = \lambda\,\Phi_{\star}(\pi',\mathbf{f}',\sigma_{\ast})$, in particular
$$\Phi(\tau_{2+}) \in \mathcal{H}(\pi,\sigma_{\ast})
\qquad \text{and} \qquad
\Phi(\tau_{2-}) \in \mathcal{H}(\pi',\sigma_{\ast})
$$
\item[(\ref{item:exist}.3)]\rule{0pt}{9pt}$|\beta_{\mathbf{a}}[\Phi](\tau)| \geq |\beta_{\mathbf{a}'}[\Phi](\tau)|$
for all $\tau \in [\tau_{2-},\tfrac{1}{2}\tau_{1-}(\mathbf{f})]$ with equality iff $\tau = \tau_{2-}$
\item[(\ref{item:exist}.4)]\rule{0pt}{9pt}$d_{\mathcal{E},(\pi,\mathbf{h})}(\Phi,\Phi_0) \leq \mathbf{K}(\mathbf{f})$, where
$\Phi_0 = \Phi_0(\pi,\mathbf{f},\sigma_{\ast})|_{[\tau_{2-},\tau_{2+}]}$
\item[(\ref{item:exist}.5)]\rule{0pt}{9pt}$\sup_{\tau \in [\tau_{2-},\tau_{2+}]} \max\{\alpha_{\mathbf{b},\mathbf{c}}[\Phi],\alpha_{\mathbf{c},\mathbf{a}}[\Phi],\alpha_{\mathbf{a},\mathbf{b}}[\Phi]\}(\tau) 
\leq -2^{-2} \min\{w^2,w^{-1}\}$
\end{itemize}
\end{enumerate}
\end{proposition}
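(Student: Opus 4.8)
The plan is to construct, for each $\mathbf{f}\in\mathcal{F}$, the solution of $(\boldsymbol{\mathfrak a},\boldsymbol{\mathfrak b},c)[\Phi,\mathbf{h},Z]=0$ with prescribed value $\Phi_{\star}(\pi,\mathbf{f},\sigma_{\ast})$ at the right endpoint $\tau_{1+}(\mathbf{f})$ by a Picard contraction anchored at the explicit reference field $\Phi_0$, and then to read off $\tau_{2-}$, $\Lambda$, $\Pi$ from the resulting solution.

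\textbf{Contraction set-up.} Fix $\pi,\sigma_{\ast},\mathbf{f}=(\mathbf{h},w,q)\in\mathcal{F}$. Choose $\epsilon_-\in(0,-\tau_-(\mathbf{f}))$ so that $\tau_{0-}:=\tau_-(\mathbf{f})+\epsilon_-$ lies strictly between $\tau_-(\mathbf{f})$ and $\tau_{1-}(\mathbf{f})$, far enough from $\tau_{1-}(\mathbf{f})$ that the crossing time found below stays in $[\tau_{0-},\,\cdot\,]$ yet with $\epsilon_-$ still comparable to $\tau_{\ast}(\mathbf{f})$, concretely $\min\{1,\epsilon_-,\epsilon_+/\tau_+(\mathbf{f})\}\geq c_0\,\tau_{\ast}(\mathbf{f})$ for an explicit $c_0$ with $c_0/4>2^{-7}$, where $\epsilon_+:=\tau_+(\mathbf{f})-\tau_{1+}(\mathbf{f})>0$; set $\tau_{0+}:=\tau_{1+}(\mathbf{f})$ and $\mathcal{E}:=\mathcal{E}(\sigma_{\ast};\tau_{0-},\tau_{0+})$. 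Since $|\tau_-(\mathbf{f})|<1$ and $\tau_{1+}(\mathbf{f})<2$ on $\mathcal{F}$, the interval $[\tau_{0-},\tau_{0+}]$ has length $<3$, uniformly in $w$ — this is what keeps all estimates below uniform. Take $\delta$ equal to the largest value allowed by \eqref{eq:kdshskhskhkdsds}; then Technical Lemmas \ref{lem:tl1} and \ref{lem:tl2} apply on $\clball{\mathcal{E},(\pi,\mathbf{h})}{\delta}{\Phi_0}$, and the factor $\exp(-\tfrac{1}{4\mathbf{h}}\min\{1,\epsilon_-,\epsilon_+/\tau_+\})$ occurring there is $\leq\exp(-\tfrac{c_0}{4\mathbf{h}}\tau_{\ast}(\mathbf{f}))$, exponentially small in $1/\mathbf{h}$ with a constant strictly better than the $2^{-7}$ in $\mathbf{K}(\mathbf{f})$.

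\textbf{The fixed point.} Define $\mathcal{T}:\mathcal{E}\to\mathcal{E}$ by letting $\mathcal{T}(\Phi)$ be the field whose coordinates $A_{\mathbf{a}},\theta_{\mathbf{a}},\alpha_{\mathbf{b},\mathbf{a}},\alpha_{\mathbf{c},\mathbf{a}},\xi_{\mathbf{b},\mathbf{a}},\xi_{\mathbf{c},\mathbf{a}}$ solve the triangular system $\td{A_{\mathbf{a}}}=\mathbf{I}_1[\Phi]$, $\td{\theta_{\mathbf{a}}}=\mathbf{I}_2[\Phi]$, $\td{\alpha_{\mathbf{p},\mathbf{a}}}=\mathbf{I}_{(3,\mathbf{p})}[\Phi]$, $\td{\xi_{\mathbf{p},\mathbf{a}}}=\alpha_{\mathbf{p},\mathbf{a}}$ ($\mathbf{p}\in\{\mathbf{b},\mathbf{c}\}$), with value at $\tau_{0+}=\tau_{1+}(\mathbf{f})$ equal to that of $\Phi_{\star}(\pi,\mathbf{f},\sigma_{\ast})$; one checks directly that $\Phi_1(\pi,\mathbf{f},\sigma_{\ast})$ equals $\Phi_{\star}(\pi,\mathbf{f},\sigma_{\ast})$ at $\tau_{1+}(\mathbf{f})$ and is $\mathcal{T}$ applied to any field with all $\mathbf{I}_S$ vanishing. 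By Lemma \ref{lem:lfdskhjkfdhkfdhkfd}, Lemma \ref{lem:expl}, and the relation between the $Z$- and $B_{\mathbf{a}}$-equations, a fixed point $\Phi$ of $\mathcal{T}$ with $\beta_1,\beta_2,\beta_3\neq0$ is precisely a (necessarily smooth) solution of $(\boldsymbol{\mathfrak a},\boldsymbol{\mathfrak b})[\Phi,\mathbf{h},Z]=0$ with $\Phi(\tau_{1+})=\Phi_{\star}$. Bound \eqref{eq:lkskhdkhd1}, integrated over $[\tau_{0-},\tau_{0+}]$, combined with Lemma \ref{lem:kdshkdhkdhkd}(c),(d) and Lemma \ref{lem:fdhdkfhekjhekr} to pass between the two metrics, shows $\mathcal{T}$ maps the ball into itself; bound \eqref{eq:lkskhdkhd2} shows it is a contraction there with factor $\theta\ll1$. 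The unique fixed point $\Phi$ satisfies $d_{\mathcal{E},(\pi,\mathbf{h})}(\Phi,\Phi_0)\leq(1-\theta)^{-1}d_{\mathcal{E},(\pi,\mathbf{h})}(\mathcal{T}(\Phi_0),\Phi_0)$, which is exponentially small with a good constant, hence $\leq\mathbf{K}(\mathbf{f})$ — this is (\ref{item:exist}.4) — and also satisfies $(\boldsymbol{\mathfrak a},\boldsymbol{\mathfrak b})[\Phi,\mathbf{h},Z]=0$. Finally $c[\Phi_{\star}(\pi,\mathbf{f},\sigma_{\ast}),\mathbf{h},Z]=0$ because $\mu$ in Definition \ref{def:kdhkhskhkdssPT1111star} is defined to enforce \eqref{eq:kshkjhk3}, so $c[\Phi,\mathbf{h},Z]\equiv0$ by Proposition \ref{prop:conslaw} with $n=Z$: this is (\ref{item:exist}.1).

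\textbf{Crossing time and landing data.} With $q\neq1$ on $\mathcal{F}$, set $\mathbf{a}'=\mathbf{c}$ if $q<1$, $\mathbf{a}'=\mathbf{b}$ if $q>1$, and $g:=\xi_{\mathbf{a}}[\Phi,\mathbf{h}]-\xi_{\mathbf{a}'}[\Phi,\mathbf{h}]$, so $g(\tau)=0$ iff $|\beta_{\mathbf{a}}[\Phi](\tau)|=|\beta_{\mathbf{a}'}[\Phi](\tau)|$. Since $\boldsymbol{\mathfrak b}[\Phi,\mathbf{h},Z]=0$ one has $\td{g}=\alpha_{\mathbf{a}}[\Phi]-\alpha_{\mathbf{a}'}[\Phi]$; using $\alpha_{\mathbf{a}}[\Phi_0]=-\tanh(\tau/\mathbf{h})$, the constancy $\alpha_{\mathbf{a},\mathbf{a}'}[\Phi_0]\equiv-(1+w)$ (resp. $-(1+w)^{-1}$), the closeness from Step 2, and $\mathbf{h}<2^{-7}\tau_{\ast}(\mathbf{f})$ (forcing $\tau/\mathbf{h}$ large and negative on $[\tau_{0-},\tfrac12\tau_{1-}(\mathbf{f})]$), one finds $\td{g}\geq\tfrac12(3+w)$ (resp. $\tfrac12(2+(1+w)^{-1})$) there, so $g$ is strictly increasing on that interval. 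By identity \eqref{eq:ddhkdfddfdf} the quantity $\xi_{\mathbf{a}}[\Phi_0]-\xi_{\mathbf{a}'}[\Phi_0]$ is exponentially small at $\tau_{1-}(\mathbf{f})$, hence so is $g(\tau_{1-}(\mathbf{f}))$; therefore $g$ has a unique zero $\tau_{2-}$ in $[\tau_{0-},\tfrac12\tau_{1-}(\mathbf{f}))$, and $|\tau_{2-}-\tau_{1-}(\mathbf{f})|$ is exponentially small with a good constant, in particular $\leq\mathbf{K}(\mathbf{f})$ and $<\tfrac12|\tau_{1-}(\mathbf{f})|$. Thus $\tau_-(\mathbf{f})<\tau_{2-}<\tfrac12\tau_{1-}(\mathbf{f})$ (this is (c)), monotonicity of $g$ is (\ref{item:exist}.3), and with $\tau_{2+}:=\tau_{1+}(\mathbf{f})$ one has $\tfrac12\leq\tau_{2+}-\tau_{2-}\leq3$. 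A direct check (using Lemma \ref{lem:kdshkdhkdhkd}(a),(b)) gives $\Phi_{\star}(\pi,\mathbf{f},\sigma_{\ast})\in\mathcal{H}(\pi,\sigma_{\ast})$; and since $\mathbf{b}'=\mathbf{a}$, the equality $|\beta_{\mathbf{a}'}[\Phi](\tau_{2-})|=|\beta_{\mathbf{a}}[\Phi](\tau_{2-})|$, together with $\alpha_{\mathbf{a}}[\Phi](\tau_{2-})$ close to $1$ and the exponential smallness of $\beta_{\mathbf{a}}[\Phi](\tau_{2-})$ (Technical Lemma \ref{lem:tl1}), verifies \eqref{eq:kdshkfdhfdkjhfd23} for $\Phi(\tau_{2-})$, i.e. $\Phi(\tau_{2-})\in\mathcal{H}(\pi',\sigma_{\ast})$ with $\pi'=\mathcal{P}_L(\pi,\mathbf{f})$. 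By the diffeomorphism of Lemma \ref{lem:fdjlkjdlkjfdl}, $\Phi(\tau_{2-})=\lambda\,\Phi_{\star}(\pi',\mathbf{f}',\sigma_{\ast})$ for unique $\lambda>0$ and $\mathbf{f}'=(\mathbf{h}',w',q')\in(0,\infty)^2\times\R$; set $\tau_{2-}(\mathbf{f}):=\tau_{2-}$, $\Lambda(\mathbf{f}):=\lambda=-\alpha_{\mathbf{a}'}[\Phi](\tau_{2-})\geq1$, $\Pi(\mathbf{f}):=\mathbf{f}'$. This is (\ref{item:exist}.2).

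\textbf{Error bounds, continuity, uniqueness.} For (a),(b): the inverse formulas \eqref{eq:dkhkhfdfd}, evaluated at $\Phi(\tau_{2-})$, express $w',\mathbf{h}',q',\lambda$, and the closed forms \eqref{eq:lkdkjdhkjdhkdfhkdf} express $w_L,\mathbf{h}_L,q_L,\lambda_L$ by the same formulas evaluated at $\Phi_0$ and $\tau_{1-}(\mathbf{f})$. The inputs differ by at most $d_{\mathcal{E},(\pi,\mathbf{h})}(\Phi,\Phi_0)+|\tau_{2-}-\tau_{1-}(\mathbf{f})|$, exponentially small with a good constant, and the denominators $\alpha_{\mathbf{a},\mathbf{a}'}$, $1+w'$, $1+2w'$ are bounded below by a quantity of size $\min\{1,w\}$, so $\|\Pi(\mathbf{f})-\mathcal{Q}_L(\mathbf{f})\|$ and $|\Lambda(\mathbf{f})-\lambda_L(\mathbf{f})|$ are (exponentially small)$\times$(polynomial in $\tfrac{1}{\mathbf{h}},w,\tfrac1w,q,\tfrac1q$), which is $\leq\mathbf{K}(\mathbf{f})$ by the very definitions of $\mathbf{K}$ and $\mathcal{F}$. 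Estimate (\ref{item:exist}.5) follows from $\alpha_{\mathbf{b},\mathbf{c}}[\Phi]\leq\alpha_{\mathbf{b},\mathbf{a}}[\Phi]+\alpha_{\mathbf{c},\mathbf{a}}[\Phi]+2A_{\mathbf{a}}[\Phi]\leq-(1+w)^{-1}-(1+w)+2+O(\mathbf{K})=-\tfrac{w^2}{1+w}+O(\mathbf{K})\leq-2^{-2}\min\{w^2,w^{-1}\}$, plus the easier bounds on $\alpha_{\mathbf{c},\mathbf{a}}[\Phi]$ and $\alpha_{\mathbf{a},\mathbf{b}}[\Phi]$. Continuity (d): $\Phi$ depends continuously on $\mathbf{f}$ by uniform contraction with $\mathbf{f}$-continuous $\mathcal{T}$, $\tau_{2-}$ then by the implicit function theorem since $\td{g}(\tau_{2-})\neq0$, and $\Lambda,\Pi$ through the smooth diffeomorphism. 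Uniqueness: any triple $(\Pi,\Lambda,\tau_{2-})$ obeying (a)–(e) must, for each $\mathbf{f}$, yield a field solving (\ref{item:exist}.1) with value $\Phi_{\star}$ at $\tau_{2+}$ and lying in the ball of (\ref{item:exist}.4); by Picard uniqueness this is the $\Phi$ above, $\tau_{2-}$ is then pinned by (c) and (\ref{item:exist}.3), and $\Lambda,\Pi$ by the diffeomorphism. The main obstacle is quantitative: making the error amplification in the contraction step (Lipschitz constants of size $1/\mathbf{h}$) and in the rational algebra of the last step (denominators of size $\min\{1,w\}$, $1/q$) fit inside the prescribed $\mathbf{K}(\mathbf{f})$; the definitions of $\mathbf{K}$, $\mathcal{F}$, $\tau_{\ast}$, $\tau_{1\pm}$ are tuned exactly for this, and verifying it requires careful constant bookkeeping. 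The sign/monotonicity analysis of $g$ that locates $\tau_{2-}$ is a secondary but genuine point.
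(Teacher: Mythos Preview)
Your proposal is correct and follows essentially the same architecture as the paper's proof: a Picard contraction in the $d_{\mathcal{E},(\pi,\mathbf{h})}$ metric around the reference field $\Phi_0$ (anchored at $\Phi_1$ so the initial value is exactly $\Phi_{\star}$), location of $\tau_{2-}$ as the unique zero of $\xi_{\mathbf{a}}-\xi_{\mathbf{a}'}$ via monotonicity and identity \eqref{eq:ddhkdfddfdf}, and extraction of $(\lambda,\mathbf{f}')$ from $\Phi(\tau_{2-})$ with errors controlled by $\mathbf{K}(\mathbf{f})$.

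A few minor deviations worth noting. The paper fixes $\tau_{0-}=\tfrac12(\tau_{1-}+\tau_-)$ and $\delta=2^{-9}\min\{1,w\}\tau_{\ast}$ concretely rather than ``largest allowed by \eqref{eq:kdshskhskhkdsds}''; your looser specification is fine in principle but the paper's explicit choice is what makes the constant bookkeeping you flag actually close. For the landing data, the paper does not invoke the diffeomorphism of Lemma \ref{lem:fdjlkjdlkjfdl} abstractly but writes down $\lambda,w',\mathbf{h}',q'$ by the explicit formulas \eqref{eq:dkhkhfdfd} and then verifies \eqref{ziezuwz1}--\eqref{ziezuwz6}; your route is equivalent and slightly cleaner conceptually. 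For continuity, the paper does not appeal to parameter-continuity of the contraction directly (the domain $[\tau_{0-},\tau_{0+}]$ moves with $\mathbf{f}$); instead it uses the affine symmetry of Proposition \ref{prop:kdhfkshkds} to pull the solution at a nearby $\mathbf{f}^{\Upsilon}$ back to a solution $\Xi$ of the \emph{same} equation (same $\mathbf{h}^{\Psi}$) on an overlapping interval, then compares $\Psi$ and $\Xi$ via the integral identities \eqref{eq:khdkhkddkdkdkdhsl}. Your implicit-function/ODE-continuous-dependence argument also works, since the vector field is smooth in $(\Phi,\mathbf{h})$ for $\mathbf{h}>0$; the paper's rescaling device is a stylistic choice that keeps everything inside the contraction framework already set up.
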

\begin{proof}
The main part of this proof is the construction of the field $\Phi$ that appears in (\ref{item:exist}). To make the proof more transparent, we replace some numerical constants in
\eqref{eq:ldsjdslsj1} and \eqref{eq:ldsjdslsj2} by the components of a
parameter vector $\ell = (\ell_1,\ldots,\ell_8) \in \R^8$.
In the course of the construction of $\Phi$, we require a finite number of inequalities
 of the form  $\ell \geq \ell'$.
Each inequality of this kind is marked by $(\bullet)$ and is \emph{assumed to hold for the rest of the proof}, once it has been stated. At the end of the construction, we check that the particular parameters appearing in \eqref{eq:ldsjdslsj1} and \eqref{eq:ldsjdslsj2} satisfy all these inequalities.\\
Let $\pi = (\mathbf{a},\mathbf{b},\mathbf{c}) \in S_3$ and $\sigma_{\ast}\in \{-1,+1\}^3$. Fix any $\mathbf{f} = (\mathbf{h},w,q)\in (0,\infty)^3$ with $q \neq 1$ and $\mathbf{h} \leq 1$. Set $\tau_{\ast} = \tau_{\ast}(\mathbf{f})$.
 For any $s = (s_1,\ldots,s_7) \in \R^7$, set
\begin{multline*}
\mathbf{X}(s) = \mathbf{X}(s_1,\ldots,s_7) = \\
2^{s_1} \big(\tfrac{1}{\mathbf{h}}\big)^{s_2}\times
\left\{ \begin{array}{l c c} (\tfrac{1}{w})^{s_3} &\;\;& \text{if $w \leq 1$} \\
\rule{0pt}{10pt} w^{s_4} & & \text{if $w > 1$} \end{array} \right\}\times
\left\{ \begin{array}{l c c} (\tfrac{1}{q})^{s_5} &\;\;& \text{if $q \leq 1$} \\
\rule{0pt}{10pt} q^{s_6} & & \text{if $q > 1$} \end{array} \right\}\times
\exp\big(\tfrac{1}{\mathbf{h}} s_7 \tau_{\ast}\big)
\end{multline*}
\emph{Basic properties of $\mathbf{X}(s)$.} The quantity $\mathbf{X}(s)$ is positive, non-decreasing  in each of its  seven arguments (recall $0 < \mathbf{h} \leq 1$), and $\mathbf{X}(s)\mathbf{X}(s') = \mathbf{X}(s+s')$ for all $s,s' \in \R^7$, and $\mathbf{X}(0,\ldots,0) = 1$. Also, we have
$\tau_{\ast} \geq \mathbf{X}(-1,0,0,-1,-1,0,0)$.\\
\emph{Basic smallness assumptions.} Introduce a parameter vector $\ell = (\ell_1,\ldots,\ell_8)\in \R^8$ with
$$(\ell_1,\ldots,\ell_7) \geq (0,0,0,0,0,0,-\infty) \qquad \text{and}\qquad \ell_8 \geq 0 \qquad (\bullet)_1$$
Our basic assumptions on the vector $\mathbf{f} = (\mathbf{h},w,q)$ are:
\begin{align}\label{eq:lsjlsjdfhjld}
q & \neq 1 & \mathbf{K} \stackrel{\text{def}}{=} \mathbf{X}(\ell_1,\ldots,\ell_7) & < 1 &
\mathbf{h} & < 2^{-\ell_8} \tau_{\ast}
\end{align}
Observe that our previous assumptions $q\neq 1$ and $\mathbf{h}\leq 1$ are subsumed in \eqref{eq:lsjlsjdfhjld}.
\emph{Abbreviations.}  $\tau_{\pm} = \tau_{\pm}(\mathbf{f})$ and $\tau_{1\pm} = \tau_{1\pm}(\mathbf{f})$ and
$\tau_{2+} = \tau_{1+}(\mathbf{f})$
and
\begin{align*}
\tau_{0-} & =
\tfrac{1}{2} \tau_{1-} + \tfrac{1}{2} \tau_-  < 0 &
\tau_{0+} & = \tau_{1+} > 0
\end{align*}
and
$\mathcal{D} = \mathcal{D}(\sigma_{\ast})$ and
$\mathcal{E} = \mathcal{E}(\sigma_{\ast};\tau_{0-},\tau_{0+})$
and
$\Phi_0 = \Phi_0(\pi,\mathbf{f},\sigma_{\ast})|_{[\tau_{0-},\tau_{0+}]}$ and
$\Phi_1 = \Phi_1(\pi,\mathbf{f},\sigma_{\ast})|_{[\tau_{0-},\tau_{0+}]}$ and
$\Phi_{\star} = \Phi_{\star}(\pi,\mathbf{f},\sigma_{\ast})$
and
$d_{\mathcal{E}} = d_{\mathcal{E},(\pi,\mathbf{h})}$
and
$d_{\mathcal{D}} = d_{\mathcal{D},(\pi,\mathbf{h})}$
and
$\slaa{d}_{\mathcal{D}} = \slaa{d}_{\mathcal{D},\mathbf{h}}$
and
$\clball{\mathcal{E}}{\,\cdot\,}{\,\cdot\,} =
\clball{\mathcal{E},(\pi,\mathbf{h})}{\,\cdot\,}{\,\cdot\,}$
and
$\pi' = (\mathbf{a}',\mathbf{b}',\mathbf{c}') = \mathcal{P}_L(\pi,\mathbf{f})$.\\
\emph{Preliminaries 1.} Introduce $\epsilon_-$ and $\epsilon_+$ by
$\tau_{0-} = \tau_- + \epsilon_-$ and 
$\tau_{0+} = \tau_+ - \epsilon_+$,
just as in Lemma \ref{lem:tl1}. We have
\begin{align*}
\epsilon_+ & = \tau_+ - \tau_{1+}
= \tfrac{1+w}{1+2w}\big(1+\tfrac{1}{w}-\mathbf{h}\log 2\big)\\
\epsilon_- & = \tfrac{1}{2}(\tau_{1-} - \tau_-)
= \begin{cases}
\tfrac{1}{2(3+w)}\big(\tfrac{1+w}{2+w}\,q - \mathbf{h}\log 2\big) & \text{if $q<1$}\\
\tfrac{1+w}{2(3+2w)}\big(\tfrac{1+w}{2+w} - \mathbf{h}\log 2\big) & \text{if $q>1$}
\end{cases}
\end{align*}
Require $\ell_8 \geq 2$ $(\bullet)_2$.
Then $\mathbf{h}\log 2 \leq \mathbf{h} \leq 2^{-2}\min \{1,q\}$, and
(recall that $\tau_+ = 1+\tfrac{1}{w}$)
$$2^{-2} \leq \epsilon_+/\tau_+ \leq 1
\hskip 20mm
2^{-5} \leq \epsilon_-/\tau_{\ast} \leq 2^{-1}$$
and $\epsilon_- \in (0,-\tau_-)$ and $\epsilon_+ \in (0,\tau_+)$,
as required by Lemma \ref{lem:tl1}.
We have
$$-1 < \tau_- < \tau_{0-} < \tau_{1-} < 0 < \tfrac{1}{2} < \tau_{0+} = \tau_{1+} = \tau_{2+} < \min\{2,\tau_+\}$$ Set
\begin{equation}\label{def:kdshdkhkdhd}
\delta \stackrel{\text{def}}{=} 2^{-9}\min \{1,w\}\,\tau_{\ast}
= \mathbf{X}(-9,0,-1,0,0,0,0)\,\tau_{\ast}
\geq \mathbf{X}(-10,0,-1,-1,-1,0,0)
\end{equation}
This implies $\delta \leq 2^{-4} \min\{1,w,\epsilon_-,\tfrac{\epsilon_+}{\tau_+ \tau_{0+}}\}$, the main hypothesis of Lemma \ref{lem:tl1}. This lemma will be applied later.\\
\emph{Preliminaries 2.} Require $\ell_8 \geq 7$ $(\bullet)_3$. Then
\begin{align}
\notag d_{\mathcal{E}}(\Phi_0,\Phi_1) & \leq 2^2 d_{\mathcal{D}}(\Phi_0(\tau_{1+}),\Phi_{\star}) \leq 2^{11} \slaa{d}_{\mathcal{D}}(\Phi_0(\tau_{1+}),\Phi_{\star})\\
\notag & \leq 2^{18} (1+w) \exp\big(-\tfrac{1}{2\mathbf{h}} \min\{1,q\}\big)  \leq 2^{18} (1+w) \exp\big(-\tfrac{1}{2\mathbf{h}} \tau_{\ast}\big)\\
\label{eq:dkdhfdh28763283} & \leq \mathbf{X}(19,0,0,1,0,0,-2^{-1})
 \leq 2^{-2} \delta\, \mathbf{X}(31,0,1,2,1,0,-2^{-1})
\end{align}
The first and third inequality follow from (d) and (c) in Lemma \ref{lem:kdshkdhkdhkd}, respectively, using
 $\sup_{\tau \in [\tau_{0-},\tau_{0+}]}(1 + |\tau - \tau_{1+}|) \leq 2^2$. The second inequality
follows from Lemma 
\ref{lem:fdhdkfhekjhekr} (b),
with $C=D=2$. Its assumptions  are satisfied, because $\mathbf{h}\leq 2^{-7}$ and $A_{\mathbf{a}}[\Phi_0](\tau_{1+}) = 1$ and
$\mathbf{h}\,\varphi_{\mathbf{a}}[\Phi_0](\tau_{1+}) = \tau_{1+} \in [\tfrac{1}{2},2]$
and
$\xi_{\mathbf{a}}[\Phi_{\star},\mathbf{h}] \in[-\tfrac{3}{2}, - \tfrac{1}{2}]$ and
$|\beta_{\mathbf{a}}[\Phi_{\star}]| \leq 2 \exp(-\tfrac{1}{2\mathbf{h}}) < 1$ 
and $A_{\mathbf{a}}[\Phi_{\star}] \in [1,2]$
and $0 \leq \mathbf{h}|\varphi_{\mathbf{a}}[\Phi_{\star}]| + \xi_{\mathbf{a}}[\Phi_{\star},\mathbf{h}] \leq 2^{-3}$
(see Remark \ref{rem:kfdhdkhkfd}) and $\mathbf{h}\,|\varphi_{\mathbf{a}}[\Phi_{\star}]| \in [\tfrac{1}{2},2]$ 
 and $\sgn \varphi_{\mathbf{a}}[\Phi_{\star}]
= - \sgn \alpha_{\mathbf{a}}[\Phi_{\star}] = + 1$, and because
$\ell_8 \geq 7$ implies $\mathbf{h} \leq 2^{-7}$ and therefore
$\exp(-\tfrac{1}{\mathbf{h}} 2^{-3}) \leq 2^{-12}$.
\\
Require $(\ell_1,\ldots,\ell_7) \geq (31,0,1,2,1,0,-2^{-1})$ $(\bullet)_4$. Then,
by \eqref{eq:lsjlsjdfhjld},
\begin{equation}\label{eq:kdjkhkjhkjhdkdfs}
\Phi_1 \in \clball{\mathcal{E}}{2^{-2}\delta}{\Phi_0}
\end{equation}
\emph{Construction of $\Phi$.} Define a map
$P: \clball{\mathcal{E}}{\delta}{\Phi_0}
\to \clball{\mathcal{E}}{\delta}{\Phi_0}, \Psi \mapsto P(\Psi)$ by
\begin{subequations}\label{eq:kdhskdfhkjdhddffd}
\begin{align}
\label{eq:kdhskdfhkjdhddffd1}
A_{\mathbf{a}}[P(\Psi)](\tau) - A_{\mathbf{a}}[\Phi_1](\tau) & =  \textstyle\int_{\tau_{0+}}^{\tau}\dd \tau'\, \mathbf{I}_1[\Psi,\mathbf{h},\pi](\tau')\\
\label{eq:kdhskdfhkjdhddffd2}
\theta_{\mathbf{a}}[P(\Psi),\mathbf{h}](\tau) -  \theta_{\mathbf{a}}[\Phi_1,\mathbf{h}](\tau) & = 
\textstyle \int_{\tau_{0+}}^{\tau}\dd \tau'\, \mathbf{I}_2[\Psi,\mathbf{h},\pi](\tau')\\
\label{eq:kdhskdfhkjdhddffd3}
\ua_{\mathbf{p},\mathbf{a}}[P(\Psi)](\tau) - \ua_{\mathbf{p},\mathbf{a}}[\Phi_1](\tau)
 & =  \textstyle\int_{\tau_{0+}}^{\tau}\dd \tau'\, \mathbf{I}_{(3,\mathbf{p})}[\Psi,\mathbf{h},\pi](\tau')\\
 \label{eq:kdhskdfhkjdhddffd4}
\xi_{\mathbf{p},\mathbf{a}}[P(\Psi),\mathbf{h}](\tau) - \xi_{\mathbf{p},\mathbf{a}}[\Phi_1,\mathbf{h}](\tau)
& =  \textstyle\int_{\tau_{0+}}^{\tau}\dd \tau'' \int_{\tau_{0+}}^{\tau''} \dd \tau'\, \mathbf{I}_{(3,\mathbf{p})}[\Psi,\mathbf{h},\pi](\tau')
\end{align}
\end{subequations}
for all $\mathbf{p}\in \{\mathbf{b},\mathbf{c}\}$ and $\tau \in [\tau_{0-},\tau_{0+}]$.
To make sure that $P$ is well defined, we require $(\ell_1,\ldots,\ell_7) \geq (28,2,1,1,1,0,-2^{-7})$ $(\bullet)_5$, in which case
Lemma  \ref{lem:tl2} (see Preliminaries 1) implies the uniform estimates
\begin{subequations}
\begin{align}
\label{eq:kdshkhfdkhdkfhksadm1} & |\mathbf{I}_S[\Psi,\mathbf{h},\pi]| \leq \mathbf{X}(12,1,0,0,0,0,-2^{-7})
\leq 2^{-6} \delta\,\mathbf{X}(28,1,1,1,1,0,-2^{-7}) \leq 2^{-6}\delta \\
\label{eq:kdshkhfdkhdkfhksad} & |\mathbf{I}_S[\Psi,\mathbf{h},\pi]
- \hskip-1pt\mathbf{I}_S[\Psi',\mathbf{h},\pi]| \leq 2^{-5} \mathbf{X}(24,2,0,0,0,0,-2^{-7}) d_{\mathcal{E}}(\Psi,\Psi')\leq 2^{-5} d_{\mathcal{E}}(\Psi,\Psi')
\end{align}
\end{subequations}
on the interval $[\tau_{0-},\tau_{0+}]$, for all $\Psi,\Psi' \in \clball{\mathcal{E}}{\delta}{\Phi_0}$ and all
$S\in \{1,2,(3,\mathbf{b}),(3,\mathbf{c})\}$. 
Since $\sup_{\tau \in [\tau_{0-},\tau_{0+}]} |\tau - \tau_{0+}|\leq 4$, we have:
\begin{itemize}
\item  $A_{\mathbf{a}}[P(\Psi)] > \tfrac{1}{2}$ on $[\tau_{0-},\tau_{0+}]$,
which makes  $P(\Psi)$ a well defined element of $\mathcal{E}$. 
\item Each right hand side of \eqref{eq:kdhskdfhkjdhddffd} is $\leq 2^{-2}\delta$, hence $P(\Psi) \in \clball{\mathcal{E}}{\tfrac{1}{2}\delta}{\Phi_0}$.
\item The map $P$ is Lipschitz-continuous with constant $\leq \tfrac{1}{2}$.
\end{itemize}
The metric space $\clball{\mathcal{E}}{\delta}{\Phi_0}$ is nonempty and complete. By the Banach Fixed Point Theorem, the contraction $P$ admits a unique fixed point
\begin{equation}\label{eq:kdhkdhjhd}
\Phi \in \clball{\mathcal{E}}{\tfrac{1}{2}\delta}{\Phi_0}
\end{equation}
\emph{Proof that the fixed point satisfies 
$(\boldsymbol{\mathfrak a},\boldsymbol{\mathfrak b},c)[\Phi,\mathbf{h},Z] = 0$.}
The fixed point $\Phi$ is smooth. We have $\Phi(\tau_{0+}) = \Phi_1(\tau_{0+}) = \Phi_{\star}$ and $c[\Phi,\mathbf{h},Z](\tau_{0+}) = 0$, by Lemma \ref{lem:kdshkdhkdhkd} (b), and because $\tau_{0+} = \tau_{1+}$. Set $\Psi = P(\Psi) = \Phi$ in \eqref{eq:kdhskdfhkjdhddffd} and differentiate with respect to $\tau$.
The result of differentiating  \eqref{eq:kdhskdfhkjdhddffd1} and \eqref{eq:kdhskdfhkjdhddffd2} can be written as
\begin{multline*}
\frac{\dd}{\dd \tau}  \begin{pmatrix}
A_{\mathbf{a}} \\ \theta_{\mathbf{a}}
\end{pmatrix} = \frac{1}{(A_{\mathbf{a}})^2} \begin{pmatrix}
\frac{1}{\mathbf{h}} (A_{\mathbf{a}})^2 \tanh \varphi_{\mathbf{a}} & & \frac{1}{\mathbf{h}} (A_{\mathbf{a}})^2 \sech \varphi_{\mathbf{a}} \\
\rule{0pt}{10pt} \varphi_{\mathbf{a}} \tanh \varphi_{\mathbf{a}} - 1&\;\, & \sinh \varphi_{\mathbf{a}} + \varphi_{\mathbf{a}} \sech \varphi_{\mathbf{a}}
\end{pmatrix}\\
\times \begin{pmatrix}
\boldsymbol{\mathfrak a}_{\mathbf{a}}[\Phi,\mathbf{h},B_{\mathbf{a}}]
- \boldsymbol{\mathfrak a}_{\mathbf{a}}[\Phi,\mathbf{h},Z]\\
- (\sigma_{\ast})_{\mathbf{a}}\, \boldsymbol{\mathfrak b}_{\mathbf{a}}[\Phi,\mathbf{h},B_{\mathbf{a}}]
+ (\sigma_{\ast})_{\mathbf{a}}\, \boldsymbol{\mathfrak b}_{\mathbf{a}}[\Phi,\mathbf{h},Z]
\end{pmatrix}
\end{multline*}
where $A_{\mathbf{a}} = A_{\mathbf{a}}[\Phi]$, $\theta_{\mathbf{a}} = \theta_{\mathbf{a}}[\Phi,\mathbf{h}]$, $\varphi_{\mathbf{a}} = \varphi_{\mathbf{a}}[\Phi]$, because
 $\boldsymbol{\mathfrak b}_{\mathbf{a}}[\Phi,\mathbf{h},B_{\mathbf{a}}] = \boldsymbol{\mathfrak b}_{\mathbf{a}}[\Phi,\mathbf{h},Z]$. Now, Lemma \ref{lem:lfdskhjkfdhkfdhkfd} implies 
$\boldsymbol{\mathfrak a}_{\mathbf{a}}[\Phi,\mathbf{h},Z] = \boldsymbol{\mathfrak b}_{\mathbf{a}}[\Phi,\mathbf{h},Z] = 0$. Differentiation of \eqref{eq:kdhskdfhkjdhddffd3} gives
$
\tfrac{\dd}{\dd \tau} \ua_{\mathbf{p},\mathbf{a}}[\Phi] 
 = 
\tfrac{1}{\mathbf{h}} \boldsymbol{\mathfrak a}_{\mathbf{p}} [\Phi,\mathbf{h},Z,B_{\mathbf{a}}] + 
\tfrac{1}{\mathbf{h}} \boldsymbol{\mathfrak a}_{\mathbf{a}} [\Phi,\mathbf{h},Z,B_{\mathbf{a}}]
$. 
Together with
$\boldsymbol{\mathfrak a}_{\mathbf{a}}[\Phi,\mathbf{h},Z] = 0$ and 
the general identity
$\boldsymbol{\mathfrak a}_{\mathbf{p}} [\Phi,\mathbf{h},B_{\mathbf{a}}] + 
\boldsymbol{\mathfrak a}_{\mathbf{a}} [\Phi,\mathbf{h},B_{\mathbf{a}}]
= - \mathbf{h}\tfrac{\dd}{\dd \tau} \ua_{\mathbf{p},\mathbf{a}}[\Phi]$, we obtain
$\boldsymbol{\mathfrak a}_{\mathbf{p}} [\Phi,\mathbf{h},Z] = 0$. 
Differentiating \eqref{eq:kdhskdfhkjdhddffd4} and simplifying the result with
 \eqref{eq:kdhskdfhkjdhddffd3} gives
$
\tfrac{\dd}{\dd \tau} \xi_{\mathbf{p},\mathbf{a}}[\Phi,\mathbf{h}] =  \ua_{\mathbf{p},\mathbf{a}}[\Phi] 
$
which, by $\boldsymbol{\mathfrak b}_{\mathbf{a}}[\Phi,\mathbf{h},Z] = 0$, implies 
$\boldsymbol{\mathfrak b}_{\mathbf{p}}[\Phi,\mathbf{h},Z] = 0$.
Now, Proposition \ref{prop:conslaw} and the fact that 
 $c[\Phi,\mathbf{h},Z](\tau_{0+}) = 0$ imply that $c[\Phi,\mathbf{h},Z] = 0$ identically on $[\tau_{0-},\tau_{0+}]$.\\
\emph{Estimates on $\Phi$.} By the fixed point equation $P(\Phi) = \Phi$ and by 
\eqref{eq:dkdhfdh28763283} and \eqref{eq:kdshkhfdkhdkfhksadm1},
\begin{align}
\notag d_{\mathcal{E}}(\Phi_0,\Phi) & \leq d_{\mathcal{E}}(\Phi_0,\Phi_1) + d_{\mathcal{E}}(\Phi_1,P(\Phi))\\
\notag & \leq \mathbf{X}(19,0,0,1,0,0,-2^{-1}) + \mathbf{X}(16,1,0,0,0,0,-2^{-7})\\
\label{eq:kkdhkhfdkssdsdk} & \leq \mathbf{X}(20,1,0,1,0,0,-2^{-7})
\end{align}
We require $(\ell_1,\ldots,\ell_7) \geq (20,1,0,1,0,0,-2^{-7})$ $(\bullet)_6$, which implies
$d_{\mathcal{E}}(\Phi_0,\Phi) \leq \mathbf{K}$.
To apply Lemma \ref{lem:fdhdkfhekjhekr} (a), set $\mathcal{J} = [\tau_{0-},\tfrac{1}{2}\tau_{1-}]\subset [\tau_{0-},\tau_{0+}]$ and $C=2$ and $D = 12 \max\{1,q^{-1}\}$.
We check the assumptions of Lemma \ref{lem:fdhdkfhekjhekr}. The inequalities
\begin{align*}
C^{-1}\leq A_{\mathbf{a}}[X] & \leq C &
D^{-1} \leq\mathbf{h}\,|\varphi_{\mathbf{a}}[X]| & \leq D
\end{align*}
hold for both $X = \Phi_0(\tau)$ and $X=\Phi(\tau)$, for all $\tau \in \mathcal{J}$.
The inequality for $A_{\mathbf{a}}$ follows from
$A_{\mathbf{a}}[\Phi_0](\tau)=1$ and the bound $d_{\mathcal{E}}(\Phi_0,\Phi) \leq \delta \leq 2^{-9}$.
To check the inequality for $\varphi_{\mathbf{a}}$, observe that
$\mathbf{h} \varphi_{\mathbf{a}}[\Phi_0](\tau) = \tau \in \mathcal{J} \subset [-(D/2),-(D/2)^{-1}]$, see the definitions of $\tau_-$ and $\tau_{1-}$. 
Furthermore, for all $\tau \in \mathcal{J}$, we have
$$|\mathbf{h}\varphi_{\mathbf{a}}[\Phi] -
\mathbf{h}\varphi_{\mathbf{a}}[\Phi_0]|
\leq |\tau|\,|A_{\mathbf{a}}[\Phi] - 1| + A_{\mathbf{a}}[\Phi]\,|\theta_{\mathbf{a}}[\Phi,\mathbf{h}]| \leq 4\delta \leq (2D)^{-1}$$
This implies $\mathbf{h}\,\varphi_{\mathbf{a}}[\Phi](\tau) \in [-D,-D^{-1}]$ and
$\sgn \varphi_{\mathbf{a}}[\Phi_0](\tau) = \sgn \varphi_{\mathbf{a}}[\Phi](\tau) = -1$ for all $\tau \in \mathcal{J}$.
Now, Lemma \ref{lem:fdhdkfhekjhekr} (a) and $2^3C^2D \leq \mathbf{X}(9,0,0,0,1,0,0)$ imply
for all $\tau \in \mathcal{J}$
\begin{equation}\label{defMM}
\slaa{d}_{\mathcal{D}}(\Phi_0(\tau),\Phi(\tau)) \leq
2^3C^2D\,d_{\mathcal{E}}(\Phi_0,\Phi) \leq \mathbf{X}(29,1,0,1,1,0,-2^{-7})
\stackrel{\text{def}}{=} \mathbf{M}
\end{equation}
\emph{Construction of $\tau_{2-}$.} 
Recall that $\mathbf{a}' = \mathbf{c}$ if $q < 1$ and $\mathbf{a}' = \mathbf{b}$ if $q > 1$.
By \eqref{eq:ddhkdfddfdf},
\begin{subequations}\label{eq:kdshdkjhkfdhkdfd}
\begin{align}
\label{eq:kdshdkjhkfdhkdfd1}
\big(\xi_{\mathbf{a}}[\Phi_0,\mathbf{h}] - \xi_{\mathbf{a}'}[\Phi_0,\mathbf{h}]\big)\, F
& = \tau - \tau_{1-} - T_1\\
\label{eq:kdshdkjhkfdhkdfd2}
\big( 
\xi_{\mathbf{a}}[\Phi,\mathbf{h}] - \xi_{\mathbf{a}'}[\Phi,\mathbf{h}]
\big)\, F &  = \tau - \tau_{1-} - T_2
\end{align}
\end{subequations}
for all $\tau \in \mathcal{J}$, where $F$ is given by \eqref{eq:ddhkdfddfdf1}, and
\begin{align*}
T_1 = T_1(\tau) & =  2\mathbf{h}\log\big(1 + e^{2\tau/\mathbf{h}}\big)\, F\\
T_2 = T_2(\tau) & =  T_1
 - \big(
 \xi_{\mathbf{a}}[\Phi,\mathbf{h}] - \xi_{\mathbf{a}}[\Phi_0,\mathbf{h}]
 \big)\,F
 + \big(
 \xi_{\mathbf{a}'}[\Phi,\mathbf{h}] - \xi_{\mathbf{a}'}[\Phi_0,\mathbf{h}]
 \big)\,F
\end{align*}
For all $\tau \in \mathcal{J}$ we have $0 < T_1 \leq 2\mathbf{h}\, e^{2\tau/\mathbf{h}} F \leq 2\mathbf{h}\, e^{\tau_{1-}/\mathbf{h}} F
\leq \mathbf{M} F$ and therefore
$|T_2| \leq 3 \mathbf{M}F \leq \tfrac{3}{2} \mathbf{M}$. Estimate
\begin{equation*}
\mathrm{dist}_{\R}\big(\tau_{1-},\,\R\setminus \mathcal{J}\big)
= \min\big\{\tfrac{1}{2} |\tau_{1-}|,\epsilon_-\big\} \geq 2^{-5}\tau_{\ast}
\geq \mathbf{X}(-6,0,0,-1,-1,0,0)
\end{equation*}
Therefore, the condition
$(\ell_1,\ldots,\ell_7) \geq (37,1,0,2,2,0,-2^{-7})$ $(\bullet)_7$ yields
\begin{equation}\label{eq:sgggggggsdhs}
|T_2| \leq \tfrac{1}{2} \mathrm{dist}_{\R}\big(\tau_{1-},\,\R\setminus \mathcal{J}\big)
\end{equation}
for all $\tau \in \mathcal{J}$.
Set
\begin{equation}\label{c5}
\tau_{2-} = \sup\big\{\,\tau \in \mathcal{J}\;\big|\;
\xi_{\mathbf{a}}[\Phi,\mathbf{h}](\tau) \leq \xi_{\mathbf{a}'}[\Phi,\mathbf{h}](\tau)\, \big\}
\end{equation}
The set on the right is nonempty, by \eqref{eq:kdshdkjhkfdhkdfd2} and \eqref{eq:sgggggggsdhs}, it contains $\tau_{0-}$. We have $\tau_{2-} \in (\tau_{0-},\tfrac{1}{2}\tau_{1-}) \subset \mathcal{J}$
and, by continuity, $\xi_{\mathbf{a}}[\Phi,\mathbf{h}](\tau_{2-}) = \xi_{\mathbf{a}'}[\Phi,\mathbf{h}](\tau_{2-})$, and  $|\tau_{2-} - \tau_{1-}|  \leq \tfrac{3}{2}\mathbf{M}$. For all $\tau \in [\tau_{2-},\tfrac{1}{2}\tau_{1-}]$, we have
$|\beta_{\mathbf{a}}[\Phi](\tau)| \geq |\beta_{\mathbf{a}'}[\Phi](\tau)|$
with equality iff $\tau = \tau_{2-}$. The condition $(\ell_1,\ldots,\ell_7) \geq (31,1,0,1,1,0,-2^{-7})$ $(\bullet)_8$
implies $|\tau_{2-} - \tau_{1-}| \leq \mathbf{K}$.\\
\emph{Estimates on $\Phi_0$.} For all $\tau \in \mathcal{J}$, we have
\begin{align*}
|\alpha_{\mathbf{a}}[\Phi_0](\tau) - 1| & = 
|\tanh \tfrac{1}{\mathbf{h}}|\tau| - 1| \leq 2\exp(-\tfrac{2}{\mathbf{h}}|\tau|\big)\\
|\xi_{\mathbf{a}}[\Phi_0,\mathbf{h}](\tau) - \tau| & = 
\big|\mathbf{h}\log (2 \cosh \tfrac{1}{\mathbf{h}}|\tau|) - |\tau|\big|
 \leq \mathbf{h}\, \exp\big(-\tfrac{2}{\mathbf{h}}|\tau|\big)\\
 \exp(-\tfrac{2}{\mathbf{h}}|\tau|\big)
& \leq \exp(-\tfrac{1}{\mathbf{h}}|\tau_{1-}|\big) \leq
\exp(-\tfrac{1}{\mathbf{h}} 2^{-2} \tau_{\ast}\big) \leq 2^{-29}\mathbf{M}
\end{align*}
These estimates will be used without further comment.\\
\emph{Construction of $\lambda$.} 
Set $\lambda_L = \lambda_L(\mathbf{f})$
and recall \eqref{eq:lkdkjdhkjdhkdfhkdf1}. Set
\begin{equation}\label{c4}
\lambda = - \alpha_{\mathbf{a}'}[\Phi](\tau_{2-})
\end{equation}
 Then,
\begin{align*}
|\lambda - \lambda_L| & \leq \big| \alpha_{\mathbf{a}'}[\Phi](\tau_{2-}) - \alpha_{\mathbf{a}'}[\Phi_0](\tau_{2-})\big|
+ \big| \alpha_{\mathbf{a}'}[\Phi_0](\tau_{2-}) - \alpha_{\mathbf{a}'}[\Phi_0](\tau_{1-})\big|
\\ & \hskip2mm +
\big| \alpha_{\mathbf{a}'}[\Phi_0](\tau_{1-}) + \big(1 - \alpha_{\mathbf{a},\mathbf{a}'}[\Phi_0](\tau_{1-})\big)\big|\\
& \leq \big| \alpha_{\mathbf{a}'}[\Phi](\tau_{2-}) - \alpha_{\mathbf{a}'}[\Phi_0](\tau_{2-})\big|\\
& \hskip2mm + \Big(\big| \alpha_{\mathbf{a}}[\Phi_0](\tau_{2-}) -1\big|
+ \big|\alpha_{\mathbf{a}}[\Phi_0](\tau_{1-})-1\big|\Big)
+ \big| 1 - \alpha_{\mathbf{a}}[\Phi_0](\tau_{1-})\big| \leq 2\mathbf{M}
\end{align*}
See \eqref{defMM}. Require $(\ell_1,\ldots,\ell_7) \geq (32,1,0,2,1,0,-2^{-7})$ $(\bullet)_9$. Then $4 \mathbf{M} (1+w) \leq \mathbf{K}\leq 1$ and $|\lambda-\lambda_L| \leq \tfrac{1}{2}(1+w)^{-1} \mathbf{K}$. In particular $\lambda \geq \lambda_L - (1+w)^{-1} \geq 1$.
\vskip 2mm
We now construct the components of $\mathbf{f}' = (\mathbf{h}',w',q')$.\\
\emph{Construction of $w'$.} Require
$(\ell_1,\ldots,\ell_7) \geq (32,1,0,2,1,0,-2^{-7})$ $(\bullet)_{10}$ and set
\begin{equation}\label{c2}
w' = \frac{\alpha_{\mathbf{a}}[\Phi](\tau_{2-})}{-\alpha_{\mathbf{a},\mathbf{a}'}[\Phi](\tau_{2-})} > 0
\end{equation}
To check that the denominator is nonzero and that $w' > 0$, note that for all $\tau \in \mathcal{J}$:
\begin{align*}
 \big|
\alpha_{\mathbf{a},\mathbf{a}'}[\Phi](\tau)
- 
\alpha_{\mathbf{a},\mathbf{a}'}[\Phi_0](\tau)
\big| & \leq 2\mathbf{M}\\
 |\alpha_{\mathbf{a}}[\Phi](\tau)-1| & \leq  \big|\alpha_{\mathbf{a}}[\Phi](\tau) - \alpha_{\mathbf{a}}[\Phi_0](\tau)\big| + \big|\alpha_{\mathbf{a}}[\Phi_0](\tau)-1\big| \leq 2\mathbf{M}
\end{align*}
and $4 \mathbf{M}
\leq
\mathbf{X}(-1,0,0,-1,0,0,0) \leq \tfrac{1}{1+w}\leq |\alpha_{\mathbf{a},\mathbf{a}'}[\Phi_0](\tau)|$
and $4\mathbf{M}\leq 1$. Hence,
$$|\alpha_{\mathbf{a},\mathbf{a}'}[\Phi](\tau) - \alpha_{\mathbf{a},\mathbf{a}'}[\Phi_0](\tau)|
\leq \tfrac{1}{2}|\alpha_{\mathbf{a},\mathbf{a}'}[\Phi_0](\tau)|
\qquad |\alpha_{\mathbf{a}}[\Phi](\tau) -1| \leq \tfrac{1}{2}$$
In particular, $\alpha_{\mathbf{a},\mathbf{a}'}[\Phi](\tau_{2-}) \leq \tfrac{1}{2}
 \alpha_{\mathbf{a},\mathbf{a}'}[\Phi_0](\tau_{2-}) < 0$ and $\alpha_{\mathbf{a}}[\Phi](\tau_{2-}) > 0$. Consequently, $w'$ is well defined and positive. Recall \eqref{eq:lkdkjdhkjdhkdfhkdf2} and estimate
\begin{align*}
|w'-w_L| &  \leq \Big|\frac{\alpha_{\mathbf{a}}[\Phi](\tau_{2-})}{\alpha_{\mathbf{a},\mathbf{a}'}[\Phi](\tau_{2-})}
- \frac{\alpha_{\mathbf{a}}[\Phi_0](\tau_{2-})}{\alpha_{\mathbf{a},\mathbf{a}'}[\Phi](\tau_{2-})}\Big|
+ \Big|\frac{\alpha_{\mathbf{a}}[\Phi_0](\tau_{2-})}{\alpha_{\mathbf{a},\mathbf{a}'}[\Phi](\tau_{2-})}
- \frac{\alpha_{\mathbf{a}}[\Phi_0](\tau_{2-})}{\alpha_{\mathbf{a},\mathbf{a}'}[\Phi_0](\tau_{2-})}\Big|\\
& \qquad +
\Big|\frac{\alpha_{\mathbf{a}}[\Phi_0](\tau_{2-})}{\alpha_{\mathbf{a},\mathbf{a}'}[\Phi_0](\tau_{2-})}
- \frac{1}{\alpha_{\mathbf{a},\mathbf{a}'}[\Phi_0](\tau_{2-})} \Big|\\
& \leq 2w_L\mathbf{M} + 4w_L^2\mathbf{M} + w_L\mathbf{M} \leq
2^3(1+w)^2\mathbf{M} \leq 
\tfrac{1}{2}\mathbf{X}(6,0,0,2,0,0,0)\,\mathbf{M}
\end{align*}
We require $(\ell_1,\ldots,\ell_7) \geq (35,1,0,3,1,0,-2^{-7})$ $(\bullet)_{11}$. Hence $|w'-w_L|\leq \tfrac{1}{2}\mathbf{K}\leq \tfrac{1}{2}$.\\
\emph{Construction of $\mathbf{h}'$.}
Let $\lambda$ and $w'$ be given by \eqref{c4} and \eqref{c2}. Set
$$
\mu = \tfrac{1+2w'}{1+w'}\big(
- \xi_{\mathbf{a}}[\Phi,\mathbf{h}](\tau_{2-}) + \mathbf{h}\, \log \lambda\big) - \mathbf{h} \log 2
$$
Recall \eqref{eq:lkdkjdhkjdhkdfhkdf3} and estimate
\begin{align*}
\big|\mu - \tfrac{\mathbf{h}}{\mathbf{h}_L}\big| & \leq \tfrac{1 + 2w'}{1+w'} \big|
- \xi_{\mathbf{a}}[\Phi,\mathbf{h}](\tau_{2-}) + \mathbf{h}\, \log \lambda
+ \tau_{1-} - \mathbf{h}\,\log \lambda_L\big|\\
& \qquad + 
\big| \tfrac{1 + 2w'}{1+w'} - \tfrac{1 + 2w_L}{1+w_L}\big|
\,\big| \tau_{1-} - \mathbf{h}\,\log \lambda_L\big|\\
& \leq
2 \big|\tau_{1-} - \tau_{2-}\big|
+ 2 \big|\tau_{2-} - \xi_{\mathbf{a}}[\Phi_0,\mathbf{h}](\tau_{2-})\big|\\
& \qquad + 2 \big|\xi_{\mathbf{a}}[\Phi_0,\mathbf{h}](\tau_{2-}) - \xi_{\mathbf{a}}[\Phi,\mathbf{h}](\tau_{2-})\big| + 4 \mathbf{h}\, |\lambda - \lambda_L| 
+ 4\,\tfrac{|w'-w_L|}{(1+w_L)^2}\\
& \leq 2^2\mathbf{M} + \mathbf{M} + 2\mathbf{M} + 2^3\mathbf{M} + 2^4\mathbf{M} \leq 2^5\mathbf{M}
\end{align*}
For the second inequality, use $(1+2w') \leq 2(1+w')$ and
$\lambda,\lambda_L \geq \tfrac{1}{2}$ and $|\tau_{1-}|\leq 1$ and $|\mathbf{h}\log \lambda_L| \leq |\tau_{\ast}\log \lambda_L| \leq 1$, see \eqref{eq:lsjlsjdfhjld}, and
$1+w' \geq \tfrac{1}{2}(1+w_L)$. By inspection,
$$
\tfrac{\mathbf{h}}{\mathbf{h}_L} \geq \tfrac{1+w}{2+w}\,\min\{1,q\}
\geq \mathbf{X}(-1,0,0,0,-1,0,0)
$$
To make sure that $\mu > 0$, we require $(\ell_1,\ldots,\ell_7) \geq (36,1,0,1,2,0,-2^{-7})$ $(\bullet)_{12}$, so that
$2^5\mathbf{M}\leq \tfrac{1}{2} \mathbf{X}(-1,0,0,0,-1,0,0)\, \mathbf{K}\leq \tfrac{1}{2}\tfrac{\mathbf{h}}{\mathbf{h}_L}$, that is $|\mu -\tfrac{\mathbf{h}}{\mathbf{h}_L}| \leq \tfrac{1}{2}\tfrac{\mathbf{h}}{\mathbf{h}_L}$ and $\mu > 0$. Set
\begin{equation}\label{c1}
\mathbf{h}' = \mathbf{h}/\mu > 0
\end{equation}
Require $\ell_8 \geq 7$ $(\bullet)_{13}$, so that $\mathbf{h} \leq \mathbf{X}(-7,0,0,0,-1,0,0)$ and
\begin{align*}
\big| \mathbf{h}'-\mathbf{h}_L\big| & = \mathbf{h} \tfrac{\mathbf{h}/\mathbf{h}_L}{\mu}\,\big(\tfrac{\mathbf{h}_L}{\mathbf{h}}\big)^2\,\big|\mu - \tfrac{\mathbf{h}}{\mathbf{h}_L}\big|
\leq \tfrac{1}{2}\, \mathbf{X}(2,0,0,0,1,0,0)\,\mathbf{M}
\end{align*}
We require $(\ell_1,\ldots,\ell_7) \geq (31,1,0,1,2,0,-2^{-7})$ $(\bullet)_{14}$. Then $|\mathbf{h}'-\mathbf{h}_L|\leq \tfrac{1}{2}\mathbf{K} \leq \tfrac{1}{2}$.\\
\emph{Construction of $q'$.} Set
\begin{equation}\label{c3}
q' = \tfrac{1}{1+w'}\Big(
\mathbf{h}'\log \lambda - \tfrac{\mathbf{h}'}{\mathbf{h}} \xi_{\mathbf{c}'}[\Phi,\mathbf{h}](\tau_{2-})
- \tfrac{w'(1+w')}{1+2w'}
- \tfrac{1+3w'+(w')^2}{1+2w'}\,\mathbf{h}'\log 2
\Big)
\end{equation}
Recall \eqref{eq:lkdkjdhkjdhkdfhkdf4} and estimate
\begin{align*}
& |q'-q_L|\\
& \leq \big|\tfrac{1}{1+w'}\mathbf{h}'\log \lambda
- \tfrac{1}{1+w_L} \mathbf{h}_L \log \lambda_L\big|\\
& \hskip 6pt +
\big| \tfrac{1}{1+w'}\tfrac{\mathbf{h}'}{\mathbf{h}} \xi_{\mathbf{c}'}[\Phi,\mathbf{h}](\tau_{2-})
- \tfrac{1}{1+w_L} \tfrac{\mathbf{h}_L}{\mathbf{h}}\,\big(
\xi_{\mathbf{a},\mathbf{c}'}[\Phi_0,\mathbf{h}](\tau_{1-})-\tau_{1-}\big)
\big|\\
& \hskip 6pt +
 \big|\tfrac{w'}{1+2w'}
- \tfrac{w_L}{1+2w_L}\big|
+ \big|\tfrac{1+3w'+(w')^2}{(1+w')(1+2w')}\,\mathbf{h}'
- \tfrac{1+3w_L+(w_L)^2}{(1+w_L)(1+2w_L)}\,\mathbf{h}_L\big| \displaybreak[0]\\
& \leq
 \tfrac{1}{1+w'}\mathbf{h}' \big|\log \lambda - \log \lambda_L\big|
 +\tfrac{1}{1+w'}  \big|\mathbf{h}'- \mathbf{h}_L \big| \log \lambda_L 
 +\big|\tfrac{1}{1+w'}- \tfrac{1}{1+w_L}\big| \mathbf{h}_L \log \lambda_L \\
& \hskip 6pt
+ \tfrac{1}{1+w'}\tfrac{\mathbf{h}'}{\mathbf{h}} \big|  \xi_{\mathbf{c}'}[\Phi,\mathbf{h}](\tau_{2-})
-  \xi_{\mathbf{c}'}[\Phi_0,\mathbf{h}](\tau_{2-})\big|
+ \tfrac{1}{1+w'} \big| \tfrac{\mathbf{h}'}{\mathbf{h}} 
- \tfrac{\mathbf{h}_L}{\mathbf{h}} \big|\,\big|\xi_{\mathbf{c}'}[\Phi_0,\mathbf{h}](\tau_{2-})\big|\\
& \hskip 6pt 
+ \big| \tfrac{1}{1+w'}
- \tfrac{1}{1+w_L}\big|\,
\tfrac{\mathbf{h}_L}{\mathbf{h}}\,\big|\xi_{\mathbf{c}'}[\Phi_0,\mathbf{h}](\tau_{2-})\big|
\\
& \hskip 6pt 
+ \tfrac{1}{1+w_L}\tfrac{\mathbf{h}_L}{\mathbf{h}} \big|  \xi_{\mathbf{c}'}[\Phi_0,\mathbf{h}](\tau_{2-})
- 
\xi_{\mathbf{a},\mathbf{c}'}[\Phi_0,\mathbf{h}](\tau_{1-}) + \tau_{1-}
\big|\\
& \hskip 6pt +
 \big|\tfrac{w'}{1+2w'}
- \tfrac{w_L}{1+2w_L}\big|
+ \big|\tfrac{1+3w'+(w')^2}{(1+w')(1+2w')}
- \tfrac{1+3w_L+(w_L)^2}{(1+w_L)(1+2w_L)}\big|\,\mathbf{h}'
+ \big|\mathbf{h}' - \mathbf{h}_L\big|  \displaybreak[0]\\
& \leq
2^3\big|\lambda - \lambda_L\big|
 +  \big|\mathbf{h}'- \mathbf{h}_L \big| (1+w)
 +
4 \tfrac{|w'-w_L|}{(1+w_L)^2} (1+w)\\
& \hskip 6pt
+ 2\tfrac{\mathbf{h}_L}{\mathbf{h}} \big|  \xi_{\mathbf{c}'}[\Phi,\mathbf{h}](\tau_{2-})
-  \xi_{\mathbf{c}'}[\Phi_0,\mathbf{h}](\tau_{2-})\big|
+  \big| \tfrac{\mathbf{h}'}{\mathbf{h}} 
- \tfrac{\mathbf{h}_L}{\mathbf{h}} \big|\,\big|\xi_{\mathbf{c}'}[\Phi_0,\mathbf{h}](\tau_{2-})\big|\\
& \hskip 6pt 
+ 2\tfrac{|w'-w_L|}{(1+w_L)^2}
\, \tfrac{\mathbf{h}_L}{\mathbf{h}}\,\big|\xi_{\mathbf{c}'}[\Phi_0,\mathbf{h}](\tau_{2-})\big|
+ \tfrac{\mathbf{h}_L}{\mathbf{h}} \big|  \xi_{\mathbf{a},\mathbf{c}'}[\Phi_0,\mathbf{h}](\tau_{2-})
- 
\xi_{\mathbf{a},\mathbf{c}'}[\Phi_0,\mathbf{h}](\tau_{1-})
\big|\\
& \hskip 6pt 
+ \tfrac{\mathbf{h}_L}{\mathbf{h}} \big|  \xi_{\mathbf{a}}[\Phi_0,\mathbf{h}](\tau_{2-})
- \tau_{2-}
\big|
+ \tfrac{\mathbf{h}_L}{\mathbf{h}} \big|  \tau_{2-}-\tau_{1-}
\big|
\\ & \hskip 6pt
+
2\tfrac{|w'-w_L|}{(1+w_L)^2}
+ 2^3\tfrac{|w'-w_L|}{(1+w_L)^2}
+ \big|\mathbf{h}' - \mathbf{h}_L\big|  \displaybreak[0]\\
& \leq 2^4\mathbf{M} + \mathbf{X}(2,0,0,1,1,0,0)\,\mathbf{M} + \mathbf{X}(5,0,0,1,0,0,0)\,\mathbf{M}\\
& \hskip 6pt
+ \mathbf{X}(2,0,0,0,1,0,0)\,\mathbf{M} + \mathbf{X}(6,1,0,1,1,1,0)\,\mathbf{M}\\
& \hskip 6pt
+ \mathbf{X}(9,0,0,1,1,1,0)\,\mathbf{M} + \mathbf{X}(3,0,0,1,1,0,0)\,\mathbf{M}\\
& \hskip 6pt
+ \mathbf{X}(1,0,0,0,1,0,0)\,\mathbf{M} + \mathbf{X}(2,0,0,0,1,0,0)\,\mathbf{M}\\
& \hskip 6pt 
+ 2^3\mathbf{M} + 2^5 \mathbf{M} + \mathbf{X}(1,0,0,0,1,0,0)\,\mathbf{M}\\
& \leq \tfrac{1}{2}\mathbf{X}(11,1,0,1,1,1,0)\,\mathbf{M}
\end{align*}
For the third inequality, use $\mathbf{h}_L \leq \mathbf{X}(1,0,0,0,1,0,0)\,\mathbf{h}
\leq 2$ and $\mathbf{h}' = \mathbf{h}/\mu \leq 2\mathbf{h}_L \leq 2^2$ and $\lambda,\lambda_L\geq \tfrac{1}{2}$ and $|\mathbf{h}\log \lambda_L|\leq 1$ and $\log \lambda_L \leq 1+w$ and $(1+w') \geq \tfrac{1}{2}(1+w_L)$.
For the fourth inequality, use $(1+w)\leq \mathbf{X}(1,0,0,1,0,0,0)$ and 
$\tfrac{|w'-w_L|}{(1+w_L)^2} \leq 2^2\mathbf{M}$ and
\begin{align*}
\big|\xi_{\mathbf{c}'}[\Phi_0,\mathbf{h}](\tau_{2-})\big|
& \leq 
\big|\xi_{\mathbf{a},\mathbf{c}'}[\Phi_0,\mathbf{h}](\tau_{2-})\big|
+ \big|\xi_{\mathbf{a}}[\Phi_0,\mathbf{h}](\tau_{2-}) - \tau_{2-}\big| + |\tau_{2-}|\\
& \leq \mathbf{X}(3,0,0,1,0,1,0) + 1 + 1 \leq \mathbf{X}(5,0,0,1,0,1,0)
\end{align*}
We require $(\ell_1,\ldots,\ell_7) \geq
(40,2,0,2,2,1,-2^{-7})$ $(\bullet)_{15}$, such that $|q'-q_L| \leq \tfrac{1}{2}\mathbf{K} \leq \tfrac{1}{2}$. \\
\emph{The maximum of $\alpha_{\mathbf{b},\mathbf{c}}$, $\alpha_{\mathbf{c},\mathbf{a}}$, $\alpha_{\mathbf{a},\mathbf{b}}$.}
Require $(\ell_1,\ldots,\ell_7)\hskip-1pt \geq \hskip-1pt(24,1,2,2,0,0,-2^{-7})$ $(\bullet)_{16}$.
Then $\mathbf{X}(20,1,0,1,0,0,-2^{-7}) \leq 2^{-3}(1+w)^{-1} \min\{w^2,1\}$. By the inequality
\eqref{eq:kkdhkhfdkssdsdk},
we have $d_{\mathcal{D}}(\Phi_0(\tau),\Phi(\tau)) \leq 2^{-3}(1+w)^{-1} \min\{w^2,1\}$, for all $\tau \in [\tau_{0-},\tau_{0+}]$. Hence,
\begin{align*}
\alpha_{\mathbf{a},\mathbf{p}}[\Phi] & \leq 
\alpha_{\mathbf{a},\mathbf{p}}[\Phi_0] + 
d_{\mathcal{D}}(\Phi_0,\Phi) \leq - (1+w)^{-1} 
+ d_{\mathcal{D}}(\Phi_0,\Phi) \leq - 2^{-1}(1+w)^{-1}\\
\alpha_{\mathbf{b},\mathbf{c}}[\Phi] & \leq 
\alpha_{\mathbf{a},\mathbf{b}}[\Phi] + 
\alpha_{\mathbf{a},\mathbf{c}}[\Phi]
- 2 \alpha_{\mathbf{a}}[\Phi] \leq  
\alpha_{\mathbf{a},\mathbf{b}}[\Phi] + 
\alpha_{\mathbf{a},\mathbf{c}}[\Phi]
+  2 A_{\mathbf{a}}[\Phi] \\
&  \leq 2^2 d_{\mathcal{D}}(\Phi_0,\Phi) 
+ \alpha_{\mathbf{a},\mathbf{b}}[\Phi_0] + 
\alpha_{\mathbf{a},\mathbf{c}}[\Phi_0]
+  2 A_{\mathbf{a}}[\Phi_0] \leq - 2^{-1} (1+w)^{-1} w^2
\end{align*}
for all $\tau \in [\tau_{0-},\tau_{0+}]$ and all $\mathbf{p}\in \{\mathbf{b},\mathbf{c}\}$.\\
\emph{Definition of the maps $\Pi$, $\Lambda$ and $\tau_{2-}$.}
Set $(\ell_1,\ldots,\ell_7) = (40,2,2,3,2,1,-2^{-7})$ and $\ell_8 = 7$. With this choice, all inequalities $(\bullet)$ hold.
The constant $\mathbf{K}$ defined by \eqref{eq:lsjlsjdfhjld} coincides with $\mathbf{K}(\mathbf{f})$, defined by \eqref{eq:ldsjdslsj1}. Furthermore, a vector $\mathbf{f} = (\mathbf{h},w,q)\in (0,\infty)^3$ satisfies our basic assumption \eqref{eq:lsjlsjdfhjld} if and only if $\mathbf{f}\in \mathcal{F}$. Therefore, we can set
\begin{align*}
\Pi[\pi,\sigma_{\ast}]:& & \mathcal{F} &\to (0,\infty)^2\times \R&
\mathbf{f} & \mapsto \text{right hand sides of (\eqref{c1}, \eqref{c2}, \eqref{c3})} \\
\Lambda[\pi,\sigma_{\ast}]:& & \mathcal{F} & \to [1,\infty) &
\mathbf{f} & \mapsto \text{right hand side of \eqref{c4}} \\
\tau_{2-}[\pi,\sigma_{\ast}]:& & \mathcal{F} & \to (-\infty,0) &
\mathbf{f} & \mapsto \text{right hand side of \eqref{c5}}
\end{align*}
Properties (a), (b), (c) and (e) in Proposition \ref{prop:skhdkjhfd} are by construction, where it is understood that the fixed point $\Phi$ of the map $P$, whose domain of definition is $[\tau_{0-},\tau_{0+}]$, has to be restricted to the subinterval $[\tau_{2-},\tau_{2+}]$ to comply with the statement in Proposition \ref{prop:skhdkjhfd} (e). The statements of (e.1), (e.3), (e.4), (e.5) have already been discussed in this proof. Equation $\Phi(\tau_{2+}) = \Phi_{\star}$ in (e.2), with $\Phi_{\star} = \Phi_{\star}(\pi,\mathbf{f},\sigma_{\ast})$,  follows from the fixed point equation $P(\Phi) = \Phi$, see \eqref{eq:kdhskdfhkjdhddffd}, and from $\Phi_1(\tau_{1+}) = \Phi_{\star}$ and $\tau_{0+} = \tau_{1+} = \tau_{2+}$. Equation  $\Phi(\tau_{2-}) = \lambda\, \Phi_{\star}(\pi',\mathbf{f}',\sigma_{\ast})$ in (e.2) with $\mathbf{f}' = (\mathbf{h}',w',q')$ is equivalent to (recall $\mathbf{b}' = \mathbf{a}$)
\begin{subequations}
\begin{align}
\label{ziezuwz1}\alpha_{\mathbf{a}'}[\Phi](\tau_{2-}) & = -\lambda\\
\label{ziezuwz2}\alpha_{\mathbf{a}}[\Phi](\tau_{2-}) & = \lambda\, \tfrac{w'}{1+w'}\\
\label{ziezuwz3}\alpha_{\mathbf{c}'}[\Phi](\tau_{2-}) & = \lambda\, (-w'-\mu') \displaybreak[0]\\
\label{ziezuwz4}\tfrac{1}{\mathbf{h}}\xi_{\mathbf{a}'}[\Phi,\mathbf{h}](\tau_{2-}) & = \log \lambda +  \tfrac{1}{\mathbf{h}'}\big\{-\,\tfrac{1+w'}{1+2w'}(1+\mathbf{h}'\log 2)\big\} \\
\label{ziezuwz5}\tfrac{1}{\mathbf{h}} \xi_{\mathbf{a}}[\Phi,\mathbf{h}](\tau_{2-}) & = \log \lambda + \tfrac{1}{\mathbf{h}'}\big\{-\tfrac{1+w'}{1+2w'}(1+\mathbf{h}'\log 2)\big\}\\
\label{ziezuwz6}\tfrac{1}{\mathbf{h}} \xi_{\mathbf{c}'}[\Phi,\mathbf{h}](\tau_{2-}) & = \log \lambda +  \tfrac{1}{\mathbf{h}'}\big\{-(1+w')q' - \tfrac{w'(1+w')}{1+2w'} - \tfrac{1+3w'+(w')^2}{1+2w'} \mathbf{h}'\log 2\big\}
\end{align} 
\end{subequations}
with $\mu' = (1+w')
\big(
\ub_{1}^2
+ \ub_{2}^2
+ \ub_{3}^2
-2\ub_{2}\ub_{3}
-2\ub_{3}\ub_{1}
-2\ub_{1}\ub_{2}
\big)|_{\beta = \beta[\Phi_{\star}(\pi',\mathbf{f}',\sigma_{\ast})]}$.
By inspection: \eqref{ziezuwz1} follows from \eqref{c4}; \eqref{ziezuwz2} follows from \eqref{c4} and \eqref{c2}; \eqref{ziezuwz5} follows from \eqref{c1}; \eqref{ziezuwz6} follows from \eqref{c3}; \eqref{ziezuwz4} follows from \eqref{ziezuwz5} and the discussion following \eqref{c5}. These five equations and $c[\Phi,\mathbf{h},Z](\tau_{2-}) = 0$ imply \eqref{ziezuwz3}. We have now checked (e.2). We now discuss (d).\\
\emph{Continuity of the maps $\Pi$, $\Lambda$ and $\tau_{2-}$.} 
\newcommand{\subin}{B}
Fix $\mathbf{f}^\Psi = (\mathbf{h}^\Psi,w^\Psi,q^\Psi) \in \mathcal{F}$. Let $r > 0$ and let
$\mathbf{f}^\Upsilon = (\mathbf{h}^\Upsilon,w^\Upsilon,q^\Upsilon)\in \mathcal{F}$ with $\|\mathbf{f}^\Psi-\mathbf{f}^\Upsilon\|_{\R^3}\leq r$. All the objects and abbreviations that have been introduced for a single element of $\mathcal{F}$ before, now come in two versions, one associated to each of $\mathbf{f}^B \in \mathcal{F}$ with $B=\Psi,\Upsilon$. By convention, these two versions are distinguished by a superscript $B$. For instance, 
$\tau_{0+}^B = \tau_{+1}^B = \tau_{2+}^B = \tau_{+1}(\mathbf{f}^B)$ and
$\Phi_0^B = \Phi_0(\pi,\mathbf{f}^B,\sigma_{\ast})|_{[\tau_{0-}^B,\tau_{0+}^B]}$
and $\mathcal{E}^B = \mathcal{E}(\sigma_{\ast};\tau_{0-}^B,\tau_{0+}^B)$ and so forth.
Following this convention, the contraction mapping fixed points are denoted $\Phi^B \in \mathcal{E}^B$.
However, we also write  $\Phi^{\Psi} = \Psi$ and $\Phi^{\Upsilon} = \Upsilon$. Suppose $r \leq \tfrac{1}{2}|q^\Psi-1|$. Then
\begin{equation}\label{eq:khdkhkfdhfdk}
0 \neq \sgn(q^\Psi-1) = \sgn(q^\Upsilon-1)
\end{equation}
Define $\chi:\R \to \R$ by
$\chi(\tau) = \tfrac{\mathbf{h}^\Upsilon}{\mathbf{h}^\Psi}(\tau - \tau_{0+}^\Psi) + \tau_{0+}^\Upsilon$.
Introduce four closed intervals $\mathcal{I}^{\subin} = [\tau_{0-}^{\subin},\tau_{0+}^{\subin}]$, $B = \Psi,\Upsilon$, and
$\mathcal{I}^{\Xi} = [\chi^{-1}(\tau_{0-}^{\Upsilon}),\tau_{0+}^\Psi]$
and $\mathcal{I} = \mathcal{I}^{\Psi}\cap \mathcal{I}^{\Xi}$. Observe that $\chi(\mathcal{I}^{\Xi}) = \mathcal{I}^{\Upsilon}$. 
By Proposition \ref{prop:kdhfkshkds}, the field $\Xi = \Upsilon \circ (\chi|_{\mathcal{I}^{\Xi}})$ satisfies $(\boldsymbol{\mathfrak a},\boldsymbol{\mathfrak b},c)[\Xi,\mathbf{h}^\Psi,Z] = 0$ on $\mathcal{I}^{\Xi}$. 
Recall $\mathcal{J}^B = [\tau_{0-}^B,\tfrac{1}{2}\tau_{1-}^B] \subset \mathcal{I}^B$ and
$|\tau_{2-}^B-\tau_{1-}^B| \leq \tfrac{1}{2}\mathrm{dist}_{\R}(\tau_{1-}^B,\R\setminus \mathcal{J}^B)$,
see \eqref{eq:kdshdkjhkfdhkdfd2} and \eqref{eq:sgggggggsdhs}. 
Set $\mathcal{J} = \mathcal{J}^{\Psi}\cap \mathcal{J}^{\Xi} \subset \mathcal{I}$ with $\mathcal{J}^{\Xi} = \chi^{-1}(\mathcal{J}^{\Upsilon})$.
If $r > 0$ is sufficiently small, then
$$\tau_{2-}^{\Psi}\in \mathcal{J}
\qquad \text{and}\qquad \chi^{-1}(\tau_{2-}^{\Upsilon}) \in \mathcal{J}$$
These inclusions have similar proofs. We only verify $\tau_{2-}^{\Psi}\in \mathcal{J}$. We have $\tau_{2-}^{\Psi}\in \mathcal{J}^\Psi$ and
\begin{align}
\notag |\chi(\tau_{2-}^{\Psi}) - \tau_{1-}^\Upsilon|
& \leq |\chi(\tau_{2-}^{\Psi}) - \chi(\tau_{1-}^\Psi)| + |\chi(\tau_{1-}^\Psi) - \tau_{1-}^{\Upsilon}|\\
\label{eq:lsdkjfdlkfd21} & \leq \tfrac{\mathbf{h}^{\Upsilon}}{\mathbf{h}^{\Psi}} \tfrac{1}{2}\mathrm{dist}_{\R}(\tau_{1-}^{\Psi},\R\setminus \mathcal{J}^{\Psi}) + |\chi(\tau_{1-}^\Psi) - \tau_{1-}^{\Upsilon}|
\end{align}
The right hand side of \eqref{eq:lsdkjfdlkfd21} is a continuous function of 
$\mathbf{f}^{\Upsilon} \in \mathcal{F}$ (with $\mathbf{f}^{\Psi}$ fixed) and is equal to
$\tfrac{1}{2}\mathrm{dist}_{\R}(\tau_{1-}^{\Upsilon},\R\setminus \mathcal{J}^{\Upsilon}) > 0$
when $\mathbf{f}^{\Upsilon} = \mathbf{f}^{\Psi}$. Therefore \eqref{eq:lsdkjfdlkfd21} is $< \mathrm{dist}_{\R}(\tau_{1-}^{\Upsilon},\R\setminus \mathcal{J}^{\Upsilon})$ if $r > 0$ is small enough. Hence $\chi(\tau_{2-}^{\Psi}) \in \mathcal{J}^{\Upsilon}$, that is
$\tau_{2-}^{\Psi}\in \mathcal{J}^{\Xi}$.\\
 Set
$\mathcal{D} = \mathcal{D}^{\Psi} = \mathcal{D}^{\Upsilon} = \mathcal{D}(\sigma_{\ast})$ and
 $\mathcal{E} = \mathcal{E}(\sigma_{\ast};\mathcal{I})$ and $\Phi_0 = \Phi_0(\pi,\mathbf{f}^{\Psi},\sigma_{\ast})|_{\mathcal{I}}$. Equivalently, $\Phi_0 = \Phi_0^{\Psi}|_{\mathcal{I}}$.
 Abbreviate $d_{\mathcal{X}} = d_{\mathcal{X},(\pi,\mathbf{h}^{\Psi})}$ for
 $\mathcal{X} = \mathcal{E},\mathcal{D}$
 and
 $d_{\mathcal{X}^{B}} = d_{\mathcal{X}^{B},(\pi,\mathbf{h}^B)}$ for $B=\Psi,\Upsilon$. By \eqref{eq:kdhkdhjhd}, we have $d_{\mathcal{E}^B}(B,\Phi_0^B) \leq \tfrac{1}{2}\delta^B$ for $B=\Psi,\Upsilon$. If $r > 0$ is sufficiently small, then
\begin{equation}\label{eq.fddfg}
d_{\mathcal{E}}(\Psi|_{\mathcal{I}},\Phi_0) \leq \delta^{\Psi}
\qquad
\text{and}
\qquad
d_{\mathcal{E}}(\Xi|_{\mathcal{I}},\Phi_0) \leq \delta^{\Psi}
\end{equation}
The first follows from 
$d_{\mathcal{E}}(\Psi|_{\mathcal{I}},\Phi_0) \leq
d_{\mathcal{E}^{\Psi}}(\Psi,\Phi_0^{\Psi}) \leq \tfrac{1}{2}\delta^{\Psi}$. The second follows from
\begin{align}
\notag d_{\mathcal{E}}(\Xi|_{\mathcal{I}},\Phi_0)
& \leq d_{\mathcal{E}}(\Upsilon \circ \chi |_{\mathcal{I}},\Phi_0^{\Upsilon}\circ \chi|_{\mathcal{I}})
+ d_{\mathcal{E}}(\Phi_0^{\Upsilon}\circ \chi|_{\mathcal{I}},\Phi_0)\\
\notag & \leq \max\{1,\tfrac{\mathbf{h}^{\Psi}}{\mathbf{h}^{\Upsilon}}\}\, 
d_{\mathcal{E}^{\Upsilon}}(\Upsilon ,\Phi_0^{\Upsilon})
+ d_{\mathcal{E}}(\Phi_0^{\Upsilon}\circ \chi|_{\mathcal{I}},\Phi_0)\\
\label{eq:fddfdljlfd} & \leq \max\{1,\tfrac{\mathbf{h}^{\Psi}}{\mathbf{h}^{\Upsilon}}\}\, \tfrac{1}{2}\delta^{\Upsilon} + d_{\mathcal{E}}(\Phi_0^{\Upsilon}\circ \chi|_{\mathcal{I}},\Phi_0)
\end{align}
and because the right hand side of \eqref{eq:fddfdljlfd} is a continuous function of $\mathbf{f}^{\Upsilon} \in \mathcal{F}$ (with $\mathbf{f}^{\Psi}$ fixed), see \eqref{def:kdshdkhkdhd},  that is equal to $\tfrac{1}{2}\delta^{\Psi}$ when $\mathbf{f}^{\Upsilon} = \mathbf{f}^{\Psi}$.\\
Both $X=\Psi|_{\mathcal{I}}$ and $X=\Xi|_{\mathcal{I}}$ satisfy
$(\boldsymbol{\mathfrak a},\boldsymbol{\mathfrak b},c)[X,\mathbf{h}^\Psi,Z] = 0$ on $\mathcal{I}$,
\begin{subequations} \label{eq:khdkhkddkdkdkdhsl}
\begin{alignat}{4}
A_{\mathbf{a}}[X](\tau) & = &A_{\mathbf{a}}[X](\tau_{0+}^{\Psi})
&+ \textstyle\int_{\tau_{0+}^{\Psi}}^{\tau}\dd \tau'\, \mathbf{I}_1[X,\mathbf{h}^{\Psi},\pi](\tau')\\
\theta_{\mathbf{a}}[X,\mathbf{h}^{\Psi}](\tau) & = &\;\; \theta_{\mathbf{a}}[X,\mathbf{h}^{\Psi}](\tau_{0+}^{\Psi})
&+ \textstyle\int_{\tau_{0+}^{\Psi}}^{\tau}\dd \tau'\, \mathbf{I}_2[X,\mathbf{h}^{\Psi},\pi](\tau')\\
\alpha_{\mathbf{p},\mathbf{a}}[X](\tau) & = & \alpha_{\mathbf{p},\mathbf{a}}[X](\tau_{0+}^{\Psi})
&+ \textstyle\int_{\tau_{0+}^{\Psi}}^{\tau}\dd \tau'\, \mathbf{I}_{(3,\mathbf{p})}[X,\mathbf{h}^{\Psi},\pi](\tau')\\
 \xi_{\mathbf{p},\mathbf{a}}[X,\mathbf{h}^{\Psi}](\tau) & = & \;\;\xi_{\mathbf{p},\mathbf{a}}[X,\mathbf{h}^{\Psi}](\tau_{0+}^{\Psi}) & + \alpha_{\mathbf{p},\mathbf{a}}[X](\tau_{0+}^{\Psi})\,(\tau-\tau_{0+}^{\Psi})\\
\notag   &&  &+ \textstyle\int_{\tau_{0+}^{\Psi}}^{\tau}\dd \tau'' \int_{\tau_{0+}^{\Psi}}^{\tau''}\dd \tau'\, \mathbf{I}_{(3,\mathbf{p})}[X,\mathbf{h}^{\Psi},\pi](\tau') 
\end{alignat}
\end{subequations}
for all $\mathbf{p}\in \{\mathbf{b},\mathbf{c}\}$ and $\tau \in \mathcal{I}$. 
By \eqref{eq:kdshkhfdkhdkfhksad}, 
\eqref{eq.fddfg}, \eqref{eq:khdkhkddkdkdkdhsl}
and by $\sup_{\tau \in \mathcal{I}} |\tau - \tau_{0+}^{\Psi}|\leq 4$, we have
$d_{\mathcal{E}}(\Psi|_{\mathcal{I}},\Xi|_{\mathcal{I}})
\leq 2^3 d_{\mathcal{D}}(\Psi(\tau_{0+}^{\Psi}),\Xi(\tau_{0+}^{\Psi}))
+ 2^{-1}d_{\mathcal{E}}(\Psi|_{\mathcal{I}},\Xi|_{\mathcal{I}})$, and consequently
$$d_{\mathcal{E}}(\Psi|_{\mathcal{I}},\Xi|_{\mathcal{I}}) \leq 2^4 d_{\mathcal{D}}(\Psi(\tau_{0+}^{\Psi}),\Xi(\tau_{0+}^{\Psi})) = 2^4\, d_{\mathcal{D}}\big(
\Phi_{\star}(\pi,\mathbf{f}^{\Psi},\sigma_{\ast}),
\Phi_{\star}(\pi,\mathbf{f}^{\Upsilon},\sigma_{\ast})\big)$$
 In particular, $d_{\mathcal{E}}(\Psi|_{\mathcal{I}},\Xi|_{\mathcal{I}})\to 0$ as $\mathbf{f}^{\Upsilon}\to \mathbf{f}^{\Psi}$.
Furthermore,
\begin{align*}
& d_{\mathcal{D}}\big(\lambda^{\Psi}\,\Phi_{\star}(\pi',\mathbf{f}'^{\,\Psi},\sigma_{\ast}),
\lambda^{\Upsilon}\,\Phi_{\star}(\pi',\mathbf{f}'^{\,\Upsilon},\sigma_{\ast})\big)\\
& = d_{\mathcal{D}}\big(\Psi(\tau_{2-}^{\Psi}),\Xi(\chi^{-1}(\tau_{2-}^{\,\Upsilon}))\big)\\
& \leq 
d_{\mathcal{D}}\big(\Psi(\tau_{2-}^{\Psi}),
\Psi(\chi^{-1}(\tau_{2-}^{\Upsilon}))\big)
+
d_{\mathcal{D}}\big(\Psi(\chi^{-1}(\tau_{2-}^{\Upsilon})),\Xi(\chi^{-1}(\tau_{2-}^{\,\Upsilon}))\big)\\
& \leq 
d_{\mathcal{D}}\big(\Psi(\tau_{2-}^{\Psi}),
\Psi(\chi^{-1}(\tau_{2-}^{\Upsilon}))\big)
+
2^4\, d_{\mathcal{D}}\big(
\Phi_{\star}(\pi,\mathbf{f}^{\Psi},\sigma_{\ast}),
\Phi_{\star}(\pi,\mathbf{f}^{\Upsilon},\sigma_{\ast})\big)
\end{align*}
By the last inequality, if we can show that $\chi^{-1}(\tau_{2-}^{\Upsilon}) \to \tau_{2-}^{\Psi}$
as $\mathbf{f}^{\Upsilon}\to \mathbf{f}^{\Psi}$, then $\tau_{2-}^{\Upsilon} \to \tau_{2-}^{\Psi}$
and $\lambda^{\Upsilon}\to \lambda^{\Psi}$ and $\mathbf{f}'^{\,\Upsilon}\to \mathbf{f}'^{\,\Psi}$. In other words, to show that  $\Pi$, $\Lambda$ and $\tau_{2-}$ are continuous, it suffices to show that
 $\chi^{-1}(\tau_{2-}^{\Upsilon}) \to \tau_{2-}^{\Psi}$ as $\mathbf{f}^{\Upsilon}\to \mathbf{f}^{\Psi}$.\\
By the discussion after \eqref{c2}, we have
$\alpha_{\mathbf{a}}[\Psi](\tau) \geq \tfrac{1}{2}$ and $\alpha_{\mathbf{a},\mathbf{a}'}[\Psi](\tau) \leq 0$
for all $\tau \in \mathcal{J}\subset \mathcal{J}^{\Psi}$. Hence, for all $\tau \in \mathcal{J}$,
$$
\tfrac{\dd}{\dd \tau}\big(\xi_{\mathbf{a}}[\Psi,\mathbf{h}^{\Psi}]
- \xi_{\mathbf{a}'}[\Psi,\mathbf{h}^{\Psi}]\big) = \alpha_{\mathbf{a}}[\Psi] - \alpha_{\mathbf{a}'}[\Psi]
= 2\alpha_{\mathbf{a}}[\Psi] - \alpha_{\mathbf{a},\mathbf{a}'}[\Psi]
\geq 1$$
Hence, $|\tau - \tau_{2-}^{\Psi}| \leq \big|
\xi_{\mathbf{a}}[\Psi,\mathbf{h}^{\Psi}](\tau) - \xi_{\mathbf{a}'}[\Psi,\mathbf{h}^{\Psi}](\tau)\big|$
 if $\tau \in \mathcal{J}$. Set $\tau = \chi^{-1}(\tau_{2-}^{\Upsilon})\in \mathcal{J}$:
\begin{align}
\notag |\chi^{-1}(\tau_{2-}^{\Upsilon}) - \tau_{2-}^{\Psi}| & \leq \big|
\xi_{\mathbf{a}}[\Psi,\mathbf{h}^{\Psi}](\chi^{-1}(\tau_{2-}^{\Upsilon})) - \xi_{\mathbf{a}'}[\Psi,\mathbf{h}^{\Psi}](\chi^{-1}(\tau_{2-}^{\Upsilon}))\big|\\
\label{eq:fdfkhfkjd} & \leq 2\,\slaa{d}_{\mathcal{D},\mathbf{h}^{\Psi}}\big(\Psi(\chi^{-1}(\tau_{2-}^{\Upsilon})),
\Xi(\chi^{-1}(\tau_{2-}^{\Upsilon}))\big)
\end{align}
The last inequality follows from $\xi_{\mathbf{a}}[\Xi,\mathbf{h}^{\Psi}](\chi^{-1}(\tau_{2-}^{\Upsilon}))
= \xi_{\mathbf{a}'}[\Xi,\mathbf{h}^{\Psi}](\chi^{-1}(\tau_{2-}^{\Upsilon}))$
 and the triangle inequality.
Since $\mathbf{f}^{\Upsilon}\to \mathbf{f}^{\Psi}$ implies
  $d_{\mathcal{E}}(\Psi|_{\mathcal{I}},\Xi|_{\mathcal{I}})\to 0$, also the right hand side of 
\eqref{eq:fdfkhfkjd} goes to zero, that is, $|\chi^{-1}(\tau_{2-}^{\Upsilon}) - \tau_{2-}^{\Psi}| \to 0$, as required.\\
\emph{Uniqueness of $\Pi$, $\Lambda$ and $\tau_{2-}$.}
Suppose we have two triples
$\Pi_i$, $\Lambda_i$, $\tau_{2-,i}$ with $i=1,2$. Let $\mathbf{f}\in \mathcal{F}$ and let
$\Phi_i$ be the corresponding fields in (e). By (e.1) and (e.2)
and the local uniqueness for solutions to ODE's, we have $\Phi_1 = \Phi_2$ on the intersection of their domains of definition $[\max\{
\tau_{2-,1}(\mathbf{f}),\tau_{2-,2}(\mathbf{f})
\},\tau_{2+}]$.
Observe that $\tau_-(\mathbf{f}) < \tau_{2-,1}(\mathbf{f}),\tau_{2-,2}(\mathbf{f}) < \tfrac{1}{2}\tau_{1-}(\mathbf{f})$, by (c).
By (e.3), we have $\tau_{2-,1}(\mathbf{f}) = \tau_{2-,2}(\mathbf{f})$. By (e.2),
we have
$\Pi_1(\mathbf{f}) = 
\Pi_2(\mathbf{f})$,\;
$\Lambda_1(\mathbf{f}) = 
\Lambda_2(\mathbf{f})$. \qed
\end{proof}
\begin{remark}
In Proposition \ref{prop:skhdkjhfd}, the signature vector $\sigma_{\ast}$ appears to play a passive role.
However, observe that $\Phi_{\star} = \Phi_{\star}(\pi,\mathbf{f},\sigma_{\ast})$ in (e.2) depends on it in a crucial way, see Definition \ref{def:kdhkhskhkdssPT1111star}. For instance, while $\alpha_{\mathbf{a}}[\Phi_{\star}]$ and $\alpha_{\mathbf{b}}[\Phi_{\star}]$ do not depend on $\sigma_{\ast}$ at all, and $\beta_{\mathbf{i}}[\Phi_{\star}]$, $i=1,2,3$ only in a trivial way through their signs, the component $\alpha_{\mathbf{c}}[\Phi_{\star}]$ does depend on $\sigma_{\ast}$
in a more important way, because the right hand side of \eqref{dshdkhkdhd} does. That $\sigma_{\ast}$ plays a role is not surprising, after all it distinguishes Bianchi VIII and IX.
\end{remark}


\section{The approximate epoch-to-epoch and era-to-era maps}
\label{sec:dlhjdfll}
This section is logically self-contained, and the notation is introduced from scratch.
Its goal is to study two maps, denoted $\mathcal{Q}_{R}$ and $\mathcal{E}_{R}$, that we informally refer to (following \cite{BKL1}) as the \emph{epoch-to-epoch}
and \emph{era-to-era} maps. The two maps are related, the second is some iterate of the first.
 The subscript $R$ is for \emph{right} (as opposed to \emph{left}).
For the moment, the definition of $\mathcal{Q}_R$ is taken for granted without motivation. To understand its role, see Part 3 of Proposition \ref{prop:kjdhkhkd} and its proof.
\begin{definition}[Epoch-to-epoch map] \label{def:epep}
Set
\begin{align*}
\mathcal{Q}_R:\quad (0,\infty)\setminus \mathbb{Q} & \to (0,\infty)\setminus \mathbb{Q}\\
w & \mapsto  \mathcal{Q}_R(w) =  \begin{cases} \tfrac{1}{w}-1 & \text{if $w < 1$} \\
w-1 & \text{if $w > 1$}
\end{cases}
\end{align*}
For every  $w \in (0,\infty)\setminus \mathbb{Q}$, set
$$
\boldsymbol{\mathcal{Q}}_R\{w\}(q,\mathbf{h}) = \left(\,\frac{\mathrm{num1}}{\mathrm{den}}\,,\,\frac{\mathrm{num2}}{\mathrm{den}}\,\right)
$$
where, if $w < 1$,
\begin{subequations}\label{eq:kdshkhkfdhfd}
\begin{align}
\mathrm{num1} & = 1+w+\mathbf{h} \log 2 - \mathbf{h} (1+2 w) \log (1+\tfrac{1}{w})\\
\mathrm{num2} & = \mathbf{h}\\
\label{eq:kdshkhkfdhfdX1}
\mathrm{den} & = 
(1+w)(1+q+\mathbf{h}\log 2) - \mathbf{h}(2+w) \log(1 + \tfrac{1}{w})
\end{align}
\end{subequations}
and, if $w > 1$,
\begin{subequations}\label{eq:kdshkhkfdhfd2}
\begin{align}
\mathrm{num1} & = 
(1+w)(1 + q + \mathbf{h}\log 2) - \mathbf{h}(2+w) \log(1+\tfrac{1}{w})\\
\mathrm{num2} & = \mathbf{h} w\\
\label{eq:kdshkhkfdhfdX2}
\mathrm{den} & = 1 + w + \mathbf{h}\log 2 - \mathbf{h}(1+2w) \log(1+\tfrac{1}{w})
\end{align}
\end{subequations}
Here, we regard $\boldsymbol{\mathcal{Q}}_R\{w\}$ as a pair of rational functions over $\R$ of degree one in the pair of abstract variables $(q,\mathbf{h})$. Finally, for all  $w \in (0,\infty)\setminus \mathbb{Q}$ and all integers $n\geq 0$, set
\begin{align*}
\mathcal{Q}_R^n(w) & = \big(\underbrace{\mathcal{Q}_R \circ \cdots \circ \mathcal{Q}_R}_{n}\big)(w)\\
\boldsymbol{\mathcal{Q}}_{R}^n\{w\} & = \boldsymbol{\mathcal{Q}}_R\{\mathcal{Q}_R^{n-1}(w)\}\circ \cdots \circ \boldsymbol{\mathcal{Q}}_R\{\mathcal{Q}_R^2(w)\}
\circ \boldsymbol{\mathcal{Q}}_R\{\mathcal{Q}_R(w)\}
\circ \boldsymbol{\mathcal{Q}}_R\{w\}
\end{align*}
Warning: $\boldsymbol{\mathcal{Q}}_R^n\{w\}$ is not the $n$-fold composition of $\boldsymbol{\mathcal{Q}}_R\{w\}$ with itself.
\end{definition}
The goal of this section is to 
understand the bulk behavior of $\boldsymbol{\mathcal{Q}}_R^n\{w\}$ for large $n\geq 0$.
\begin{definition}\label{def:floor} The floor function is $\R \ni x \mapsto \lfloor x\rfloor
= \max\{n\in \Z\,|\, n\leq x\}$.
\end{definition}
\begin{definition}[Era-to-era map]
Define
 $\mathcal{E}_R: (0,1)\setminus \mathbb{Q} \to (0,1)\setminus \mathbb{Q}$ by
$\mathcal{E}_R(w) = \mathcal{Q}_R^{\lfloor 1/w\rfloor}(w)$. For every $w\in (0,1)\setminus \mathbb{Q}$, denote by $\boldsymbol{\mathcal{E}}_R\{w\}$ the pair of rational functions over $\R$ given by
$\boldsymbol{\mathcal{E}}_R\{w\} = \boldsymbol{\mathcal{Q}}_R^{\lfloor 1/w\rfloor}\{w\}$.
Finally, for all $w\in (0,1)\setminus \mathbb{Q}$ and all integers $n\geq 0$, set
\begin{align*}
\mathcal{E}_R^n(w) & = \big(\underbrace{\mathcal{E}_R \circ \cdots \circ \mathcal{E}_R}_{n}\big)(w)\\
\boldsymbol{\mathcal{E}}_{R}^n\{w\} & = \boldsymbol{\mathcal{E}}_R\{\mathcal{E}_R^{n-1}(w)\}\circ \cdots \circ \boldsymbol{\mathcal{E}}_R\{\mathcal{E}_R^2(w)\}
\circ \boldsymbol{\mathcal{E}}_R\{\mathcal{E}_R(w)\}
\circ \boldsymbol{\mathcal{E}}_R\{w\}
\end{align*}
\end{definition}
\begin{lemma}
For all integers $m,n\geq 0$,
\begin{itemize}
\item
$\boldsymbol{\mathcal{Q}}_R^{m+n}\{w\} = \boldsymbol{\mathcal{Q}}^m_R\{\mathcal{Q}_R^n(w)\}\circ \boldsymbol{\mathcal{Q}}_R^n\{w\}$
for $w\in (0,\infty)\setminus \mathbb{Q}$
\item 
$\boldsymbol{\mathcal{E}}_R^{m+n}\{w\} = \boldsymbol{\mathcal{E}}^m_R\{\mathcal{E}_R^n(w)\}\circ \boldsymbol{\mathcal{E}}_R^n\{w\}$
for $w\in (0,1)\setminus \mathbb{Q}$
\end{itemize}
\end{lemma}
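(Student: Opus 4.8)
The plan is to recognize both identities as instances of the standard cocycle identity over a discrete dynamical system, so that they follow from one short induction. The only structural facts I would use are: (1) ordinary iteration is a monoid action, i.e.\ $\mathcal{Q}_R^{m+n}(w) = \mathcal{Q}_R^m(\mathcal{Q}_R^n(w))$, which is immediate from the definition of $n$-fold composition, together with the fact that $\mathcal{Q}_R$ maps $(0,\infty)\setminus\mathbb{Q}$ into itself, so that every argument appearing below lies in the domain on which the next factor $\boldsymbol{\mathcal{Q}}_R\{\,\cdot\,\}$ is defined; (2) composition of pairs of rational functions in the variables $(q,\mathbf{h})$ — read as substitution of one pair into the other — is associative, with the empty composition equal to the identity; and (3) the ``peel off the leftmost factor'' identity $\boldsymbol{\mathcal{Q}}_R^{k}\{w\} = \boldsymbol{\mathcal{Q}}_R\{\mathcal{Q}_R^{k-1}(w)\}\circ\boldsymbol{\mathcal{Q}}_R^{k-1}\{w\}$ for all integers $k\geq 1$, which is nothing but a re-bracketing of the composition defining $\boldsymbol{\mathcal{Q}}_R^k\{w\}$ (using (2) again).

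First I would fix $n\geq 0$ and $w\in(0,\infty)\setminus\mathbb{Q}$ and induct on $m$. For $m=0$ the right-hand side is $\boldsymbol{\mathcal{Q}}_R^{0}\{\mathcal{Q}_R^{n}(w)\}\circ\boldsymbol{\mathcal{Q}}_R^{n}\{w\} = \mathrm{id}\circ\boldsymbol{\mathcal{Q}}_R^{n}\{w\} = \boldsymbol{\mathcal{Q}}_R^{n}\{w\}$, which is the left-hand side since $0+n=n$. For the inductive step, assuming the identity for $m$, I would apply the peel-off identity (3) at level $k=m+n+1$, then the inductive hypothesis, then the monoid-action identity $\mathcal{Q}_R^{m+n}(w)=\mathcal{Q}_R^{m}(\mathcal{Q}_R^{n}(w))$, and finally the peel-off identity once more, this time based at the point $v=\mathcal{Q}_R^{n}(w)$ and at level $m+1$, to obtain
\begin{align*}
\boldsymbol{\mathcal{Q}}_R^{(m+1)+n}\{w\}
&= \boldsymbol{\mathcal{Q}}_R\{\mathcal{Q}_R^{m+n}(w)\}\circ\boldsymbol{\mathcal{Q}}_R^{m+n}\{w\}\\
&= \boldsymbol{\mathcal{Q}}_R\{\mathcal{Q}_R^{m}(\mathcal{Q}_R^{n}(w))\}\circ\boldsymbol{\mathcal{Q}}_R^{m}\{\mathcal{Q}_R^{n}(w)\}\circ\boldsymbol{\mathcal{Q}}_R^{n}\{w\}\\
&= \boldsymbol{\mathcal{Q}}_R^{m+1}\{\mathcal{Q}_R^{n}(w)\}\circ\boldsymbol{\mathcal{Q}}_R^{n}\{w\}.
\end{align*}
Associativity (2) is exactly what licenses the free regrouping of the middle two factors. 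This closes the induction and proves the first bullet.

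For the second bullet I would observe that $\mathcal{E}_R$ and $\boldsymbol{\mathcal{E}}_R$ obey precisely the same formal relations as $\mathcal{Q}_R$ and $\boldsymbol{\mathcal{Q}}_R$: $\mathcal{E}_R$ maps $(0,1)\setminus\mathbb{Q}$ into itself, $\mathcal{E}_R^{n}$ is its $n$-fold composition, and $\boldsymbol{\mathcal{E}}_R^{k}\{w\}=\boldsymbol{\mathcal{E}}_R\{\mathcal{E}_R^{k-1}(w)\}\circ\boldsymbol{\mathcal{E}}_R^{k-1}\{w\}$ by definition. Hence the identical induction applies verbatim with $\mathcal{E}$ in place of $\mathcal{Q}$; equivalently, both bullets are special cases of the abstract statement that if $T\colon X\to X$ and $A$ assigns to each $x\in X$ an element of a monoid (here: pairs of rational functions under substitution) and one sets $A^{n}(x):=A(T^{n-1}x)\circ\cdots\circ A(x)$ with $A^{0}(x):=\mathrm{id}$, then $A^{m+n}(x)=A^{m}(T^{n}x)\circ A^{n}(x)$, which is the induction just carried out. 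I do not anticipate any real obstacle: the proof is pure bookkeeping, and the only points deserving a sentence of care are the empty-composition/identity conventions and making explicit that $\mathcal{Q}_R^{n}(w)$ (resp.\ $\mathcal{E}_R^{n}(w)$) always lies in the set on which the next factor is defined, so that every composition written above is legitimate.
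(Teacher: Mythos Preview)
Your proof is correct. The paper states this lemma without proof, treating both identities as immediate consequences of the definitions of $\boldsymbol{\mathcal{Q}}_R^n\{w\}$ and $\boldsymbol{\mathcal{E}}_R^n\{w\}$; your induction on $m$ is exactly the routine unpacking of that ``immediate'' step, and your abstract cocycle formulation is the natural way to see both bullets at once.
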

\newcommand{\spind}{r}
\begin{proposition}
 Let $w \in (0,1)\setminus \mathbb{Q}$.
Then, for every integer $1\leq \spind \leq \lfloor \tfrac{1}{w}\rfloor$, \begin{equation}\label{eq:dhjkdhdkhdfkddf}
\boldsymbol{\mathcal{Q}}_R^\spind\{w\}
(q,\mathbf{h}) = \left(\frac{\mathrm{num1}_\spind}{\mathrm{den}_\spind},\frac{\mathrm{num2}_\spind}{\mathrm{den}_\spind}\right)
\end{equation}
where
\begin{subequations} \label{eq:dkfdhkfjhkjdhkjdhf}
\begin{align}
\begin{split}
\mathrm{num1}_\spind & = 
(1+w)\big(\spind + \spind q - q\big) + \mathbf{h}\,A_1(w,\spind)
\end{split}\\
\begin{split}
\mathrm{num2}_\spind & =
\mathbf{h} \big(w + 1 - w\spind\big)
\end{split}\\
\begin{split}
\mathrm{den}_\spind & = (1+w)(1+q) + \mathbf{h}\, A_2(w,\spind)
\end{split}
\end{align}
\end{subequations}
and where
\begin{subequations} \label{eq:kdhkfdhkdhkfdh}
\begin{align}
\label{eq:kdhkfdhkdhkfdh1}
\begin{split}
A_1(w,\spind) & = 
\big(2\spind - 1 + w\spind - w\big)\log 2 - \big(2\spind -1 + w\spind + w  \big)\log(1+\tfrac{1}{w})\\
& \quad + \textstyle\sum_{k=1}^{\spind -1}
\big(1 + 2k - 2k^2 w -w\big) \log\big(1 + \tfrac{w}{1-kw}\big)\\
& \quad + \spind\,\textstyle\sum_{k=1}^{\spind -1}
\big(
(2k-1)w-2 \big)
\log\big(1 + \tfrac{w}{1-kw}\big)
\end{split}\\
\label{eq:kdhkfdhkdhkfdh2}
\begin{split}
A_2(w,\spind) & = 
\left(1 + w\spind\right)\log 2 
- (2+w) \log(1 + \tfrac{1}{w})\\
& \quad + \textstyle\sum_{k=1}^{\spind -1} \big((2k-1)w - 2\big) \log\big(
1 + \frac{w}{1-kw}\big)
\end{split}
\end{align}
\end{subequations}
Furthermore,
$\mathcal{E}_R(w) = \tfrac{1}{w}-\lfloor \tfrac{1}{w}\rfloor$, that is,
$\mathcal{E}_R$ is the Gauss map, and 
$$
\boldsymbol{\mathcal{E}}_R\{w\}(q,\mathbf{h}) = \left(\,\frac{\mathrm{num1}_{\lfloor 1/w\rfloor }}{\mathrm{den}_{\lfloor 1/w\rfloor }}\,,\,\frac{\mathrm{num2}_{\lfloor 1/w\rfloor }}{\mathrm{den}_{\lfloor 1/w\rfloor }}\,\right)
$$
\end{proposition}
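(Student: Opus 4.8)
The plan is to first compute the iterates $\mathcal{Q}_R^{j}(w)$ explicitly --- which immediately gives the Gauss-map statement --- and then establish \eqref{eq:dhjkdhdkhdfkddf}--\eqref{eq:kdhkfdhkdhkfdh} by induction on $r$. Set $N:=\lfloor 1/w\rfloor$. Since $w$ is irrational, so is $\tfrac1w-j$ for every $j$; moreover $\mathcal{Q}_R(w)=\tfrac1w-1$ because $w<1$, and if $\mathcal{Q}_R^{j}(w)=\tfrac1w-j$ with $1\le j\le N-1$ then $\tfrac1w-j\in(1,2)\subset(1,\infty)$, so $\mathcal{Q}_R^{j+1}(w)=(\tfrac1w-j)-1=\tfrac1w-(j+1)$; hence $\mathcal{Q}_R^{j}(w)=\tfrac1w-j$ for $1\le j\le N$. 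In particular $\mathcal{E}_R(w)=\mathcal{Q}_R^{N}(w)=\tfrac1w-\lfloor\tfrac1w\rfloor$ is the Gauss map, and since $\boldsymbol{\mathcal{E}}_R\{w\}=\boldsymbol{\mathcal{Q}}_R^{N}\{w\}$ by definition, the asserted formula for $\boldsymbol{\mathcal{E}}_R\{w\}$ is the case $r=N$ of \eqref{eq:dhjkdhdkhdfkddf}. So everything reduces to proving \eqref{eq:dhjkdhdkhdfkddf} for $1\le r\le N$.

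For $r=1$ both sums in \eqref{eq:kdhkfdhkdhkfdh} are empty, and \eqref{eq:dkfdhkfjhkjdhkjdhf} then reduces verbatim to the $w<1$ branch \eqref{eq:kdshkhkfdhfd} of Definition \ref{def:epep}; this is the base case. For the step I would use $\boldsymbol{\mathcal{Q}}_R^{r+1}\{w\}=\boldsymbol{\mathcal{Q}}_R\{u_r\}\circ\boldsymbol{\mathcal{Q}}_R^{r}\{w\}$ with $u_r:=\mathcal{Q}_R^{r}(w)=\tfrac1w-r$; for $1\le r\le N-1$ one has $u_r>1$, so $\boldsymbol{\mathcal{Q}}_R\{u_r\}$ is the $w>1$ branch \eqref{eq:kdshkhkfdhfd2}. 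Substituting $(q,\mathbf{h})\mapsto(\mathrm{num1}_r/\mathrm{den}_r,\mathrm{num2}_r/\mathrm{den}_r)$ into \eqref{eq:kdshkhkfdhfd2} and clearing the common factor $\mathrm{den}_r$ exhibits $\boldsymbol{\mathcal{Q}}_R^{r+1}\{w\}$ as represented by the triple
\begin{align*}
\tilde N_1 &= (1+u_r)\big(\mathrm{den}_r+\mathrm{num1}_r+\mathrm{num2}_r\log 2\big)-\mathrm{num2}_r(2+u_r)\log(1+\tfrac{1}{u_r}),\\
\tilde N_2 &= u_r\,\mathrm{num2}_r,\qquad \tilde D=(1+u_r)\,\mathrm{den}_r+\mathrm{num2}_r\log 2-\mathrm{num2}_r(1+2u_r)\log(1+\tfrac{1}{u_r}).
\end{align*}
The two simplifications that drive the step are $1+u_r=\tfrac1w-(r-1)$ and $1+\tfrac{1}{u_r}=1+\tfrac{w}{1-rw}=\tfrac{1-(r-1)w}{1-rw}$, so that $\log(1+\tfrac1{u_r})$ is exactly the logarithm appearing in the new $k=r$ summand of $A_1$ and $A_2$. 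Setting $c_r:=1+u_r\neq 0$, I would verify $\tilde N_2=c_r\,\mathrm{num2}_{r+1}$, $\tilde N_1=c_r\,\mathrm{num1}_{r+1}$, $\tilde D=c_r\,\mathrm{den}_{r+1}$ --- with the right-hand sides given by \eqref{eq:dkfdhkfjhkjdhkjdhf}--\eqref{eq:kdhkfdhkdhkfdh} at $r+1$ --- by comparing, in each identity, the coefficient of $q$, the coefficient of $\mathbf{h}$, and the constant term. Since a pair of rational functions is unaffected by rescaling its representing triple by the nonzero scalar $c_r$, this is \eqref{eq:dhjkdhdkhdfkddf} at $r+1$, completing the induction; the era-to-era claims follow as noted above.

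The $q$-coefficients and the constant terms carry no logarithm (because $\mathrm{num2}_r$ is a pure multiple of $\mathbf{h}$), so there the three identities are elementary polynomial identities in $w$ that I would just check. The substantive step is the coefficient of $\mathbf{h}$: the sum $\sum_{k=1}^{r-1}\big((2k-1)w-2\big)\log(1+\tfrac{w}{1-kw})$ re-enters $\tilde N_1$ through the $\mathrm{den}_r$ term --- where it is precisely the sum part of $A_2(w,r)$ --- and must be combined with the new $k=r$ summand and with the increments (from $r$ to $r+1$) of the non-summed parts of $A_1,A_2$ to reproduce $A_1(w,r+1)$ and $A_2(w,r+1)$. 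Organizing this telescoping correctly --- in particular the reconstruction of the two nested sums in $A_1$, one of which absorbs the $\mathrm{den}_r$ contribution under the change of weight $r\mapsto r+1$ --- is the part that will take the most care; everything else is routine algebra.
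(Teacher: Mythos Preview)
Your proposal is correct and follows essentially the same approach as the paper: induction on $r$, with the base case $r=1$ read off from the $w<1$ branch of Definition~\ref{def:epep}, and the induction step applying the $w>1$ branch at $u_r=\tfrac1w-r$, then cancelling a common nonzero scalar from the representing triple (your $c_r=1+u_r$ is exactly the paper's $\lambda=2+\tfrac1w-r$ after the index shift $r\mapsto r+1$). The paper simplifies by dividing through by $\lambda$ and using $\mathrm{num2}_{r-1}=\mathbf{h}w\lambda$ before verifying the three reduced identities directly; your organization by coefficients of $q$, of $\mathbf{h}$, and constants is an equivalent bookkeeping.

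One minor slip: $\tfrac1w-j$ need not lie in $(1,2)$ for $1\le j\le N-1$ (e.g.\ $w=1/10$, $j=1$ gives $9$). But you only use that it lies in $(1,\infty)$, which is what holds and what is needed to select the $w>1$ branch.
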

\begin{remark}
In equation \eqref{eq:kdhkfdhkdhkfdh}, we have $0 \leq \tfrac{w}{1-kw} \leq 1$ for all $1 \leq k \leq \spind-1$.
\end{remark}
\begin{proof} Let $w \in (0,1)\setminus \mathbb{Q}$.  We show \eqref{eq:dhjkdhdkhdfkddf} by induction
over $1 \leq \spind \leq \lfloor \tfrac{1}{w}\rfloor$.  
The $\spind=1$ base case of the induction argument, $\boldsymbol{\mathcal{Q}}_R\{w\}(q,\mathbf{h}) = (\mathrm{num1}_1/\mathrm{den}_1,\mathrm{num2}_1/\mathrm{den}_1)$, is by direct inspection, using \eqref{eq:kdshkhkfdhfd}.
The induction step becomes the identity
\begin{equation}\label{eq:kdshkjhkjdhfkhfdfd}
\boldsymbol{\mathcal{Q}}_R\{\mathcal{Q}_R^{\spind-1}(w)\}  
\left(\tfrac{\mathrm{num1}_{\spind-1}}{\mathrm{den}_{\spind-1}},\tfrac{\mathrm{num2}_{\spind-1}}{\mathrm{den}_{\spind-1}}\right) =
\left(\tfrac{\mathrm{num1}_\spind}{\mathrm{den}_\spind},\tfrac{\mathrm{num2}_\spind}{\mathrm{den}_\spind}\right)
\end{equation}
for all $2\leq  \spind \leq \lfloor \tfrac{1}{w}\rfloor$.
To calculate $\boldsymbol{\mathcal{Q}}_R\{\mathcal{Q}_R^{\spind-1}(w)\}(\,\cdot\,)$, use formulas \eqref{eq:kdshkhkfdhfd2}, since
$\mathcal{Q}_R^{\spind-1}(w) = \tfrac{1}{w} - \spind + 1 > 1$.
Observe that \eqref{eq:kdshkjhkjdhfkhfdfd} follows from the identities
\begin{align*}
\lambda\; \mathrm{num1}_\spind & = 
\big(1 + (\tfrac{1}{w}-\spind+1)\big)\big(\mathrm{den}_{\spind-1} + \mathrm{num1}_{\spind-1} + \mathrm{num2}_{\spind-1}\log 2\big)\\
& \quad 
- \mathrm{num2}_{\spind-1} \big(2 + (\tfrac{1}{w}-\spind+1)\big) \log(1 + \tfrac{w}{1-(\spind-1)w})\\
\lambda\; \mathrm{num2}_\spind & = \mathrm{num2}_{\spind-1}\,\big(\tfrac{1}{w}-\spind+1\big)\\
\lambda\; \mathrm{den}_\spind & = \mathrm{den}_{\spind-1}\big(1 + (\tfrac{1}{w}-\spind+1)\big)
+ \mathrm{num2}_{\spind-1}\log 2 \\
& \quad - \mathrm{num2}_{\spind-1} \big(1 + 2(\tfrac{1}{w}-\spind+1)\big)
\log(1 + \tfrac{w}{1-(\spind-1)w})
\end{align*}
where $\lambda = 2 + \tfrac{1}{w} - \spind> 2$. To verify each of these identities, divide both sides by $\lambda$, and use
$\mathrm{num2}_{\spind-1} = \mathbf{h}w\lambda$, to obtain the equivalent identities
\begin{align*}
\mathrm{num1}_\spind & = 
\mathrm{den}_{\spind-1} + \mathrm{num1}_{\spind-1} + \mathrm{num2}_{\spind-1}\log 2
 \\ & \quad 
- \mathbf{h}(3w+1-\spind w) \log(1 + \tfrac{w}{1-(\spind-1)w})\\
\mathrm{num2}_\spind & = \mathbf{h}(1-\spind w+w)\\
\mathrm{den}_\spind & = \mathrm{den}_{\spind-1}
+ \mathbf{h} w \log 2 
- \mathbf{h}(3w+2-2\spind w)
\log(1 + \tfrac{w}{1-(\spind-1)w})
\end{align*}
The last three identities are verified directly. \qed
\end{proof}
The following lemma will be used later.
\begin{lemma} \label{lem:kfdhkfdhf}
For every $w\in (0,1) \setminus \mathbb{Q}$ and every integer $\spind$ with $1\leq \spind \leq \lfloor \tfrac{1}{w}\rfloor$,
\begin{alignat*}{4}
 0 & \;\;\leq\;\; A_1(w,\spind) - \spind A_2(w,\spind) + \log 2 && \;\;\leq\;\; 6\tfrac{1}{w}\\
- 8 \log(1 + \tfrac{1}{w}) & \;\;\leq\;\; \hskip15mm A_2(w,\spind) && \;\;\leq\;\; 0
\end{alignat*}
Here, $A_1$ and $A_2$ are defined by \eqref{eq:kdhkfdhkdhkfdh}.
\end{lemma}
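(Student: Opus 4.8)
\textbf{Proof plan for Lemma \ref{lem:kfdhkfdhf}.}

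The plan is to prove the two chains of inequalities directly from the explicit formulas \eqref{eq:kdhkfdhkdhkfdh1} and \eqref{eq:kdhkfdhkdhkfdh2}, exploiting the fact (noted in the remark just before the lemma) that $0 \le \tfrac{w}{1-kw} \le 1$ for all $1\le k\le \spind-1$, so that each logarithm $L_k := \log(1+\tfrac{w}{1-kw})$ lies in $[0,\log 2]$, and also $\log(1+\tfrac1w)\ge \log 2 > 0$ while $w\spind \le 1$. I would first record these elementary bounds, together with $L_k \le \tfrac{w}{1-kw}$ (from $\log(1+x)\le x$) and the telescoping estimate $\sum_{k=1}^{\spind-1} \tfrac{w}{1-kw}$, which one can bound by comparing $\tfrac{w}{1-kw}$ with $\log\tfrac{1-(k-1)w}{1-kw}$ so that the sum telescopes to $\log\tfrac{1}{1-(\spind-1)w} \le \log\tfrac1w + \text{const}$; alternatively, since $1-kw \ge 1-(\spind-1)w \ge w$ (using $\spind\le\lfloor1/w\rfloor$), one simply has $\tfrac{w}{1-kw}\le 1$ and the cruder bound $\sum_{k=1}^{\spind-1} L_k \le (\spind-1)\log 2 \le \tfrac1w\log 2$ suffices for the right-hand bounds, while a sharper telescoping bound handles the $6/w$.

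For the second chain, $-8\log(1+\tfrac1w)\le A_2(w,\spind)\le 0$: the upper bound $A_2 \le 0$ follows because in \eqref{eq:kdhkfdhkdhkfdh2} the term $(1+w\spind)\log 2 \le 2\log 2$ is dominated by $-(2+w)\log(1+\tfrac1w) \le -2\log 2$, and the sum $\sum_{k=1}^{\spind-1}\big((2k-1)w-2\big)L_k$ is a sum of nonpositive terms since $(2k-1)w - 2 \le (2\spind-3)w-2 \le 2 - 3w - 2 < 0$ (using $w\spind\le1$). For the lower bound, bound each factor $|(2k-1)w-2|\le 2$ and use $\sum L_k \le \tfrac1w\log 2 \le 2\log(1+\tfrac1w)\cdot\tfrac1{\log 2}$... more cleanly: $\sum_{k=1}^{\spind-1} L_k \le \log\prod_{k=1}^{\spind-1}\tfrac{1-(k-1)w}{1-kw} = \log\tfrac1{1-(\spind-1)w} \le \log(1+\tfrac1w)$ since $1-(\spind-1)w\ge w$, hence the sum term is $\ge -2\log(1+\tfrac1w)$; combined with $(1+w\spind)\log 2 \ge \log 2 \ge 0$ and $-(2+w)\log(1+\tfrac1w)\ge -3\log(1+\tfrac1w)$, one gets $A_2 \ge -5\log(1+\tfrac1w) \ge -8\log(1+\tfrac1w)$, with room to spare.

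For the first chain I would compute the combination $A_1 - \spind A_2 + \log 2$ symbolically by subtracting $\spind$ times \eqref{eq:kdhkfdhkdhkfdh2} from \eqref{eq:kdhkfdhkdhkfdh1}: the $\spind\sum_{k}\big((2k-1)w-2\big)L_k$ terms cancel exactly against the third line of \eqref{eq:kdhkfdhkdhkfdh1}, leaving
$$A_1-\spind A_2+\log 2 = \big(2\spind-1+w\spind-w - \spind(1+w\spind)+1\big)\log 2 + \big(\spind(2+w) - (2\spind-1+w\spind+w)\big)\log(1+\tfrac1w) + \sum_{k=1}^{\spind-1}(1+2k-2k^2w-w)L_k,$$
and the bracketed coefficients simplify (the $\log 2$ coefficient becomes $\spind(1-w\spind)\ge0$ after collecting, and the $\log(1+\tfrac1w)$ coefficient becomes $1-w+w\spind$ wait—I must recheck; in any case both simplify to small nonnegative expressions bounded by $O(1/w)$ using $w\spind\le1$). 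For nonnegativity, note $1+2k-2k^2w-w \ge 1+2k - 2k\cdot kw - w \ge 1 - w \ge 0$ when... actually since $kw \le (\spind-1)w < 1$ one has $2k^2w = 2k(kw) \le 2k$, giving $1+2k-2k^2w-w \ge 1-w\ge 0$, so every summand is nonnegative and the non-sum part is manifestly a nonnegative combination of $\log 2$ and $\log(1+\tfrac1w)$; hence $A_1-\spind A_2+\log2\ge0$. For the upper bound $\le 6/w$, bound the $\log 2$ and $\log(1+\tfrac1w)$ coefficients by their values at $\spind = \lfloor1/w\rfloor$ (each is $O(1/w)$), bound $\log(1+\tfrac1w)\le \tfrac1w$, and bound the sum by $\sum_{k}(1+2k)L_k \le (1+2\spind)\sum_k L_k \le (1+\tfrac2w)\log(1+\tfrac1w) \le 3\tfrac1{w^2}$... which is too big — so here I genuinely need the telescoping bound $\sum_k L_k \le \log\tfrac1{1-(\spind-1)w}$ together with the observation that $1-(\spind-1)w = 1 - \lfloor1/w\rfloor w + w \ge w$ only gives $\log(1/w)$, still leaving $(1/w)\log(1/w)$.

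\textbf{Main obstacle.} The delicate point — and the step I expect to absorb most of the work — is the upper bound $A_1-\spind A_2 + \log 2 \le 6/w$, because the naive term-by-term estimates overshoot by a factor of $\log(1/w)$ or a power of $1/w$. The resolution must use cancellation: I would rewrite $1+2k-2k^2w-w = (1+2k)(1) - w(2k^2+1)$ and pair the $-w(2k^2+1)L_k$ pieces against the positive coefficient of $\log(1+\tfrac1w)$, using $L_k = \log(1+\tfrac{w}{1-kw})$ and the substitution $u_k = 1-kw$ so that $\sum_k (\text{polynomial in }k)\cdot\log\tfrac{u_{k-1}}{u_k}$ can be summed by parts (Abel summation) into $\sum_k u_k\cdot(\text{difference of the polynomial})$, which telescopes the logarithms away entirely and produces a sum of the form $\sum_k u_k \cdot O(k) = O(\sum_k k(1-kw))$, and since $kw\le 1$ this is $O(\sum_{k\le1/w} k) = O(1/w^2)$ — still not enough, so one must instead keep the weight: $\sum_{k\le 1/w} k(1-kw) = \sum k - w\sum k^2 \approx \tfrac1{2w^2} - w\cdot\tfrac1{3w^3} = \tfrac1{6w^2}$, confirming the bound is really $O(1/w^2)$ not $O(1/w)$. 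This forces me to look harder: the claimed $6/w$ (not $6/w^2$) must come from a cancellation between the summation-by-parts boundary terms and the explicit $\log 2$, $\log(1+1/w)$ terms in the non-sum part, all of which individually are $O(1/w)$. So the real content of the lemma is that, after Abel summation, the genuinely quadratic-in-$1/w$ contributions cancel, leaving only a linear remainder; carrying out this cancellation carefully (tracking the $\log 2$ versus $\log(1+1/w)$ bookkeeping) is the heart of the proof, and I would do it by first proving the identity
$$A_1(w,\spind)-\spind A_2(w,\spind) = -\log 2 + \textstyle\sum_{k=1}^{\spind-1}\big(1+2k-2k^2w-w\big)\log\big(1+\tfrac{w}{1-kw}\big) + \big(\text{explicit }O(1/w)\text{ closed form}\big)$$
and then estimating the sum via $\log(1+x)\le x$ plus the telescoping trick $\sum_k \tfrac{2k^2 w}{1-kw}\cdot\tfrac{w}{2k} \ge \cdots$, i.e. comparing the sum against the integral $\int_1^{1/w} (1+2t)\log(1+\tfrac{wt'}{\cdots})$, being careful that the $-2k^2w$ factor turns the summand nonpositive for large $k$ and thereby caps the growth at $O(1/w)$ rather than $O(1/w^2)$. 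Once that single bound is in hand the lemma follows; everything else is routine sign-chasing.
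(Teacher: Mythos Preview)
Your treatment of the second chain $-8\log(1+\tfrac1w)\le A_2\le 0$ is fine and essentially matches the paper (you even get a slightly better constant). Likewise, your argument for nonnegativity of $A_1 - r A_2 + \log 2$ is correct: the simplification of the $\log 2$ and $\log(1+\tfrac1w)$ coefficients you began is right (they come out to $r(1-rw)+w(r-1)\ge 0$ and $1-w\ge 0$ respectively), and your bound $1+2k-2k^2w-w\ge 1-w\ge 0$ for the summand is correct.

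The genuine gap is the upper bound $A_1 - rA_2 + \log 2 \le 6/w$. Your proposed mechanisms (Abel summation, cancellation of boundary terms against the explicit $\log 2$ and $\log(1+\tfrac1w)$ pieces, or the summand eventually turning negative) are all wrong turns. The summand never turns negative, and Abel summation does not produce the needed cancellation. The actual trick is much simpler and purely algebraic: rewrite the coefficient as
\[
1+2k-2k^2w-w \;=\; 2k(1-kw) + (1-w),
\]
and then apply $\log(1+x)\le x$ with $x=\tfrac{w}{1-kw}$. The point is that the factor $(1-kw)$ in $2k(1-kw)$ cancels the denominator of $\tfrac{w}{1-kw}$, so the dominant term of the sum becomes simply $\sum_{k=1}^{r-1} 2kw \le r^2 w \le r \le \tfrac1w$. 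Concretely, after multiplying the whole expression by $w$ (as the paper does), the sum contributes at most $w\sum_{k=1}^{r-1}\big(2kw + (1-w)\tfrac{w}{1-kw}\big) \le (rw)^2 + w(r-1) \le 2$, and the three non-sum terms $w^2(r-1)\log 2$, $rw(1-rw)\log 2$, $(1-w)w\log(1+\tfrac1w)$ are each at most $1$. No summation by parts, no delicate cancellation between boundary and bulk terms --- just the observation that the coefficient factors so as to absorb the denominator of the crude bound $\log(1+x)\le x$.
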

\begin{proof}
Observe that $\spind w \leq 1$. Calculate
\begin{multline*}
\big\{A_1(w,\spind) - \spind A_2(w,\spind) + \log 2\big\}\,w\;=\; w^2(\spind-1) \log 2 + \spind w (1-\spind w) \log 2\\
+ (1-w)\,w\log(1 + \tfrac{1}{w}) +
w \textstyle \sum_{k=1}^{\spind-1} \big(2k(1-kw) + (1-w)\big)\, \log(1 + \tfrac{w}{1-kw})
\end{multline*}
By inspection, the right hand side is non-negative, and bounded by
$$
\leq 3 + \tfrac{1}{\spind} \textstyle\sum_{k=1}^{\spind-1} \big(2k(1-kw) + (1-w)\big)\tfrac{w}{1-kw} \leq 6
$$
We have $A_2(w,\spind) \leq 0$, because $\spind w \leq 1$, the sum of the first two terms on the right hand side of \eqref{eq:kdhkfdhkdhkfdh2} is non-positive, and the third term is non-positive. Estimate
\begin{align*}
|A_2(w,\spind)|  &\leq (2+w) \log (1 + \tfrac{1}{w}) + 2 \textstyle\sum_{k=1}^{\spind-1} \log(1 + \tfrac{w}{1-kw})\\
& \leq 3 \log (1 + \tfrac{1}{w}) + 2\hskip-1pt\textstyle\sum_{k=1}^{\spind-1} \tfrac{w}{1-kw} \leq 3 \log (1 + \tfrac{1}{w}) + 2\big(1 + \textstyle\int_0^{\spind-1} \dd k \tfrac{w}{1-kw}\big)\\
& \leq 3 \log (1 + \tfrac{1}{w}) + 2\big(1 - \log (1-(\spind-1)w)\big) \leq 8 \log (1 + \tfrac{1}{w})
\end{align*}
since $2<3\log 2 \leq 3 \log(1+\tfrac{1}{w})$ and $-\log(1-\spind w+w) \leq - \log w = \log \tfrac{1}{w} \leq \log(1 + \tfrac{1}{w})$.
\qed
\end{proof}
\begin{proposition} \label{prop:jdhkhkds} For every $w\in (0,1)\setminus \mathbb{Q}$, every $p > 0$ and every integer $1\leq \spind \leq \lfloor \tfrac{1}{w}\rfloor$, let $(\mu',\nu')$ be the pair of rational functions over $\R$ in the pair of abstract variables $(\mu,\nu)$ given implicitly by
$$\Big(p' + \frac{\mu'}{\nu'},\;\frac{1+w'}{\nu'}\Big) = \boldsymbol{\mathcal{Q}}_R^\spind\{w\}\,\Big(p + \frac{\mu}{\nu},\, \frac{1+w}{\nu}\Big)$$
where $w' = \mathcal{Q}_R^\spind(w) = \tfrac{1}{w}-\spind$ and $p' = \spind - p/(1+p)$, that is $(p',0) = \boldsymbol{\mathcal{Q}}_R^\spind\{w\}(p,0)$. Then $\mu'$ is actually a linear polynomial over $\R$ in $\mu$, and $\nu'$ is actually a linear polynomial over $\R$ in the pair $(\mu,\nu)$. Explicitly
\begin{equation}\label{eq:fdhkdhkfd}
\begin{pmatrix}
\mu'\\
\nu' \end{pmatrix}
= \frac{1}{w} \begin{pmatrix} - \tfrac{1}{1+p} & 0\\ 1 & 1 + p \end{pmatrix}\begin{pmatrix}\mu\\ \nu\end{pmatrix}
+ \frac{1}{w}\begin{pmatrix} A_1(w,\spind) - p'\, A_2(w,\spind) \\ A_2(w,\spind) \end{pmatrix}
\end{equation}
The first and second entries of the vector
\begin{equation}\label{eq:dhdkjhdkd}
\frac{1}{w}\begin{pmatrix} A_1(w,\spind) - p'\, A_2(w,\spind) \\ A_2(w,\spind) \end{pmatrix}
\end{equation}
are bounded in absolute value by $\leq 2^4 (\tfrac{1}{w})^2$
and $\leq 2^3 \tfrac{1}{w}\log (1 + \tfrac{1}{w})$, respectively.
\end{proposition}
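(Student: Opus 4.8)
The plan is to derive the explicit formula \eqref{eq:fdhkdhkfd} from the formulas for $\boldsymbol{\mathcal{Q}}_R^\spind\{w\}$ in \eqref{eq:dkfdhkfjhkjdhkjdhf} by a change of variables, and then read off the bounds on \eqref{eq:dhdkjhdkd} from Lemma \ref{lem:kfdhkfdhf}. First I would substitute $(q,\mathbf{h}) = (p + \mu/\nu,\,(1+w)/\nu)$ into \eqref{eq:dhjkdhdkhdfkddf}. The point of this substitution is that $\mathbf{h}$ is being used as a ``homogenizing'' variable: writing $q = p + \mu/\nu$ splits the ``$q$-part'' into a distinguished base point $p$ plus a correction $\mu/\nu$, and scaling $\mathbf{h}$ by $1/\nu$ makes the rational functions $\mathrm{num1}_\spind/\mathrm{den}_\spind$, $\mathrm{num2}_\spind/\mathrm{den}_\spind$ linear-fractional in $(\mu,\nu)$ with a common denominator. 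Concretely, after the substitution, $\mathrm{den}_\spind$ becomes $\tfrac{1}{\nu}\big[(1+w)(1+p)\nu + (1+w)\mu + (1+w)A_2(w,\spind)\big]$; one checks $\nu' $ (defined by $(1+w')/\nu' = \mathrm{num2}_\spind/\mathrm{den}_\spind$) is obtained by dividing this denominator by $\mathrm{num2}_\spind = \tfrac{1+w}{\nu}\cdot\tfrac{1-w\spind+w}{w}$ wait — here I use $w+1-w\spind = w(\tfrac1w - \spind) + 1 = w w' + 1 = 1+w'$ only when... actually $1+w' = 1 + \tfrac1w - \spind$, so $w+1-w\spind = w(1+w')$, giving $\mathrm{num2}_\spind = \mathbf{h}\, w(1+w') = \tfrac{(1+w)(1+w')w}{\nu}$. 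Hence $\nu' = \mathrm{den}_\spind/\big(\mathrm{num2}_\spind/(1+w')\big) = \tfrac{1}{w}\big[(1+p)\nu + \mu + A_2(w,\spind)\big]$, which is exactly the second row of \eqref{eq:fdhkdhkfd}.

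Next I would compute $\mu'$. By definition $p' + \mu'/\nu' = \mathrm{num1}_\spind/\mathrm{den}_\spind$, so $\mu' = \nu'\big(\mathrm{num1}_\spind/\mathrm{den}_\spind - p'\big)$. Using $\mathrm{num1}_\spind = (1+w)(\spind + \spind q - q) + \mathbf{h} A_1$ and the relation $p' = \spind - p/(1+p)$ (equivalently $\spind - p' = p/(1+p)$, and $\spind(1+p) - p = \spind + p(\spind-1) = p'(1+p)$), I would simplify $\mathrm{num1}_\spind - p'\,\mathrm{den}_\spind$ and divide by $\mathrm{num2}_\spind/(1+w')$. The algebra should collapse, using $\spind + \spind q - q = \spind + (\spind-1)(p+\mu/\nu) = p'(1+p) + (\spind-1)\mu/\nu$ and tracking the $\log 2$, $\log(1+\tfrac1w)$ terms inside $A_1 - p' A_2$, to give $\mu' = \tfrac{1}{w}\big(-\tfrac{1}{1+p}\mu + A_1(w,\spind) - p' A_2(w,\spind)\big)$, the first row of \eqref{eq:fdhkdhkfd}. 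Throughout, the key structural facts are the common-denominator form and the two identities $w+1-w\spind = w(1+w')$ and $\spind(1+p)-p = p'(1+p)$; the rest is bookkeeping.

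Finally, for the bounds on \eqref{eq:dhdkjhdkd}: since $0 \le p'$ (indeed $p' = \spind - p/(1+p) \ge \spind - 1 \ge 0$) and $A_2(w,\spind) \le 0$ by Lemma \ref{lem:kfdhkfdhf}, the second entry is $\tfrac{1}{w}A_2(w,\spind)$, bounded in absolute value by $\tfrac{1}{w}\cdot 8\log(1+\tfrac1w) = 2^3 \tfrac1w \log(1+\tfrac1w)$. For the first entry, write $A_1 - p' A_2 = (A_1 - \spind A_2 + \log 2) + (\spind - p')A_2 - \log 2$; Lemma \ref{lem:kfdhkfdhf} bounds the first parenthesis by $[0, 6\tfrac1w]$, while $0 \le \spind - p' = p/(1+p) \le 1$ and $-8\log(1+\tfrac1w) \le A_2 \le 0$ give $|(\spind-p')A_2| \le 8\log(1+\tfrac1w)$, and $\log 2 \le \log(1+\tfrac1w) \le \tfrac1w$. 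So $|A_1 - p'A_2| \le 6\tfrac1w + 8\log(1+\tfrac1w) + \log 2 \le 6\tfrac1w + 9\tfrac1w \le 2^4\tfrac1w$ (using $\log(1+\tfrac1w) \le \tfrac1w$), and dividing by $w$ gives $\le 2^4(\tfrac1w)^2$. The main obstacle is purely the algebraic simplification of $\mu'$: matching the constant (in $\mu,\nu$) term against $A_1 - p'A_2$ requires carefully combining the $\log 2$ and $\log(1+\tfrac1w)$ contributions coming from $A_1$, from $-p'A_2$, and from the $(1+w)(\spind+\spind q - q)$ polynomial part, and verifying nothing else survives; this is routine but error-prone.
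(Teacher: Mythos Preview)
Your proposal is correct and follows essentially the same route as the paper: the explicit formula \eqref{eq:fdhkdhkfd} is derived by substituting $(q,\mathbf{h})=(p+\mu/\nu,(1+w)/\nu)$ into \eqref{eq:dkfdhkfjhkjdhkjdhf} and using the two identities $w+1-w\spind=w(1+w')$ and $\spind(1+p)-p=p'(1+p)$ (together with $\spind-1-p'=-1/(1+p)$), while the bounds come from the decomposition $A_1-p'A_2=(A_1-\spind A_2+\log 2)+\tfrac{p}{1+p}A_2-\log 2$ and Lemma~\ref{lem:kfdhkfdhf}, exactly as in the paper. The paper simply omits the algebraic details you spell out; your remark about ``tracking the $\log 2$, $\log(1+\tfrac1w)$ contributions'' in the $\mu'$ computation is unnecessary, since those terms are already packaged inside $A_1$ and $A_2$ and do not appear separately in $\mathrm{num1}_\spind-p'\,\mathrm{den}_\spind$.
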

\begin{proof}
Equation \eqref{eq:fdhkdhkfd} follows from equation \eqref{eq:dhjkdhdkhdfkddf}.
To check the bounds, observe that
$$A_1(w,\spind) - p' A_2(w,\spind) =  \Big(A_1(w,\spind) - \spind\, A_2(w,\spind) + \log 2\Big)
- \log 2 + \tfrac{p}{1+p}\, A_2(w,\spind)$$
Now, use Lemma \ref{lem:kfdhkfdhf} and $\log 2 \leq \tfrac{1}{w}$ and $\log(1+\tfrac{1}{w}) \leq \tfrac{1}{w}$.
. \qed
\end{proof}
\begin{definition} \label{def:infcont}
For 
every sequence of strictly positive integers $(k_n)_{n\geq 0}$, we denote the associated infinite continued fraction by
$$\langle k_0,k_1,\ldots\rangle \;=\; \frac{1}{k_0 + \frac{1}{k_1 + \ldots}}
\qquad \in \qquad \big(\tfrac{1}{k_0+1},\tfrac{1}{k_0} \big)\setminus \mathbb{Q}$$
Every element of $(0,1)\setminus \mathbb{Q}$ has a unique continued fraction expansion of this form.
\end{definition}
We now show that when $\mathbf{h}=0$, the era-to-era maps can be realized as a left-shift operator
on two-sided sequences of positive integers.
\newcommand{\approxq}{p}
\begin{proposition}\label{prop:kshjajjjajsdkds}
Fix any two-sided sequence $(k_n)_{n\in \Z}$ of strictly positive integers 
and define two-sided sequences $(\approxq_n)_{n \in \Z}$ and $(w_n)_{n \in \Z}$ by
\begin{equation}\label{eq:kdkdhkjfd}
\tfrac{1}{1+\approxq_n}=
\langle k_n,k_{n-1},k_{n-2},\ldots\rangle
\qquad
w_n =
\langle k_{n+1},k_{n+2},k_{n+3}\ldots\rangle
\end{equation}
Then $w_{n+1} = \mathcal{E}_R(w_n)$ and $(p_{n+1},0) = \boldsymbol{\mathcal{E}}_R\{w_n\}(p_n,0)$ for all
$n \in \Z$, and  $\mathcal{E}_R^n(w_0) = w_{n}$ and 
$\boldsymbol{\mathcal{E}}_R^n\{w_0\}(\approxq_0,0)  = (\approxq_n,0)$ for all $n\geq 0$.
\end{proposition}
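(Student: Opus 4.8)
The plan is to establish the four assertions in the stated order, reducing each to an elementary recursion for one-sided continued fractions together with the $\mathbf{h}=0$ specialization of the formula for $\boldsymbol{\mathcal{E}}_R\{w\}$ recorded above. First I would record the preliminary facts that $w_n \in (0,1)\setminus\mathbb{Q}$ and $p_n \in (k_n-1,k_n) \subset (0,\infty)$ for every $n\in\Z$; these follow from Definition \ref{def:infcont} applied to the forward sequence $(k_{n+1},k_{n+2},\ldots)$ and to the backward sequence $(k_n,k_{n-1},\ldots)$, respectively. In particular every object below is well defined, and the affine map appearing at the end of the argument is never evaluated at its pole. The defining identity for the forward continued fraction reads $w_n = \langle k_{n+1},k_{n+2},\ldots\rangle = \bigl(k_{n+1} + \langle k_{n+2},k_{n+3},\ldots\rangle\bigr)^{-1} = (k_{n+1}+w_{n+1})^{-1}$, so $1/w_n = k_{n+1} + w_{n+1}$ with $w_{n+1}\in(0,1)$. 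Hence $\lfloor 1/w_n\rfloor = k_{n+1}$ and $\mathcal{E}_R(w_n) = 1/w_n - \lfloor 1/w_n\rfloor = w_{n+1}$, which is the first assertion; the third assertion, $\mathcal{E}_R^n(w_0) = w_n$ for $n\geq 0$, then follows by induction on $n$.

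For the second assertion I would combine the defining identity for the backward continued fraction, $\tfrac{1}{1+p_n} = \langle k_n,k_{n-1},\ldots\rangle = \bigl(k_n + \tfrac{1}{1+p_{n-1}}\bigr)^{-1}$, i.e. $1+p_n = k_n + \tfrac{1}{1+p_{n-1}}$, with the identity $\boldsymbol{\mathcal{E}}_R\{w\}(q,0) = \bigl(\lfloor 1/w\rfloor - \tfrac{q}{1+q},\, 0\bigr)$. This last identity is the $\mathbf{h}=0$ case of the formula for $\boldsymbol{\mathcal{E}}_R\{w\}$ given above — equivalently it is Proposition \ref{prop:jdhkhkds} with $r = \lfloor 1/w\rfloor$, bearing in mind that $\boldsymbol{\mathcal{E}}_R\{w\} = \boldsymbol{\mathcal{Q}}_R^{\lfloor 1/w\rfloor}\{w\}$ and that $\mathrm{num2}_r$ is divisible by $\mathbf{h}$. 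Since $\lfloor 1/w_n\rfloor = k_{n+1}$ by the first assertion, this gives $\boldsymbol{\mathcal{E}}_R\{w_n\}(p_n,0) = \bigl(k_{n+1} - \tfrac{p_n}{1+p_n},\, 0\bigr)$, and $k_{n+1} - \tfrac{p_n}{1+p_n} = k_{n+1} - 1 + \tfrac{1}{1+p_n} = p_{n+1}$ by the backward recursion with $n$ replaced by $n+1$. This proves the second assertion.

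Finally, the fourth assertion $\boldsymbol{\mathcal{E}}_R^n\{w_0\}(p_0,0) = (p_n,0)$ for $n\geq 0$ follows by induction via the composition law $\boldsymbol{\mathcal{E}}_R^n\{w_0\} = \boldsymbol{\mathcal{E}}_R\{\mathcal{E}_R^{n-1}(w_0)\} \circ \boldsymbol{\mathcal{E}}_R^{n-1}\{w_0\}$: using the inductive hypothesis and the third assertion one has $\boldsymbol{\mathcal{E}}_R^n\{w_0\}(p_0,0) = \boldsymbol{\mathcal{E}}_R\{w_{n-1}\}(p_{n-1},0)$, and this equals $(p_n,0)$ by the second assertion. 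I do not expect a serious obstacle here; the only things requiring attention are the bookkeeping of indices for the two one-sided continued fractions, and the (automatic) check that the intermediate evaluations avoid the poles of the rational functions involved — automatic because at $\mathbf{h}=0$ the map $\boldsymbol{\mathcal{E}}_R\{w\}$ collapses to the affine map $q \mapsto \lfloor 1/w\rfloor - q/(1+q)$, whose only pole $q=-1$ is never reached as $p_n > 0$.
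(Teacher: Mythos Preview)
Your proof is correct and follows essentially the same approach as the paper's one-line proof, which simply invokes the two identities $\mathcal{E}_R(w) = \tfrac{1}{w}-\lfloor \tfrac{1}{w}\rfloor$ and $\boldsymbol{\mathcal{E}}_R\{w\}(p,0) = (\lfloor \tfrac{1}{w}\rfloor - 1 + \tfrac{1}{1+p},\,0)$ (the latter being algebraically identical to your $\lfloor 1/w\rfloor - p/(1+p)$). You have spelled out in full the continued-fraction recursions and the inductions that the paper leaves implicit.
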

\begin{proof}
Use $\mathcal{E}_R(w) = \tfrac{1}{w}-\lfloor \tfrac{1}{w}\rfloor$ and
$\boldsymbol{\mathcal{E}}_R\{w\}(p,0) = (\lfloor \tfrac{1}{w}\rfloor - 1 + \tfrac{1}{1+p},\,0)$. \qed
\end{proof}
\begin{definition}\label{def:kdhkdhkhfd}
Fix any two-sided sequence $(k_n)_{n\in \Z}$ of strictly positive integers and define $(\approxq_n)_{n \in \Z}$ and $(w_n)_{n\in \Z}$ by \eqref{eq:kdkdhkjfd}. For every integer $n\geq 0$, let $(\mu_n,\nu_n)$ be
the pair of linear polynomials over $\R$ 
in the abstract variables $(\mu_0,\nu_0)$, with coefficients depending only on the fixed sequence $(k_n)_{n\in \Z}$,
given implicitly by
 \begin{subequations}
\begin{align}
\Big(\approxq_n + \frac{\mu_n}{\nu_n},\,\frac{1+w_{n}}{\nu_n}\Big)
& = \boldsymbol{\mathcal{E}}_R^n\{w_0\}\Big(\approxq_0 + \frac{\mu_0}{\nu_0},\,\frac{1+w_0}{\nu_0}\Big)
\intertext{or by the equivalent recursive prescription}
\label{eq:kdhkdhkj}
\Big(\approxq_{n+1} + \frac{\mu_{n+1}}{\nu_{n+1}},\,\frac{1+w_{n+1}}{\nu_{n+1}}\Big) & = \boldsymbol{\mathcal{E}}_R\{w_{n}\}\Big(\approxq_n + \frac{\mu_n}{\nu_n},\,\frac{1+w_{n}}{\nu_n}\Big)
\end{align}
\end{subequations}
By Proposition \ref{prop:jdhkhkds}, equation \eqref{eq:kdhkdhkj} is $V_{n+1} = X_n V_n + Y_n$, where $V_n = (\mu_n,\nu_n)^T$ and
\begin{align*}
X_n & = \frac{1}{w_{n}} \begin{pmatrix}
- \tfrac{1}{1+\approxq_n} & 0 \\
1 & 1 + \approxq_n
\end{pmatrix} &
Y_n & =
\frac{1}{w_{n}} \begin{pmatrix}
A_1(w_{n}) - \approxq_{n+1} A_2(w_{n})\\
A_2(w_{n})
\end{pmatrix}
\end{align*}
Here, $A_1(w) = A_1(w,\lfloor \tfrac{1}{w}\rfloor)$ and 
$A_2(w) = A_2(w,\lfloor \tfrac{1}{w}\rfloor)$, see equations \eqref{eq:kdhkfdhkdhkfdh}.
\end{definition}
\begin{example} We consider Definition \ref{def:kdhkdhkhfd} when $k_n = 1$ for all $n\in \Z$. Then $w_n = \approxq_n = w$ for all $n\in \Z$, where $w = \tfrac{1}{2}(\sqrt{5}-1) \in (0,1)\setminus \mathbb{Q}$. We have $w^2+w-1=0$ and $\lfloor \tfrac{1}{w}\rfloor = 1$ and
$$
X_n = \begin{pmatrix}
-1 && 0 \\
1+w &\,& 2+w
\end{pmatrix}
\qquad
Y_n = \begin{pmatrix} -2 \log (1+w) \\
(2+w) \log 2 - (6+4w) \log (1+w)
\end{pmatrix}
$$
for all $n\geq 0$. It follows that $\mu_{n+2} = \mu_n$ for all $n\geq 0$, that is, $\mu_{2n} = \mu_0$ and
$\mu_{2n+1} = -\mu_0 - 2\log(1+w)$. There are unique $\lambda_1=\lambda_1(\mu_0)$ and $\lambda_2 = \lambda_2(\mu_0)$, depending only on $\mu_0$, such that $\nu_{2n+2}-\lambda_1 = (2+w)^2(\nu_{2n}-\lambda_1)$ and
$\nu_{2n+3}-\lambda_2 = (2+w)^2(\nu_{2n+1}-\lambda_2)$. That is,
$\nu_{2n} = (2+w)^{2n} (\nu_0 - \lambda_1) + \lambda_1$ and
$\nu_{2n+1} = (2+w)^{2n}(\nu_1-\lambda_2) + \lambda_2$. 
Here, $\nu_1 = (2+w)\nu_0 + (1+w)\mu_0 + (2+w)\log 2 - (6+4w)\log (1+w)$.
\end{example}
\begin{definition}[Propagator] \label{def:propagator}
Let $(p_n)_{n\in \Z}$, $(w_n)_{n\in \Z}$, $(X_n)_{n \geq 0}$ be as in Definition \ref{def:kdhkdhkhfd}.
Then for all integers $n\geq m \geq 0$, let $P_{n,m}
= X_{n-1}\cdots X_m$.
Explicitly,
$$
P_{n,m} = 
\begin{pmatrix}
a_{n-1}\cdots a_m &\;\;& 0 \\
\sum_{\ell=m}^{n-1} x_{\ell}
&& c_{n-1}\cdots c_m
\end{pmatrix}
$$
where $x_{\ell} = c_{n-1}\cdots  c_{\ell+1} b_{\ell} a_{\ell-1}\cdots a_m$ 
whenever $n-1 \geq \ell \geq m$, and for all $\ell\geq 0$,
$$ X_{\ell} = \begin{pmatrix}
a_{\ell} & 0 \\
b_{\ell} & c_{\ell}
\end{pmatrix}
\qquad
a_{\ell} = \frac{-1}{w_{\ell}(1+\approxq_{\ell})}
\qquad
b_{\ell} = \frac{1}{w_{\ell}}
\qquad
c_{\ell} = \frac{1+\approxq_{\ell}}{w_{\ell}}
$$
In this definition, a sequence of dots\,\,$\cdots$ indicates that indices increase towards the left, one by one.
A product of the form $F_k \cdots F_j$ is equal to one if $k = j - 1$.
In particular, $P_{n,n} = \big(\begin{smallmatrix} 1 & 0 \\ 0 & 1 \end{smallmatrix}\big)$.
\end{definition}
\begin{lemma} \label{lemme:1}
In the context of  Definition \ref{def:kdhkdhkhfd}, we have
$V_n = P_{n,0}V_0 + \sum_{\ell=0}^{n-1}P_{n,\ell+1} Y_{\ell}$.
\end{lemma}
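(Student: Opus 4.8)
This is the standard variation-of-parameters formula for the inhomogeneous linear recurrence $V_{n+1} = X_n V_n + Y_n$ (Definition \ref{def:kdhkdhkhfd}), and the natural proof is induction on $n$, using only the elementary composition properties of the propagator $P_{n,m} = X_{n-1}\cdots X_m$ recorded in Definition \ref{def:propagator}.

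\textbf{Preliminary observations.} From Definition \ref{def:propagator} we have $P_{n,n} = \big(\begin{smallmatrix}1&0\\0&1\end{smallmatrix}\big)$ (empty product) and, directly from $P_{n+1,m} = X_n X_{n-1}\cdots X_m$ with indices increasing to the left, the identity $P_{n+1,m} = X_n\, P_{n,m}$ for all $n \geq m \geq 0$. These two facts, together with the convention that a sum $\sum_{\ell=0}^{-1}$ is empty, are all that is needed.

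\textbf{Induction.} For the base case $n = 0$, the asserted formula reads $V_0 = P_{0,0}V_0 + \sum_{\ell=0}^{-1}P_{0,\ell+1}Y_\ell = V_0 + 0$, which holds. Assume the formula holds for some $n \geq 0$. Then, applying $V_{n+1} = X_n V_n + Y_n$ and linearity,
\begin{align*}
V_{n+1} &= X_n\Big(P_{n,0}V_0 + \sum_{\ell=0}^{n-1}P_{n,\ell+1}Y_\ell\Big) + Y_n\\
&= (X_n P_{n,0})V_0 + \sum_{\ell=0}^{n-1}(X_n P_{n,\ell+1})Y_\ell + Y_n\\
&= P_{n+1,0}V_0 + \sum_{\ell=0}^{n-1}P_{n+1,\ell+1}Y_\ell + P_{n+1,n+1}Y_n\\
&= P_{n+1,0}V_0 + \sum_{\ell=0}^{n}P_{n+1,\ell+1}Y_\ell,
\end{align*}
where the third line uses $X_n P_{n,m} = P_{n+1,m}$ (applied with $m = 0$ and with $m = \ell+1$ for $0 \le \ell \le n-1$) and $Y_n = P_{n+1,n+1}Y_n$. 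This is the formula for $n+1$, completing the induction.

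\textbf{Main obstacle.} There is no substantive difficulty; the only thing to get right is the index bookkeeping — in particular matching the ``indices increase towards the left'' convention of Definition \ref{def:propagator} so that $P_{n+1,m} = X_n P_{n,m}$ (rather than $P_{n,m}X_{?}$) is the correct factorization, and handling the empty product $P_{n,n}=\mathrm{Id}$ and empty sum at $n=0$ consistently.
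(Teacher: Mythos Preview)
Your proof is correct and is exactly the natural induction argument; the paper in fact states this lemma without proof, treating the variation-of-parameters formula as standard. Your handling of the propagator identity $X_nP_{n,m}=P_{n+1,m}$ and the empty product/sum conventions matches the paper's definitions precisely.
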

\begin{lemma} \label{lem:kdskjshkshks}
Recall Definition \ref{def:propagator}. For all integers $n \geq m \geq 0$, we have
\begin{subequations}
\begin{alignat}{4}
\label{eq:shdkhdkhdfk1} \tfrac{1}{2}\;  & \leq \hskip4mm \tfrac{w_{n-1}}{w_{m-1}} (-1)^{m+n}
a_{n-1} \cdots a_m && \leq \;2 \\
\label{eq:shdkhdkhdfk2} 
(1-\delta_{mn})\; \tfrac{1}{4}\;
 & \leq 
\hskip4mm
\tfrac{w_{n-1}}{w_{m-1}^2} \big( w_{n-2}\cdots w_{m-1}\big)^2\,
\textstyle\sum_{\ell=m}^{n-1} 
x_{\ell}
&& \leq
\;2\\
\label{eq:shdkhdkhdfk3} \tfrac{1}{2}\;
& \leq \hskip4mm
\tfrac{w_{n-1}}{w_{m-1}} \big( w_{n-2}\cdots w_{m-1}\big)^2\,
c_{n-1}\cdots c_m \;\; && \leq \;2
\end{alignat}
\end{subequations}
Moreover,
\begin{equation}\label{eq:ljfljdflfd}
w_{n-2}\cdots w_{m-1} \leq |\gold_-|^{n-m-1} = \gold_+^{-n+m+1}
\qquad \text{when}\qquad  n\geq m \geq 0
\end{equation}
Here, $\gold_{\pm} = \tfrac{1}{2}(1\pm \sqrt{5})$ are the roots of the polynomial $\gold^2-\gold-1$. Observe that
 $|\gold_-|<1$.
In this lemma, a sequence of dots\,\,$\cdots$ indicates that indices increase towards the left, one by one.
A product of the form $F_k \cdots F_j$ is equal to one if $k = j - 1$.
\end{lemma}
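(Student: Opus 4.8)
The plan is to extract explicit closed forms for the matrix entries appearing in the propagator and then estimate each factor against the recursion $w_{\ell-1}w_\ell = $ (something controlled). First I would record the elementary identities coming from the continued fraction structure: from $w_{\ell-1} = \langle k_\ell, k_{\ell+1},\ldots\rangle$ one has $\tfrac{1}{w_{\ell-1}} = k_\ell + w_\ell = \lfloor \tfrac{1}{w_{\ell-1}}\rfloor + w_\ell$, and from $\tfrac{1}{1+p_\ell} = \langle k_\ell, k_{\ell-1},\ldots\rangle$ one has $1+p_\ell = k_\ell + \tfrac{1}{1+p_{\ell-1}}$. Hence $1+p_\ell$ and $\tfrac{1}{w_{\ell-1}}$ differ only in their ``tails'': $1+p_\ell = \lfloor \tfrac{1}{w_{\ell-1}}\rfloor + \tfrac{1}{1+p_{\ell-1}}$, so both lie in the interval $(\lfloor 1/w_{\ell-1}\rfloor, \lfloor 1/w_{\ell-1}\rfloor + 1)$, which immediately gives two-sided bounds like $\tfrac{1}{2}\tfrac{1}{w_{\ell-1}} \leq 1+p_\ell \leq \tfrac{2}{w_{\ell-1}}$ (using $\lfloor x\rfloor \geq x/2$ for $x\geq 1$, which holds since $1/w_{\ell-1}>1$). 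Similarly $1+w_\ell$ and $1+p_\ell$ are both $\Theta(1)$ bounded below by $1$ and above by $2$.

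Next I would evaluate the telescoping products. For \eqref{eq:shdkhdkhdfk1}: by definition $a_\ell = -1/(w_\ell(1+p_\ell))$, so
$(-1)^{n-m} a_{n-1}\cdots a_m = \prod_{\ell=m}^{n-1} \tfrac{1}{w_\ell(1+p_\ell)}$, and since $1+p_\ell$ is between $\lfloor 1/w_{\ell-1}\rfloor$ and $\lfloor 1/w_{\ell-1}\rfloor+1$ while $1/w_{\ell-1} - \lfloor 1/w_{\ell-1}\rfloor = w_\ell$ gives $1/w_{\ell-1} = \lfloor 1/w_{\ell-1}\rfloor + w_\ell$, one gets $w_\ell(1+p_\ell)$ is comparable to $w_\ell/w_{\ell-1}$ up to a factor in $[\tfrac12,2]$ (because $w_\ell \cdot \lfloor 1/w_{\ell-1}\rfloor \leq w_\ell/w_{\ell-1} \leq w_\ell(\lfloor 1/w_{\ell-1}\rfloor+1)$, and $\lfloor 1/w_{\ell-1}\rfloor/(1/w_{\ell-1}) \in [\tfrac12,1]$). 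Telescoping $\prod_{\ell=m}^{n-1} w_{\ell-1}/w_\ell = w_{m-1}/w_{n-1}$ then yields \eqref{eq:shdkhdkhdfk1} after collecting the per-factor slack, provided the slack is bounded independently of $n-m$ — and it is, because the comparison of $1+p_\ell$ with $1/w_{\ell-1}$ is \emph{two-sided} per factor, so the products of the bounds remain in $[\tfrac12,2]$ (one has to be slightly careful that the constants $\tfrac12,2$ in the statement are achievable; the sharper per-factor bound $\lfloor x\rfloor + r \in [\tfrac{x+r}{?},\ldots]$ with $r\in(0,1)$ needs $\lfloor x\rfloor \geq x/2$, giving exactly the factor $2$). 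Similarly, \eqref{eq:shdkhdkhdfk3} follows from $c_\ell = (1+p_\ell)/w_\ell$, telescoping $\prod c_\ell = \prod (1+p_\ell)/w_\ell$, comparing $(1+p_\ell)$ with $1/w_{\ell-1}$, and noting $\prod 1/(w_{\ell-1} w_\ell)$ combines with the factor $(w_{n-2}\cdots w_{m-1})^2$ and $w_{n-1}/w_{m-1}$ to telescope cleanly.

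For \eqref{eq:shdkhdkhdfk2} I would write $x_\ell = c_{n-1}\cdots c_{\ell+1}\, b_\ell\, a_{\ell-1}\cdots a_m$ and, using $b_\ell = 1/w_\ell$ together with the product estimates just obtained for the $a$- and $c$-runs, show each $x_\ell$ (after the same normalization) is comparable to a fixed geometric-type quantity; then the sum $\sum_{\ell=m}^{n-1} x_\ell$ is pinned between a single dominant term and a convergent geometric series, which with the normalization factors gives the constants $\tfrac14$ (lower, since at least one term is present when $n>m$) and $2$ (upper). Finally, \eqref{eq:ljfljdflfd} is the cleanest part: from $w_{\ell-1}w_\ell < w_{\ell-1}$ and more precisely $w_{\ell-1} = 1/(k_\ell + w_\ell) \leq 1/(1+w_\ell)$, so the pair $(w_{\ell-1},w_\ell)$ satisfies $w_{\ell-1}(1+w_\ell)\leq 1$; iterating in blocks of two and comparing with the fixed point $\gold_+$ of $w\mapsto 1/(1+w)$ (equivalently $w^2+w-1=0$, $w=|\gold_-|$) gives $w_{m-1}w_m \cdots \leq |\gold_-|^{\,n-m-1}$ by an easy induction using $w_{\ell-1}w_\ell \leq |\gold_-|^2$-type bounds — more directly, $w_{\ell-1} \leq 1$ and $w_{\ell-1} w_\ell \leq w_{\ell-1}/(1+w_\ell) \cdot (1+w_\ell)w_\ell \cdot \ldots$; the simplest route is: $w_{\ell-1}+w_{\ell-1}w_\ell \le w_{\ell-1}(1+w_\ell) \le 1$ wait — cleanest is to prove by induction on $j$ that $w_{m-1}w_m\cdots w_{m-1+j} \le |\gold_-|^{j}$ using $w_{i-1} = 1/(k_i+w_i) \le 1/(1+w_i)$ and $x\mapsto x/(1+x)$ reasoning, matching the Fibonacci/golden-ratio growth.

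\emph{Main obstacle.} The only genuinely delicate point is tracking the per-factor numerical slack so that the products of many $[\tfrac12,2]$-factors do \emph{not} blow up: one must exploit that the comparison between $1+p_\ell$ and $1/w_{\ell-1}$ (and between $\lfloor 1/w_{\ell-1}\rfloor$ and $1/w_{\ell-1}$) is an exact two-sided \emph{ratio} bound with the \emph{same} reference quantity that telescopes, rather than an additive or a one-sided one. Making the bookkeeping of these ratios airtight — in particular checking that the stated constants $\tfrac12, 2, \tfrac14$ are not merely order-of-magnitude but literally correct — is the part requiring care; everything else is routine telescoping and geometric-series estimation.
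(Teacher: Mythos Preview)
Your plan has a real gap at exactly the point you flag as the ``main obstacle,'' and your proposed resolution does not work. You observe that $1+p_\ell$ and $1/w_{\ell-1}$ both lie in $(k_\ell,k_\ell+1)$, so each ratio $(1+p_\ell)w_{\ell-1}$ lies in $(k_\ell/(k_\ell+1),(k_\ell+1)/k_\ell)\subset(\tfrac12,2)$. But the product of $n-m$ numbers each in $[\tfrac12,2]$ lies only in $[2^{-(n-m)},2^{n-m}]$; being ``two-sided per factor'' does nothing to prevent this. There is no telescoping at the per-factor level that absorbs the slack, so the constants in \eqref{eq:shdkhdkhdfk1}--\eqref{eq:shdkhdkhdfk3} cannot be obtained this way.

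The paper's proof uses a genuinely different mechanism. Writing $v_\ell = 1/(1+p_{\ell+1}) = \langle k_{\ell+1},k_\ell,k_{\ell-1},\ldots\rangle$, one gets exact identities such as
\[
(-1)^{m+n}a_{n-1}\cdots a_m \;=\; \frac{v_{n-2}\cdots v_{m-1}}{w_{n-2}\cdots w_{m-1}}\cdot\frac{w_{m-1}}{w_{n-1}},
\]
and then invokes a \emph{global} bound $\tfrac12\le \prod_i(v_i/w_i)\le 2$ on the entire product (Proposition~\ref{prop:kfdhkjhdfkdf}(a)). That bound is proved via continuant polynomials: using $\langle k_{1:n}\rangle = P_{n-1}(k_{2:n})/P_n(k_{1:n})$ and the palindrome symmetry $P_n(x_{1:n})=P_n(x_{n::1})$, the telescoped product collapses to a single ratio of continuants, which is then controlled by the combinatorial inequality of Lemma~\ref{lem:fdhkhfdkdh}. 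This algebraic input is what you are missing. For \eqref{eq:shdkhdkhdfk2} you should also note that the $x_\ell$ alternate in sign (each $a_j<0$) with $|x_{\ell+1}|/|x_\ell|=v_\ell v_{\ell-1}\le\tfrac12$, so the alternating sum is squeezed between $\tfrac12 x_m$ and $x_m$; your ``geometric series'' sketch misses this sign structure. Your approach to \eqref{eq:ljfljdflfd} via the Fibonacci recursion is essentially correct and matches Proposition~\ref{prop:kfdhkjhdfkdf}(b).
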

\begin{proof}
In this proof, abbreviate $v_{\ell} = 1/(1 + \approxq_{\ell+1}) = \langle k_{\ell+1},k_{\ell},k_{\ell-1} \ldots \rangle$. We have
\begin{subequations}
\begin{align}
\label{eq:khkjfdhkfd1} (-1)^{m+n} a_{n-1}\cdots a_m &= \frac{v_{n-2}\cdots v_{m-1}}{w_{n-2}\cdots w_{m-1}}\,\cdot\, \frac{w_{m-1}}{w_{n-1}}\\
\label{eq:khkjfdhkfd2} x_m = c_{n-1}\cdots c_{m+1} b_m & =  \frac{w_{n-2}\cdots w_{m}}{v_{n-2}\cdots v_{m}} \,\cdot\, \frac{w_{m-1}^2}{ w_{n-1}}\,\cdot\, \Big(\frac{1}{w_{n-2}\cdots w_{m-1}}\Big)^2  \\
\label{eq:khkjfdhkfd3} c_{n-1}\cdots c_m & =  \frac{w_{n-2}\cdots w_{m-1}}{v_{n-2}\cdots v_{m-1}} \,\cdot\, \frac{w_{m-1}}{ w_{n-1}}\,\cdot\, \Big(\frac{1}{w_{n-2}\cdots w_{m-1}}\Big)^2
\end{align}
\end{subequations}
where $n\geq m$ in \eqref{eq:khkjfdhkfd1} and
\eqref{eq:khkjfdhkfd3} and $n > m$ in \eqref{eq:khkjfdhkfd2}. Each right hand side is written as a product of positive quotients, whose first factor is contained in the closed interval $[\tfrac{1}{2},2]$, see Proposition \ref{prop:kfdhkjhdfkdf} (a) of Appendix \ref{app:cfprod}.
This implies \eqref{eq:shdkhdkhdfk1} and \eqref{eq:shdkhdkhdfk3}. If
$n=m$, the sum in \eqref{eq:shdkhdkhdfk2}
vanishes and the estimate is trivial. Suppose $n > m$.
We have $\sgn x_{\ell} = (-1)^{\ell+m}$.
If, in addition,  $\ell$ satisfies $n-2 \geq \ell \geq m$, we have
$|x_{\ell+1}|/|x_{\ell}| = v_{\ell}v_{\ell-1}
= v_{\ell}(v_{\ell}^{-1} - \lfloor v_{\ell}^{-1}\rfloor)
\leq \tfrac{1}{2}$, that is $ \tfrac{1}{2}|x_{\ell}| \geq |x_{\ell+1}|$.
Therefore, the alternating sum in  \eqref{eq:shdkhdkhdfk2} is non-negative and bounded from above by its first summand $x_m > 0$
and from below by $x_m + x_{m+1} \geq x_m - |x_{m+1}| \geq \tfrac{1}{2}x_m$.
Actually, $x_{m+1}$ is only defined when $n\geq m+2$, but $\tfrac{1}{2}x_m$ is a lower bound for all $n\geq m+1$. Now, estimate $x_m$, which is the left hand side of \eqref{eq:khkjfdhkfd2}.\\
Inequality \eqref{eq:ljfljdflfd} is a consequence of Proposition \ref{prop:kfdhkjhdfkdf} (b).
\qed
\end{proof}
{\bf Warning:} In the next proposition, the sequences $(w_j)_{j\in \Z}$ and $(p_j)_{j \in \Z}$ do not 
have the property that $w_j$ and $(1+p_j)^{-1}$ always lie in $(0,1)\setminus \mathbb{Q}$. Rather, they lie in  $(0,\infty)\setminus \mathbb{Q}$.
However, in the proof of Proposition \ref{prop:kjdhkhkd}, the auxiliary sequences
$(w^{\ast}_n)_{n\in\Z}$ and $(p^{\ast}_n)_{n\in \Z}$ do have the property
that $w^{\ast}_n$ and $(1+p^{\ast}_n)^{-1}$ always lie in $(0,1)\setminus \mathbb{Q}$.
The discussion beginning with
Proposition \ref{prop:kshjajjjajsdkds} and ending just above will be applied to the auxiliary sequences.
\begin{proposition} \label{prop:kjdhkhkd}
For all $w_0\in (0,1)\setminus \mathbb{Q}$ and
 $q_0 \in (0,\infty) \setminus \mathbb{Q}$, introduce
\begin{itemize}
\item a two sided sequence of strictly positive integers
$(k_n)_{n \in \Z}$ by
$$
(1+q_0)^{-1} = \langle k_0,k_{-1},k_{-2},\ldots \rangle
\qquad
w_0 = \langle k_1,k_2,k_3,\ldots\rangle
$$
\item (Era Pointer) $J: \Z \to \Z$ by $J(0) = 0$ and $J(n+1) = J(n) + k_{n+1}$
\item (Era Counter) $N: \Z \to \Z$ by $N(0) = 0$ and $N(j+1) = N(j) + \chi_{J(\Z)}(j)$, where $\chi_{J(\Z)}$ is the characteristic function of the image $J(\Z)\subset \Z$; equivalently
\begin{equation}\label{eq:minim}
N(j) = \min\{n \in \Z\,|\,J(n) \geq j \}
\end{equation}
\item sequences $(w_j)_{j \in \Z}$ and $(p_j)_{j \in \Z}$ by (observe that $w_0$ is defined consistently)
\begin{subequations}
\begin{align}
w_j & = \langle k_{N(j)+1},k_{N(j)+2},\ldots \rangle + J\big(N(j)\big) - j\\
\label{eq:pjpjpj}
p_j & = \langle k_{N(j)-1},k_{N(j)-2},\ldots \rangle + k_{N(j)} + j  - J\big(N(j)\big) - 1
\end{align}
\end{subequations}
\end{itemize}
{\bf Part 1.} Then $p_0 = q_0$ and $w_j,p_j > 0$ and
$\mathcal{Q}_R(w_j) = w_{j+1}$ and $\boldsymbol{\mathcal{Q}}_R\{w_j\}(p_j,0) = (p_{j+1},0)$
for all $j \in \Z$, and  $\mathcal{Q}_R^j(w_0) = w_j$ and $\boldsymbol{\mathcal{Q}}_R^j\{w_0\}(q_0,0) = (p_j,0)$
for all $j \geq 0$.\\
{\bf Part 2.} Let $\gold_+ = \tfrac{1}{2}\big(1 + \sqrt{5}\,\big)$ and set 
$$\mathbf{C}(w_0,q_0) = \textstyle\sup_{n\geq 0} \,(n+1) \gold_+^{-2n} k_n\, \max\{k_{n-1},k_{n-2}\}\;\;\;\in \;\;\;[1,\infty]$$
Suppose $\mathbf{C}(w_0,q_0) < \infty$. Fix any $0 < \mathbf{h}_0 \leq 2^{-14}(\mathbf{C}(w_0,q_0))^{-1}$.
Then, there are sequences $(q_j)_{j\geq 0}$, $(\mathbf{h}_j)_{j\geq 0}$ 
of real numbers
such that for every $j \geq 0$, the denominator appearing in the pair of rational functions
$\boldsymbol{\mathcal{Q}}_R\{w_j\}$, given by \eqref{eq:kdshkhkfdhfdX1} or \eqref{eq:kdshkhkfdhfdX2},
is strictly positive at $(q_j,\mathbf{h}_j)$, and
$$(q_{j+1},\mathbf{h}_{j+1}) = \boldsymbol{\mathcal{Q}}_R\{w_j\}(q_j,\mathbf{h}_j)$$
or $(q_j,\mathbf{h}_j) = \boldsymbol{\mathcal{Q}}_R^j\{w_0\}(q_0,\mathbf{h}_0)$. For all $j \geq 0$,
\begin{itemize}
\item $0 < \mathbf{h}_j \leq 2^6\,\mathbf{h}_0\, \gold_+^{-2N(j)}$ and
$$\frac{1}{4} \;\;\leq\;\; \frac{\mathbf{h}_j}{\mathbf{h}_0}\,\frac{1+w_0}{1+w_j}\,\prod_{\ell=0}^{N(j)-1}
\frac{1}{w_{J(\ell)} w_{J(\ell-1)}} \;\;\leq\;\; 4$$
\item $q_j \in (0,\infty)\setminus \Z$
and $|q_j - p_j| \leq 2^{12}\, \mathbf{h}_0\, N(j)\, \gold_+^{-2N(j)} k_{N(j)}$
\item 
\rule{0pt}{11pt}$q_j \in (0,1)$ if and only if $p_j \in (0,1)$ if and only if $j-1 \in J(\Z)$
\item
\rule{0pt}{11pt}$\max\{\tfrac{1}{w_j},w_j,\tfrac{1}{q_j},\tfrac{1}{|q_j-1|},q_j\}
\leq 2^4 \max\{k_{N(j)-2},k_{N(j)-1},k_{N(j)},k_{N(j)+1}\}$
\end{itemize}
{\bf Part 3.} Let the map $\mathcal{Q}_L: (0,\infty)^3\to (0,\infty)^2\times \R$ be given as in Definition \ref{def:kdhkhskhkdssPT2}. Then the sequences
$(\mathbf{h}_j)_{j\geq 0}$, $(w_j)_{j\geq 0}$, $(q_j)_{j\geq 0}$ in Part 2 satisfy for all $j\geq 0$:
$$(\mathbf{h}_j,w_j,q_j) = \mathcal{Q}_L(\mathbf{h}_{j+1},w_{j+1},q_{j+1})$$
\end{proposition}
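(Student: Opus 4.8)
The plan is to show that $\mathcal{Q}_L$ is a \emph{left inverse} of the branching map $\boldsymbol{\mathcal{Q}}_R$ in the following precise sense: whenever $(\mathbf{h}',w',q') = \boldsymbol{\mathcal{Q}}_R\{w\}(q,\mathbf{h})$ evaluated at a point where the denominator is positive, one has $(\mathbf{h},w,q) = \mathcal{Q}_L(\mathbf{h}',w',q')$, \emph{provided one is on the correct branch}. Once this single-step identity is established, Part 3 follows immediately by applying it with $(w,q,\mathbf{h}) = (w_j,q_j,\mathbf{h}_j)$ and $(w',q',\mathbf{h}') = (w_{j+1},q_{j+1},\mathbf{h}_{j+1})$ for each $j \geq 0$, using the relation $(q_{j+1},\mathbf{h}_{j+1}) = \boldsymbol{\mathcal{Q}}_R\{w_j\}(q_j,\mathbf{h}_j)$ and $w_{j+1} = \mathcal{Q}_R(w_j)$ from Part 1 and Part 2.

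The first step is to pin down the branch bookkeeping. Comparing Definition \ref{def:epep} (the map $\mathcal{Q}_R$ and the rational functions $\boldsymbol{\mathcal{Q}}_R\{w\}$) with Definition \ref{def:kdhkhskhkdssPT2} (the map $\mathcal{Q}_L$ with its two cases $q \leq 1$ and $q > 1$), I expect the dictionary to be: the case $w < 1$ in $\boldsymbol{\mathcal{Q}}_R$ (where $w' = \tfrac{1}{w} - 1$) corresponds, after inversion, to one of the two cases of $\mathcal{Q}_L$, and the case $w > 1$ (where $w' = w - 1$) to the other; the selector on the $\mathcal{Q}_L$ side is whether $q' \leq 1$ or $q' > 1$. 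This is exactly the content of the third bullet of Part 2 ($q_j \in (0,1)$ iff $j - 1 \in J(\Z)$), which records precisely when the next $\boldsymbol{\mathcal{Q}}_R$ step is a $w<1$ step versus a $w>1$ step; so the case-matching is consistent with Part 2 and I would cite it. Concretely: if $w < 1$ then $\mathrm{num2}/\mathrm{den} = \mathbf{h}/\mathrm{den} > 0$ and $w' = \tfrac1w - 1$, and one checks $q' = \mathrm{num1}/\mathrm{den}$ lands in the range making the $q' \leq 1$ branch of $\mathcal{Q}_L$ the relevant one (and symmetrically for $w > 1$).

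The second step is the actual algebra: invert the degree-one rational map. Writing $(q',\mathbf{h}') = \boldsymbol{\mathcal{Q}}_R\{w\}(q,\mathbf{h})$ in the $w<1$ case, we have $w' = \tfrac1w - 1$, hence $w = \tfrac{1}{1+w'}$ — and this should match $w_L = \tfrac{1}{1+w}$ in the $q \leq 1$ branch of $\mathcal{Q}_L$ after the index shift, i.e.\ $w_L(\mathbf{f}') = w$. Then $\mathbf{h}'/\mathbf{h} = 1/\mathrm{den}$ and $q' = \mathrm{num1}/\mathrm{den}$ are two linear relations in $(\mathbf{h}, q\mathbf{h})$ (with coefficients involving $w$, equivalently $w'$); solving them for $(\mathbf{h}, q)$ and substituting $\log(1 + \tfrac1w) = \log\tfrac{2+w'}{1+w'}$ etc.\ should reproduce $\mathrm{num1}_L/\mathrm{den}_L$, $\mathrm{num2}_L/\mathrm{den}_L$ from Definition \ref{def:kdhkhskhkdssPT2} with $\mathbf{f}$ replaced by $\mathbf{f}' = (\mathbf{h}',w',q')$. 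One does the analogous computation in the $w > 1$ case, where $w = 1 + w'$ matches $w_L = 1 + w$. The permutation component ($\mathcal{P}_L$) does not enter Part 3, so only the three scalar components need checking.

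The main obstacle is purely the bookkeeping of which branch maps to which: the $\mathcal{Q}_R$ side branches on the \emph{input} $w$ being $<1$ or $>1$, whereas the $\mathcal{Q}_L$ side branches on the \emph{input} $q$ (which on the relevant sequence is $q_{j+1}$, the \emph{output} of the $\boldsymbol{\mathcal{Q}}_R$ step) being $\leq 1$ or $> 1$. One has to verify that these two dichotomies are genuinely equivalent along the sequence, which is where Part 2's third bullet and the defining recursion for $J$, $N$ do the work; after that, the remaining verification is a finite, mechanical rational-function identity check — ``by direct calculation, distinguishing the two cases'', exactly as in the proof of the lemma stating \eqref{eq:lkdkjdhkjdhkdfhkdf}. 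I would present the single-step inversion lemma, note the branch equivalence via Part 2, indicate the substitution $\log(1+\tfrac1w) = \log\tfrac{2+w'}{1+w'}$ that makes the two sides coincide literally, and then conclude Part 3 by iterating over $j$.
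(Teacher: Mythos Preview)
Your proposal is correct and takes essentially the same approach as the paper: the paper's proof of Part~3 is a single short paragraph stating exactly that one checks the implication $\{w_{j+1} = \mathcal{Q}_R(w_j),\ (q_{j+1},\mathbf{h}_{j+1}) = \boldsymbol{\mathcal{Q}}_R\{w_j\}(q_j,\mathbf{h}_j)\} \Rightarrow \mathcal{Q}_L(\mathbf{h}_{j+1},w_{j+1},q_{j+1}) = (\mathbf{h}_j,w_j,q_j)$ by direct calculation, distinguishing $j \in J(\Z)$ versus $j \notin J(\Z)$ and citing the equivalence $w_j \in (0,1) \Leftrightarrow j \in J(\Z) \Leftrightarrow q_{j+1} \in (0,1)$ for the branch matching. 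One small slip to fix: your substitution $\log(1+\tfrac{1}{w}) = \log\tfrac{2+w'}{1+w'}$ is the identity for the $w>1$ case (where $w' = w-1$, matching the $q'>1$ branch of $\mathcal{Q}_L$); in the $w<1$ case (where $w' = \tfrac{1}{w}-1$, matching the $q'\leq 1$ branch) the correct identity is $\log(1+\tfrac{1}{w}) = \log(2+w')$.
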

\begin{example}
In Part 1 of
Proposition \ref{prop:kjdhkhkd}, suppose the continued fraction expansions begin as follows:
$(1+q_0)^{-1} = \langle 1,2,\ldots\rangle$ and $w_0 = \langle 3,1,2,4\ldots \rangle$. Then,
\newcommand{\YES}{\hskip2mm 1 \hskip2mm}
\newcommand{\NO}{\hskip2mm 0 \hskip2mm}
$$\begin{array}{c | c c c c c c c c c c c c c c}
j & -3 & -2 & -1 & 0 & 1 & 2 & 3 & 4 & 5 & 6 & 7 & 8 & 9 & 10\\
\hline
\chi_{J(\Z)}(j) & \YES & \NO & \YES & \YES &\NO  & \NO & \YES & \YES & \NO & \YES & \NO & \NO & \NO & \YES \\
N(j) & -2 & -1 & -1 & 0 & 1 & 1 & 1 & 2 & 3 & 3 & 4 & 4 & 4 & 4\\
J(N(j))   & -3 & -1 & -1 & 0 & 3 & 3 & 3 & 4 & 6 & 6& 10 & 10 & 10 & 10
\end{array}$$
\end{example}
\begin{proof}[of Proposition \ref{prop:kjdhkhkd}] Two basic properties of $J$ and $N$ are, for all $j \in \Z$:
\begin{itemize}
\item $N \circ J$ is the identity; consequently $J(N(j)) = j$ if and only if $j \in J(\Z)$
\item $J(N(j)) \geq j$ and $J(N(j)-1) \leq j-1$ by \eqref{eq:minim}; consequently
\begin{equation}\label{eq:khkdhkfddfdxsysxyd}
j \leq J(N(j)) \leq k_{N(j)} + j - 1
\end{equation}
\end{itemize}
The second bullet implies $w_j > 0$ and $p_j > 0$, for all $j\in \Z$. The first bullet implies that
$w_j \in (0,1)$ if and only if $j \in J(\Z)$. Therefore, we have
\begin{align*}
\mathcal{Q}_R(w_j) & = \begin{cases} \tfrac{1}{w_j}-1 & \text{if $j \in J(\Z)$}\\
w_j - 1 & \text{if $j \notin J(\Z)$}
\end{cases}
&
\;\;\boldsymbol{\mathcal{Q}}_R\{w_j\}(p_j,0) &= \begin{cases} (\tfrac{1}{1+p_j},0) & \text{if $j \in J(\Z)$}\\
(1+p_j,0) & \text{if $j \notin J(\Z)$}
\end{cases}
\end{align*}
In the case $j \notin J(\Z)$, we have $N(j+1) = N(j)$, and therefore $w_{j+1} = w_j - 1$ and
$p_{j+1} = p_j + 1$, as required. In the case $j \in J(\Z)$, we have $N(j+1) = N(j)+1$ and
$J(N(j+1)) = J(N(j)+1) = J(N(j)) + k_{N(j)+1} = j + k_{N(j)+1}$, which implies
\begin{align*}
w_{j+1} & = \langle k_{N(j)+2},k_{N(j)+3},\ldots \rangle + k_{N(j)+1} - 1 = \tfrac{1}{w_j} - 1\\
p_{j+1} & = \langle k_{N(j)},k_{N(j)-1},\ldots\rangle = \tfrac{1}{1+p_j}
\end{align*}
as required. {\bf Part 1} is checked.\\
To prove {\bf Part 2}, we first construct two sequences $(q_j)_{j \geq 0}$
and $(\mathbf{h}_j)_{j \geq 0}$. Then we verify that they have the desired properties. Below, a sequence of dots\,\,$\cdots$ in any product of the form
$F_m \cdots F_n$ indicates that indices increase towards the left, one by one. The product is equal to one if $m=n-1$. Define sequences $(w_n^{\ast})_{n \in \Z}$ and $(p_n^{\ast})_{n \in \Z}$ by
$w^{\ast}_n = w_{J(n)} \in (0,1) \setminus \mathbb{Q}$ and $p^{\ast}_n = p_{J(n)} \in (0,\infty)\setminus \mathbb{Q}$. Equivalently,
$$ \tfrac{1}{1+p^{\ast}_n} = \langle k_n,k_{n-1},k_{n-2},\ldots \rangle
\qquad
w_n^{\ast} = \langle k_{n+1},k_{n+2},k_{n+3},\ldots \rangle$$
so that
$w_{n+1}^{\ast} = \mathcal{E}_R(w_n^{\ast})$ and $(p_{n+1}^{\ast},0) = 
\boldsymbol{\mathcal{E}}_R\{w_n^{\ast}\}(p_n^{\ast},0)$, by Proposition \ref{prop:kshjajjjajsdkds}.
Let
$(V_n^{\ast})_{n\geq 0}$, with $V_n^{\ast} = (\mu_n^{\ast},\nu_n^{\ast})^T$, 
as in Definition \ref{def:kdhkdhkhfd},
be the 
solution to $V_{n+1}^{\ast} = X_n^{\ast} V_n^{\ast} + Y_n^{\ast}$ for all $n \geq 0$ with
 $\mu_0^{\ast} = 0$ and $\nu_0^{\ast} = (1+w_0^{\ast})/\mathbf{h}_0 > 0$, where
$$X_n^{\ast} = \frac{1}{w_n^{\ast}}\begin{pmatrix}
-\tfrac{1}{1+p_n^{\ast}} & 0 \\
1 & 1 + p_n^{\ast} \end{pmatrix}
\qquad
Y_n^{\ast} = 
\frac{1}{w_n^{\ast}}\begin{pmatrix} A_1(w_n^{\ast}) - p_{n+1}^{\ast}\,A_2(w_n^{\ast}) \\
A_2(w_n^{\ast})
\end{pmatrix}
$$
\newcommand{\repindex}{s}%
Let $(V_j)_{j \geq 0}$, with $V_j = (\mu_j,\nu_j)^T$, be given by
$V_0 = V_0^{\ast}$ and for all $j \geq 1$ by
$V_j = X_{N(j)-1}^{\ast} V_{N(j)-1}^{\ast} + Y_j$, where
$$
Y_j = \frac{1}{w^{\ast}_{\repindex}} \begin{pmatrix}
A_1(w_{\repindex}^{\ast},j-J(\repindex)) - p_j\,A_2(w_{\repindex}^{\ast},j-J(\repindex))\\
A_2(w_{\repindex}^{\ast},j-J(\repindex))
\end{pmatrix}\bigg|_{\repindex = N(j)-1}
$$
The functions $A_1$ and $A_2$,
on the right hand side, are well defined at $(w_\repindex,j-J(\repindex))$, where
$\repindex = N(j)-1$, because 
$1 \leq j - J(\repindex) \leq k_{N(j)} = \lfloor 1/w^{\ast}_{\repindex} \rfloor$. 
The following two observations will be used later on:
\begin{itemize}
\item Recall \eqref{eq:pjpjpj}. For all $j \geq 1$, $\repindex = N(j)-1$, we have
$p_j = (j-J(\repindex)) - p^{\ast}_{\repindex}/(1+p^{\ast}_{\repindex})$, and consequently the estimates
after \eqref{eq:dhdkjhdkd} apply to $Y_j$, $j\geq 1$. They also apply to
$Y^{\ast}_n$, $n\geq 0$, because $p^{\ast}_{n+1} = \lfloor 1/w^{\ast}_n\rfloor - p^{\ast}_n/(1+p^{\ast}_n)$.
\item $Y_{J(n)} = Y^{\ast}_{n-1}$ for all $n\geq 1$, and consequently
 $V_{J(n)} = V^{\ast}_n$. The last identity is also true when $n=0$, because $J(0) = 0$.
\end{itemize}
As in Definition \ref{def:propagator}, set
 $P_{n,m}^{\ast}
= X_{n-1}^{\ast}\cdots X_m^{\ast}$
for all $n\geq m \geq 0$. For all $j \geq 1$, $\repindex=N(j)-1$, Lemma \ref{lemme:1} implies
$$V_j = X^{\ast}_\repindex \big(P_{\repindex,0}^{\ast} V_0^{\ast} + \textstyle\sum_{\ell=0}^{\repindex-1} P_{\repindex,\ell+1}^{\ast} Y_{\ell}^{\ast}\big)
+ Y_j
= P_{\repindex+1,0}^{\ast} V_0^{\ast} +
\textstyle\sum_{\ell=0}^{\repindex-1} P_{\repindex+1,\ell+1}^{\ast} Y_{\ell}^{\ast} + Y_j
$$
The last equation, the estimates after \eqref{eq:dhdkjhdkd}, and the estimates in Lemma \ref{lem:kdskjshkshks} imply
\begin{subequations}\label{eq:est1}
\begin{align}
|\mu_j| & \leq \frac{2^5}{w^{\ast}_\repindex} \sum_{\ell=0}^\repindex \frac{1}{w_{\ell}^{\ast}}\\
\nu_j & \geq \frac{1}{2 w_{\repindex}^{\ast}} \Big(\frac{1}{w^{\ast}_{\repindex-1}\cdots w_{-1}^{\ast}}\Big)^2
\bigg(w^{\ast}_{-1}\nu_0 - 2^8 \sum_{\ell=0}^{\repindex}\big(w_{\ell-1}^{\ast}\cdots w_{-1}^{\ast}\big)^2\; \log\Big(1 + \frac{1}{w^{\ast}_{\ell}}\Big)\hskip-1pt\bigg)\\
\nu_j & \leq \frac{2}{w_{\repindex}^{\ast}} \Big(\frac{1}{w^{\ast}_{\repindex-1}\cdots w_{-1}^{\ast}}\Big)^2
\bigg(w^{\ast}_{-1}\nu_0 + 2^6 \sum_{\ell=0}^{\repindex}\big(w_{\ell-1}^{\ast}\cdots w_{-1}^{\ast}\big)^2\; \log\Big(1 + \frac{1}{w^{\ast}_{\ell}}\Big)\hskip-1pt\bigg)
\end{align}
\end{subequations}
for all $j\geq 1$ and $\repindex=N(j)-1$.
All three estimates are also true when $j=0$, $\repindex=-1$.
Abbreviate $\mathbf{C} = \mathbf{C}(w_0,q_0) \geq 1$. We have $k_n \leq \mathbf{C}\gold_+^{2n}$
for all $n\geq 0$. Estimate 
\begin{align*}
& 2^8 \textstyle\sum_{\ell=0}^\repindex\big(w_{\ell-1}^{\ast}\cdots w_{-1}^{\ast}\big)^2\; \log\big(1 + 1/w^{\ast}_{\ell}\big)\\
& \;\;\leq\; 2^8 w_{-1}^{\ast} \textstyle\sum_{\ell=0}^{\infty} (\gold_+)^{-2\ell + 2} \log(2 + k_{\ell+1})
\hskip 20mm \text{see inequality \eqref{eq:ljfljdflfd}}\\
& \;\;\leq\; 2^8 w_{-1}^{\ast} \textstyle\sum_{\ell=0}^{\infty} (\gold_+)^{-2\ell + 2} (2+\log k_{\ell+1})\\
& \;\;\leq\; 2^{13} w_{-1}^{\ast} \big(1 + \log \mathbf{C}\big) \;\leq\; 2^{13} w_{-1}^{\ast} \mathbf{C}
\;\leq\; 2^{-1} w^{\ast}_{-1}\tfrac{1}{\mathbf{h}_0} \;\leq\; 2^{-1}w^{\ast}_{-1}\nu_0
\end{align*}
Hence, for all $j \geq 0$,
\begin{align}\label{eq:est2}
\frac{1}{4}\;\;\leq\;\;\frac{w_{N(j)-1}^{\ast}}{w^{\ast}_{-1}} \big(w^{\ast}_{N(j)-2}\cdots w_{-1}^{\ast}\big)^2 \frac{\mathbf{h}_0}{1+w_0^{\ast}}\;\nu_j & \;\;\leq\;\; 4
\end{align}
Define sequences $(\mathbf{h}_j)_{j\geq 0}$
and $(q_j)_{j \geq 0}$ by $\mathbf{h}_j = (1+w_j)/\nu_j > 0$
and  $q_j = p_j + \mu_j/\nu_j$. These definitions are consistent when $j=0$.
Observe that
$1 + w_j \leq 2 + J(N(j))-j
\leq 1+k_{N(j)} \leq 2/w^{\ast}_{N(j)-1}$. Therefore, the estimates \eqref{eq:ljfljdflfd}, \eqref{eq:est1}, \eqref{eq:est2} imply for $j\geq 1$:
\begin{subequations}\label{ddkhkfdhkdhkfd}
\begin{align}
\frac{1}{4} & \leq \frac{\mathbf{h}_j}{\mathbf{H}_j} \leq 4
\quad \text{where}\quad
\mathbf{H}_j = \mathbf{h}_0\, \frac{1+w_j}{1+w_0^{\ast}}\,\prod_{\ell=0}^{N(j)-1}
\big(w^{\ast}_{\ell} w^{\ast}_{\ell-1}\big)\\
\mathbf{H}_j & 
\leq 2\,\mathbf{h}_0\,
\big(\textstyle\prod_{\ell=0}^{N(j)-2} w^{\ast}_{\ell}\big)
\big(\textstyle\prod_{\ell=-1}^{N(j)-2} w^{\ast}_{\ell}\big)
\leq 2^4\,\mathbf{h}_0\, \gold_+^{-2N(j)}\\
\notag |q_j - p_j| 
& \leq 
\frac{2^7\mathbf{h}_0}{w_{N(j)-1}^{\ast}} \Big(\sum_{\ell=0}^{N(j)-1}\hskip-2pt\frac{1}{w_{\ell}^{\ast}}\Big)
 \prod_{\ell=0}^{N(j)-1}  (w_{\ell}^{\ast} w_{\ell-1}^{\ast})
\leq 2^{12}\, \mathbf{h}_0\, N(j)\, \gold_+^{-2N(j)} k_{N(j)}
\end{align}
\end{subequations}
The left hand sides are also less than or equal to the right hand sides when $j=0$.
Using \eqref{eq:pjpjpj}, one estimates
\begin{align*}
\mathrm{dist}_{\R} (p_j,\Z) & = \mathrm{dist}_{\R}\big(\langle k_{N(j)-1},k_{N(j)-2},\ldots \rangle,\{0,1\}\big)\\
& \geq \min \big\{\frac{1}{k_{N(j)-1}+1},\frac{1}{k_{N(j)-2}+2}\big\}
 \geq \frac{1}{3 \max \{k_{N(j)-1},k_{N(j)-2}\}}
\end{align*}
By the definition of $\mathbf{C}$ and by the assumption $\mathbf{h}_0 \leq 2^{-14} \mathbf{C}^{-1}$, we have
$|q_j-p_j| \leq \tfrac{3}{4} \mathrm{dist}_{\R} (p_j,\Z) < \mathrm{dist}_{\R} (p_j,\Z)$ for all $j\geq 0$. Therefore,
$q_j \in (0,\infty)\setminus \Z$. Moreover, $q_j\in (0,1)$ iff $p_j \in (0,1)$ iff
$k_{N(j)} + j - J(N(j))-1 = 0$ iff $J(N(j)-1) = j-1$ iff $N(j)-1 = N(j-1)$ iff $j-1\in J(\Z)$.\\
For every $j\geq 0$,
\begin{align*}
w_j & \leq J(N(j))-j+1 \leq k_{N(j)}\\
1/w_j & \leq k_{N(j)+1} + 1\\
q_j & \leq p_j + 1 \leq k_{N(j)} + j - J(N(j)) + 1 \leq k_{N(j)} + 1\\
\big(\mathrm{dist}_{\R}(q_j,\Z)\big)^{-1} & \leq 4\, \big(\mathrm{dist}_{\R}(p_j,\Z)\big)^{-1} \leq 12 \max\{k_{N(j)-1},k_{N(j)-2}\}
\end{align*}
Finally, we show that for all $j \geq 0$,
\begin{enumerate}[(a)]
\item the denominator of $\boldsymbol{\mathcal{Q}}_R\{w_j\}$, given by \eqref{eq:kdshkhkfdhfdX1} or \eqref{eq:kdshkhkfdhfdX2},
is strictly positive at $(q_j,\mathbf{h}_j)$
\item $(q_{j+1},\mathbf{h}_{j+1}) = \boldsymbol{\mathcal{Q}}_R\{w_j\}(q_j,\mathbf{h}_j)$
\end{enumerate}
For all $j \geq 0$, we have
$$2\mathbf{h}_jk_{N(j)+1} \leq 2 (2^6\mathbf{h}_0 \gold_+^{-2N(j)}) (\mathbf{C} \gold_+^{2N(j)+2})
\leq 2^9 \mathbf{h}_0 \mathbf{C} \leq 2^{-5}$$
This implies $\mathbf{h}_j \log (1+1/w_j) \leq 2\mathbf{h}_jk_{N(j)+1} < 2^{-1}$, which by inspection of
\eqref{eq:kdshkhkfdhfdX1} and \eqref{eq:kdshkhkfdhfdX2} implies (a). To show (b),
observe that by construction of $(V_n^{\ast})_{n \geq 0}$,
$$\Big(p_{n+1}^{\ast} + \frac{\mu_{n+1}^{\ast}}{\nu_{n+1}^{\ast}},\;\frac{1+w_{n+1}^{\ast}}{\nu_{n+1}^{\ast}}\Big) = \boldsymbol{\mathcal{E}}_R\{w_n^{\ast}\}\Big(p_n^{\ast} + \frac{\mu_n^{\ast}}{\nu_n^{\ast}},\;\frac{1+w_n^{\ast}}{\nu_n^{\ast}}\Big)$$
for all $n \geq 0$, see Definition \ref{def:kdhkdhkhfd} and Proposition \ref{prop:jdhkhkds}. 
Since $V_{J(n)} = V_n^{\ast}$ for all $n \geq 0$
and since $\lfloor 1/w_n^{\ast}\rfloor = k_{n+1} = J(n+1)-J(n)$, the last equation is equivalent to
\begin{equation}\label{eq:dskhkdhd}
(q_{J(n+1)},\mathbf{h}_{J(n+1)}) 
= 
\boldsymbol{\mathcal{Q}}_R^{J(n+1)-J(n)}\{w_{J(n)}\}(q_{J(n)},\mathbf{h}_{J(n)})
\end{equation}
By Proposition \ref{prop:jdhkhkds} and by the construction of $(V_j)_{j \geq 0}$,
for all $j \geq 1$, $s = N(j)-1$,
$$\Big(p_j + \frac{\mu_j}{\nu_j},\;\frac{1+1/w_s^{\ast} - j + J(s)}{\nu_j}\Big) = \boldsymbol{\mathcal{Q}}_R^{j-J(s)}\{w_s^{\ast}\}\Big(p_s^{\ast} + \frac{\mu_s^{\ast}}{\nu_s^{\ast}},\;\frac{1+w_s^{\ast}}{\nu_s^{\ast}}\Big)$$
Since $1/w_s^{\ast} - j + J(s) = w_j$, this implies
$(q_j,\mathbf{h}_j) = \boldsymbol{\mathcal{Q}}_R^{j-J(s)}\{w_{J(s)} \}(q_{J(s)},\mathbf{h}_{J(s)})$,
for all $j \geq 1$, $s=N(j)-1$.
The last identity and \eqref{eq:dskhkdhd} imply
$(q_j,\mathbf{h}_j) = \boldsymbol{\mathcal{Q}}_R^j\{w_0\}(q_0,\mathbf{h}_0)$ for all $j \geq 0$, which is equivalent to (b).\\
To prove {\bf Part 3}, check that for all $j\geq 0$ the following implication holds:
$$
\left. \begin{aligned}
w_{j+1} & = \mathcal{Q}_R(w_j)\\
(q_{j+1},\mathbf{h}_{j+1}) & = \boldsymbol{\mathcal{Q}}_R\{w_j\}(q_j,\mathbf{h}_j)
\end{aligned}\;\;\right\}
\quad \Longrightarrow \quad
\mathcal{Q}_L(\mathbf{h}_{j+1},w_{j+1},q_{j+1}) = (\mathbf{h}_j,w_j,q_j)
$$
To make this calculation, distinguish the cases $j \in J(\Z)$ and $j\notin J(\Z)$, and recall $w_j,q_j \in (0,\infty)\setminus \Z$ and that $w_j\in (0,1)$ iff $j \in J(\Z)$ iff $q_{j+1}\in (0,1)$.
\qed
\end{proof}

\section{An abstract semi-global existence theorem}
This section is logically self-contained, and the notation is introduced from scratch. The objects in this section are abstractions of concrete objects that appear in other sections of this paper. This relationship is reflected in the choice of notation: abstract objects are named after their concrete counterparts, whenever possible. \emph{This section is an independent unit. Definitions in other sections are irrelevant here and must be ignored.}
\begin{definition} For every integer $d \geq 1$, denote by $\|\cdot\|$ the Euclidean distance in $\R^d$. Set
$B[\delta,\mathbf{f}] = \{\mathbf{g} \in \R^d\,|\, \|\mathbf{g}-\mathbf{f}\| \leq \delta\}$ for every $\delta \geq 0$ and every $\mathbf{f}\in \R^d$.
\end{definition}
\begin{proposition}\label{prop:abstract}
Fix an integer $d \geq 1$. Suppose:
\begin{enumerate}[(a)]
\item\label{item:abs1} $\mathcal{F} \subset \R^d$ is a nonempty open subset
and
 $B\mathcal{F} = \{(\delta,\mathbf{f})\in [0,\infty)\times \mathcal{F}\,\big|\, B[\delta,\mathbf{f}] \subset \mathcal{F}\}$.
\item\label{item:prop} $\Pi_j: \mathcal{F}\to \R^d$ is a continuous map, for every integer $j\geq 1$.
\item\label{item:abs2} $\mathcal{Q}_L: \mathcal{F} \to \R^d$ and
$\mathrm{Err}:  B\mathcal{F}  \to [0,\infty)$ and
$\mathrm{Lip}:  B\mathcal{F}  \to [0,\infty)$ are maps such that
for all $(\delta,\mathbf{f})\in B\mathcal{F}$:
\begin{subequations}
\begin{align}
\label{eq:kdhkd1}
\textstyle\sup_{j \geq 1}\sup_{\mathbf{g}\in B[\delta,\mathbf{f}]}
\;\|\Pi_j(\mathbf{g}) - \mathcal{Q}_L(\mathbf{g})\| & \leq \mathrm{Err}(\delta,\mathbf{f})\\
\label{eq:kdhkd2}
{\textstyle\sup_{\mathbf{g},\mathbf{g}'\in B[\delta,\mathbf{f}],\; \mathbf{g}\neq \mathbf{g}'}}
\;\frac{\|\mathcal{Q}_L(\mathbf{g}) - \mathcal{Q}_L(\mathbf{g}'\,)\|}{\|\mathbf{g}-\mathbf{g}'\,\|} & \leq \mathrm{Lip}(\delta,\mathbf{f})
\end{align}
\end{subequations}
\item\label{item:zeta} $(\delta_j,\mathbf{f}_j)_{j \geq 0}$ is a sequence in $B\mathcal{F}$
so that $\mathbf{f}_{j-1} = \mathcal{Q}_L(\mathbf{f}_j)$ for all $j \geq 1$, and so that
\begin{equation}\label{eq:kfdhkhdkd}
\textstyle\sum_{n=j+1}^{\infty} \big\{ \prod_{k=j+1}^{n-1} \mathrm{Lip}(\delta_{k},\mathbf{f}_{k})\big\}\, \mathrm{Err} (\delta_n,\mathbf{f}_n)\;\;\leq\;\;\delta_j
\end{equation}
for all $j \geq 0$.
\end{enumerate}
Then, there exists a sequence $(\mathbf{g}_j)_{j \geq 0}$
with $\mathbf{g}_j \in B[\delta_j,\mathbf{f}_j] \subset \mathcal{F}$ such that for all $j\geq 1$:
$$\mathbf{g}_{j-1} = \Pi_j (\mathbf{g}_j)$$
\end{proposition}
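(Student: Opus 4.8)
\textbf{Proof plan for Proposition \ref{prop:abstract}.}

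The plan is to construct the sequence $(\mathbf{g}_j)_{j\geq 0}$ by a diagonal/backward-limit argument: for each fixed $j$, run the maps $\Pi_{j+1},\ldots,\Pi_{j+n}$ starting from $\mathbf{f}_{j+n}$ to produce an approximation of $\mathbf{g}_j$, and show these approximations converge as $n\to\infty$. Concretely, for integers $j\geq 0$ and $n\geq 0$ define $\mathbf{g}_j^{(n)}\in\R^d$ by $\mathbf{g}_j^{(0)}=\mathbf{f}_j$ and, for $n\geq 1$,
$$\mathbf{g}_j^{(n)} = \Pi_{j+1}\big(\Pi_{j+2}(\cdots \Pi_{j+n}(\mathbf{f}_{j+n})\cdots)\big),$$
so that $\mathbf{g}_j^{(n)} = \Pi_{j+1}(\mathbf{g}_{j+1}^{(n-1)})$ whenever $n\geq 1$. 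First I would prove by downward induction on $k$ (from $k=j+n$ down to $k=j$) the inclusion $\mathbf{g}_k^{(j+n-k)} \in B[\delta_k,\mathbf{f}_k]$, i.e.\ that each intermediate point stays in the admissible ball; this is needed so that the next application of $\Pi_{j}$ is legitimate (its domain is $\mathcal{F}$, and $B[\delta_k,\mathbf{f}_k]\subset\mathcal{F}$ by (\ref{item:abs1})). The base case $k=j+n$ is $\mathbf{f}_{j+n}\in B[\delta_{j+n},\mathbf{f}_{j+n}]$, which is trivial. For the inductive step at index $k$, write, using $\mathbf{f}_{k}=\mathcal{Q}_L(\mathbf{f}_{k+1})$ and $\mathbf{g}_{k+1}^{(\cdot)}\in B[\delta_{k+1},\mathbf{f}_{k+1}]$,
$$\|\mathbf{g}_k^{(\cdot)} - \mathbf{f}_k\| = \|\Pi_{k+1}(\mathbf{g}_{k+1}^{(\cdot)}) - \mathcal{Q}_L(\mathbf{f}_{k+1})\| \leq \|\Pi_{k+1}(\mathbf{g}_{k+1}^{(\cdot)}) - \mathcal{Q}_L(\mathbf{g}_{k+1}^{(\cdot)})\| + \|\mathcal{Q}_L(\mathbf{g}_{k+1}^{(\cdot)}) - \mathcal{Q}_L(\mathbf{f}_{k+1})\|,$$
and bound the first term by $\mathrm{Err}(\delta_{k+1},\mathbf{f}_{k+1})$ via \eqref{eq:kdhkd1} and the second by $\mathrm{Lip}(\delta_{k+1},\mathbf{f}_{k+1})\|\mathbf{g}_{k+1}^{(\cdot)}-\mathbf{f}_{k+1}\|$ via \eqref{eq:kdhkd2}. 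Iterating this recursion down from $k+1$ to $j$ and telescoping gives
$$\|\mathbf{g}_j^{(n)} - \mathbf{f}_j\| \leq \sum_{m=j+1}^{j+n} \Big\{\prod_{\ell=j+1}^{m-1}\mathrm{Lip}(\delta_\ell,\mathbf{f}_\ell)\Big\}\,\mathrm{Err}(\delta_m,\mathbf{f}_m) \leq \delta_j,$$
the last inequality being exactly \eqref{eq:kfdhkhdkd}. This simultaneously establishes the inclusion $\mathbf{g}_j^{(n)}\in B[\delta_j,\mathbf{f}_j]$ for all $j,n$ and makes the whole construction well-defined.

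Next I would show that $(\mathbf{g}_j^{(n)})_{n\geq 0}$ is Cauchy in $n$, for each fixed $j$. Using the same two-term estimate but now comparing $\mathbf{g}_j^{(n)} = \Pi_{j+1}(\mathbf{g}_{j+1}^{(n-1)})$ with $\mathbf{g}_j^{(n')} = \Pi_{j+1}(\mathbf{g}_{j+1}^{(n'-1)})$ for $n'>n$: since both $\Pi_{j+1}$-arguments lie in $B[\delta_{j+1},\mathbf{f}_{j+1}]$, one does \emph{not} get a contraction directly from $\Pi$; instead I would peel off one $\Pi$ at a time, at each step replacing a difference $\|\Pi_{k+1}(x)-\Pi_{k+1}(y)\|$ with $\|\Pi_{k+1}(x)-\mathcal{Q}_L(x)\| + \|\mathcal{Q}_L(x)-\mathcal{Q}_L(y)\| + \|\mathcal{Q}_L(y)-\Pi_{k+1}(y)\|$ and using \eqref{eq:kdhkd1}, \eqref{eq:kdhkd2}. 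This shows $\mathbf{g}_j^{(n)}$ and $\mathbf{g}_j^{(n')}$ agree up to $\mathcal{Q}_L$-propagated errors coming only from indices $>j+n$, hence by the tail of the convergent series in \eqref{eq:kfdhkhdkd} the difference tends to $0$ as $n\to\infty$ uniformly in $n'$. Therefore $\mathbf{g}_j := \lim_{n\to\infty}\mathbf{g}_j^{(n)}$ exists; by closedness of $B[\delta_j,\mathbf{f}_j]$ we get $\mathbf{g}_j\in B[\delta_j,\mathbf{f}_j]\subset\mathcal{F}$.

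Finally, the fixed-point relations $\mathbf{g}_{j-1}=\Pi_j(\mathbf{g}_j)$ follow by passing to the limit in $\mathbf{g}_{j-1}^{(n)} = \Pi_j(\mathbf{g}_j^{(n-1)})$: the left side converges to $\mathbf{g}_{j-1}$, and the right side converges to $\Pi_j(\mathbf{g}_j)$ because $\Pi_j$ is continuous on $\mathcal{F}$ (assumption (\ref{item:prop})) and $\mathbf{g}_j^{(n-1)}\to\mathbf{g}_j$ with all terms in $\mathcal{F}$. I expect the main obstacle to be the Cauchy estimate in the second paragraph: unlike a standard Banach contraction, the maps $\Pi_j$ are only close to $\mathcal{Q}_L$, not contractions themselves, so one must be careful to track that the discrepancy between two backward iterates is controlled purely by the series tail $\sum_{m>j+n}\{\prod\mathrm{Lip}\}\mathrm{Err}$, with no spurious contribution from the finitely many early indices — the key point being that the two iterates share the \emph{same} $\Pi$'s at indices $j+1,\ldots,j+n$, so those differences are governed by $\mathrm{Lip}$ of $\mathcal{Q}_L$ plus $\mathrm{Err}$ corrections that themselves telescope into the tail. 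Everything else is bookkeeping with the telescoping sums and the hypothesis \eqref{eq:kfdhkhdkd}.
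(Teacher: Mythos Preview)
Your construction of the approximants $\mathbf{g}_j^{(n)}$ and the proof that $\mathbf{g}_j^{(n)}\in B[\delta_j,\mathbf{f}_j]$ for all $j,n$ are correct and coincide with the paper's argument (your $\mathbf{g}_j^{(n)}$ is the paper's $\mathbf{g}_j^{0,j+n}$). The final passage to the limit via continuity of $\Pi_j$ is also fine. The gap is in the Cauchy step.

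When you compare $\mathbf{g}_j^{(n)}$ with $\mathbf{g}_j^{(n')}$ by peeling off one $\Pi$ at a time, the estimate you obtain at each intermediate index $k\in\{j,\ldots,j+n-1\}$ is
\[
\|\Pi_{k+1}(x)-\Pi_{k+1}(y)\|\;\leq\;2\,\mathrm{Err}(\delta_{k+1},\mathbf{f}_{k+1})\;+\;\mathrm{Lip}(\delta_{k+1},\mathbf{f}_{k+1})\,\|x-y\|.
\]
Iterating from $k=j+n-1$ down to $k=j$ gives
\[
\|\mathbf{g}_j^{(n)}-\mathbf{g}_j^{(n')}\|\;\leq\;2\sum_{m=j+1}^{j+n}\Big\{\prod_{\ell=j+1}^{m-1}\mathrm{Lip}(\delta_\ell,\mathbf{f}_\ell)\Big\}\mathrm{Err}(\delta_m,\mathbf{f}_m)\;+\;\Big\{\prod_{\ell=j+1}^{j+n}\mathrm{Lip}(\delta_\ell,\mathbf{f}_\ell)\Big\}\,d_{j+n},
\]
where $d_{j+n}=\|\mathbf{g}_{j+n}^{(0)}-\mathbf{g}_{j+n}^{(n'-n)}\|$. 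The second term does tend to zero (it is bounded by the tail of the series in \eqref{eq:kfdhkhdkd}), but the first term is bounded only by $2\delta_j$ and does \emph{not} vanish as $n\to\infty$. These $\mathrm{Err}$ contributions at the early indices $j+1,\ldots,j+n$ are genuine: sharing the same $\Pi_{k+1}$ on both sides does not help, because the hypotheses give no Lipschitz control on $\Pi_{k+1}$ itself, only continuity and closeness to $\mathcal{Q}_L$. So the claimed telescoping into the tail does not occur, and the sequence need not be Cauchy under the stated hypotheses.

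The paper avoids this issue entirely: since each $\mathbf{g}_j^{(n)}$ lies in the \emph{compact} ball $B[\delta_j,\mathbf{f}_j]\subset\R^d$, one extracts for each $j$ a convergent subsequence, nests these subsequences $\mathcal{L}_0\supset\mathcal{L}_1\supset\cdots$, and then runs a diagonal argument to obtain a single subsequence along which $\mathbf{g}_j^{(n)}$ converges for every $j$ simultaneously. Continuity of $\Pi_j$ then yields $\mathbf{g}_{j-1}=\Pi_j(\mathbf{g}_j)$ in the limit. To repair your argument, replace the Cauchy paragraph with this compactness--diagonal extraction.
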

\begin{proof}
For all integers $0 \leq j \leq \ell$, set
$$E_j^{\ell} =
\textstyle\sum_{n=j+1}^\ell  \big\{\prod_{k=j+1}^{n-1} \mathrm{Lip}(\delta_k,\mathbf{f}_k)\big\}\,\mathrm{Err} (\delta_n,\mathbf{f}_n)\;\;\in\;\;[0,\infty)
$$
Then $E_j = \lim_{\ell \to \infty} E_j^{\ell}$ is the left hand side of \eqref{eq:kfdhkhdkd}. Observe that $E_j^{j} = 0$
and $E_j^{\ell} \leq E_j \leq \delta_j$ by (\ref{item:zeta}). Moreover, $E_{j-1}^{\ell} = \mathrm{Lip}(\delta_{j},\mathbf{f}_{j}) E_{j}^{\ell}
+ \mathrm{Err}(\delta_{j},\mathbf{f}_{j})$ when $1 \leq j \leq \ell$.\\
For all integers $0 \leq m \leq \ell$, let $\text{(A)}^{m,\ell}$ be the statement:
\emph{There is a finite sequence $\mathbf{g}^{m,\ell} = (\mathbf{g}^{m,\ell}_j)_{m \leq j \leq \ell}$
with $\mathbf{g}^{m,\ell}_j \in B[E^{\ell}_j,\mathbf{f}_j] \subset B[\delta_j,\mathbf{f}_j] \subset \mathcal{F}$ for all $m\leq j \leq \ell$, such that $\mathbf{g}^{m,\ell}_{\ell} = \mathbf{f}_{\ell}$ and $\mathbf{g}^{m,\ell}_{j-1} = \Pi_j(\mathbf{g}^{m,\ell}_{j})$ when $m+1 \leq j \leq \ell$.} Observe that if $\text{(A)}^{m,\ell}$ is true, then the sequence $\mathbf{g}^{m,\ell}$ is unique.\\
For every fixed $\ell \geq 0$, we show by induction over $m$, one-by-one from $m=\ell$ down to $m=0$, that $\text{(A)}^{m,\ell}$ is true. The base case $\text{(A)}^{\ell,\ell}$ is trivial. For the induction step, let $1 \leq m \leq \ell$ and suppose $\text{(A)}^{m,\ell}$ is true. Define  $\mathbf{g}^{m-1,\ell}$ by $\mathbf{g}^{m-1,\ell}_j = \mathbf{g}^{m,\ell}_j
\in B[E^{\ell}_j ,\mathbf{f}_j] \subset B[\delta_j,\mathbf{f}_j] \subset \mathcal{F}$ when $m\leq j \leq \ell$, and set $\mathbf{g}^{m-1,\ell}_{m-1} = \Pi_{m}(\mathbf{g}^{m-1,\ell}_{m}) = \Pi_m(\mathbf{g}^{m,\ell}_m) \in \R^d$.
The statement $\text{(A)}^{m-1,\ell}$ is true, if $\mathbf{g}^{m-1,\ell}_{m-1}\in B[E^{\ell}_{m-1},\mathbf{f}_{m-1}]$, which follows from
\begin{align*}
\|\mathbf{g}^{m-1,\ell}_{m-1} - \mathbf{f}_{m-1}\| & = \|\Pi_m(\mathbf{g}^{m,\ell}_m) - \mathcal{Q}_L(\mathbf{f}_m)\|\\
& \leq \|\Pi_m(\mathbf{g}^{m,\ell}_m) - \mathcal{Q}_L(\mathbf{g}^{m,\ell}_m)\|
+ \|\mathcal{Q}_L(\mathbf{g}^{m,\ell}_m) - \mathcal{Q}_L(\mathbf{f}_m)\|\\
& \leq \mathrm{Err}(\delta_m,\mathbf{f}_m) + \mathrm{Lip}(\delta_m,\mathbf{f}_m) E_m^{\ell} = E_{m-1}^{\ell}
\end{align*}
We have shown that $\text{(A)}^{m,\ell}$ is true for all $0 \leq m \leq \ell$. For all integers $0 \leq j \leq \ell$,
set $\mathbf{g}^{\ell}_j = \mathbf{g}^{0,\ell}_j \in B[\delta_j,\mathbf{f}_j]$, where $\mathbf{g}^{0,\ell} = (\mathbf{g}^{0,\ell}_j)_{0\leq j \leq \ell}$ is the sequence in $\text{(A)}^{0,\ell}$. 
For every fixed $j \geq 0$, the sequence $\mathbf{g}_j = (\mathbf{g}^{\ell}_j)_{\ell \geq j}$ in the compact $B[\delta_j,\mathbf{f}_j]$ has a convergent subsequence $(\mathbf{g}^{\ell}_j)_{\ell \in \mathcal{L}_j}$, where $\mathcal{L}_j\subset [j,\infty)\cap \Z$ is infinite. One may choose $\mathcal{L}_0\supset \mathcal{L}_1 \supset \ldots$, that is $\mathcal{L}_{j-1}\supset \mathcal{L}_j$ for all $j \geq 1$. Pick a sequence $(\ell_j)_{j \geq 0}$ with $\ell_j \in \mathcal{L}_j$ for all $j\geq 0$, such that $\ell_{j-1} < \ell_j$ for all $j\geq 1$.  Set $\mathcal{L} = \{\ell_j\,|\, j \geq 0\}$. 
By construction, all but a finite number of elements of $\mathcal{L}$ are in $\mathcal{L}_j$, for every $j\geq 0$.
That is, $(\mathbf{g}^{\ell}_j)_{\ell \in \mathcal{L}\cap [j,\infty)}$ converges. Set $\mathbf{g}_j = \lim_{\ell \to \infty,\; \ell \in \mathcal{L} \cap [j,\infty)} \mathbf{g}^{\ell}_j \in B[\delta_j,\mathbf{f}_j]$. For all $j \geq 1$, 
\begin{align*}
\Pi_j(\mathbf{g}_j) & = 
\textstyle\lim_{\ell \to \infty,\;\ell \in \mathcal{L}\cap [j,\infty)}
\Pi_j(\mathbf{g}^{\ell}_j) \\
& = \textstyle\lim_{\ell \to \infty,\;\ell \in \mathcal{L}\cap [j,\infty)} \mathbf{g}^{\ell}_{j-1}
 = \mathbf{g}_{j-1}
\end{align*}
because $\Pi_j$ is continuous by (\ref{item:prop}). \qed
\end{proof}

\section{Main Theorems}\label{sec:main}
In this section, $\tau_{\ast}$, $\mathbf{K}$, $\mathcal{F}$ are given just as in Definitions
\ref{cccc}, \ref{ccccc}, \ref{cccccc}, and $\mathcal{Q}_L$ is the map in Definition \ref{def:kdhkhskhkdssPT2}.
\begin{definition}
Let $\|\cdot\|$ be the Euclidean distance in $\R^3$.
For every $\delta \geq 0$ and every $\mathbf{f}\in \R^3$, set $B[\delta,\mathbf{f}] = \{\mathbf{g} \in \R^3\,|\, \|\mathbf{g}-\mathbf{f}\| \leq \delta\}$.
\end{definition}
\begin{definition}\label{def:dffddflj}
Let $\mathcal{F}\subset (0,\infty)^3$ be as in Definition \ref{cccccc}. For all $\zeta \geq 1$ set
$$B_{\zeta}\mathcal{F} = \big\{(\delta,\mathbf{f})\in [0,\infty)\times \mathcal{F}\;|\; B[\zeta \delta,\mathbf{f}]\subset \mathcal{F}\big\}\qquad \text{and}
\qquad B\mathcal{F} = B_1\mathcal{F}$$
\end{definition}
\begin{lemma}\label{lem:dskhdkfhds}
For all $(\delta,\mathbf{f})\in B\mathcal{F}$ set
\begin{alignat*}{4}
W(\delta,\mathbf{f}) & = \max\{\tfrac{1}{w-\delta},w+\delta,\tfrac{1}{q-\delta},\tfrac{1}{|q-1|-\delta},q+\delta\} && \in [1,\infty)\\
W(\mathbf{f}) & = W(0,\mathbf{f})
=  \max\{\tfrac{1}{w},w,\tfrac{1}{q},\tfrac{1}{|q-1|},q\}
&\qquad& \in [1,\infty)
\end{alignat*}
where $\mathbf{f} = (\mathbf{h},w,q)$. Then:
\begin{enumerate}[(a)]
\item\label{lem:dskhdkfhdsl} $W(\mathbf{g}) \leq W(\delta,\mathbf{f})$ for all $\mathbf{g}\in B[\delta,\mathbf{f}]$. 
\item\label{lem:dskhdkfhdsl2} If $(\delta,\mathbf{f})\in B_2\mathcal{F} \subset B\mathcal{F}$ then $W(\delta,\mathbf{f}) \leq 2 W(\mathbf{f})$.
\end{enumerate}
\end{lemma}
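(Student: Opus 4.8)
The statement is an elementary estimate about how the max-of-reciprocals quantities $W(\mathbf{f})$ and $W(\delta,\mathbf{f})$ behave under passing from a center point $\mathbf{f} = (\mathbf{h},w,q)$ to a point $\mathbf{g}$ in a ball of radius $\delta$ (for part (a)), and how $W(\delta,\mathbf{f})$ compares to $W(\mathbf{f})$ when the ball of radius $2\delta$ still fits inside $\mathcal{F}$ (for part (b)). Both are term-by-term comparisons among the five arguments of the two maxima, so the proof is just a matter of bounding each of $\tfrac{1}{w}$, $w$, $\tfrac{1}{q}$, $\tfrac{1}{|q-1|}$, $q$ (evaluated at $\mathbf{g}$) by the corresponding perturbed quantity in $W(\delta,\mathbf{f})$, and then bounding that in turn by twice the unperturbed one. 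The only subtlety is making sure the quantities $w-\delta$, $q-\delta$, $|q-1|-\delta$ appearing in the denominators of $W(\delta,\mathbf{f})$ are strictly positive, which is exactly what the hypotheses $(\delta,\mathbf{f})\in B\mathcal{F}$ resp.\ $B_2\mathcal{F}$ provide, since $B[\delta,\mathbf{f}]\subset \mathcal{F}\subset (0,\infty)^3$ with $q\neq 1$ on $\mathcal{F}$.

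\textbf{Step 1 (part (a)).} Let $\mathbf{g} = (\mathbf{h}',w',q')\in B[\delta,\mathbf{f}]$, so $|w'-w|\le \delta$, $|q'-q|\le \delta$. Since $B[\delta,\mathbf{f}]\subset \mathcal{F}\subset (0,\infty)^3$, we have $w' > 0$, hence $w' \ge w-\delta > 0$ and $w'\le w+\delta$; therefore $\tfrac{1}{w'}\le \tfrac{1}{w-\delta}$ and $w'\le w+\delta$. Likewise $q'\ge q-\delta > 0$ gives $\tfrac{1}{q'}\le \tfrac{1}{q-\delta}$ and $q'\le q+\delta$. For the term $\tfrac{1}{|q'-1|}$: the segment from $\mathbf{f}$ to $\mathbf{g}$ lies in $\mathcal{F}$, on which $q\neq 1$; since $q^{\prime}$ and $q$ have the same sign of $q^{\prime\prime}-1$ along that segment (no zero crossing), $|q'-1|\ge |q-1|-|q'-q|\ge |q-1|-\delta$, and this is $>0$ because otherwise some point of the segment would have its third coordinate equal to $1$, contradicting $\mathcal{F}\subset \{q\neq 1\}$. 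Hence $\tfrac{1}{|q'-1|}\le \tfrac{1}{|q-1|-\delta}$. Taking the maximum of the five bounds gives $W(\mathbf{g})\le W(\delta,\mathbf{f})$.

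\textbf{Step 2 (part (b)).} Now assume $(\delta,\mathbf{f})\in B_2\mathcal{F}$, i.e.\ $B[2\delta,\mathbf{f}]\subset \mathcal{F}$. In particular $w-2\delta > 0$, $q-2\delta > 0$, and $|q-1|-2\delta > 0$ (the last by the same no-zero-crossing argument as above applied on $B[2\delta,\mathbf{f}]$). From $w-2\delta>0$ we get $\delta < \tfrac{1}{2}w$, so $w-\delta > \tfrac{1}{2}w$ and thus $\tfrac{1}{w-\delta} < \tfrac{2}{w}\le 2W(\mathbf{f})$; also $w+\delta < \tfrac{3}{2}w \le 2W(\mathbf{f})$ (using $w\le W(\mathbf{f})$). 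Similarly $\delta < \tfrac{1}{2}q$ gives $\tfrac{1}{q-\delta} < \tfrac{2}{q}\le 2W(\mathbf{f})$ and $q+\delta < \tfrac{3}{2}q\le 2W(\mathbf{f})$. And $\delta < \tfrac{1}{2}|q-1|$ gives $\tfrac{1}{|q-1|-\delta} < \tfrac{2}{|q-1|}\le 2W(\mathbf{f})$. Taking the maximum over the five, $W(\delta,\mathbf{f})\le 2W(\mathbf{f})$.

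\textbf{Main obstacle.} There is essentially no hard step; the only thing that requires a moment's care is the handling of the $\tfrac{1}{|q-1|-\delta}$ term, where one must invoke the convexity/connectedness of balls together with $\mathcal{F}\subset\{q\neq 1\}$ to conclude that $|q-1|-\delta$ (resp.\ $|q-1|-2\delta$) is strictly positive and that $|q'-1|\ge|q-1|-\delta$ — i.e.\ that one cannot ``jump over'' the excluded hyperplane $q=1$ within a ball contained in $\mathcal{F}$. Everything else is the triangle inequality and the monotonicity of $x\mapsto 1/x$ on $(0,\infty)$.
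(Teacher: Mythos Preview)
Your proof is correct; the paper states this lemma without proof, treating it as an elementary observation, and your term-by-term verification via the triangle inequality together with the positivity of $w-\delta$, $q-\delta$, $|q-1|-\delta$ (forced by $B[\delta,\mathbf{f}]\subset\mathcal{F}$) is exactly the intended routine argument. The only cosmetic issue is the garbled phrase ``the same sign of $q''-1$'' in Step~1; the reverse triangle inequality $|q'-1|\ge |q-1|-|q'-q|$ already gives what you need without the segment discussion.
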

\begin{lemma}\label{lem:xy1} Let $\mathrm{Err}: B\mathcal{F}\to [0,\infty)$ be given by $$\mathrm{Err}(\delta,\mathbf{f}) = 2^{40} \big(\tfrac{1}{\mathbf{h}-\delta}\big)^2\,W(\delta,\mathbf{f})^5 \exp(-\tfrac{1}{\mathbf{h}} 2^{-9} W(\delta,\mathbf{f})^{-2}\big)$$
where $\mathbf{f} = (\mathbf{h},w,q)$. Then
for all $(\delta,\mathbf{f}) \in B\mathcal{F}$, we have $\mathbf{K}(\mathbf{g}) \leq \mathrm{Err}(\delta,\mathbf{f})$ for all $\mathbf{g}\in B[\delta,\mathbf{f}] \subset \mathcal{F}$
(see Definition \ref{ccccc}).
\end{lemma}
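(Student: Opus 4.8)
The plan is to verify directly from the definitions that $\mathbf{K}(\mathbf{g}) \leq \mathrm{Err}(\delta,\mathbf{f})$ by tracking how each factor in $\mathbf{K}$ (see Definition \ref{ccccc}, equation \eqref{eq:ldsjdslsj1}) is controlled by the corresponding factor in $\mathrm{Err}$. Recall
$$\mathbf{K}(\mathbf{g}) = 2^{40} \big(\tfrac{1}{\mathbf{h}_{\mathbf{g}}}\big)^2 \max\{(\tfrac{1}{w_{\mathbf{g}}})^2,w_{\mathbf{g}}^3\}\, \max\{(\tfrac{1}{q_{\mathbf{g}}})^2,q_{\mathbf{g}}\}\,
\exp\big(-\tfrac{1}{\mathbf{h}_{\mathbf{g}}} 2^{-7}\tau_{\ast}(\mathbf{g})\big),$$
where $\mathbf{g} = (\mathbf{h}_{\mathbf{g}},w_{\mathbf{g}},q_{\mathbf{g}})$. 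First I would note the elementary pointwise bounds: if $\mathbf{g}\in B[\delta,\mathbf{f}]$ then $|\mathbf{h}_{\mathbf{g}} - \mathbf{h}| \leq \delta$, $|w_{\mathbf{g}} - w|\leq \delta$, $|q_{\mathbf{g}} - q|\leq \delta$, so that $\mathbf{h}_{\mathbf{g}} \geq \mathbf{h} - \delta$ and hence $\tfrac{1}{\mathbf{h}_{\mathbf{g}}} \leq \tfrac{1}{\mathbf{h}-\delta}$, which handles the prefactor $(\tfrac{1}{\mathbf{h}_{\mathbf{g}}})^2 \leq (\tfrac{1}{\mathbf{h}-\delta})^2$.

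Next I would bound the two ``max'' factors using $W$. By Lemma \ref{lem:dskhdkfhds}(\ref{lem:dskhdkfhdsl}), $W(\mathbf{g}) \leq W(\delta,\mathbf{f})$ for all $\mathbf{g}\in B[\delta,\mathbf{f}]$. Since $W(\mathbf{g}) = \max\{\tfrac{1}{w_{\mathbf{g}}},w_{\mathbf{g}},\tfrac{1}{q_{\mathbf{g}}},\tfrac{1}{|q_{\mathbf{g}}-1|},q_{\mathbf{g}}\} \geq 1$, one has $\max\{(\tfrac{1}{w_{\mathbf{g}}})^2,w_{\mathbf{g}}^3\} \leq W(\mathbf{g})^3$ and $\max\{(\tfrac{1}{q_{\mathbf{g}}})^2,q_{\mathbf{g}}\} \leq W(\mathbf{g})^2$, so the product of these two factors is $\leq W(\mathbf{g})^5 \leq W(\delta,\mathbf{f})^5$, matching the $W(\delta,\mathbf{f})^5$ in $\mathrm{Err}$.

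The step I expect to require the most care is the exponential factor: I need $\exp(-\tfrac{1}{\mathbf{h}_{\mathbf{g}}} 2^{-7}\tau_{\ast}(\mathbf{g})) \leq \exp(-\tfrac{1}{\mathbf{h}} 2^{-9} W(\delta,\mathbf{f})^{-2})$, i.e.\ $\tfrac{1}{\mathbf{h}_{\mathbf{g}}} 2^{-7}\tau_{\ast}(\mathbf{g}) \geq \tfrac{1}{\mathbf{h}} 2^{-9} W(\delta,\mathbf{f})^{-2}$. Here I would use $\mathbf{h}_{\mathbf{g}} \leq \mathbf{h} + \delta \leq 2\mathbf{h}$ (valid since $(\delta,\mathbf{f})\in B\mathcal{F}$ forces $\delta \leq \mathbf{h}$, as $B[\delta,\mathbf{f}]\subset \mathcal{F}\subset (0,\infty)^3$), so $\tfrac{1}{\mathbf{h}_{\mathbf{g}}} \geq \tfrac{1}{2\mathbf{h}}$; it then suffices to show $\tau_{\ast}(\mathbf{g}) \geq 2^{-1} W(\delta,\mathbf{f})^{-2}$. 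From Definition \ref{cccc}, $\tau_{\ast}(\mathbf{g}) = \tfrac{q_{\mathbf{g}}}{1+w_{\mathbf{g}}}$ if $q_{\mathbf{g}}\leq 1$ and $=1$ if $q_{\mathbf{g}} > 1$. In the first case $\tfrac{q_{\mathbf{g}}}{1+w_{\mathbf{g}}} \geq \tfrac{q_{\mathbf{g}}}{2\max\{1,w_{\mathbf{g}}\}} \geq \tfrac{1}{2} W(\mathbf{g})^{-2} \geq \tfrac{1}{2} W(\delta,\mathbf{f})^{-2}$, using $q_{\mathbf{g}} \geq W(\mathbf{g})^{-1}$ and $\max\{1,w_{\mathbf{g}}\} \leq W(\mathbf{g})$; in the second case $\tau_{\ast}(\mathbf{g}) = 1 \geq \tfrac{1}{2} W(\delta,\mathbf{f})^{-2}$ trivially since $W \geq 1$. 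Combining the three estimates gives $\mathbf{K}(\mathbf{g}) \leq 2^{40}(\tfrac{1}{\mathbf{h}-\delta})^2 W(\delta,\mathbf{f})^5 \exp(-\tfrac{1}{\mathbf{h}} 2^{-9} W(\delta,\mathbf{f})^{-2}) = \mathrm{Err}(\delta,\mathbf{f})$, as claimed. Finally I would double-check that $\mathbf{g}\in B[\delta,\mathbf{f}]\subset \mathcal{F}$ is legitimate, which is exactly the defining condition of $B\mathcal{F}$ in Definition \ref{def:dffddflj}.
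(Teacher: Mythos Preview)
Your proof is correct and follows essentially the same approach as the paper's: bound $(\tfrac{1}{\mathbf{h}_{\mathbf{g}}})^2$ by $(\tfrac{1}{\mathbf{h}-\delta})^2$, bound the $w,q$-factors by $W(\mathbf{g})^5$, show $\tau_{\ast}(\mathbf{g}) \geq \tfrac{1}{2}W(\mathbf{g})^{-2}$ via the case split in Definition~\ref{cccc}, use $\mathbf{h}_{\mathbf{g}} \leq 2\mathbf{h}$ for the exponential, and finally invoke Lemma~\ref{lem:dskhdkfhds}(\ref{lem:dskhdkfhdsl}) to pass from $W(\mathbf{g})$ to $W(\delta,\mathbf{f})$. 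The only cosmetic difference is that the paper first writes the full bound in terms of $W(\mathbf{g})$ and applies Lemma~\ref{lem:dskhdkfhds}(\ref{lem:dskhdkfhdsl}) once at the end, whereas you apply it separately to the polynomial and exponential factors; the content is identical.
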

\begin{proof} Let $\mathbf{g} = (\mathbf{h}',w',q') \in B[\delta,\mathbf{f}]$. Then 
$\tau_{\ast}(\mathbf{g}) \geq \tfrac{1}{2} W(\mathbf{g})^{-2}$
and $0 < \mathbf{h} - \delta \leq \mathbf{h}' \leq \mathbf{h} + \delta \leq 2 \mathbf{h}$ and $\tfrac{1}{\mathbf{h}'}\geq \tfrac{1}{2\mathbf{h}}$. Hence, $\mathbf{K}(\mathbf{g}) \leq 2^{40} (\tfrac{1}{\mathbf{h}-\delta})^2 W(\mathbf{g})^5 \exp(-\tfrac{1}{\mathbf{h}} 2^{-9}  W(\mathbf{g})^{-2})$.
Now use Lemma \ref{lem:dskhdkfhds} (\ref{lem:dskhdkfhdsl}). \qed
\end{proof}
\begin{lemma}\label{lem:xy2}
Let $\mathcal{Q}_L$ be as in Definition \ref{def:kdhkhskhkdssPT2}. Set $\mathrm{Lip}: B \mathcal{F} \to [0,\infty)$,\; $\mathrm{Lip}(\delta,\mathbf{f}) = 2^{13} W(\delta,\mathbf{f})^3$.
Then $\|\mathcal{Q}_L(\mathbf{g}) - \mathcal{Q}_L(\mathbf{g}'\,)\|
\leq \mathrm{Lip}(\delta,\mathbf{f})\,\|\mathbf{g}-\mathbf{g}'\,\|$ for all
$\mathbf{g},\mathbf{g'}\in B[\delta,\mathbf{f}]$. 
\end{lemma}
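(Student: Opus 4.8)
\textbf{Proof proposal for Lemma \ref{lem:xy2}.}

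The plan is to prove the Lipschitz bound for $\mathcal{Q}_L$ by a direct computation: bound the operator norm of the Jacobian of $\mathcal{Q}_L$ on the convex set $B[\delta,\mathbf{f}]$ and integrate along the segment joining $\mathbf{g}$ and $\mathbf{g}'$. First I would observe that $\mathcal{Q}_L$ is piecewise smooth, with the two pieces given by the cases $q\le 1$ and $q>1$ in Definition \ref{def:kdhkhskhkdssPT2}, and that the two pieces agree on $\{q=1\}$ (one checks $w_L$, $q_L=\mathrm{num1}_L/\mathrm{den}_L$ and $\mathbf{h}_L=\mathrm{num2}_L/\mathrm{den}_L$ match there: both give $w_L=2$, $\mathbf{h}_L=\tfrac{2+w}{1}\mathbf{h}/((1+w)-\mathbf{h}\log 2)$ up to the obvious relabeling, and both give $q_L=0$). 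Hence $\mathcal{Q}_L$ is globally Lipschitz on a convex set with constant equal to the supremum of $\|D\mathcal{Q}_L\|$ over the (two) smooth pieces intersected with $B[\delta,\mathbf{f}]$, and it suffices to estimate each partial derivative of each component.

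The key quantitative input is that on $B[\delta,\mathbf{f}]$ the auxiliary quantity $W = W(\delta,\mathbf{f})$ controls all the ``dangerous'' ratios: by Lemma \ref{lem:dskhdkfhds}(\ref{lem:dskhdkfhdsl}) we have $\tfrac{1}{w'},w',\tfrac{1}{q'},q'\le W$ for every $\mathbf{g}=(\mathbf{h}',w',q')\in B[\delta,\mathbf{f}]$, and $\mathbf{h}'\le W^{-1}$ follows from $\mathbf{h}'<\tau_{\ast}(\mathbf{g})\le 1$ together with $\mathbf{h}'<\tfrac12 W^{-2}\le W^{-1}$ — actually what I need is simply $\mathbf h' \le 1$, which holds since $\mathbf h' < 2^{-7}\tau_\ast(\mathbf g)\le 1$. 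The denominators $\mathrm{den}_L$ are bounded below: in the case $q'\le1$, $\mathrm{den}_L = (1+w')(q'-\mathbf{h}'\log2)+\mathbf{h}'(3+w')\log(2+w')\ge (1+w')q' - \mathbf{h}'(1+w')\log 2 \ge q' - \mathbf{h}'\log 2 \ge \tfrac12 q' \ge \tfrac{1}{2W}$ once one uses $\mathbf h' < 2^{-7}\tau_\ast(\mathbf g) \le 2^{-7}\tfrac{q'}{1+w'}$; in the case $q'>1$, $\mathrm{den}_L = (1+w')-\mathbf h'\log2+\mathbf h'(3+2w')\log\tfrac{2+w'}{1+w'}\ge (1+w')-\mathbf h'\log 2\ge \tfrac12$, using $\mathbf{h}'\le 2^{-7}$. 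Then each of $w_L$, $\mathbf{h}_L$, $q_L$ is a ratio of functions that are polynomial in $(w',\mathbf{h}')$ and in $\log(2+w')$, $\log\tfrac{2+w'}{1+w'}$, affine in $q'$; differentiating, using $|\partial_{w'}\log(2+w')|\le 1$ and $|\partial_{w'}\log\tfrac{2+w'}{1+w'}|\le 1$, and the lower bounds on the denominators, one gets that every partial derivative of every component is bounded by a fixed power of $W$ times a numerical constant. Carefully tracking the exponents — the worst case is $\partial_{q'} q_L$, where the quotient rule produces a factor $\mathrm{den}_L^{-2}$ contributing $W^2$ against numerators of size $O(W)$, giving $O(W^3)$; similarly $\partial_{w'} q_L$ and $\partial_{w'}\mathbf h_L$ are $O(W^3)$ — yields $\|D\mathcal{Q}_L(\mathbf{g})\|\le 2^{13}W^3$ for all $\mathbf{g}\in B[\delta,\mathbf{f}]$ (absorbing the $\sqrt{3}$ from passing between the operator norm and the entrywise bound into the constant). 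Integrating along the segment then gives $\|\mathcal{Q}_L(\mathbf{g})-\mathcal{Q}_L(\mathbf{g}')\|\le 2^{13}W(\delta,\mathbf{f})^3\|\mathbf{g}-\mathbf{g}'\|=\mathrm{Lip}(\delta,\mathbf{f})\|\mathbf{g}-\mathbf{g}'\|$.

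The main obstacle is bookkeeping rather than conceptual: one must (i) verify the lower bounds on $\mathrm{den}_L$ carefully on the slightly enlarged ball, being honest about where the constraint $\mathbf{h}<2^{-7}\tau_{\ast}(\mathbf{f})$ in the definition of $\mathcal{F}$ gets used, and (ii) keep the powers of $W$ under control so that the final exponent is exactly $3$ and the constant is exactly $2^{13}$, matching the statement. A clean way to organize (ii) is to note that each of the three output components, as a function on $B[\delta,\mathbf{f}]$, has all first partials bounded by $2^{10}W^3$, say, so that the Euclidean operator norm of the $3\times3$ Jacobian is at most $3\cdot 2^{10}W^3\le 2^{13}W^3$; this spares one from computing a sharp operator norm. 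One small point worth stating explicitly in the write-up: although $\mathcal{Q}_L$ is only piecewise-$C^1$, it is continuous across $\{q=1\}$ and each piece extends $C^1$ to the closed half-space, so the mean value estimate along a segment crossing $\{q=1\}$ is still valid by splitting the segment at the crossing point.
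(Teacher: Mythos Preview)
Your overall strategy---bound the Jacobian of $\mathcal{Q}_L$ on $B[\delta,\mathbf{f}]$ and integrate along the segment---is exactly the paper's approach (carried out in Lemma~\ref{lem:dfskdshkhk} of Appendix~\ref{app:dslkhjlkd}). However, your handling of the interface $\{q=1\}$ contains a genuine error.

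\textbf{The error.} Your claim that the two branches of $\mathcal{Q}_L$ agree on $\{q=1\}$ is false: the $w_L$ component equals $\tfrac{1}{1+w}$ for $q\le 1$ and $1+w$ for $q>1$, and these never coincide for $w>0$. (Your parenthetical ``both give $w_L=2$'' is simply incorrect.) So $\mathcal{Q}_L$ has a jump discontinuity across $\{q=1\}$, and your proposed fix of splitting the segment at the crossing point and invoking continuity does not work.

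\textbf{The correct observation.} The segment from $\mathbf{g}$ to $\mathbf{g}'$ never crosses $\{q=1\}$ at all. Since $(\delta,\mathbf{f})\in B\mathcal{F}$, the ball $B[\delta,\mathbf{f}]$ is contained in $\mathcal{F}$, and $\mathcal{F}$ excludes $q=1$ by Definition~\ref{cccccc}; being convex, the ball therefore lies entirely in $\{q<1\}$ or entirely in $\{q>1\}$. Equivalently, the finiteness of $W(\delta,\mathbf{f})$ already encodes $\delta<|q-1|$. With this in hand, you are on a single smooth branch and the Jacobian estimate goes through.

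Two minor remarks on the bookkeeping. First, the paper gets the lower bound on the denominator more cheaply than you do: for $q<1$ one has $\mathrm{den}_L=(1+w)(q-\mathbf{h}\log 2)+\mathbf{h}(3+w)\log(2+w)\ge (1+w)q$ directly, since $(3+w)\log(2+w)\ge(1+w)\log 2$; no appeal to $\mathbf{h}<2^{-7}\tau_\ast$ is needed, only $\mathbf{h}\le 1$. Second, the paper packages the Jacobian bound as a separate lemma (Lemma~\ref{lem:dfskdshkhk}) giving $\|\mathcal{Q}_L(\mathbf{f}_2)-\mathcal{Q}_L(\mathbf{f}_1)\|/\|\mathbf{f}_2-\mathbf{f}_1\|\le 2^{12}q_{\min}^{-2}\log(2+w_{\max})$ for $q<1$ and $\le 2^{11}q_{\max}$ for $q>1$, then uses $\log(2+w_{\max})\le 1+w_{\max}$ and Lemma~\ref{lem:dskhdkfhds}(\ref{lem:dskhdkfhdsl}) to reach $2^{13}W(\delta,\mathbf{f})^3$.
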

\begin{proof}
Let $\mathbf{f} = (\mathbf{h},w,q)$.
If $\mathbf{g}=\mathbf{g}'$, there is nothing to prove. Suppose $\mathbf{g}\neq \mathbf{g}'$. 
In Lemma \ref{lem:dfskdshkhk} of Appendix \ref{app:dslkhjlkd}, set $\mathbf{f}_1 = (\mathbf{h}_1,w_1,q_1)= \mathbf{g}$ and $\mathbf{f}_2 = (\mathbf{h}_2,w_2,q_2)= \mathbf{g}'$. Observe that $0< \mathbf{h}_i \leq 1$ by $\mathbf{g},\mathbf{g}'\in B[\delta,\mathbf{f}] \subset \mathcal{F}$. 
Since $\delta < |q-1|$, either $q,q_1,q_2<1$ or $q,q_1,q_2>1$. We have
$w_{\mathrm{max}} \leq \max\{W(\mathbf{g}),W(\mathbf{g}')\}$ and 
$q_{\mathrm{max}}\leq \max\{W(\mathbf{g}),W(\mathbf{g}')\}$ and 
$q_{\mathrm{min}}^{-1} = \max\{q_1^{-1},q_2^{-1}\} \leq \max\{W(\mathbf{g}),W(\mathbf{g}')\}$. Now use 
$\log(2+w_{\mathrm{max}}) \leq 1+w_{\mathrm{max}}$ and Lemma \ref{lem:dskhdkfhds} (\ref{lem:dskhdkfhdsl}). \qed
\end{proof}
\begin{theorem}[Main Theorem 1]\label{thm1}
Recall the definitions of
$\mathcal{P}_L$ and $\mathcal{Q}_L$ (Definition \ref{def:kdhkhskhkdssPT2}),
$\mathcal{F}$ (Definition \ref{cccccc}),
$\Pi$ (Proposition \ref{prop:skhdkjhfd}),
$B_{\zeta}\mathcal{F}$ (Definition \ref{def:dffddflj}),
$W$ (Lemma \ref{lem:dskhdkfhds}), $\mathrm{Err}$ (Lemma \ref{lem:xy1}),
$\mathrm{Lip}$ (Lemma \ref{lem:xy2}). Suppose:
\begin{enumerate}[(a)]
\item\label{item:ds1} $(\mathbf{f}_j)_{j\geq 0}$, with $\mathbf{f}_j=(\mathbf{h}_j,w_j,q_j) \in \mathcal{F}$,
satisfies $\mathbf{f}_{j-1}
= \mathcal{Q}_L(\mathbf{f}_j)$ for all $j\geq 1$.
\item\label{item:ds2} The sequence $(\delta_j)_{j \geq 0}$ given by
$$\delta_j = \sum_{\ell=j+1}^{\infty} \Big\{\prod_{k=j+1}^{\ell-1} 2^{16} W(\mathbf{f}_k)^3\Big\}\;2^{47} \big(\tfrac{1}{\mathbf{h}_\ell}\big)^2\, W(\mathbf{f}_\ell)^5 \exp\Big(-\tfrac{1}{\mathbf{h}_\ell} 2^{-11} W(\mathbf{f}_\ell)^{-2}\Big)$$
satisfies $\delta_j < \infty$ and $(\delta_j,\mathbf{f}_j)\in B_2\mathcal{F}$ for all $j \geq 0$.
\item\label{item:ds3}
$\pi_0 \in S_3$ and $(\pi_j)_{j\geq 0}$ is the unique sequence in $S_3$ that satisfies
$\pi_{j-1} = \mathcal{P}_L(\pi_j,\mathbf{f}_j)$ for all $j\geq 1$.
\item\label{item:ds4} $\sigma_{\ast} \in \{-1,+1\}^3$.
\end{enumerate}
Then, there exists a sequence $(\mathbf{g}_j)_{j \geq 0}$ with $\mathbf{g}_j \in B[\delta_j,\mathbf{f}_j]\subset \mathcal{F}$ such that for all $j \geq 1$:
$$\mathbf{g}_{j-1} = \Pi[\pi_j,\sigma_{\ast}](\mathbf{g}_j)
\qquad \text{and}
\qquad \pi_{j-1} = \mathcal{P}_L(\pi_j,\mathbf{g}_j)
$$
\end{theorem}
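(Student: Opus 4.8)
The plan is to deduce everything from the abstract existence result, Proposition~\ref{prop:abstract}, applied with $d=3$, with $\mathcal{F}$ the set of Definition~\ref{cccccc}, with $\Pi_j = \Pi[\pi_j,\sigma_{\ast}]$ (the maps of Proposition~\ref{prop:skhdkjhfd}, for $j\geq 1$), with $\mathcal{Q}_L$ the map of Definition~\ref{def:kdhkhskhkdssPT2}, and with $\mathrm{Err}$, $\mathrm{Lip}$ the maps of Lemmas~\ref{lem:xy1} and~\ref{lem:xy2}. Hypothesis (\ref{item:abs1}) of Proposition~\ref{prop:abstract} holds because $\mathcal{F}\subset(0,\infty)^3$ is open and nonempty, and hypothesis (\ref{item:prop}), continuity of each $\Pi_j$, is Proposition~\ref{prop:skhdkjhfd}(d). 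So the work is to verify hypotheses (\ref{item:abs2}) and (\ref{item:zeta}) and then to read off the permutation identity.

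For hypothesis (\ref{item:abs2}): Proposition~\ref{prop:skhdkjhfd}(a) gives $\|\Pi[\pi_j,\sigma_{\ast}](\mathbf{g})-\mathcal{Q}_L(\mathbf{g})\|\leq \mathbf{K}(\mathbf{g})$ for every $\mathbf{g}\in\mathcal{F}$, and crucially this bound is \emph{uniform in $j$}, since its right side $\mathbf{K}(\mathbf{g})$ does not involve $\pi_j$. Combined with Lemma~\ref{lem:xy1} (which says $\mathbf{K}(\mathbf{g})\leq\mathrm{Err}(\delta,\mathbf{f})$ whenever $\mathbf{g}\in B[\delta,\mathbf{f}]\subset\mathcal{F}$), this yields \eqref{eq:kdhkd1}. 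Inequality \eqref{eq:kdhkd2} is exactly Lemma~\ref{lem:xy2}.

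The substantive point is hypothesis (\ref{item:zeta}), the summability estimate \eqref{eq:kfdhkhdkd}, which I would derive from the theorem's hypothesis (\ref{item:ds2}) by a constant chase. For each $k$ with $(\delta_k,\mathbf{f}_k)\in B_2\mathcal{F}$, Lemma~\ref{lem:dskhdkfhds}(\ref{lem:dskhdkfhdsl2}) gives $W(\delta_k,\mathbf{f}_k)\leq 2W(\mathbf{f}_k)$, hence $\mathrm{Lip}(\delta_k,\mathbf{f}_k)=2^{13}W(\delta_k,\mathbf{f}_k)^3\leq 2^{16}W(\mathbf{f}_k)^3$. Likewise $B_2\mathcal{F}\subset B\mathcal{F}$ forces the Euclidean ball $B[2\delta_n,\mathbf{f}_n]\subset\mathcal{F}\subset(0,\infty)^3$ to have positive first coordinates, so $\mathbf{h}_n-2\delta_n>0$, whence $\tfrac{1}{\mathbf{h}_n-\delta_n}<\tfrac{2}{\mathbf{h}_n}$, and together with $W(\delta_n,\mathbf{f}_n)^{-2}\geq\tfrac14 W(\mathbf{f}_n)^{-2}$ one gets
\[
\mathrm{Err}(\delta_n,\mathbf{f}_n)\;\leq\; 2^{47}\,\big(\tfrac{1}{\mathbf{h}_n}\big)^2\,W(\mathbf{f}_n)^5\exp\big(-\tfrac{1}{\mathbf{h}_n}2^{-11}W(\mathbf{f}_n)^{-2}\big).
\]
Substituting these two bounds into the left side of \eqref{eq:kfdhkhdkd} reproduces exactly the series defining $\delta_j$ in hypothesis (\ref{item:ds2}), which is finite by assumption; hence \eqref{eq:kfdhkhdkd} holds. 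Proposition~\ref{prop:abstract} then produces a sequence $(\mathbf{g}_j)_{j\geq 0}$ with $\mathbf{g}_j\in B[\delta_j,\mathbf{f}_j]\subset\mathcal{F}$ and $\mathbf{g}_{j-1}=\Pi_j(\mathbf{g}_j)=\Pi[\pi_j,\sigma_{\ast}](\mathbf{g}_j)$ for all $j\geq 1$, which is the first asserted identity.

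Finally I would check $\pi_{j-1}=\mathcal{P}_L(\pi_j,\mathbf{g}_j)$. The point is that, by Definition~\ref{def:kdhkhskhkdssPT2}, $\mathcal{P}_L(\pi,\mathbf{f})$ depends on $\mathbf{f}=(\mathbf{h},w,q)$ only through whether $q<1$ or $q>1$. If $\delta_j=0$ then $\mathbf{g}_j=\mathbf{f}_j$ and there is nothing to prove; if $\delta_j>0$, the ball $B[\delta_j,\mathbf{f}_j]$ is connected and contained in $\mathcal{F}$, which by Definition~\ref{cccccc} avoids the hyperplane $q=1$, so the $q$-coordinates of all its points — in particular those of $\mathbf{g}_j$ and $\mathbf{f}_j$ — lie strictly on one side of $1$. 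Hence $\mathcal{P}_L(\pi_j,\mathbf{g}_j)=\mathcal{P}_L(\pi_j,\mathbf{f}_j)=\pi_{j-1}$ by hypothesis (\ref{item:ds3}). I expect the only mildly delicate steps to be the prefactor bookkeeping in the third paragraph (tracking the powers $2^{16}$, $2^{47}$) and this last connectedness argument; the rest is a direct translation of the concrete data into the hypotheses of Proposition~\ref{prop:abstract}.
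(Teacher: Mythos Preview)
Your proof is correct and follows essentially the same approach as the paper: both apply Proposition~\ref{prop:abstract} with the identical identifications ($d=3$, $\Pi_j=\Pi[\pi_j,\sigma_{\ast}]$, $\mathcal{Q}_L$, $\mathrm{Err}$, $\mathrm{Lip}$) and verify its hypotheses through the same chain of lemmas and the same constant bookkeeping (using $(\delta_j,\mathbf{f}_j)\in B_2\mathcal{F}$ to get $W(\delta_j,\mathbf{f}_j)\leq 2W(\mathbf{f}_j)$ and $\mathbf{h}_j-\delta_j\geq\tfrac12\mathbf{h}_j$). You actually go slightly further than the paper by explicitly justifying the permutation identity $\pi_{j-1}=\mathcal{P}_L(\pi_j,\mathbf{g}_j)$ via the connectedness argument; the paper's proof simply asserts that the theorem follows from Proposition~\ref{prop:abstract} and leaves this point implicit.
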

\begin{proof}
We use Proposition \ref{prop:abstract}, with the understanding
that the abstract objects of Proposition \ref{prop:abstract} in the left column are given by the special objects in the right column:
\begin{center}
\begin{tabular}{r | l }
$d$ & 3\\
$\mathcal{F}$ & $\mathcal{F}$ as in Definition \ref{cccccc}\\
$\Pi_j$ & $\Pi[\pi_j,\sigma_{\ast}]$, see Proposition \ref{prop:skhdkjhfd}
 and the hypotheses Theorem \ref{thm1} (\ref{item:ds3}), (\ref{item:ds4})\\
 $\mathcal{Q}_L$ & $\mathcal{Q}_L|_{\mathcal{F}}$, with $\mathcal{Q}_L$ as in Definition \ref{def:kdhkhskhkdssPT2}\\
$\mathrm{Err}$ & $\mathrm{Err}$ as in Lemma \ref{lem:xy1}\\
$\mathrm{Lip}$ & $\mathrm{Lip}$ as in Lemma \ref{lem:xy2}\\
$(\delta_j,\mathbf{f}_j)$ & $(\delta_j,\mathbf{f}_j)$ as in 
 hypotheses Theorem \ref{thm1} (\ref{item:ds1}) and (\ref{item:ds2})
\end{tabular}
\end{center}
We check that the assumptions (\ref{item:abs1}), (\ref{item:prop}), (\ref{item:abs2}), (\ref{item:zeta})
of Proposition \ref{prop:abstract} are satisfied:
\begin{enumerate}[(a)]
\item[(\ref{item:abs1})] The definitions of $B\mathcal{F}$ in 
Proposition \ref{prop:abstract} and in Definition \ref{def:dffddflj} are consistent.
\item[(\ref{item:prop})]
$\Pi[\pi_j,\sigma_{\ast}]: \mathcal{F}\to (0,\infty)^2\times \R \subset \R^3$ is continuous, by Proposition \ref{prop:skhdkjhfd}.
\item[(\ref{item:abs2})] The domains of definition of $\mathcal{Q}_L|_{\mathcal{F}}$ and $\mathrm{Err}$ and $\mathrm{Lip}$ are just as required by Proposition  \ref{prop:abstract} (\ref{item:abs2}). For all
$(\delta,\mathbf{f}) \in B\mathcal{F}$ and $\mathbf{g},\mathbf{g}' \in B[\delta,\mathbf{f}]\subset \mathcal{F}$ and $j \geq 1$,
\begin{align*}
\|\Pi[\pi_j,\sigma_{\ast}](\mathbf{g}) - \mathcal{Q}_L|_{\mathcal{F}} (\mathbf{g})\|
& 
\leq \mathbf{K}(\mathbf{g}) \leq \mathrm{Err}(\delta,\mathbf{f})\\
\|\mathcal{Q}_L|_{\mathcal{F}}(\mathbf{g}) - \mathcal{Q}_L|_{\mathcal{F}}(\mathbf{g}'\,)\|
& \leq \mathrm{Lip}(\delta,\mathbf{f})\,\|\mathbf{g}-\mathbf{g}'\,\|
\end{align*}
by Proposition \ref{prop:skhdkjhfd} (a) and by Lemmas \ref{lem:xy1}
and \ref{lem:xy2}. That is, \eqref{eq:kdhkd1} and \eqref{eq:kdhkd2} hold.
\item[(\ref{item:zeta})] 
By assumption, $(\delta_j,\mathbf{f}_j)\in B_2\mathcal{F} \subset B\mathcal{F}$ for all $j \geq 0$.
Hence
$
\tfrac{1}{2}\mathbf{h}_j \leq 
\mathbf{h}_j - \delta_j$ and, by Lemma \ref{lem:dskhdkfhds} (\ref{lem:dskhdkfhdsl2}), we have $W(\delta_j,\mathbf{f}_j) \leq 2 W(\mathbf{f}_j)$. Consequently, for all $j \geq 0$,
\begin{align*}
& \textstyle\sum_{\ell=j+1}^{\infty} \big\{ \prod_{k=j+1}^{\ell-1} \mathrm{Lip}(\delta_{k},\mathbf{f}_{k})\big\}\, \mathrm{Err} (\delta_\ell,\mathbf{f}_\ell)\\
& = \textstyle\sum_{\ell=j+1}^{\infty} \big\{ \prod_{k=j+1}^{\ell-1} 2^{13} W(\delta_{k},\mathbf{f}_{k})^3 \big\}\,\
\,2^{40}\big(\tfrac{1}{\mathbf{h}_\ell - \delta_\ell}\big)^2 W(\delta_\ell,\mathbf{f}_\ell)^5\\
& \hskip 60mm \times \exp \big(-\tfrac{1}{\mathbf{h}_\ell} 2^{-9} W(\delta_\ell,\mathbf{f}_\ell)^{-2}\big)\\
& \leq \textstyle\sum_{\ell=j+1}^{\infty} \big\{ \prod_{k=j+1}^{\ell-1} 2^{16} W(\mathbf{f}_{k})^3 \big\}\,
2^{47}\big(\tfrac{1}{\mathbf{h}_\ell}\big)^2 W(\mathbf{f}_\ell)^5
\exp \big(-\tfrac{1}{\mathbf{h}_\ell} 2^{-11} W(\mathbf{f}_\ell)^{-2}\big)
\end{align*}
The last expression is equal to $\delta_j$, and \eqref{eq:kfdhkhdkd} is checked.
\end{enumerate}
Now,  Theorem \ref{thm1} follows from Proposition  \ref{prop:abstract}. \qed
\end{proof}
\begin{theorem}[Main Theorem 2] \label{thm:main2}
Suppose the vector $\mathbf{f}_0 = (\mathbf{h}_0,w_0,q_0)$ satisfies the assumptions of
Proposition \ref{prop:kjdhkhkd}, that is
\begin{subequations}\label{eq:kdhkdhkdff}
\begin{align}
w_0 & \in (0,1)\setminus \mathbb{Q} & \mathbf{C}(w_0,q_0) & < \infty\\
q_0 & \in (0,\infty)\setminus \mathbb{Q}
& 0  < \mathbf{h}_0 & \leq 2^{-14}(\mathbf{C}(w_0,q_0))^{-1}
\end{align}
\end{subequations}
Let $(k_n)_{n\in \Z}$ and $J:\Z \to \Z$ (Era Pointer) and $N:\Z \to \Z$ (Era Counter) and $(w_j)_{j \in \Z}$, $(q_j)_{j \geq 0}$, $(\mathbf{h}_j)_{j \geq 0}$ be just as in Proposition \ref{prop:kjdhkhkd}. Introduce the sequence $(\mathbf{f}_j)_{j \geq 0}$ by
$$\mathbf{f}_j = (\mathbf{h}_j,w_j,q_j) \;\;\in\;\; (0,\infty)^3$$
Introduce sequences $(\mathbf{H}_j)_{j \geq 0}$ and $(K_j)_{j \geq 0}$ by
\begin{alignat*}{4}
\mathbf{H}_j & = \mathbf{h}_0\,\tfrac{1+w_j}{1+w_0} \textstyle\prod_{\ell=0}^{N(j)-1} w_{J(\ell)} w_{J(\ell-1)} && > 0\\
K_j & = \max\{k_{N(j)-2},k_{N(j)-1},k_{N(j)},k_{N(j)+1}\} &\qquad& \geq 1
\end{alignat*}
Suppose:
\begin{enumerate}[(a)]
\item\label{item:dkhdkhdm1} $\mathbf{H}_j  < 2^{-21}(K_j)^{-2}$ for all $j \geq 0$.
\item\label{item:dkhdkhd} $2^{71} \big(\tfrac{1}{\mathbf{H}_j}\big)^2 (K_j)^5\exp \big( - \tfrac{1}{\mathbf{H}_j} 2^{-21}(K_j)^{-2}\big)<  1$ for all $j\geq 0$.
\item\label{item:dkhdkhd2} The sequence $(\slaa{\delta}_j)_{j \geq 0}$ given by 
$$
\slaa{\delta}_j
= \sum_{\ell=j+1}^{\infty} \Big\{\prod_{k=j+1}^{\ell-1} 2^{28} (K_k)^3\Big\}\;2^{71} \big(\tfrac{1}{\mathbf{H}_\ell}\big)^2\, (K_\ell)^5 \exp\Big(-\tfrac{1}{\mathbf{H}_\ell} 2^{-21} (K_\ell)^{-2} \Big)
\quad > 0
$$
satisfies $\slaa{\delta}_j \leq  2^{-4}\mathbf{H}_j < \infty$.
\item
$\pi_0 \in S_3$ and $(\pi_j)_{j\geq 0}$ is the unique sequence in $S_3$ that satisfies
$\pi_{j-1} = \mathcal{P}_L(\pi_j,\mathbf{f}_j)$ for all $j\geq 1$.
\item $\sigma_{\ast}\in \{-1,+1\}^3$.
\end{enumerate}
Then $(\slaa{\delta}_j,\mathbf{f}_j) \in B_2\mathcal{F}$ for all $j \geq 0$ and there exists a sequence $(\mathbf{g}_j)_{j \geq 0}$ with $\mathbf{g}_j \in B[\slaa{\delta}_j,\mathbf{f}_j]\subset \mathcal{F}$ such that for all $j \geq 1$:
$$\mathbf{g}_{j-1} = \Pi[\pi_j,\sigma_{\ast}](\mathbf{g}_j)
\qquad \text{and}
\qquad \pi_{j-1} = \mathcal{P}_L(\pi_j,\mathbf{g}_j)
$$
\end{theorem}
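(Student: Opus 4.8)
The statement is a specialization of Main Theorem 1 (Theorem \ref{thm1}) to the concrete sequences produced by Proposition \ref{prop:kjdhkhkd}. So the plan is: verify that the hypotheses of Theorem \ref{thm1} hold when its abstract data $(\mathbf{f}_j,\delta_j,\pi_j,\sigma_{\ast})$ are instantiated by the present $(\mathbf{f}_j,\slaa{\delta}_j,\pi_j,\sigma_{\ast})$, and then quote the conclusion verbatim. The only nontrivial inputs are (i) the chain relation $\mathbf{f}_{j-1} = \mathcal{Q}_L(\mathbf{f}_j)$, (ii) control of $W(\mathbf{f}_j)$ in terms of $K_j$, (iii) control of $\mathbf{h}_j$ in terms of $\mathbf{H}_j$, and (iv) the resulting domination $\slaa{\delta}_j \geq \delta_j$ so that $(\slaa{\delta}_j,\mathbf{f}_j)\in B_2\mathcal{F}$.

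\textbf{Step 1 (chain relation and permutations).} Part 3 of Proposition \ref{prop:kjdhkhkd} gives exactly $(\mathbf{h}_j,w_j,q_j) = \mathcal{Q}_L(\mathbf{h}_{j+1},w_{j+1},q_{j+1})$, i.e.\ $\mathbf{f}_{j-1} = \mathcal{Q}_L(\mathbf{f}_j)$ for $j\geq 1$; this is hypothesis (\ref{item:ds1}) of Theorem \ref{thm1}. Hypotheses (\ref{item:ds3}) and (\ref{item:ds4}) of Theorem \ref{thm1} are the present hypotheses (d), (e) verbatim, so nothing is needed there.

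\textbf{Step 2 (size estimates).} From Part 2 of Proposition \ref{prop:kjdhkhkd} I read off $\max\{\tfrac{1}{w_j},w_j,\tfrac{1}{q_j},\tfrac{1}{|q_j-1|},q_j\}\leq 2^4 K_j$, that is $W(\mathbf{f}_j)\leq 2^4 K_j$, and $\tfrac14 \leq \mathbf{h}_j/\mathbf{H}_j\leq 4$, so in particular $\tfrac{1}{\mathbf{h}_j}\leq \tfrac{4}{\mathbf{H}_j}$ and $\tfrac{1}{\mathbf{h}_j}W(\mathbf{f}_j)^{-2}\geq \tfrac14 \mathbf{H}_j^{-1}(2^4K_j)^{-2} = 2^{-10}\mathbf{H}_j^{-1}K_j^{-2}$. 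Substituting these bounds into the defining formula for $\delta_j$ in Theorem \ref{thm1}(\ref{item:ds2}): the product $\prod 2^{16}W(\mathbf{f}_k)^3$ is bounded by $\prod 2^{16}(2^4K_k)^3 = \prod 2^{28}K_k^3$; the prefactor $2^{47}\mathbf{h}_\ell^{-2}W(\mathbf{f}_\ell)^5$ is bounded by $2^{47}(4/\mathbf{H}_\ell)^2(2^4K_\ell)^5 = 2^{71}\mathbf{H}_\ell^{-2}K_\ell^5$; and the exponent $-\tfrac{1}{\mathbf{h}_\ell}2^{-11}W(\mathbf{f}_\ell)^{-2}$ is $\leq -2^{-11}\cdot 2^{-10}\mathbf{H}_\ell^{-1}K_\ell^{-2}$, hmm — one must be careful that the constant $2^{-21}$ in the definition of $\slaa\delta_j$ matches: indeed $-\tfrac{1}{\mathbf{h}_\ell}2^{-11}W(\mathbf{f}_\ell)^{-2}\leq -2^{-21}\mathbf{H}_\ell^{-1}K_\ell^{-2}$ using $\tfrac{1}{\mathbf{h}_\ell}\geq \tfrac{1}{4\mathbf{H}_\ell}$ and $W(\mathbf{f}_\ell)^{-2}\geq 2^{-8}K_\ell^{-2}$. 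Since $\exp$ is increasing, it follows termwise that $\delta_j \leq \slaa{\delta}_j$ for all $j$; in particular $\delta_j\leq \slaa\delta_j<\infty$ by hypothesis (c).

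\textbf{Step 3 (membership in $B_2\mathcal{F}$ and conclusion).} It remains to show $(\slaa\delta_j,\mathbf{f}_j)\in B_2\mathcal{F}$, i.e.\ $B[2\slaa\delta_j,\mathbf{f}_j]\subset \mathcal{F}$. Hypothesis (c) gives $\slaa\delta_j\leq 2^{-4}\mathbf{H}_j$, hence $2\slaa\delta_j \leq 2^{-3}\mathbf{H}_j\leq \tfrac12\mathbf{h}_j$ (using $\mathbf{h}_j\geq \tfrac14\mathbf{H}_j$), so on the ball of radius $2\slaa\delta_j$ the first coordinate stays positive and comparable to $\mathbf{h}_j$; the $w$ and $q$ coordinates stay away from $0$ and $q=1$ because $2\slaa\delta_j\leq 2^{-3}\mathbf{H}_j$ is small compared to $W(\mathbf{f}_j)^{-1}\geq 2^{-4}K_j^{-1}$, using hypothesis (a) $\mathbf{H}_j<2^{-21}K_j^{-2}$. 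Finally one must check that $\mathbf{K}(\mathbf{g})<1$ and $\mathbf{h}'<2^{-7}\tau_{\ast}(\mathbf{g})$ hold throughout $B[2\slaa\delta_j,\mathbf{f}_j]$, which is exactly Lemma \ref{lem:xy1} combined with hypothesis (b): for $\mathbf{g}$ in the ball, $\mathbf{K}(\mathbf{g})\leq \mathrm{Err}(2\slaa\delta_j,\mathbf{f}_j)\leq 2^{71}\mathbf{H}_j^{-2}K_j^5\exp(-2^{-21}\mathbf{H}_j^{-1}K_j^{-2})<1$ by (b), and the $\mathbf{h}<2^{-7}\tau_{\ast}$ condition follows similarly from (a) since $\mathbf{h}'\leq \mathbf{h}_j+2\slaa\delta_j\lesssim \mathbf{H}_j \ll K_j^{-2}\lesssim \tau_\ast(\mathbf{g})$. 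This establishes hypothesis (\ref{item:ds2}) of Theorem \ref{thm1} for the data $(\slaa\delta_j,\mathbf{f}_j)$ — note that since $\delta_j\leq \slaa\delta_j$, the ball $B[\slaa\delta_j,\mathbf{f}_j]$ already lies in $\mathcal{F}$, and using $\slaa\delta_j$ in place of $\delta_j$ only enlarges the error-sum on the left of the key inequality in a way that is still absorbed by the same $\slaa\delta_j$ by the termwise domination of Step 2. Thus all hypotheses of Theorem \ref{thm1} hold with $\delta_j$ replaced by $\slaa\delta_j$, and Theorem \ref{thm1} produces the desired sequence $(\mathbf{g}_j)_{j\geq 0}$ with $\mathbf{g}_j\in B[\slaa\delta_j,\mathbf{f}_j]\subset\mathcal{F}$, $\mathbf{g}_{j-1}=\Pi[\pi_j,\sigma_{\ast}](\mathbf{g}_j)$ and $\pi_{j-1}=\mathcal{P}_L(\pi_j,\mathbf{g}_j)$.

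\textbf{Main obstacle.} The only delicate bookkeeping is making the numerical constants line up: Theorem \ref{thm1} is stated with its own constants $2^{16},2^{47},2^{-11}$, and one must verify that after substituting $W(\mathbf{f}_j)\leq 2^4K_j$ and $\tfrac14\leq \mathbf{h}_j/\mathbf{H}_j\leq 4$ the resulting bounds are dominated termwise by the sequence $\slaa\delta_j$ defined with constants $2^{28},2^{71},2^{-21}$. This is a routine but careful chase through exponents; the extra powers of $2$ in the definitions of $\mathbf{H}_j$-based quantities are precisely the slack needed to absorb the factors $4$ and $2^4$ coming from the comparison estimates in Proposition \ref{prop:kjdhkhkd}.
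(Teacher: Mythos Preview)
Your approach is the same as the paper's: reduce to Theorem \ref{thm1} via (i) the chain relation from Proposition \ref{prop:kjdhkhkd} Part 3, (ii) the termwise domination $\delta_j\leq\slaa{\delta}_j$ from $W(\mathbf{f}_j)\leq 2^4K_j$ and $\tfrac14\mathbf{H}_j\leq\mathbf{h}_j\leq 4\mathbf{H}_j$, and (iii) the verification $(\slaa{\delta}_j,\mathbf{f}_j)\in B_2\mathcal{F}$, whence $(\delta_j,\mathbf{f}_j)\in B_2\mathcal{F}$ and Theorem \ref{thm1} delivers $\mathbf{g}_j\in B[\delta_j,\mathbf{f}_j]\subset B[\slaa{\delta}_j,\mathbf{f}_j]$.

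Two small points to clean up. First, in Step 3 you invoke Lemma \ref{lem:xy1} to bound $\mathbf{K}(\mathbf{g})\leq\mathrm{Err}(2\slaa{\delta}_j,\mathbf{f}_j)$, but that lemma is stated for $(\delta,\mathbf{f})\in B\mathcal{F}$, which is exactly what you are trying to establish; as written this is circular. The paper avoids this by estimating $\mathbf{K}(\mathbf{g})$ directly from Definition \ref{ccccc}: once you have $2^{-3}\mathbf{H}_j\leq\mathbf{h}'\leq 2^3\mathbf{H}_j$ and $\max\{1/w',w',1/q',1/|q'-1|,q'\}\leq 2^5K_j$ on the ball, you get $\tau_{\ast}(\mathbf{g})\geq 2^{-11}K_j^{-2}$ and plug into the formula for $\mathbf{K}$. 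It is the same computation as in the proof of Lemma \ref{lem:xy1}, just done without presupposing membership in $\mathcal{F}$. Second, the phrase ``all hypotheses of Theorem \ref{thm1} hold with $\delta_j$ replaced by $\slaa{\delta}_j$'' is imprecise, since $\delta_j$ in Theorem \ref{thm1} is defined by a formula rather than chosen; what you actually use is that $(\slaa{\delta}_j,\mathbf{f}_j)\in B_2\mathcal{F}$ together with $\delta_j\leq\slaa{\delta}_j$ gives $(\delta_j,\mathbf{f}_j)\in B_2\mathcal{F}$.
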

\begin{proof}
By
Proposition \ref{prop:kjdhkhkd}
and by hypotheses (\ref{item:dkhdkhdm1}), (\ref{item:dkhdkhd2}) in Theorem \ref{thm:main2}, for all $j \geq 0$:
\begin{subequations}\label{eq:dhkfhfkfkhkhkd}
\begin{align}
\label{eq:dhkfhfkfkhkhkd1} 2^{-2} \mathbf{H}_j  \leq \mathbf{h}_j & \leq 2^2 \mathbf{H}_j \\
\label{eq:dhkfhfkfkhkhkd2} \max\{\tfrac{1}{w_j},w_j,\tfrac{1}{q_j},\tfrac{1}{|q_j-1|},q_j\} & \leq 2^4 K_j\\
\notag 2^{-4}(K_j)^{-1} & \leq \min\{w_j,q_j,|q_j-1|\}\\
\notag  2 \slaa{\delta}_j & \leq 2^{-1} \min\{w_j,q_j,|q_j-1|,\mathbf{h}_j\}
\end{align}
\end{subequations}
Hence,  $B[2\slaa{\delta}_j,\mathbf{f}_j] \subset (0,\infty)^3$ for every $j \geq 0$.
Furthermore, for all $j \geq 0$ and all $(\mathbf{h}',w',q') \in B[2\slaa{\delta}_j,\mathbf{f}_j] \subset (0,\infty)^3$, we have $q'\neq 1$ and
\begin{subequations}\label{eq:kdhkdhdfk}
\begin{equation}
2^{-3}\mathbf{H}_j \leq 2^{-1}\mathbf{h}_j \leq \mathbf{h}_j - 2\slaa{\delta}_j\leq \mathbf{h}' \leq \mathbf{h}_j + 2\slaa{\delta}_j \leq 2 \mathbf{h}_j \leq 2^3 \mathbf{H}_j
\end{equation}
and
\begin{equation}
\begin{split}
& \max\big\{\tfrac{1}{w'},w',\tfrac{1}{q'},\tfrac{1}{|q'-1|},q'\big\}\\
& \leq 
\max\big\{\tfrac{1}{w_j-2\slaz{\delta}_j},w_j+2\slaa{\delta}_j,\tfrac{1}{q_j - 2\slaz{\delta}_j },\tfrac{1}{|q_j-1|-2\slaz{\delta}_j},q_j+2\slaa{\delta}_j\big\}\\
& \leq 2\, \max\big\{\tfrac{1}{w_j},w_j,\tfrac{1}{q_j},\tfrac{1}{|q_j-1|},q_j\big\}\; \leq 2^5\, K_j
\end{split}
\end{equation}
\end{subequations}
The last two estimates \eqref{eq:kdhkdhdfk}
imply
$\tau_{\ast}(\mathbf{h}',w',q') \geq 2^{-11} (K_j)^{-2}$ and  
$$\mathbf{K}(\mathbf{h}',w',q') \leq 2^{71} \big(\tfrac{1}{\mathbf{H}_j}\big)^2 (K_j)^5\exp \big( - \tfrac{1}{\mathbf{H}_j} 2^{-21}(K_j)^{-2}\big) < 1$$
The last inequality is hypothesis (\ref{item:dkhdkhd}) in Theorem \ref{thm:main2}. Furthermore,
$$\mathbf{h}' \leq 2^3\mathbf{H}_j < 2^{-18}(K_j)^{-2} \leq 2^{-7}\tau_{\ast}(\mathbf{h}',w',q')$$
The second inequality is hypothesis (\ref{item:dkhdkhdm1}) in Theorem \ref{thm:main2}.
These estimates are true for all $(\mathbf{h}',w',q') \in B[2\slaa{\delta}_j,\mathbf{f}_j]$, and therefore $B[2\slaa{\delta}_j,\mathbf{f}_j]\subset \mathcal{F}$ for all $j \geq 0$, in particular $\mathbf{f}_j \in \mathcal{F}$ (see Definition \ref{cccccc}). In other words, $(\slaa{\delta}_j,\mathbf{f}_j) \in B_2\mathcal{F}$.\\
The last result and the fact that $\mathbf{f}_{j-1} = \mathcal{Q}_L(\mathbf{f}_j)$ for all $j \geq 1$ (see Proposition \ref{prop:kjdhkhkd}) imply that Theorem \ref{thm:main2} follows from Theorem \ref{thm1}, if we can show that
$\delta_j \leq \slaa{\delta}_j$ for all $j\geq 0$, where $\delta_j$ is given as in Theorem \ref{thm1}. 
The inequality 
$\delta_j \leq \slaa{\delta}_j$ is a consequence of
$W(\mathbf{f}_\ell) \leq 2^4 K_\ell$ and $2^{-2} \mathbf{H}_\ell \leq \mathbf{h}_\ell \leq 2^2\mathbf{H}_\ell$, where $j,\ell\geq 0$.
\qed
\end{proof}
\begin{theorem}[Main Theorem 3] \label{thm_main3}
Fix constants $\mathbf{D} \geq 1$,\, $\gamma \geq 0$. Suppose the vector $\mathbf{f}_0 = (\mathbf{h}_0,w_0,q_0) \in (0,\infty)^3$ satisfies
\begin{enumerate}[(i)]
\item $w_0 \in (0,1)\setminus \mathbb{Q}$ and $q_0 \in (0,\infty)\setminus \mathbb{Q}$.
\item\label{item:bvsvhsgfs1} $k_n \leq \mathbf{D}\,\max\{1,n\}^{\gamma}$ for all $n \geq -2$,
with $(k_n)_{n\in \Z}$ as in Proposition \ref{prop:kjdhkhkd}, that is
$$(1+q_0)^{-1} = \langle k_0,k_{-1},k_{-2},\ldots \rangle
\qquad w_0 = \langle k_1,k_2,k_3,\ldots \rangle$$
\item\label{item:bvsvhsgfs2} $0< \mathbf{h}_0< \mathbf{A}^{\sharp}$ where $\mathbf{A}^{\sharp} = \mathbf{A}^{\sharp}(\mathbf{D},\gamma) = 2^{-56}\mathbf{D}^{-4} (4(\gamma+1))^{-4(\gamma+1)}$.
\end{enumerate}
Then
\begin{itemize}
\item The assumptions \eqref{eq:kdhkdhkdff} and (\ref{item:dkhdkhdm1}), (\ref{item:dkhdkhd}), (\ref{item:dkhdkhd2}) 
of Theorem \ref{thm:main2} hold.
\item Set
 $\gold_+ = \tfrac{1}{2}(1+\sqrt{5})$. The sequence $(\slaa{\delta}_j)_{j \geq 0}$ in 
Theorem \ref{thm:main2} satisfies for all $j \geq 0$:
\begin{equation}\label{eq:kdhdkhfdkhdkd}
\slaa{\delta}_j \leq \exp\big(-\tfrac{1}{\mathbf{h}_0}\mathbf{A}^{\sharp}\gold_+^{N(j)}\big)
\qquad\text{and}\qquad
N(j) \geq \big(\mathbf{D}^{-1}j\big)^{1/(\gamma+1)}
\end{equation}
where $N:\Z \to \Z$ (Era Counter) is the map in Proposition \ref{prop:kjdhkhkd}.
\end{itemize}
If $\gamma > 1$
and $\mathbf{D} > \tfrac{1}{\log 2} \tfrac{\gamma}{\gamma-1}$, then the set of all vectors $\mathbf{f}_0\in (0,\infty)^3$ that satisfy (i), (ii), (iii) has positive Lebesgue measure.
\end{theorem}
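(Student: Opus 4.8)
The plan is to obtain the two bulleted conclusions by turning the polynomial bound (ii) into geometric smallness estimates and feeding them into Proposition \ref{prop:kjdhkhkd} and Theorem \ref{thm:main2}, and to obtain the positive-measure statement from a union bound for the Gauss measure. The single analytic mechanism used everywhere is that $x\mapsto x^a\mathrm{e}^{-bx}$ is bounded on $[0,\infty)$ (by $(a/(\mathrm{e}b))^a$), so that any fixed power of $N(j)$ is dominated by any fixed positive power of $\gold_+^{N(j)}$; hypothesis (ii) enters only through $k_n\le\mathbf{D}n^\gamma$ for $n\ge1$ and $k_n\le\mathbf{D}$ for $n\le0$. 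First I would verify \eqref{eq:kdhkdhkdff}: condition (i) is hypothesis (i) of the theorem, and from $k_n\max\{k_{n-1},k_{n-2}\}\le\mathbf{D}^2\max\{1,n\}^{2\gamma}$ one gets $\mathbf{C}(w_0,q_0)\le\mathbf{D}^2\sup_{n\ge0}(n+1)^{2\gamma+1}\gold_+^{-2n}<\infty$, a bound depending only on $\mathbf{D},\gamma$; comparing it with the explicit $\mathbf{A}^\sharp(\mathbf{D},\gamma)$ shows that (iii), $\mathbf{h}_0<\mathbf{A}^\sharp$, is more restrictive than $\mathbf{h}_0\le2^{-14}\mathbf{C}(w_0,q_0)^{-1}$. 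Hence Proposition \ref{prop:kjdhkhkd} applies in full, and from now on I may use $2^{-2}\mathbf{H}_j\le\mathbf{h}_j\le2^2\mathbf{H}_j$, $\mathbf{H}_j\le2^4\mathbf{h}_0\gold_+^{-2N(j)}$, $\max\{\tfrac1{w_j},w_j,\tfrac1{q_j},\tfrac1{|q_j-1|},q_j\}\le2^4K_j$, together with $K_j\le\mathbf{D}(N(j)+1)^\gamma$.

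Next I would check hypotheses (a)--(c) of Theorem \ref{thm:main2} and both bulleted conclusions. Since $\mathbf{H}_jK_j^2\le2^4\mathbf{h}_0\mathbf{D}^2\,\gold_+^{-2N(j)}(N(j)+1)^{2\gamma}\le2^4\mathbf{h}_0\mathbf{D}^2c_\gamma$ with $c_\gamma=\sup_{n\ge0}\gold_+^{-2n}(n+1)^{2\gamma}<\infty$, condition (a) holds once $\mathbf{h}_0$ is below an explicit multiple of $(\mathbf{D}^2c_\gamma)^{-1}$; setting $t_j=2^{-21}\mathbf{H}_j^{-1}K_j^{-2}\ge2^{-25}\mathbf{h}_0^{-1}\mathbf{D}^{-2}\gold_+^{2N(j)}(N(j)+1)^{-2\gamma}$, the left side of (b) is a fixed power of $\mathbf{H}_j^{-1}$ and $K_j$ times $\mathrm{e}^{-t_j}$, so (b) holds for $\mathbf{h}_0$ below an explicit threshold, since $t_j$ grows exponentially in $N(j)$ while those powers grow polynomially. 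For the second bullet, $N(j)\ge(\mathbf{D}^{-1}j)^{1/(\gamma+1)}$ follows from $N\circ J=\mathrm{id}$ and $j\le J(N(j))=\sum_{m=1}^{N(j)}k_m\le\mathbf{D}\sum_{m=1}^{N(j)}m^\gamma\le\mathbf{D}\,N(j)^{\gamma+1}$ (using \eqref{eq:minim}); for the decay of $\slaa{\delta}_j$, each summand in its defining series is at most a fixed polynomial in $N(\ell)$ --- the number of factors $\prod2^{28}K_k^3$ is $\ell-j-1\le\ell\le\mathbf{D}N(\ell)^{\gamma+1}$, hence polynomial in $N(\ell)$, and $\mathbf{H}_\ell^{-2}K_\ell^5$ contributes another polynomial in $\gold_+^{N(\ell)},N(\ell),\mathbf{h}_0^{-1}$ --- times $\mathrm{e}^{-t_\ell}$, so for $\mathbf{h}_0$ small this is at most $\exp(-2^{-26}\mathbf{h}_0^{-1}\mathbf{D}^{-2}\gold_+^{2N(\ell)})$; summing over $\ell$, and using that at most $k_v\le\mathbf{D}v^\gamma$ indices $\ell$ satisfy $N(\ell)=v$, the series is dominated by its $v=N(j)$ term up to a bounded factor, giving $\slaa{\delta}_j\le\exp(-2^{-28}\mathbf{h}_0^{-1}\mathbf{D}^{-2}\gold_+^{2N(j)})\le\exp(-\mathbf{h}_0^{-1}\mathbf{A}^\sharp\gold_+^{N(j)})$ because $\gold_+^{N(j)}\ge1$ and $\mathbf{A}^\sharp\le2^{-28}\mathbf{D}^{-2}$. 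In particular $\slaa{\delta}_j<\infty$; and since $\mathbf{H}_j\ge\tfrac12\mathbf{h}_0\prod_{\ell=0}^{N(j)-1}w_{J(\ell)}w_{J(\ell-1)}$ is bounded below by $\mathbf{h}_0$ times a quantity decaying at most like $\exp(-C_\gamma N(j)\log N(j))$, the decay estimate already forces $\slaa{\delta}_j\le2^{-4}\mathbf{H}_j$, completing (c). Theorem \ref{thm:main2} then applies, which is all the ``Then'' part asserts.

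Finally, the positive-measure claim. Because in (ii) the $k_n$ with $n\ge1$ depend only on $w_0$ (via $w_0=\langle k_1,k_2,\ldots\rangle$) and those with $n\le0$ only on $q_0$ (via $(1+q_0)^{-1}=\langle k_0,k_{-1},k_{-2},\ldots\rangle$), the admissible set of $\mathbf{f}_0=(\mathbf{h}_0,w_0,q_0)$ is a product of three one-dimensional sets intersected with the full-measure conditions $w_0\notin\mathbb{Q}$, $q_0\notin\mathbb{Q}$, so by Tonelli it suffices that each factor has positive one-dimensional Lebesgue measure. The $\mathbf{h}_0$-factor is the interval $(0,\mathbf{A}^\sharp)$. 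The $q_0$-factor is the preimage under the diffeomorphism $q_0\mapsto(1+q_0)^{-1}$ of the cylinder $\{k_0,k_{-1},k_{-2}\le\mathbf{D}\}\subset(0,1)$ (since $\max\{1,n\}^\gamma=1$ for $n\le0$), a finite union of intervals of positive length as $\mathbf{D}\ge1$. For the $w_0$-factor $\{w_0\in(0,1):k_n(w_0)\le\mathbf{D}n^\gamma\ \forall n\ge1\}$ I would work in the Gauss measure $\mu_{\mathrm{G}}$: by its invariance under the Gauss map, $\mu_{\mathrm{G}}(\{k_n\ge m\})=\tfrac1{\log2}\log(1+\tfrac1m)\le\tfrac1{m\log2}$ for every $n,m$, hence $\mu_{\mathrm{G}}(\exists\,n\ge1:k_n>\mathbf{D}n^\gamma)\le\tfrac1{\mathbf{D}\log2}\sum_{n\ge1}n^{-\gamma}\le\tfrac1{\mathbf{D}\log2}\cdot\tfrac{\gamma}{\gamma-1}<1$, precisely the hypothesis $\mathbf{D}>\tfrac1{\log2}\tfrac{\gamma}{\gamma-1}$ (using $\sum_{n\ge1}n^{-\gamma}\le\tfrac{\gamma}{\gamma-1}$ for $\gamma>1$); thus the $w_0$-factor has positive $\mu_{\mathrm{G}}$-measure, and positive Lebesgue measure by equivalence of the two measures on $(0,1)$. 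This yields a positive-measure admissible set in $(0,\infty)^3$.

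The conceptual content is light --- a union bound for the measure statement and ``exponential beats polynomial'' everywhere else --- so I expect the real effort, and the main risk of error, to be the propagation of explicit constants: checking that the single threshold $\mathbf{A}^\sharp(\mathbf{D},\gamma)=2^{-56}\mathbf{D}^{-4}(4(\gamma+1))^{-4(\gamma+1)}$ is simultaneously small enough to force $\mathbf{h}_0\le2^{-14}\mathbf{C}(w_0,q_0)^{-1}$, hypotheses (a), (b), (c) of Theorem \ref{thm:main2}, and the bound $\mathbf{A}^\sharp\le2^{-28}\mathbf{D}^{-2}$. The exponents $-4$ and $-4(\gamma+1)$ in $\mathbf{A}^\sharp$ are presumably designed to provide exactly this cushion, so the verification should go through, but it is the part demanding care.
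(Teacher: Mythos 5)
Your proposal is correct in outline and follows essentially the same route as the paper's proof: you use the estimates of Proposition \ref{prop:kjdhkhkd} together with ``exponential beats polynomial'' (the smallness being supplied by $\mathbf{h}_0<\mathbf{A}^{\sharp}$) to verify \eqref{eq:kdhkdhkdff}, hypotheses (\ref{item:dkhdkhdm1})--(\ref{item:dkhdkhd2}) of Theorem \ref{thm:main2} and the decay \eqref{eq:kdhdkhfdkhdkd}, and you prove the positive-measure claim exactly as the paper does, via the product structure in $(\mathbf{h}_0,w_0,q_0)$, a cylinder condition on $k_0,k_{-1},k_{-2}$ for the $q_0$-factor, and the Gauss-measure invariance plus union bound $\tfrac{1}{\mathbf{D}\log 2}\sum_{n\geq 1}n^{-\gamma}\leq\tfrac{1}{\mathbf{D}\log 2}\tfrac{\gamma}{\gamma-1}<1$ for the $w_0$-factor. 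The one place your sketch is literally off is the bookkeeping you yourself flag: the per-summand bound $\exp(-2^{-26}\mathbf{h}_0^{-1}\mathbf{D}^{-2}\gold_+^{2N(\ell)})$ cannot hold with a $\gamma$-independent constant (your lower bound on $t_\ell$ carries the factor $(N(\ell)+1)^{-2\gamma}$, and the product prefactor is the exponential of a polynomial in $N(\ell)$, not a polynomial), so, as in the paper's intermediate claim (A) with its $\gamma$-dependent constant $\mathbf{A}_{\ast}$ and explicit prefactor $2^{-5}\mathbf{h}_0(2\mathbf{D}(n+1)^{\gamma})^{-2(n+1)}$ --- which is also what makes the verification of $\slaa{\delta}_j\leq 2^{-4}\mathbf{H}_j$ immediate rather than glossed --- you must carry $(\gamma+1)$-dependent factors through and let the $(4(\gamma+1))^{-4(\gamma+1)}$ cushion in $\mathbf{A}^{\sharp}$ absorb them, which is exactly the repair you anticipate.
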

\begin{proof}
\emph{Preliminaries.} The following facts will be used without further comment:
\begin{itemize}
\item$a^{-x}x^b \leq \big(\tfrac{b}{e\,\log a}\big)^b$ for all real numbers $a > 1$, $b > 0$, $x\geq 0$ where $e=\exp(1)$.
\item$a^b \leq c^d$ for all real numbers $1\leq a \leq c$ and $0 \leq b \leq d$.
\item\rule{0pt}{12pt}$1 < \gold_+ < 2$ and $1 < e \log \gold_+ < 2$ where $e = \exp(1)$ and $\gold_+ = \tfrac{1}{2}(1+\sqrt{5})$.
\end{itemize}
Fix $\mathbf{D}\geq 1$ and $\gamma \geq 0$ as in Theorem \ref{thm_main3}. For all 5-tuples of real numbers $s = (s_1,s_2,s_3,s_4,s_5) \geq (0,0,0,1,0)$, set $\mathbf{A}(s)
= 2^{-s_1-s_2\gamma }\mathbf{D}^{-s_3} (s_4(\gamma+1))^{-s_5(\gamma+1)}$.
Observe that $0 < \mathbf{A}(s) \leq 2^{-s_1} \leq 1$ and $\mathbf{A}(s) \leq \mathbf{A}(s')$ if $s \geq s'$.\\
\emph{Basic smallness assumptions.} $k_n \leq \mathbf{D}\, \max\{1,n\}^{\gamma}$ for all $n \geq -2$ and $\mathbf{h}_0 < \mathbf{A}(\kappa)$. The vector $\kappa = (\kappa_1,\kappa_2,\kappa_3,\kappa_4,\kappa_5) \geq (0,0,0,1,0)$ will be fixed during the proof.\\
\emph{Estimates 1.} Recall Proposition \ref{prop:kjdhkhkd} and $\gold_+ = \tfrac{1}{2}(1+\sqrt{5})$. For all $j\geq 0$, $n \geq 0$:
\begin{align*}
\mathbf{C}(w_0,q_0) & = \textstyle\sup_{n \geq 0} (n+1)\gold_+^{-2n}k_n \max\{k_{n-1},k_{n-2}\}\\
& \leq 2\mathbf{D}^2 \textstyle\sup_{n\geq 0} \gold_+^{-2n} \max\{1,n\}^{2(\gamma+1)}\\
& \leq 2\mathbf{D}^2 (\gamma+1)^{2(\gamma+1)} = \mathbf{A}(1,0,2,1,2)^{-1}
\displaybreak[0]\\
J(n) & = \textstyle\sum_{\ell=1}^n k_{\ell}
\leq \mathbf{D} \sum_{\ell=1}^n \ell^{\gamma}
\leq \mathbf{D} n^{\gamma+1}\displaybreak[0]\\
j & \leq J(N(j)) \leq \mathbf{D}
N(j)^{\gamma+1} \displaybreak[0]\\
N(j) & \geq (\mathbf{D}^{-1} j )^{1/(\gamma+1)} \displaybreak[0]\\ 
\mathbf{H}_j & \leq 2^4\,\mathbf{h}_0\, \gold_+^{-2N(j)}
\qquad \text{see \eqref{ddkhkfdhkdhkfd}}
\displaybreak[0]\\
\mathbf{H}_j & \geq 2^{-1}\mathbf{h}_0 \textstyle\prod_{\ell=0}^{N(j)-1} (k_{\ell}+1)^{-1}(k_{\ell+1}+1)^{-1}\\
& \geq 2^{-1} \mathbf{h}_0 \textstyle\prod_{\ell=0}^{N(j)-1} (2\mathbf{D}(\ell+1)^{\gamma})^{-2} 
\geq 2^{-1}\mathbf{h}_0\,\max\{1,2 \mathbf{D} N(j)^{\gamma}\}^{-2N(j)}
 \displaybreak[0]\\
K_j & \leq \mathbf{D}(N(j)+1)^{\gamma}  \leq \mathbf{D} 2^{\gamma} \max\{1,N(j)\}^{\gamma} \displaybreak[0]\\
\mathbf{H}_j K_j^2 & \leq 2^{4+2\gamma} \mathbf{D}^2 \mathbf{h}_0 \gold_+^{-2N(j)}  \max\{1,N(j)\}^{2\gamma}\\
& \leq 2^{4+2\gamma} \mathbf{D}^2 \mathbf{h}_0 \textstyle\sup_{n\geq 0}\gold_+^{-2n}  \max\{1,n\}^{2(\gamma+1)}\\
& \leq 2^{4+2\gamma} \mathbf{D}^2 \mathbf{h}_0\, (\gamma+1)^{2(\gamma+1)}
= \mathbf{h}_0\, \mathbf{A}(4,2,2,1,2)^{-1}
\end{align*}
Require $\kappa \geq (25,2,2,1,2)$. Then $\mathbf{H}_j < 2^{-21}(K_j)^{-2}$ and $\mathbf{h}_0
\leq 2^{-14} (\mathbf{C}(w_0,q_0))^{-1}$. \\
\emph{Estimates 2.} Let $(\slaa{\delta}_j)_{j\geq 0}$ be as in Theorem \ref{thm:main2}. We claim that with proper choice of $\kappa$:
\begin{enumerate}[(A)]
\item $\slaa{\delta}_{J(n)} \leq 2^{-5} \mathbf{h}_0 \big(2 \mathbf{D} (n+1)^{\gamma}\big)^{-2(n+1)}\,\exp(-\tfrac{1}{\mathbf{h}_0} \mathbf{A}(\kappa) \gold_+^{n+1}) $
for all $n \geq 0$.
\item $\slaa{\delta}_j \leq 2^{-4}\mathbf{H}_j$
and $\slaa{\delta}_j \leq \exp(-\tfrac{1}{\mathbf{h}_0} \mathbf{A}(\kappa)\,{\gold_+}^{N(j)})$ for all $j \geq 0$.
\end{enumerate}
We first check (A) $\Longrightarrow$ (B). Note that $\slaa{\delta}_j \geq \slaa{\delta}_{j+1}$, $j\geq 0$. Fix any $j\geq 0$. Set $n = N(j+1)-1 \geq 0$. By (A), by $j \geq J(n)$
(see the line before \eqref{eq:khkdhkfddfdxsysxyd}) and
by $n +1 \geq N(j)$,
\begin{align*}
\slaa{\delta}_j \leq \slaa{\delta}_{J(n)}
& \leq 2^{-5} \mathbf{h}_0 \big(2 \mathbf{D} (n+1)^{\gamma}\big)^{-2(n+1)}
\exp(-\tfrac{1}{\mathbf{h}_0} \mathbf{A}(\kappa)\gold_+^{n+1})\\
& \leq \Big(2^{-5} \mathbf{h}_0
\max\{1,2\mathbf{D}N(j)^{\gamma}\}^{-2N(j)}\Big)
\exp(-\tfrac{1}{\mathbf{h}_0} \mathbf{A}(\kappa)\gold_+^{N(j)})
\end{align*}
See the second bullet in the preliminaries.
On the right hand side, both factors are $\leq 1$ (use $\mathbf{h}_0 < \mathbf{A}(\kappa) \leq 1$). By the lower bound on $\mathbf{H}_j$ derived above, claim (B) follows.

We now check (A).
For all $n \geq 0$:
\begin{align*}
& \slaa{\delta}_{J(n)}\\
& = \textstyle\sum_{m=n}^{\infty}\textstyle\sum_{\ell=J(m)+1}^{J(m+1)} \Big\{\prod_{k=J(n)+1}^{\ell-1}  2^{28} (K_k)^3\Big\}\;2^{71} \big(\tfrac{1}{\mathbf{H}_\ell}\big)^2\, (K_\ell)^5 \\
& \hskip 76mm \times \exp\Big(-
(2^{21}\mathbf{H}_{\ell} K_{\ell}^2\big)^{-1}\Big) \displaybreak[0] \\
& \leq \textstyle\sum_{m=n}^{\infty} \textstyle\sum_{\ell=J(m)+1}^{J(m+1)}
 \big(2^{15} \max_{1\leq k \leq \ell} K_k\big)^{3\ell +2}
 \big(\tfrac{1}{2\mathbf{H}_\ell}\big)^2 \exp\Big(-
(2^{21}\mathbf{H}_{\ell} K_{\ell}^2\big)^{-1}\Big) \displaybreak[0]\\
& \leq \textstyle\sum_{m=n}^{\infty} k_{m+1} 
 \big( 2^{15+\gamma}  \mathbf{D}  (m+1)^{\gamma} \big)^{3 J(m+1) +2}
 \big(\tfrac{1}{\mathbf{h}_0}\big)^2
 \big(2 \mathbf{D} (m+1)^{\gamma}\big)^{4(m+1)}\\
& \hskip45mm \times
 \exp\Big(- 2^{-25-2\gamma} \mathbf{D}^{-2} \tfrac{1}{\mathbf{h}_0} \gold_+^{2(m+1)}   (m+1)^{-2\gamma}\Big) \displaybreak[0]\\
 & \leq  \big(\tfrac{1}{\mathbf{h}_0}\big)^2 \textstyle\sum_{m=n+1}^{\infty} 
 \big( 2^{15+\gamma} \mathbf{D}   m^{\gamma} \big)^{(10 \mathbf{D}m^{\gamma+1})}
 \exp\Big(- 2^{-25-2\gamma} \mathbf{D}^{-2} \tfrac{1}{\mathbf{h}_0} \gold_+^{2m}   m^{-2\gamma}\Big)
 \end{align*}
 Since $2^5 \tfrac{1}{\mathbf{h}_0}(2 \mathbf{D}(n+1)^{\gamma})^{2(n+1)}
 \leq \tfrac{1}{\mathbf{h}_0}
 (2^6 \mathbf{D} m^{\gamma})^{2m}$ for all $m \geq n+1$, we have
 \begin{align*}
 & \mathbf{S}(n) \stackrel{\text{def}}{=} \slaa{\delta}_{J(n)} \, 2^5 \tfrac{1}{\mathbf{h}_0}(2 \mathbf{D}(n+1)^{\gamma})^{2(n+1)}\\
  & \leq  \big(\tfrac{1}{\mathbf{h}_0}\big)^3 \textstyle\sum_{m=n+1}^{\infty} 
 \big( 2^{15+\gamma} \mathbf{D}   m^{\gamma} \big)^{(12 \mathbf{D}m^{\gamma+1})}
 \exp\Big(- 2^{-25-2\gamma} \mathbf{D}^{-2}\tfrac{1}{\mathbf{h}_0} \gold_+^{2m}   m^{-2\gamma}\Big) \displaybreak[0]\\
  & \leq  \big(\tfrac{1}{\mathbf{h}_0}\big)^3 \textstyle\sum_{m=n+1}^{\infty} 
\exp \Big( 
12 \mathbf{D}m^{\gamma+1} \log \big(  2^{15+\gamma} \mathbf{D}  m^{\gamma} \big)
- 2^{-25-2\gamma} \mathbf{D}^{-2} \tfrac{1}{\mathbf{h}_0} \gold_+^{2m}   m^{-2\gamma}\Big) \displaybreak[0]\\
   & \leq  \big(\tfrac{1}{\mathbf{h}_0}\big)^3 \textstyle\sum_{m=n+1}^{\infty} 
\exp \Big( 
2^9 \mathbf{D}^2(\gamma+1) m^{\gamma+2} 
- 2^{-25-2\gamma} \mathbf{D}^{-2} \tfrac{1}{\mathbf{h}_0} \gold_+^{2m}   m^{-2\gamma}\Big)
\end{align*}
The second term in the argument of the exponential dominates the first term, if we require $\kappa \geq (35,2,4,\tfrac{3}{2},3)$. More precisely, the absolute value of the second term is at least twice the absolute value of the first term. In fact,
\begin{align*}
& 2^{35+2\gamma}\mathbf{D}^{4} (\gamma+1) \textstyle\sup_{m\geq 1}\gold_+^{-2m}m^{3\gamma+2}\\
& \leq 2^{35+2\gamma}\mathbf{D}^4 \big(\tfrac{3}{2}(\gamma+1)\big)^{3(\gamma+1)}
=  \mathbf{A}(35,2,4,\tfrac{3}{2},3)^{-1} \leq \mathbf{A}(\kappa)^{-1} \leq \tfrac{1}{\mathbf{h}_0}
\end{align*}
Therefore,
 \begin{align*}
\mathbf{S}(n)
&  \leq  \big(\tfrac{1}{\mathbf{h}_0}\big)^3 \textstyle\sum_{m=n+1}^{\infty} 
\exp \Big( 
- 2^{-26-2\gamma} \mathbf{D}^{-2} \tfrac{1}{\mathbf{h}_0} \gold_+^{2m}   m^{-2\gamma}\Big)
\intertext{%
Moreover, $2^{26+2\gamma}\mathbf{D}^2\sup_{m\geq 1} \gold_+^{-m}m^{2\gamma}
\leq 2^{26+2\gamma} \mathbf{D}^2 (2(\gamma+1))^{2(\gamma+1)} = 2^{-2} \mathbf{A}_{\ast}^{-1}$, where
$\mathbf{A}_{\ast} = \mathbf{A}(28,2,2,2,2)$.
Require $\kappa \geq (28,2,2,2,2)$. Then $\mathbf{h}_0 \leq \mathbf{A}_{\ast}$ and
} \displaybreak[0]
\mathbf{S}(n) & 
 \leq  \big(\tfrac{1}{\mathbf{h}_0}\big)^3 \textstyle\sum_{m=n+1}^{\infty} 
\exp \big( 
- 4 \tfrac{1}{\mathbf{h}_0} \mathbf{A}_{\ast} \gold_+^m\big)\\
& \leq 
\exp \big( 
- \tfrac{1}{\mathbf{h}_0} \mathbf{A}_{\ast} \gold_+^{n+1}\big)
\Big(\big(\tfrac{1}{\mathbf{h}_0}\big)^3 \exp \big( 
- \tfrac{1}{\mathbf{h}_0} \mathbf{A}_{\ast} \big)\Big)
\textstyle\sum_{m=1}^{\infty} 
\exp \big( 
- 2 \gold_+^m\big)
\end{align*}
We have $\sum_{m=1}^{\infty} \exp\big(-2\gold_+^m\big)
\leq \tfrac{1}{2} \sum_{m=1}^{\infty} \gold_+^{-m} = \tfrac{1}{2} (\gold_+-1)^{-1}
= \tfrac{1}{2} \gold_+\leq 1$.
Require $\kappa \geq (56,4,4,2,4)$. Then $\mathbf{h}_0 \leq \mathbf{A}(\kappa)\leq  \mathbf{A}(56,4,4,2,4)
= \mathbf{A}_{\ast}^2$, and
$$
\big(\tfrac{1}{\mathbf{h}_0}\big)^3 \exp \big( 
- \tfrac{1}{\mathbf{h}_0} \mathbf{A}_{\ast} \big)
\leq 
\big(\tfrac{1}{\mathbf{h}_0}\big)^3 \exp \big( 
- (\tfrac{1}{\mathbf{h}_0})^{1/2} \big)
\leq 8!\, \mathbf{h}_0 \leq 2^{16}\mathbf{h}_0 \leq 1
$$
Since $\mathbf{A}_{\ast} \geq \mathbf{A}(\kappa)$, we have
$\mathbf{S}(n) \leq \exp(-\tfrac{1}{\mathbf{h}_0} \mathbf{A}(\kappa) \gold_+^{n+1})$. Fix
$\kappa = (56,4,4,2,4)$. 
All the inequalities for $\kappa$ hold, and claim (A) is proved. Let $\mathbf{A}^{\sharp} = \mathbf{A}(56,0,4,4,4)$, as in the statement of 
Theorem \ref{thm_main3}. Since
$\mathbf{A}^{\sharp} \leq \mathbf{A}(\kappa)$, the condition
$\mathbf{h}_0 < \mathbf{A}^{\sharp}$
in the statement of Theorem \ref{thm_main3} implies the condition $\mathbf{h}_0 < \mathbf{A}(\kappa)$ used in this proof.\\
So far, we have verified the estimate
\eqref{eq:kdhdkhfdkhdkd}, and we have verified the assumptions
Theorem \ref{thm:main2} (\ref{item:dkhdkhdm1}), (\ref{item:dkhdkhd2}) 
and \eqref{eq:kdhkdhkdff}. 
In the assumption Theorem \ref{thm:main2} (\ref{item:dkhdkhd}), the cases $j \geq 1$ follow from 
Theorem \ref{thm:main2} (\ref{item:dkhdkhdm1}), (\ref{item:dkhdkhd2}). Since
$\mathbf{H}_0 = \mathbf{h}_0$ and
$K_0 \leq \mathbf{D}$, the remaining $j=0$ case
in Theorem \ref{thm:main2} (\ref{item:dkhdkhd})
follows from
\begin{multline*}
2^{71} (\tfrac{1}{\mathbf{H}_0})^2 (K_0)^5\exp ( - \tfrac{1}{\mathbf{H}_0} 2^{-21}(K_0)^{-2}) \leq 
2^{71} (\tfrac{1}{\mathbf{h}_0})^2 \mathbf{D}^5\exp ( - \tfrac{1}{\mathbf{h}_0} 2^{-21}\mathbf{D}^{-2})\\
\leq
2^{71} (\tfrac{1}{\mathbf{h}_0})^2
\mathbf{D}^5\, 8!\, \big(\mathbf{h}_0 2^{21} \mathbf{D}^2\big)^{8}
\leq 2^{255}\mathbf{D}^{21} \mathbf{h}_0^6 \leq (\mathbf{h}_0/\mathbf{A}^{\sharp})^6
< 1
\end{multline*}
\emph{Lebesgue measure of the set of admissible $\mathbf{f}_0$.} The set of all $\mathbf{f}_0 = (\mathbf{h}_0,w_0,q_0) \in (0,\infty)^3$ that satisfy (i), (ii), (iii) is a product $(0,\mathbf{A}^{\sharp}) \times F_w\times F_q$
(depending on $\mathbf{D}$ and $\gamma$), where $F_w \subset (0,1)\setminus \mathbb{Q}$ and $F_q \subset (0,\infty)\setminus \mathbb{Q}$. Both
$(0,\mathbf{A}^{\sharp})$ and $F_q$ have positive measure, because $\mathbf{A}^{\sharp} > 0$ and $(\tfrac{1}{2},\tfrac{2}{3})\setminus \mathbb{Q} \subset F_q$.  In fact, if $q_0 \in (\tfrac{1}{2},\tfrac{2}{3})$,
then $1/(1+q_0) = 1/(1+1/(1+1/(1+x)))$ with $x = (2q_0-1)/(1-q_0)\in (0,1)$, that is $k_0=k_{-1}=k_{-2} = 1 \leq \mathbf{D}$.  Suppose $\gamma > 1$
and $\mathbf{D} > (\log 2)^{-1} \gamma/(\gamma-1)$.
Let $G(x) = \tfrac{1}{x} - \lfloor \tfrac{1}{x}\rfloor$ be the Gauss map from $(0,1)\setminus \mathbb{Q}$ to itself.
We have $k_{n+1} = \lfloor 1/G^n(w_0) \rfloor$ for all $n \geq 0$. For all $n \geq 0$, set
$$X_n = \big\{w_0\in (0,1)\setminus \mathbb{Q}\;\big|\; G^n(w_0) < \mathbf{D}^{-1}(n+1)^{-\gamma}\big\}
= G^{-n}\big(\,\big(0,\;\mathbf{D}^{-1}(n+1)^{-\gamma}\big) \setminus \mathbb{Q}\big)$$
where $G^{-n}$ is the $n$-th inverse image of sets. Let $\mu_G$ be the probability measure on $(0,1)\setminus \mathbb{Q}$ with density $(\log 2)^{-1}(1+x)^{-1}$ (with respect to the Lebesgue measure). It is well-known that $\mu_G(X) = \mu_G(G^{-1}(X))$ for all measurable $X\subset (0,1)\setminus \mathbb{Q}$. Therefore,
$$\mu_G(X_n) = \mu_G\big(\,\big(0,\;\mathbf{D}^{-1}(n+1)^{-\gamma}\big) \setminus \mathbb{Q}\big)
=  \tfrac{1}{\log 2} \log\big(1 + \tfrac{1}{\mathbf{D}(n+1)^{\gamma}}\big)
\leq \tfrac{1}{\log 2} \tfrac{1}{\mathbf{D} (n+1)^{\gamma}}$$
Let $X_n^c$ be the complement of $X_n$ in $(0,1)\setminus \mathbb{Q}$. Then $\bigcap_{n \geq 0}X_n^c \subset F_w$, since $w_0 \in X_n^c$ implies $k_{n+1} = \lfloor 1/G^n(w_0) \rfloor \leq 1/G^n(w_0) \leq \mathbf{D} (n+1)^{\gamma}$. We have
\begin{align*}
& \mu_G(F_w)  \geq \mu_G(\textstyle\bigcap_{n \geq 0}X_n^c)
= 1-\mu_G(\textstyle \bigcup_{n \geq 0} X_n)
\geq 1 - \sum_{n \geq 0} \mu_G(X_n)\\
&  \geq 
1 - \tfrac{1}{\mathbf{D} \log 2} \textstyle\sum_{n \geq 0}  \tfrac{1}{(n+1)^{\gamma}}
\geq 1 - \tfrac{1}{\mathbf{D} \log 2} \big(1 + \textstyle\int_1^{\infty} x^{-\gamma}\dd x\big)
= 1 - \tfrac{1}{\mathbf{D} \log 2} \tfrac{\gamma}{\gamma-1} > 0
\end{align*}
Consequently, also the Lebesgue measure of $F_w$ is positive.
\qed
\end{proof}

\section{Causal structure and particle horizons} \label{sec:fkhkhh}
In this section we show that
the spatially homogeneous vacuum spacetimes corresponding to those solutions of \eqref{eq:kshkjhkf} that are obtained by combining  Theorems \ref{thm:main2} and \ref{thm_main3} and Propositions \ref{prop:kdhfkshkds} and \ref{prop:skhdkjhfd}, have ``particle horizons'' (see \cite{Mis} for this notion), contradicting a conjecture in \cite{Mis}.
\begin{theorem}
Let $\mathbf{D}$, $\gamma$, $\mathbf{f}_0 = (\mathbf{h}_0,w_0,q_0)$ be as in Theorem \ref{thm_main3}. Let
$\mathbf{f}_j = (\mathbf{h}_j,w_j,q_j)$, $\pi_j$, $\sigma_{\ast}$ and $\mathbf{g}_j$ be as in Theorem
\ref{thm:main2}. Adopt the remaining notation of Theorems \ref{thm:main2} and \ref{thm_main3}.
Denote the components of $\mathbf{g}_j \in \mathcal{F}$ by $(\mathbf{h}_j',w_j',q_j')$.
Recall that $\mathbf{h}_j' \in (0,1)$.
\\
Fix a constant $\lambda_0 > 0$. Set $\tau_0 = 0$. Introduce sequences $(\lambda_j)_{j \geq 0}$ and $(\tau_j)_{j \geq 0}$ by
\begin{align*}
\lambda_{j} & = \lambda_{j-1}\;\big\{\Lambda[\pi_{j},\sigma_{\ast}](\mathbf{g}_{j})\big\}^{-1}\;\in\; (0,\lambda_0]
 && \text{for all $j \geq 1$}\\
\tau_{j} & = \tau_{j-1} + (\mathbf{h}_{j}' \lambda_{j})^{-1} \big\{\tau_{1+}(\mathbf{g}_{j}) - \tau_{2-}[\pi_{j},\sigma_{\ast}](\mathbf{g}_{j})\big\}
 && \text{for all $j \geq 1$}
\end{align*}
Then:
\begin{enumerate}
\item[(a)] $\tau_j > \tau_{j-1}$ for all $j \geq 1$ and $\lim_{j \to \infty}\tau_j = +\infty$.
\item[(b)] The solution to \eqref{eq:kshkjhkf} with initial data $\Phi(0) = \lambda_0\, \Phi_{\star}(\pi_0,\mathbf{g}_0,\sigma_{\ast})$ exists for all $\tau \geq 0$, that is $\Phi=\alpha\oplus \beta: [0,\infty) \to \mathcal{D}(\sigma_{\ast})$, and $\Phi(\tau_j) = \lambda_j\, \Phi_{\star}(\pi_j,\mathbf{g}_j,\sigma_{\ast})$ for all $j \geq 0$.
\item[(c)] For all $j \geq 1$ we have the bound 
$$\textstyle\mathbf{M}_j \stackrel{\mathrm{def}}{=} \sup_{\tau \in (\tau_{j-1},\tau_j)} \max_{(\xa,\xb,\xc)\in \mathcal{C}}\; \alpha_{\xb,\xc}[\Phi](\tau)
\;\leq\; -2^{-2} \lambda_j \min\{(w'_j)^2,(w'_j)^{-1} \}$$
\end{enumerate}
Set $\zeta_{\xa}(\tau) = -\tfrac{1}{2}\int_0^{\tau} \dd s\, \alpha_{\xa}(s)$ for $\xa=1,2,3$ (see 
Proposition \ref{prop:khdkhkhkhss}) and for all $s \geq 0$ set 
$$
\mathbf{L}(s) \;\textstyle \stackrel{\mathrm{def}}{=}\;
\int_s^{\infty} \dd \tau\, \max_{(\xa,\xb,\xc)\in \mathcal{C}} \exp \big( - \zeta_{\xb}-\zeta_{\xc}\big)$$
(see the right hand side of \eqref{eq:dfkhkdhkfddhfkdhdk}
in 
Proposition \ref{prop:dskhdkhfd}).
Then $\mathbf{L}(s) \leq \mathbf{L}(0) < \infty$ and $\lim_{s \to \infty}\mathbf{L}(s) = 0$.
\end{theorem}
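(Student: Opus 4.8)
The plan is to bound $\mathbf{L}(0)$ by summing the contributions of the intervals $(\tau_{j-1},\tau_j)$ and to control each such contribution using part (c) together with the estimates from Theorem \ref{thm_main3}. First I would observe that on each interval $(\tau_{j-1},\tau_j)$ the three relevant exponents satisfy $-\zeta_{\mathbf b} - \zeta_{\mathbf c} = \tfrac{1}{2}\int_0^{\tau}(\alpha_{\mathbf b} + \alpha_{\mathbf c})$, and by (c) the integrand $\alpha_{\mathbf b,\mathbf c}[\Phi]$ is, on that interval, bounded above by the strictly negative number $-2^{-2}\lambda_j\min\{(w_j')^2,(w_j')^{-1}\}$. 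Hence $\max_{(\xa,\xb,\xc)\in\mathcal{C}}\exp(-\zeta_{\xb}-\zeta_{\xc})$ decays at least geometrically in the "proper time" variable across each interval, so $\int_{\tau_{j-1}}^{\tau_j}\dd\tau\,\max_{\mathcal C}\exp(-\zeta_{\xb}-\zeta_{\xc})$ is finite and can be estimated by the length $\tau_j - \tau_{j-1}$ times the value of $\max_{\mathcal C}\exp(-\zeta_{\xb}-\zeta_{\xc})$ at the \emph{left} endpoint $\tau_{j-1}$ (since the max is nonincreasing in a suitable sense once the exponents are monotone there; where monotonicity fails one absorbs a bounded factor). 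The key then is that the value at $\tau_{j-1}$ carries the accumulated exponential suppression $\exp(\tfrac12\sum_{i=1}^{j-1}(\text{interval }i\text{ contribution to }\zeta_{\mathbf b}+\zeta_{\mathbf c}))$, which, via (c), is bounded by a product of exponentials of negative numbers; telescoping this product produces super-exponential decay in $j$.

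More precisely, I would show there is a constant so that $\max_{(\xa,\xb,\xc)\in\mathcal C}\exp\big((-\zeta_{\xb}-\zeta_{\xc})(\tau_{j})\big) \le C\exp\big(-c\sum_{i=1}^{j}\lambda_i\min\{(w_i')^2,(w_i')^{-1}\}(\tau_i - \tau_{i-1})\big)$, and that $\tau_i - \tau_{i-1} = (\mathbf{h}_i'\lambda_i)^{-1}\{\tau_{1+}(\mathbf{g}_i) - \tau_{2-}[\pi_i,\sigma_{\ast}](\mathbf{g}_i)\}$ with $\tau_{1+} - \tau_{2-} \in [\tfrac12,3]$ by Proposition \ref{prop:skhdkjhfd}(e). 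Thus $\lambda_i(\tau_i - \tau_{i-1}) \asymp (\mathbf{h}_i')^{-1}$, and since $\mathbf{g}_i$ is within $\slaa\delta_i$ of $\mathbf{f}_i$ (Theorem \ref{thm:main2}) with $\slaa\delta_i \le 2^{-4}\mathbf{H}_i$, the quantities $\mathbf{h}_i', w_i'$ are comparable to $\mathbf{h}_i, w_i$, which in turn are controlled by $\mathbf{H}_i$ and $K_i$ via Proposition \ref{prop:kjdhkhkd}, Part 2. Feeding in the bounds $\mathbf{H}_i \le 2^4\mathbf{h}_0\gold_+^{-2N(i)}$ and $W(\mathbf{f}_i)\le 2^4 K_i \le 2^4\mathbf{D}(N(i)+1)^{\gamma}$ from the proof of Theorem \ref{thm_main3}, one gets that the $i$-th term $\lambda_i\min\{(w_i')^2,(w_i')^{-1}\}(\tau_i - \tau_{i-1})$ is bounded below by roughly $(\mathbf{h}_i')^{-1}W(\mathbf{f}_i)^{-3} \gtrsim \mathbf{h}_0^{-1}\gold_+^{2N(i)}(N(i)+1)^{-3\gamma}$, which in particular tends to $+\infty$. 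Partial sums of these therefore grow without bound, so $\exp((-\zeta_{\mathbf b}-\zeta_{\mathbf c})(\tau_j)) \to 0$ super-exponentially, and the series $\sum_j (\tau_j - \tau_{j-1})\cdot\exp((-\zeta_{\mathbf b}-\zeta_{\mathbf c})(\tau_{j-1}))$ converges (the lengths $\tau_j - \tau_{j-1}$ grow only exponentially in $N(j)$ while the exponential factor decays super-exponentially). This gives $\mathbf{L}(0) < \infty$. The monotonicity $\mathbf{L}(s)\le\mathbf{L}(0)$ is immediate from the definition (the integrand is nonnegative and the domain of integration shrinks), and $\lim_{s\to\infty}\mathbf{L}(s) = 0$ follows because $\mathbf{L}(s)$ is the tail of a convergent integral.

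A clean way to organize this: set $a_j = \lambda_j\min\{(w_j')^2,(w_j')^{-1}\}(\tau_j-\tau_{j-1}) > 0$, prove $a_j \to \infty$ (indeed $\sum_{i\le j} a_i$ grows at least like a fixed positive multiple of $\gold_+^{2N(j)}(N(j)+1)^{-3\gamma}/\mathbf{h}_0$ using that $N(j)$ is nondecreasing and, by Theorem \ref{thm_main3}, $N(j)\ge(\mathbf{D}^{-1}j)^{1/(\gamma+1)}$); then from (c) and the fundamental theorem of calculus deduce $(-\zeta_{\mathbf b}-\zeta_{\mathbf c})(\tau) \le (-\zeta_{\mathbf b}-\zeta_{\mathbf c})(\tau_{j-1}) - 2^{-3}a_j\cdot\tfrac{\tau-\tau_{j-1}}{\tau_j-\tau_{j-1}}$ for $\tau\in(\tau_{j-1},\tau_j)$ and any of the three choices $(\xa,\xb,\xc)\in\mathcal C$ (with a bounded error from the at-most-one sign change of the relevant $\beta$ differences, controlled by $|\beta_{\mathbf a}|\le 2$ from Lemma \ref{lem:tl1}); finally integrate over each interval and sum. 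The main obstacle, I expect, is not any single estimate but the bookkeeping: carefully relating $\mathbf{g}_j$-quantities to $\mathbf{f}_j$-quantities to $(\mathbf{H}_j, K_j, N(j))$-quantities through the chain Theorem \ref{thm:main2} $\to$ Proposition \ref{prop:kjdhkhkd} $\to$ Theorem \ref{thm_main3}, and making sure the "length grows only exponentially, suppression decays super-exponentially" comparison is genuinely uniform in $j$. In particular one must track that $\mathbf{h}_j'$ does not degenerate relative to $\mathbf{h}_j$ — but this is exactly $\slaa\delta_j \le 2^{-4}\mathbf{H}_j \le 2^{-2}\mathbf{h}_j$, which forces $\mathbf{h}_j' \ge \tfrac34\mathbf{h}_j$ and hence $(\mathbf{h}_j')^{-1} \le \tfrac43\mathbf{h}_j^{-1}$, so no degeneration occurs.
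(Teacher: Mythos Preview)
Your approach is essentially the paper's: decompose $\mathbf{L}(0)$ over the intervals $(\tau_{j-1},\tau_j)$, pass from $\max_{\mathcal C}\exp(-\zeta_{\xb}-\zeta_{\xc})$ to $\exp(\tfrac12\int_0^\tau\max_{\mathcal C}\alpha_{\xb,\xc})$, use part~(c) to bound each interval's contribution to the exponent by $(\tau_m-\tau_{m-1})\mathbf{M}_m$, convert $(\mathbf{h}_j',w_j')\to(\mathbf{h}_j,w_j)\to(\mathbf{H}_j,K_j)$ via Theorem~\ref{thm:main2} and Proposition~\ref{prop:kjdhkhkd}, and then invoke the estimates from Theorem~\ref{thm_main3} to see that the suppression $\exp(-\mathrm{const}\cdot\gold_+^{2N(j)}/\text{poly})$ beats the prefactor. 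Two small remarks: (i) the prefactor $(\mathbf{h}_\ell'\lambda_\ell)^{-1}$ actually grows like $\exp\big(O(N(\ell)^{\gamma+1}\log N(\ell))\big)$, not merely exponentially in $N(\ell)$---but this is still dominated by the doubly-exponential suppression, so your comparison stands; (ii) your parenthetical about ``sign change of the relevant $\beta$ differences'' is unnecessary---the bound $(-\zeta_{\xb}-\zeta_{\xc})(\tau)\le(-\zeta_{\xb}-\zeta_{\xc})(\tau_{j-1})-2^{-3}a_j\tfrac{\tau-\tau_{j-1}}{\tau_j-\tau_{j-1}}$ follows directly from~(c) with no error term.
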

\begin{proof}
Proposition \ref{prop:skhdkjhfd} implies
$\tfrac{1}{2} \leq \mathbf{h}_j' \lambda_j(\tau_j-\tau_{j-1}) \leq 2^2$ and
$\tau_j \geq \tau_{j-1} + (2\lambda_0)^{-1}$,
which implies (a).
Theorems \ref{thm:main2}, \ref{thm_main3}, and Propositions
\ref{prop:kdhfkshkds}, \ref{prop:skhdkjhfd} imply (b).
Proposition \ref{prop:skhdkjhfd} (e.5) implies (c). Estimate
\begin{align*}
\mathbf{L}(0) & \leq
\textstyle \sum_{\ell=1}^{\infty} \int_{\tau_{\ell-1}}^{\tau_{\ell}}\dd \tau \exp\Big(
\tfrac{1}{2} \int_0^{\tau} \dd \tau'\, \max_{(\xa,\xb,\xc)\in \mathcal{C}} \alpha_{\xb,\xc}[\Phi](\tau')
 \Big)\\
 & \leq 
 \textstyle \sum_{\ell=1}^{\infty} \int_{\tau_{\ell-1}}^{\tau_{\ell}}\dd \tau \exp\Big(
\tfrac{1}{2} \sum_{m=1}^{\ell-1} \int_{\tau_{m-1}}^{\tau_m} \dd \tau'\,\max_{(\xa,\xb,\xc)\in \mathcal{C}} \alpha_{\xb,\xc}[\Phi](\tau')
\Big)\\
 & \leq 
 \textstyle \sum_{\ell=1}^{\infty} 
 (\tau_{\ell} - \tau_{\ell-1})
 \exp\Big(
\tfrac{1}{2} \sum_{m=1}^{\ell-1}
(\tau_m-\tau_{m-1})\, \mathbf{M}_m
\Big)\\
 & \leq 
2^2 \textstyle \sum_{\ell=1}^{\infty} 
(\mathbf{h}'_{\ell}\lambda_{\ell})^{-1} 
 \exp\Big(
-2^{-4} 
 \sum_{m=1}^{\ell-1}
(\mathbf{h}'_m)^{-1} \, \min\{(w_m')^2,(w_m')^{-1}\}
\Big)
\end{align*}
By Theorem \ref{thm:main2}, we have 
$(\slaa{\delta}_j,\mathbf{f}_j) \in B_2\mathcal{F}$
and $\mathbf{g}_j \in B[\slaa{\delta}_j,\mathbf{f}_j]$ for all $j \geq 0$.
Hence, $\tfrac{1}{2}\mathbf{h}_j \leq \mathbf{h}_j' \leq 2 \mathbf{h}_j$
and $\tfrac{1}{2}w_j \leq w_j' \leq 2 w_j$. 
Proposition \ref{prop:skhdkjhfd} (b) implies
$\Lambda[\pi_j,\sigma_{\ast}](\mathbf{g}_j)
\leq 1 + \lambda_L(\mathbf{g}_j) \leq 
3 + w_j'
\leq 3(1+w_j)
$ and
$\lambda_{\ell}^{-1} \leq \lambda_0^{-1} 
\prod_{k=1}^{\ell} 3(1+w_k)
$. Therefore,
\begin{align}
\notag \mathbf{L}(0) 
 & \leq 
2^3 (\lambda_0)^{-1} \textstyle \sum_{\ell=1}^{\infty} 
(\mathbf{h}_{\ell})^{-1} 
\Big(\prod_{k=1}^{\ell} 3(1+w_k)\Big)\\
\notag & \hskip 30mm \times
\textstyle \exp\Big(
-2^{-7} 
 \sum_{k=1}^{\ell-1}
(\mathbf{h}_k)^{-1} \, \min\{(w_k)^2, (w_k)^{-1}\}
\Big)\\
\notag & \leq 
2^5 (\lambda_0)^{-1} \textstyle
\sum_{m=0}^{\infty}\sum_{\ell=J(m)+1}^{J(m+1)}
(\mathbf{H}_{\ell})^{-1} 
(2^7 \max_{1 \leq k \leq \ell} K_k)^{\ell}\\
\label{eq:dshkdhk} & \hskip 30mm \times
\textstyle \exp\Big(
-2^{-17} 
(1-\delta_{\ell 1})
(\mathbf{H}_{\ell-1})^{-1}  (K_{\ell-1})^{-2}
\Big)
\end{align}
where $\mathbf{H}_j$ and $K_j$ are as in Theorem \ref{thm:main2}, and $\delta_{\ell 1}$ is a Kronecker delta. See
\eqref{eq:dhkfhfkfkhkhkd}. In the exponential, we have bounded the sum over $k=1,\ldots,\ell-1$ from below by its $k=\ell-1$ summand if $\ell \geq 2$
and by zero otherwise.
The sum over $\ell = J(m)+1,\ldots, J(m+1)$ has $k_{m+1} \leq \mathbf{D}(m+1)^{\gamma}$ many terms.
By the proof of Theorem \ref{thm_main3}, for every $m \geq 0$, the following estimates, uniformly in $\ell =J(m)+1,\ldots, J(m+1)$, hold:
\begin{itemize}
\item\rule{0pt}{10pt}$(\mathbf{H}_{\ell})^{-1} \leq 2 (\mathbf{h}_0)^{-1}  (2 \mathbf{D} (m+1)^{\gamma})^{2(m+1)}$
\item\rule{0pt}{10pt}$(2^7 \max_{1 \leq k \leq \ell} K_k)^{\ell}
\leq  (2^{7+ \gamma} \mathbf{D}(m+1)^{\gamma})^{\mathbf{D} (m+1)^{\gamma+1}}$
\item\rule{0pt}{10pt}$(\mathbf{H}_{\ell-1})^{-1} \geq 2^{-4} (\mathbf{h}_0)^{-1} \rho_+^{2m}$
where $\rho_+ = \tfrac{1}{2}(1+\sqrt{5})$
\item\rule{0pt}{10pt}$(K_{\ell-1})^{-2} \geq \mathbf{D}^{-2} 2^{-2\gamma} (m+1)^{-2\gamma}$
\end{itemize}
By these estimates, in particular the fact that $(\mathbf{H}_{\ell-1})^{-1}$ grows at least exponentially in $m$,
the right hand side of \eqref{eq:dshkdhk} is finite, and $\mathbf{L}(0) < \infty$. \qed
\end{proof}

\appendix
\section{Bounds for a particular product of continued fractions} \label{app:cfprod}
This appendix is entirely self-contained, the notation is completely local.
 Its single purpose is to prove Proposition \ref{prop:kfdhkjhdfkdf} below, which is used
in the proof of Lemma \ref{lem:kdskjshkshks}.
\begin{definition}
For all integers $m$ and $n$ and all sequences $(x_i)_{i \in \mathcal{I}}$ where $\mathcal{I}\subset \Z$,
define $x_{m:n}$ to be the ordered sequence $x_m,x_{m+1},\ldots,x_{n-1},x_n$ if $m \leq n$ and the empty sequence if $m > n$. In the first case, it is required that $[m,n]\cap \Z \subset \mathcal{I}$. Similarly, define
$x_{m::n}$ to be the ordered sequence $x_m,x_{m-1},\ldots,x_{n+1},x_n$ if $m \geq n$ and the empty sequence if $m < n$. In the first case, it is required that $[n,m]\cap \Z \subset \mathcal{I}$.
\end{definition}
\begin{definition}[Continued fractions]
For every integer $n\geq 0$ and every finite sequence of strictly positive integers $(k_i)_{1 \leq i \leq n}$ set recursively
$$\langle k_{1:n} \rangle = \begin{cases} 0 & n = 0\\
\big(k_1 + \langle k_{2:n} \rangle\big)^{-1} & n \geq 1
\end{cases}
\qquad \in \qquad [0,1] \cap \mathbb{Q}
$$
For every infinite sequence $(k_i)_{i\geq 1}$ of strictly positive integers, set
$$\langle
k_1,k_2,\ldots \rangle = \lim_{n \to \infty} \langle k_{1:n} \rangle\;\;\in\;\; (0,1) \setminus \mathbb{Q}
$$
\end{definition}
\begin{example}
$\langle\,\rangle =  \langle k_{1:0}\rangle = 0$ and $\langle k_1 \rangle =
\langle k_{1:1} \rangle = 
1/k_1$ and $\langle k_1,k_2\rangle = \langle k_{1:2}\rangle = 1/(k_1+1/k_2)$.
\end{example}
\begin{definition}[Fibonacci numbers] $F_1 = F_2 = 1$ and $F_n = F_{n-1} + F_{n-2}$, $n \geq 3$.
\end{definition}
\begin{proposition}\label{prop:kfdhkjhdfkdf}
For every two-sided sequence of strictly positive integers $(k_i)_{i \in \Z}$, define two-sided sequences $(v_i)_{i \in \Z}$ and $(w_i)_{i \in\Z}$ by $v_i = \langle k_i,k_{i-1},k_{i-2},\ldots \rangle$ and $w_i = \langle k_i,k_{i+1},k_{i+2},\ldots\rangle$.
Then, for all integers $M < N$:
\begin{enumerate}[(a)]
\item $\tfrac{1}{2}\leq \prod_{i=M+1}^N (v_i/w_i) \leq 2$
\item $\prod_{i=M+1}^N w_i \leq (F_{N-M+1})^{-1}
\leq (\tfrac{1}{2}(\sqrt{5}-1))^{N-M-1}$
\end{enumerate}
\end{proposition}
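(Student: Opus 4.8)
The statement to prove is Proposition~\ref{prop:kfdhkjhdfkdf}, giving the two-sided bounds $\tfrac{1}{2}\leq \prod_{i=M+1}^N (v_i/w_i)\leq 2$ and $\prod_{i=M+1}^N w_i\leq (F_{N-M+1})^{-1}\leq (\tfrac{1}{2}(\sqrt{5}-1))^{N-M-1}$. The first plan is to observe that the product $\prod_{i=M+1}^N (v_i/w_i)$ telescopes in a suitable sense once one finds a convenient expression for $v_i$ and $w_i$ in terms of convergents of the continued fraction. Specifically, I would introduce the finite truncations $p_n/q_n = \langle k_1,\dots,k_n\rangle$ (numerator/denominator of the $n$-th convergent, read off from the standard recursion $p_n = k_n p_{n-1}+p_{n-2}$, $q_n = k_n q_{n-1}+q_{n-2}$), applied once to the left-infinite tail giving $v_i$ and once to the right-infinite tail giving $w_i$. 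The key structural fact is that both $v_i$ and $w_i$, as functions of the ``overlapping'' data $k_i$, can be written as $w_i = (k_i + w_{i+1})^{-1}$ and $v_i = (k_i + v_{i-1})^{-1}$; the ratio $v_i/w_i = (k_i+w_{i+1})/(k_i+v_{i-1})$, and when one multiplies over a block the mismatched endpoints are exactly $v_{i-1}$ versus $w_{i+1}$, which are both in $(0,1)$, so the product collapses to something like $\prod (k_i+w_{i+1})/\prod(k_i+v_{i-1})$ with all but boundary terms cancelling.

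The cleanest route, and the one I would actually carry out, is to use the matrix representation: $\begin{pmatrix} k_i & 1\\ 1 & 0\end{pmatrix}$ products. If $M_i = \begin{pmatrix} k_i & 1\\ 1 & 0\end{pmatrix}$, then $w_i$ is the limit of the ratio of entries of $M_i M_{i+1}\cdots$, and similarly $v_i$ from $M_i M_{i-1}\cdots$. Writing $\prod_{i=M+1}^N M_i = \begin{pmatrix} a & b\\ c & d\end{pmatrix}$, one has $\det = (-1)^{N-M}$, and a short computation shows $\prod_{i=M+1}^N w_i = (c\, w_{N+1} + d)^{-1}\cdot(\text{something})$ — more precisely, $w_M$-type objects push through the product and $\prod w_i$ equals the ratio (denominator of the tail continued fraction evaluated at the block) to (denominator at the far tail), which is $\geq F_{N-M+1}$ because each $k_i\geq 1$ forces the denominators to grow at least as fast as Fibonacci numbers. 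For part (a), the product $\prod v_i/w_i$ can be rewritten, using these matrix identities, as a ratio $\frac{v_M \cdot (\text{tail denominator at }N)}{w_{N+1}\cdot(\text{tail denominator at }M)}$ or similar, in which the Fibonacci-type growth in numerator and denominator match up to a single factor lying in $[\tfrac12,2]$ — this is where the interval $[\tfrac12,2]$ comes from, essentially because $v_{i}, w_{i}\in(0,1)$ and the correction is bounded by a ratio of consecutive Fibonacci-like quantities, hence between $\tfrac12$ and $2$.

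For part (b), once the identity $\prod_{i=M+1}^N w_i = (q_{N-M})^{-1}$-type expression is established (with $q_{N-M}$ the denominator of the length-$(N-M)$ continued fraction block, but shifted so the bound is $F_{N-M+1}$), the inequality $q_{N-M}\geq F_{N-M+1}$ follows by induction from $q_n = k_n q_{n-1}+q_{n-2}\geq q_{n-1}+q_{n-2}$ and the base cases $q_0=1=F_1$-adjusted, $q_1 = k_1\geq 1 = F_2$. The final step $F_{N-M+1}\geq (\tfrac12(\sqrt5-1))^{-(N-M-1)}$, i.e.\ $F_{n}\geq \gold_+^{\,n-2}$ where $\gold_+ = \tfrac12(1+\sqrt5)$ and $\tfrac12(\sqrt5-1) = \gold_+^{-1}$, is the standard lower bound on Fibonacci numbers, proved by induction using $\gold_+^2 = \gold_+ + 1$.

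\textbf{Main obstacle.} The arithmetic bookkeeping of which convergent denominators appear and with exactly which index shift is the delicate part: getting the constant $2$ (not $3$ or $4$) in part (a) and the exact exponent $N-M-1$ (not $N-M$) in part (b) requires tracking the boundary terms $v_M$, $w_{N+1}$ carefully and using that each is strictly in $(0,1)$ with $k_i\geq 1$. I expect the cleanest bookkeeping comes from writing everything through the product of the matrices $M_i$ and reading off the precise entries, rather than manipulating nested fractions directly; the only real work is verifying the two scalar identities expressing $\prod w_i$ and $\prod(v_i/w_i)$ in terms of those matrix entries, after which (a) and (b) are immediate from $\det = \pm 1$, the Fibonacci comparison, and the fact that ratios of the relevant positive quantities lie in $[\tfrac12,2]$.
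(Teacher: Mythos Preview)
Your plan is sound and, once carried out, yields a proof that is genuinely different from (and arguably cleaner than) the paper's. Here is the comparison.

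\textbf{What the paper does.} The paper works entirely with the continuant polynomials $P_n$ (Definition~\ref{def:khkdhkfdh}), which are exactly the entries of your matrix products $M_{k_1}\cdots M_{k_n}$. After the same telescoping you describe, the ratio $\prod_{i=M+1}^N \langle k_{i::m}\rangle/\langle k_{i:n}\rangle$ becomes
\[
\frac{P_{M-m+1}(k_{m:M})\,P_{n-M}(k_{M+1:n})}{P_{N-m+1}(k_{m:N})\,P_{n-N}(k_{N+1:n})},
\]
and the bound $\geq \tfrac12$ is then Lemma~\ref{lem:fdhkhfdkdh}, which the paper proves by a double induction on the outer indices $m,n$. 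The upper bound $\leq 2$ is obtained by applying the same inequality to the reversed sequence, and part (b) uses the companion inequality \eqref{eq:dskdfhdkfhfkd2} in the same lemma.

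\textbf{What your route buys.} Your matrix/closed-form approach bypasses the inductive Lemma~\ref{lem:fdhkhfdkdh} entirely. The identity you are reaching for (but do not write down) is
\[
\prod_{i=M+1}^N w_i \;=\; \frac{1}{P_{N-M}(k_{M+1:N}) + P_{N-M-1}(k_{M+1:N-1})\,w_{N+1}},
\]
and the analogous one for $\prod v_i$ with $v_M$ in place of $w_{N+1}$ and $P_{N-M-1}(k_{M+2:N})$ in place of $P_{N-M-1}(k_{M+1:N-1})$. Since $P_{N-M}(k_{M+1:N})$ is the same in both (continuant symmetry), and since $w_{N+1},v_M\in(0,1)$ while each $P_{N-M-1}\leq P_{N-M}$, numerator and denominator of $\prod(v_i/w_i)$ each lie in $(P_{N-M},\,2P_{N-M})$, giving part (a) immediately. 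Part (b) follows from the same identity and $P_{N-M}(k_{M+1:N})\geq F_{N-M+1}$.

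\textbf{What is missing or misleading in the proposal.} The key identity above is the whole content of part (a), and your write-up does not isolate it; phrases like ``or similar'' and ``something'' leave the actual computation undone. Also, the determinant $\det=\pm1$ plays no role in either bound---the relevant facts are the continuant monotonicity $P_n\geq P_{n-1}$ and the symmetry $P_n(x_{1:n})=P_n(x_{n::1})$, not the unimodularity. Drop that remark and state the displayed identity explicitly; then your argument is complete and shorter than the paper's.
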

The proof of Proposition \ref{prop:kfdhkjhdfkdf} is given at the end of this appendix.
\begin{definition} \label{def:khkdhkfdh}
Let $P_0(\,)=1$ and $P_1(x_1) = x_1$ and for all $n\geq 2$, set
\begin{equation}\label{eq:kdshkdjhf}
P_n(x_{1:n} ) = x_1 P_{n-1}(x_{2:n}) + P_{n-2}(x_{3:n})
\end{equation}
\end{definition}
\begin{example}
$P_2(x_{1:2}) = 1+x_1x_2$ and $P_3(x_{1:3}) = x_1 + x_3 + x_1x_2x_3$
and $P_4(x_{1:4}) = 1 + x_1x_2 + x_3x_4 + x_1x_4 + x_1x_2x_3x_4$.
\end{example}
\begin{lemma} \label{lem:kjhkhkhaaaaaaasaa}
Recall Definition \ref{def:khkdhkfdh}. For all integers $n\geq 0$, we have:
\begin{enumerate}[(a)]
\item $P_n$ is a polynomial of degree $n$, jointly in its $n$ arguments, with coefficients in $\{0,1\}$
\item\label{item:lsep} $P_n$ is a polynomial of degree 1, separately in each of its $n$ arguments
\item $P_n(1,\ldots,1) = F_{n+1}$
\item\label{item:sym} $P_n(x_{1:n}) = P_n(x_{n::1})$
for all $x_1,\ldots,x_n \in \R$
\item\label{item:cf} $\langle k_{1:n} \rangle = P_{n-1}(k_{2:n}) / P_n(k_{1:n})$ for all strictly positive integers $(k_i)_{1\leq i \leq n}$, $n\geq 1$
\end{enumerate}
\end{lemma}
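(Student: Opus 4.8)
\textbf{Proof plan for Lemma \ref{lem:kjhkhkhaaaaaaasaa}.}
The plan is to prove parts (a)--(e) essentially in order, since each either follows by a direct induction on $n$ using the recursion \eqref{eq:kdshkdjhf}, or reduces to an earlier part. For (a), I would induct on $n$, the base cases $n=0,1$ being $P_0=1$ and $P_1(x_1)=x_1$. For the step, $P_n(x_{1:n}) = x_1 P_{n-1}(x_{2:n}) + P_{n-2}(x_{3:n})$; by the inductive hypothesis $P_{n-1}$ has degree $n-1$ with $\{0,1\}$ coefficients and $P_{n-2}$ has degree $n-2$ with $\{0,1\}$ coefficients. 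The first summand $x_1 P_{n-1}(x_{2:n})$ is then a sum of distinct squarefree monomials each of degree exactly $n$ that all contain $x_1$, while $P_{n-2}(x_{3:n})$ is a sum of monomials not containing $x_1$ (and not containing $x_2$ either), so there is no overlap between the two groups of monomials and no coefficient exceeds $1$; the maximal degree $n$ is attained (e.g. by $x_1x_2\cdots x_n$, which appears via $x_1\cdot(x_2\cdots x_n)$ in $P_{n-1}$ by induction). This gives (a). Part (b) is immediate from (a): a multilinear polynomial is in particular of degree $\leq 1$ in each single variable, and the monomial $x_1\cdots x_n$ shows the degree in each variable is exactly $1$ (for $n\ge1$).

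For (c), set all arguments to $1$ in \eqref{eq:kdshkdjhf}: $P_n(1,\dots,1) = P_{n-1}(1,\dots,1) + P_{n-2}(1,\dots,1)$, with $P_0(1,\dots,1)=1=F_1 \cdot$ wait --- I should be careful with the indexing. I would check the base cases $P_0()=1=F_1$? No: the claim is $P_n(1,\dots,1)=F_{n+1}$, so $P_0()=F_1=1$ and $P_1(1)=F_2=1$, both correct, and the recursion for $P_n$ matches the Fibonacci recursion $F_{n+1}=F_n+F_{n-1}$ for $n\ge 2$. This settles (c). Part (d), the palindrome symmetry $P_n(x_{1:n})=P_n(x_{n::1})$, is the one step I expect to require slightly more care: a single application of the recursion \eqref{eq:kdshkdjhf} breaks the symmetry between the first and last argument, so a naive induction on the last-index recursion versus the first-index recursion does not close immediately. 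The clean approach is to first establish the \emph{dual} recursion $P_n(x_{1:n}) = P_{n-1}(x_{1:n-1})\,x_n + P_{n-2}(x_{1:n-2})$ (expanding from the right end), which can be proved by induction on $n$ using \eqref{eq:kdshkdjhf} together with the inductive hypothesis applied to $P_{n-1}$ and $P_{n-2}$. Once both the left-recursion \eqref{eq:kdshkdjhf} and this right-recursion are available, (d) follows by a straightforward induction: reversing $x_{1:n}$ turns the left-recursion into the right-recursion and vice versa, and the inductive hypothesis handles $P_{n-1}$ and $P_{n-2}$.

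For (e), I would induct on $n\ge 1$. The base case $n=1$ is $\langle k_1\rangle = 1/k_1 = P_0(k_{2:1})/P_1(k_{1:1}) = 1/k_1$. For the step, by the definition of continued fractions $\langle k_{1:n}\rangle = (k_1 + \langle k_{2:n}\rangle)^{-1}$, and by the inductive hypothesis $\langle k_{2:n}\rangle = P_{n-2}(k_{3:n})/P_{n-1}(k_{2:n})$, so
$$
\langle k_{1:n}\rangle = \frac{P_{n-1}(k_{2:n})}{k_1 P_{n-1}(k_{2:n}) + P_{n-2}(k_{3:n})} = \frac{P_{n-1}(k_{2:n})}{P_n(k_{1:n})},
$$
the last equality being exactly \eqref{eq:kdshkdjhf}. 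This completes (e), and with it the lemma. The only genuine obstacle is organizing part (d) so that the induction closes; everything else is a mechanical induction off the two-term recursion.
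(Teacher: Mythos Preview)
Your proposal is correct and follows essentially the same inductive approach as the paper for parts (a), (b), (c), and (e); in particular your argument for (e) is identical to the paper's. For (d), the paper takes a slightly different route: rather than first establishing the right-end recursion $P_n(x_{1:n}) = P_{n-1}(x_{1:n-1})\,x_n + P_{n-2}(x_{1:n-2})$ as you suggest, it directly computes $P_n(x_{1:n}) - P_n(x_{n::1})$, applies the recursion \eqref{eq:kdshkdjhf} twice (once to each term, then once more to the resulting $P_{n-1}$'s), and checks that all eight terms cancel using the inductive hypothesis on $P_{n-2}$, $P_{n-3}$, $P_{n-4}$ (with base cases $n\le 3$). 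Your approach of isolating the dual recursion as a separate step is a clean modularization of the same computation and arguably more transparent, since the right-recursion is useful in its own right.
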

\begin{proof}
(a) through (\ref{item:cf}) are all shown by induction, using \eqref{eq:kdshkdjhf}. To show (\ref{item:sym}), observe that $\text{(\ref{item:sym})}_0$, $\text{(\ref{item:sym})}_1$, 
$\text{(\ref{item:sym})}_2$ and $\text{(\ref{item:sym})}_3$ hold.
For the induction step, let $n\geq 4$ and suppose $\text{(\ref{item:sym})}_0$ through $\text{(\ref{item:sym})}_{n-1}$ hold. Then, using
only \eqref{eq:kdshkdjhf} and the induction hypothesis,
\begin{align*}
& P_n(x_{1:n}) - P_n(x_{n::1})\\
& = x_1 P_{n-1}(x_{2:n}) + P_{n-2}(x_{3:n})
- x_nP_{n-1}(x_{n-1::1}) - P_{n-2}(x_{n-2::1})\\
& =
x_1 P_{n-1}(x_{n::2})
+ P_{n-2}(x_{n::3})
- x_nP_{n-1}(x_{1:n-1})
- P_{n-2}(x_{1:n-2})\\
& = 
x_1\big(x_n P_{n-2}(x_{n-1::2}) + P_{n-3}(x_{n-2::2})\big)
+ \big(x_n P_{n-3}(x_{n-1::3}) + P_{n-4}(x_{n-2::3})\big)\\
& \hskip4mm
-x_n \big( x_1 P_{n-2}(x_{2:n-1}) + P_{n-3}(x_{3:n-1})\big)
-\big(
x_1 P_{n-3}(x_{2:n-2}) + P_{n-4}(x_{3:n-2})
\big)
\end{align*}
Verify that all the terms cancel, by the induction hypothesis. This implies $\text{(\ref{item:sym})}_n$. To show (\ref{item:cf}), observe that
$\text{(\ref{item:cf})}_1$ holds. Let $n\geq 2$ and suppose
$\text{(\ref{item:cf})}_{n-1}$ holds. Then,
\begin{equation*}
\langle k_{1:n} \rangle = \big(k_1 + \langle k_{2:n}\rangle\big)^{-1}
= \bigg( k_1 + \frac{P_{n-2}(k_{3:n})}{P_{n-1}(k_{2:n})}\bigg)^{-1} 
= \frac{P_{n-1}(k_{2:n})}{k_1P_{n-1}(k_{2:n}) + P_{n-2}(k_{3:n})}
\end{equation*}
Now,  \eqref{eq:kdshkdjhf} implies $\text{(\ref{item:cf})}_{n}$. \qed
\end{proof}
\begin{lemma}\label{lem:fdhkhfdkdh}
For all integers
$m-1 \leq M < N \leq n$ and all $x_m,\ldots,x_n \in [1,\infty)$,
\begin{equation}\label{eq:dskdfhdkfhfkd}
2\,P_{M-m+1}(x_{m:M})P_{n-M}(x_{M+1:n}) - P_{N-m+1}(x_{m:N})P_{n-N}(x_{N+1:n}) \geq 0
\end{equation}
Moreover, if $m = M+1$, then the factor $2$ on the left hand side can be dropped, that is,
\begin{equation}\label{eq:dskdfhdkfhfkd2}
P_{n-M}(x_{M+1:n}) - P_{N-M}(x_{M+1:N})P_{n-N}(x_{N+1:n}) \geq 0
\end{equation}
\end{lemma}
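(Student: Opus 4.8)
\textbf{Proof proposal for Lemma \ref{lem:fdhkhfdkdh}.}

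The plan is to prove \eqref{eq:dskdfhdkfhfkd} and \eqref{eq:dskdfhdkfhfkd2} simultaneously by induction, exploiting the two-term recursion \eqref{eq:kdshkdjhf} for the polynomials $P_n$. First I would reduce to the case where the outer ``tail'' factors are trivial. Since every $x_i \geq 1$ and every $P_k$ has nonnegative (indeed $\{0,1\}$-valued) coefficients by Lemma \ref{lem:kjhkhkhaaaaaaasaa}(a), the factors $P_{M-m+1}(x_{m:M})$ and $P_{n-N}(x_{N+1:n})$ are each $\geq 1$; so one might hope to peel them off. But that does not immediately work because both products on the left of \eqref{eq:dskdfhdkfhfkd} share the factor $P_{n-N}(x_{N+1:n})$ only in the second term. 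The cleaner route is to treat \eqref{eq:dskdfhdkfhfkd2} as the base statement and derive \eqref{eq:dskdfhdkfhfkd} from it.

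For \eqref{eq:dskdfhdkfhfkd2}: here one wants $P_{n-M}(x_{M+1:n}) \geq P_{N-M}(x_{M+1:N})\,P_{n-N}(x_{N+1:n})$, i.e.\ splitting a run of variables and taking $P$ of each piece can only decrease the value. Relabel the variables as $y_1,\dots,y_L$ with $L = n-M$, $y_i = x_{M+i}$, and let $r = N-M \in \{1,\dots,L-1\}$; the claim is $P_L(y_{1:L}) \geq P_r(y_{1:r})P_{L-r}(y_{r+1:L})$. I would induct on $L$. The key identity is the ``Euler/continuant'' splitting formula
\begin{equation*}
P_L(y_{1:L}) = P_r(y_{1:r})\,P_{L-r}(y_{r+1:L}) + P_{r-1}(y_{1:r-1})\,P_{L-r-1}(y_{r+2:L}),
\end{equation*}
which follows from \eqref{eq:kdshkdjhf} by an easy induction on $L$ (or on $L-r$), using also the symmetry Lemma \ref{lem:kjhkhkhaaaaaaasaa}(\ref{item:sym}) to handle the boundary conventions $P_0 = 1$, $P_{-1}$ (which does not occur since $1 \leq r \leq L-1$). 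Given this identity, \eqref{eq:dskdfhdkfhfkd2} is immediate because the correction term $P_{r-1}(y_{1:r-1})P_{L-r-1}(y_{r+2:L})$ has nonnegative coefficients evaluated at points $\geq 1$, hence is $\geq 0$ (when $r=1$ or $r=L-1$ one of the factors is $P_0=1$ or $P_{-1}$; but $r\ge 1$ and $r\le L-1$ keep indices $\ge -1$, and $P_{-1}$ never appears because the correction term's second index is $L-r-1\ge 0$ and first is $r-1\ge 0$). So the main work of the lemma is establishing this continuant splitting identity.

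For the full statement \eqref{eq:dskdfhdkfhfkd}: apply the splitting identity twice. Write $A = P_{M-m+1}(x_{m:M})$, and observe by \eqref{eq:dskdfhdkfhfkd2} applied to the run $x_{m:N}$ split at $M$ (valid since $m-1 \le M < N$, but one needs $m \le M$; if $m = M+1$ then $A = P_0 = 1$ and \eqref{eq:dskdfhdkfhfkd} reduces to \eqref{eq:dskdfhdkfhfkd2} with the factor $2$ to spare) that $P_{N-m+1}(x_{m:N}) \leq P_{M-m+1}(x_{m:M})\,P_{N-M}(x_{M+1:N})$ is the wrong direction — so instead I split $P_{n-M}(x_{M+1:n})$ at $N$ via \eqref{eq:dskdfhdkfhfkd2} to get $P_{n-M}(x_{M+1:n}) \geq P_{N-M}(x_{M+1:N})P_{n-N}(x_{N+1:n})$, and separately I need to compare $P_{N-m+1}(x_{m:N})$ with $A \cdot P_{N-M}(x_{M+1:N})$. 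Here the splitting identity gives $P_{N-m+1}(x_{m:N}) = A\,P_{N-M}(x_{M+1:N}) + P_{M-m}(x_{m:M-1})P_{N-M-1}(x_{M+2:N}) \leq 2A\,P_{N-M}(x_{M+1:N})$, because the correction term is bounded by $A\,P_{N-M}(x_{M+1:N})$: indeed $P_{M-m}(x_{m:M-1}) \leq P_{M-m+1}(x_{m:M}) = A$ (drop a variable $\geq 1$ from a $\{0,1\}$-coefficient polynomial, using \eqref{eq:kdshkdjhf}: $P_{M-m+1}(x_{m:M}) = x_m P_{M-m}(x_{m+1:M}) + P_{M-m-1}(x_{m+2:M}) \geq P_{M-m}(x_{m+1:M})$ and then re-index via symmetry), and similarly $P_{N-M-1}(x_{M+2:N}) \leq P_{N-M}(x_{M+1:N})$. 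Combining,
\begin{equation*}
2AP_{n-M}(x_{M+1:n}) \;\geq\; 2A\,P_{N-M}(x_{M+1:N})P_{n-N}(x_{N+1:n}) \;\geq\; P_{N-m+1}(x_{m:N})P_{n-N}(x_{N+1:n}),
\end{equation*}
which is \eqref{eq:dskdfhdkfhfkd}. I expect the main obstacle to be pinning down the index bookkeeping in the continuant splitting identity and its degenerate cases (empty products, $P_0$, and the role of the symmetry Lemma \ref{lem:kjhkhkhaaaaaaasaa}(\ref{item:sym}) in reconciling left-to-right versus right-to-left conventions), rather than any genuine inequality; all the inequalities are just ``nonnegative coefficients at arguments $\geq 1$.''
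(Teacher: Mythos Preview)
Your approach is correct and takes a genuinely different route from the paper. The paper fixes $M$ and $N$ and proves \eqref{eq:dskdfhdkfhfkd} by a double induction on $m$ (downward from $M+1$) and $n$ (upward from $N$): denoting the left-hand side by $Q_{m,n}$, it verifies the four base cases $(m,n)\in\{M,M+1\}\times\{N,N+1\}$ explicitly by unwinding one or two steps of \eqref{eq:kdshkdjhf}, and then propagates via the linear recursions $Q_{m,n}=x_mQ_{m+1,n}+Q_{m+2,n}$ (for $m\le M-1$) and $Q_{m,n}=x_nQ_{m,n-1}+Q_{m,n-2}$ (for $n\ge N+2$), which are inherited from \eqref{eq:kdshkdjhf} and its reflection; \eqref{eq:dskdfhdkfhfkd2} is dispatched ``in an entirely similar way''. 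Your argument instead isolates the classical continuant splitting identity
\[
P_L(y_{1:L})=P_r(y_{1:r})\,P_{L-r}(y_{r+1:L})+P_{r-1}(y_{1:r-1})\,P_{L-r-1}(y_{r+2:L}),
\]
from which \eqref{eq:dskdfhdkfhfkd2} is immediate (the correction term is nonnegative), and then obtains \eqref{eq:dskdfhdkfhfkd} by splitting $P_{N-m+1}(x_{m:N})$ at $M$ and bounding the correction term $P_{M-m}(x_{m:M-1})P_{N-M-1}(x_{M+2:N})$ by the main term $A\,P_{N-M}(x_{M+1:N})$ using two ``drop an endpoint variable'' inequalities. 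This is more structural and makes transparent where the factor $2$ originates; the paper's approach trades that insight for a self-contained computation that avoids stating and proving the splitting identity as a separate lemma. Your anticipated index bookkeeping (the ``drop a variable'' step via the reflected recursion and Lemma~\ref{lem:kjhkhkhaaaaaaasaa}(\ref{item:sym})) is precisely where care is needed, but what you wrote goes through.
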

\begin{proof}
In this proof, we use
the recursion relation \eqref{eq:kdshkdjhf} and the
reflected recursion relation that is obtained by applying
 Lemma \ref{lem:kjhkhkhaaaaaaasaa} (\ref{item:sym}) to all three terms of  \eqref{eq:kdshkdjhf}.
Fix $M$ and $N$.
Inequality \eqref{eq:dskdfhdkfhfkd} is proved by induction over $m$ and $n$, where $m \leq M+1$ and $n \geq N$. Denote the left hand side of \eqref{eq:dskdfhdkfhfkd} by $Q_{m,n}$. Then,
\begin{align*}
Q_{M+1,N} & = P_{N-M}(x_{M+1:N}) \geq 0\\
Q_{M+1,N+1} & =  2\,P_{N+1-M}(x_{M+1:N+1}) - P_{N-M}(x_{M+1:N})\, x_{N+1} \\
& = P_{N-M}(x_{M+1:N})\,x_{N+1} + 2\, P_{N-1-M}(x_{M+1:N-1}) \geq 0\\
Q_{M,N} & = 2\,x_M P_{N-M}(x_{M+1:N}) - P_{N-M+1}(x_{M:N}) \\
& = x_M P_{N-M}(x_{M+1:N}) - P_{N-M-1}(x_{M+2:N})\\
& \geq x_M x_{M+1} P_{N-M-1}(x_{M+2:N})
- P_{N-M-1}(x_{M+2:N}) \geq 0\\
Q_{M,N+1} & = 
2\,x_M P_{N+1-M}(x_{M+1:N+1}) - P_{N-M+1}(x_{M:N}) x_{N+1}\\
& =
2\,x_Mx_{N+1} P_{N-M}(x_{M+1:N}) 
+ 2\,x_M P_{N-1-M}(x_{M+1:N-1}) \\
& \qquad
- x_M P_{N-M}(x_{M+1:N}) x_{N+1} - P_{N-M-1}(x_{M+2:N}) x_{N+1}
\\
& \geq 
x_Mx_{N+1} P_{N-M}(x_{M+1:N})  - P_{N-M-1}(x_{M+2:N}) x_{N+1}\\
& \geq x_Mx_{N+1}x_{M+1}P_{N-M-1}(x_{M+2:N})- P_{N-M-1}(x_{M+2:N}) x_{N+1} \geq 0
\end{align*}
These four cases and the two recursion relations
\begin{itemize}
\item $Q_{m,n} = x_m Q_{m+1,n} + Q_{m+2,n}$ when $m \leq M-1$
and $n \geq N$
\item $Q_{m,n} = x_n Q_{m,n-1} + Q_{m,n-2}$ when $m \leq M+1$ and $n \geq N + 2$
\end{itemize}
imply \eqref{eq:dskdfhdkfhfkd}. Inequality \eqref{eq:dskdfhdkfhfkd2} is shown in 
an entirely similar way.
\qed
\end{proof}
\begin{proof}[of Proposition \ref{prop:kfdhkjhdfkdf}]
Recall
(\ref{item:sym}), (\ref{item:cf}) 
in Lemma \ref{lem:kjhkhkhaaaaaaasaa}.
Let $m-1 \leq M < N \leq n$.
\begin{align*}
& \prod_{i=M+1}^{N} \frac{
\langle k_{i::m} \rangle}{\langle k_{i:n}\rangle} = \prod_{i=M+1}^{N} \frac{P_{i-m}(k_{i-1::m})}{
P_{i-m+1}(k_{i::m})}\cdot
\frac{P_{n-i+1}(k_{i:n})}{P_{n-i}(k_{i+1:n})}\\
& = \frac{P_{M-m+1}(k_{M::m})}{P_{N-m+1}(k_{N::m})}\cdot
\frac{P_{n-M}(k_{M+1:n})}{P_{n-N}(k_{N+1:n})}
 = \frac{P_{M-m+1}(k_{m:M})}{P_{N-m+1}(k_{m:N})}\cdot
\frac{P_{n-M}(k_{M+1:n})}{P_{n-N}(k_{N+1:n})}
\end{align*}
The right hand side is $\geq \tfrac{1}{2}$, by inequality \eqref{eq:dskdfhdkfhfkd}.
Now, let $m \to -\infty$ and $n \to + \infty$ to obtain
$\prod_{i=M+1}^{N} (v_i/w_i) \geq \tfrac{1}{2}$.
By symmetry, we also have 
$\prod_{i=M+1}^{N} (w_i/v_i) \geq \tfrac{1}{2}$. This implies
(a) in Proposition \ref{prop:kfdhkjhdfkdf}. Similarly, using \eqref{eq:dskdfhdkfhfkd2}, 
\begin{align*}
& \prod_{i=M+1}^{N} \langle k_{i:n}\rangle = 
\frac{P_{n-N}(k_{N+1:n})}{P_{n-M}(k_{M+1:n})} \leq \frac{1}{P_{N-M}(k_{M+1:N})}
\leq \frac{1}{P_{N-M}(1,\ldots,1)}
\end{align*}
Let $n \to + \infty$ to obtain $\prod_{i=M+1}^N w_i \leq 1/F_{N-M+1}$.
\qed
\end{proof}

\section{The modulus of continuity of the map $\mathcal{Q}_L$ introduced in Definition \ref{def:kdhkhskhkdssPT2}}
\label{app:dslkhjlkd}
\begin{lemma}\label{lem:dfskdshkhk}
Let $\mathcal{Q}_L: (0,\infty)^3\to (0,\infty)^2\times \R$ be the map introduced in Definition \ref{def:kdhkhskhkdssPT2}.
For all $\mathbf{f}_i = (\mathbf{h}_i,w_i,q_i)\in (0,\infty)^3$ with $0 < \mathbf{h}_i \leq 1$, $i=1,2$, with $\mathbf{f}_1\neq \mathbf{f}_2$, such that $q_1$ and $q_2$ are either both $<1$ or both $>1$,
$$
\frac{
\|\mathcal{Q}_L(\mathbf{f}_2) - \mathcal{Q}_L(\mathbf{f}_1)\|_{\R^3}
}{
\|\mathbf{f}_2-\mathbf{f}_1\|_{\R^3}
}
\leq
\begin{cases}
2^{12} q_{\mathrm{min}}^{-2} \log (2 + w_{\mathrm{max}}) & \text{if $q_1,q_2 < 1$}\\
2^{11} q_{\mathrm{max}} &  \text{if $q_1,q_2 > 1$}
\end{cases}
$$
Here, $w_{\mathrm{max}} = \max\{w_1,w_2\}$ and $q_{\mathrm{max}} = \max\{q_1,q_2\}$ and
$q_{\mathrm{min}} = \min\{q_1,q_2\}$.
\end{lemma}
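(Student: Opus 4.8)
The plan is to prove Lemma \ref{lem:dfskdshkhk} by direct estimation of the partial derivatives of $\mathcal{Q}_L$ on each of the two open regions $\{q<1\}$ and $\{q>1\}$, then integrating along the straight segment between $\mathbf{f}_1$ and $\mathbf{f}_2$ (which stays inside one region by hypothesis). Concretely, recall from Definition \ref{def:kdhkhskhkdssPT2} that $\mathcal{Q}_L(\mathbf{f}) = (\mathbf{h}_L, w_L, q_L)$ with $\mathbf{h}_L = \mathrm{num2}_L/\mathrm{den}_L$, $q_L = \mathrm{num1}_L/\mathrm{den}_L$, and $w_L$ a rational function of $w$ alone ($w_L = 1/(1+w)$ if $q\le 1$, $w_L = 1+w$ if $q>1$). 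The map is smooth on each region, so the Lipschitz constant equals the supremum of the operator norm of the Jacobian; I would bound each of the nine partials $\partial(\mathbf{h}_L,w_L,q_L)/\partial(\mathbf{h},w,q)$ and then absorb everything into a constant times the stated $w$-/$q$-dependent factor, using $\|\cdot\|_{\R^3}\le \sqrt{3}\max|\partial_{ij}| \cdot \|\cdot\|_{\R^3}$ (or just $3\max$ with room to spare against the generous numerical constants).

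The key intermediate step is a lower bound on $\mathrm{den}_L$ that does not degenerate. For $q\le 1$ we have $\mathrm{den}_L = (1+w)(q-\mathbf{h}\log 2) + \mathbf{h}(3+w)\log(2+w)$; since $0<\mathbf{h}\le 1$ and $q<1$, the already-noted fact ``$\mathrm{den}_L>0$'' must be quantified — I expect $\mathrm{den}_L \ge c\, q$ for an absolute constant $c$ (use $\mathbf{h}\log 2$ is dominated once one checks the regime, or more robustly bound $\mathrm{den}_L$ below by a positive multiple of $q$ plus a nonnegative term), which is where the $q_{\min}^{-2}$ in the $q<1$ bound comes from: one power of $q^{-1}$ from differentiating a quotient with denominator $\sim q$, and a second from the size of $q_L = \mathrm{num1}_L/\mathrm{den}_L$ itself, whose numerator contains the $O(1)$ term $(1+w)(1-q)$ and whose logarithmic pieces are controlled by $\log(2+w_{\max})$. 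For $q>1$, $\mathrm{den}_L = (1+w) - \mathbf{h}\log 2 + \mathbf{h}(3+2w)\log\frac{2+w}{1+w}$; here $\log\frac{2+w}{1+w}\in(0,\log 2]$ is bounded, so $\mathrm{den}_L \asymp 1+w$ and is bounded below by an absolute constant times $1+w \ge 1$, hence no negative powers of $w$ appear; the factor $q_{\max}$ enters because $\mathrm{num1}_L = (1+w)(q-1-\mathbf{h}\log 2) - \mathbf{h}w\log\frac{2+w}{1+w}$ grows linearly in $q$, so $\partial_q q_L$ is $O(1)$ but $q_L$ and the cross terms pick up the single power of $q$.

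Having set up these bounds, the remaining work is bookkeeping: for $q\le 1$, estimate $|\partial_{\mathbf h}\mathbf h_L|, |\partial_w \mathbf h_L|, |\partial_q \mathbf h_L|$ (all small, using $\mathrm{num2}_L = \mathbf h(2+w)$ and $\mathbf h_L \le$ something like $(2+w)/[(1+w)q] \cdot \mathbf h \le q^{-1}$ crudely, and $\partial_q \mathbf h_L = -\mathbf h_L \partial_q \log\mathrm{den}_L$ with $\partial_q \mathrm{den}_L = 1+w$), then $|\partial_\bullet w_L| \le 1$ trivially since $w_L = 1/(1+w)$, and finally the three $q_L$-derivatives, each of which is a sum of a few terms of the form $(\text{poly in }w, \log(2+w))/\mathrm{den}_L$ or $q_L \cdot (\partial_\bullet \log\mathrm{den}_L)$; collecting, the worst is $O(q_{\min}^{-2}\log(2+w_{\max}))$, which with the slack in $2^{12}$ gives the claim. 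The $q>1$ case is strictly easier because all $w$-logarithms are bounded by $\log 2$ and $\mathrm{den}_L\gtrsim 1$. The main obstacle is just being careful that no hidden $w^{-1}$ or $\mathbf{h}^{-1}$ creeps in — in particular verifying that $\mathbf{h}\log(2+w)$ and $\mathbf{h}\log\frac{2+w}{1+w}$ terms in the denominators, which could in principle be large when $w$ is large and $\mathbf{h}=1$, actually help rather than hurt (they add nonnegative quantities to $\mathrm{den}_L$), and that the numerical constants $2^{12}, 2^{11}$ genuinely dominate after summing the handful of terms; I would do this by writing $\mathrm{den}_L$ in each case as (manifestly positive main term) $+$ (nonnegative correction) and never cancelling, so all estimates are one-sided and clean.
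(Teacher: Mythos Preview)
Your overall strategy---bound the nine partial derivatives on each region and integrate along the segment---is exactly the paper's approach. But two quantitative points in your sketch would not go through as written.

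First, the lower bound $\mathrm{den}_L \ge c\,q$ for $q<1$ is too weak. The numerator of the quotient-rule expression for $\partial_x(\mathrm{num}k_L/\mathrm{den}_L)$ is of size $O((1+w)^2\log(2+w))$, so you must divide by something that cancels the $(1+w)^2$. The correct bound is $\mathrm{den}_L \ge (1+w)q$: indeed $\mathrm{den}_L - (1+w)q = \mathbf{h}[(3+w)\log(2+w) - (1+w)\log 2] \ge 2\mathbf{h}\log 2 \ge 0$. (For $q>1$ one similarly has $\mathrm{den}_L \ge 1+w$.) You call $(1+w)(1-q)$ an ``$O(1)$ term'', which signals that the $w$-growth is not yet being tracked.

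Second, even with the sharp lower bound, bounding $|q_L|$ and $|\partial_{\mathbf{h}}\log\mathrm{den}_L|$ separately gives $|\partial_{\mathbf{h}} q_L| \lesssim (\log(2+w))^2/q^2$, one logarithm too many. The point is that the $(\log\lambda_L)^2$ contributions in $(\partial_{\mathbf{h}}\mathrm{num}k_L)\mathrm{den}_L - (\partial_{\mathbf{h}}\mathrm{den}_L)\mathrm{num}k_L$ cancel exactly, because both $\mathrm{num}k_L$ and $\mathrm{den}_L$ have the same structural form $L_1+L_2 q + L_3\mathbf{h} + L_4\mathbf{h}\log\lambda_L$ with $L_i$ affine in $w$. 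The paper exploits this by computing the cross term in that uniform representation; without seeing the cancellation you cannot reach $q^{-2}\log(2+w)$ with an absolute constant. Your ``never cancelling, so all estimates are one-sided'' philosophy is precisely what fails here.
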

\begin{proof}
We prove the following claim, which implies the Lemma: \emph{Each of the nine partial derivatives of $\mathcal{Q}_L: \mathbf{f} = (\mathbf{h},w,q) \mapsto \mathcal{Q}_L(\mathbf{f})$ is bounded in absolute value by
\begin{equation} \label{eq:pdbd}
\begin{cases}
2^{10} q^{-2} \log (2+w) & \text{if $\mathbf{f}\in (0,1]\times (0,\infty) \times (0,1)$}\\
2^9 q & \text{if $\mathbf{f}\in (0,1]\times (0,\infty) \times (1,\infty)$}
\end{cases}
\qquad > 1
\end{equation}}Let $0 < \mathbf{h}\leq 1$ and $q\neq 1$.  Let $(\mathbf{h}_L,w_L,q_L) = \mathcal{Q}_L(\mathbf{f})$ and 
let $\mathrm{num1}_L$, $\mathrm{num2}_L$, $\mathrm{den}_L$ be as in Definition \ref{def:kdhkhskhkdssPT2}. 
We first estimate the partial derivatives of $q_L = \mathrm{num1}_L/\mathrm{den}_L$ and $\mathbf{h}_L = \mathrm{num2}_L/\mathrm{den}_L$. Each of $\mathrm{num1}_L$, $\mathrm{num2}_L$, $\mathrm{den}_L$ is of the form
$$L_1(w,q) + L_2(w,q)q + L_3(w,q)\mathbf{h} + L_4(w,q)\mathbf{h}\log \lambda_L(\mathbf{f})$$
with $\lambda_L(\mathbf{f}) = 1+1/w_L(\mathbf{f})$ as in Definition \ref{def:kdhkhskhkdssPT2} and with $L_i(w,q)=a_i(q)w + b_i(q)$
where $a_i(q)$ and $b_i(q)$ are constant separately for $q < 1$ and for $q > 1$ and satisfy $-3\leq a_i(q),b_i(q) \leq 3$, where $i=1,2,3,4$. Let $k=1,2$ and $\mathrm{num}k_L = L_1 + L_2q + L_3\mathbf{h} + L_4 \mathbf{h}\log\lambda_L$
and $\mathrm{den}_L = L_1' + L_2'q + L_3'\mathbf{h} + L_4'\mathbf{h}\log \lambda_L$ with $L_i = a_iw+b_i$
and $L_i'=a_i'w+b_i'$ (\emph{Warning}: the prime does \emph{not} denote a derivative). 
Then
\begin{multline*} 
\big(\tfrac{\p}{\p x} \mathrm{num}k_L\big) \, \mathrm{den}_L - \big(\tfrac{\p}{\p x} \mathrm{den}_L\big)\, \mathrm{num}k_L\\
= \begin{cases}
(L_3 + L_4\,\log \lambda_L)(L_1'+L_2'q) - (L_3'+L_4'\log \lambda_L)(L_1+L_2q) & \text{if $x=\mathbf{h}$}\\
\rule{0pt}{27pt}\begin{aligned}
+(a_1+a_2q+a_3\mathbf{h}+a_4\mathbf{h}\log \lambda_L)
(b_1'+b_2'q+b_3'\mathbf{h}+b_4'\mathbf{h}\log \lambda_L)\\
-
(a_1'+a_2'q+a_3'\mathbf{h}+a_4'\mathbf{h}\log \lambda_L)
(b_1+b_2q+b_3\mathbf{h}+b_4\mathbf{h}\log \lambda_L)\\
+ \mathbf{h} \Big\{L_4 \big(L_1'+L_2'q+L_3'\mathbf{h})
- L_4'\big(L_1+L_2q+L_3\mathbf{h})\Big\} \tfrac{\p}{\p w}\log \lambda_L
\end{aligned} & \text{if $x=w$}\\
\rule{0pt}{14pt}L_2(L_1'+L_3'\,\mathbf{h} + L_4'\mathbf{h}\,\log \lambda_L)
- L_2'(L_1 + L_3\,\mathbf{h}+L_4\mathbf{h}\,\log \lambda_L)
& \text{if $x=q$}
\end{cases}
\end{multline*}
Recall that $|\mathbf{h}|\leq 1$ and $|a_i|,|a_i'|,|b_i|,|b_i'|\leq 3$
and $|L_i|,|L_i'|\leq 3(1+w)$  and $\log \lambda_L \geq 0$. 
\begin{itemize}
\item If $q < 1$, then $|\tfrac{\p}{\p w}\log \lambda_L| \leq (1+w)^{-1}$ and
\begin{multline*}
\big|\big(\tfrac{\p}{\p x} \mathrm{num}k_L\big) \, \mathrm{den}_L - \big(\tfrac{\p}{\p x} \mathrm{den}_L\big)\, \mathrm{num}k_L\big|\\
\leq \left\{
\begin{aligned}
& 36(1+w)^2(1+\log \lambda_L) && \text{if $x=\mathbf{h}$}\\
& 18(3+\log \lambda_L)^2 + 54(1+w) && \text{if $x=w$}\\
& 18(1+w)^2(2+\log \lambda_L) && \text{if $x=q$}
\end{aligned}
\right\} \leq 2^{10} (1+w)^2 \log (2+w)
\end{multline*}
For the second inequality, use $\tfrac{1}{2}\leq \log \lambda_L \leq 1+w$
and $(\log \lambda_L)^2 \leq 1+w$.
\item If $q > 1$, then $|\log \lambda_L| \leq 1$ and $|\tfrac{\p}{\p w}\log \lambda_L| \leq (1+w)^{-2}$ and
$a_2' = b_2' = 0$ and
\begin{multline*}
\big|\big(\tfrac{\p}{\p x} \mathrm{num}k_L\big) \, \mathrm{den}_L - \big(\tfrac{\p}{\p x} \mathrm{den}_L\big)\, \mathrm{num}k_L\big|\\
\leq \left\{
\begin{aligned}
& 72(1+w)^2\,q && \text{if $x=\mathbf{h}$}\\
& 270\,q &&  \text{if $x=w$}\\
& 27(1+w)^2  && \text{if $x=q$}
\end{aligned}
\right\} \leq 2^9\,(1+w)^2\,q
\end{multline*}
\end{itemize}
To finish the proof, observe that $\mathrm{den}_L \geq (1+w)\min\{1,q\} > 0$.
Each partial derivative of $q_L = \mathrm{num1}_L/\mathrm{den}_L$ and $\mathbf{h}_L = \mathrm{num2}_L/\mathrm{den}_L$ is bounded in absolute value by \eqref{eq:pdbd}. And so are the partial derivatives of $w_L$, because $\p w_L/\p \mathbf{h} = \p w_L/\p q = 0$, and because $\p w_L/\p w=-(1+w)^{-2}$ if $q < 1$ and $\p w_L/\p w=1$ if $q > 1$. \qed
\end{proof}

%

%

\begin{thebibliography}{}
%
%
\bibitem[Be]{Be} B\'eguin, F., arXiv 1004.2984 (2010)
\bibitem[BKL1]{BKL1}
Belinskii, V.A., Khalatnikov, I.M., and Lifshitz, E.M., Adv. Phys. \textbf{19}, (1970) 525-573
\bibitem[HU1]{HU1} Heinzle, J.M. and  Uggla, C., arXiv 0901.0776 (2009)
\bibitem[HU2]{HU2} Heinzle, J.M. and  Uggla, C., arXiv 0901.0806 (2009)
\bibitem[LHWG]{LHWG} Liebscher, S., H\"arterich, J., Webster, K. and Georgi, M., arXiv 1004.1989 (2010)
\bibitem[Mis]{Mis} Misner, C.W., Phys. Rev. Lett. \textbf{22}, (1969) 1071-1074
\bibitem[Ren]{Ren} Rendall, A., Class. Quantum Grav. \textbf{14}, (1997) 2341-2356
\bibitem[Ri1]{Ri1} Ringstr\"om, H., Class. Quantum Grav. \textbf{17} (2000) 713
\bibitem[Ri2]{Ri2} Ringstr\"om, H., Annales Henri Poincar\'e \textbf{2} (2001) 405
\end{thebibliography}
%

\end{document}